\documentclass[11pt]{article}

\usepackage[margin = 1in]{geometry}

\usepackage{amsmath,amssymb,amsthm,color,fixmath}
\usepackage{hyperref}
\hypersetup{colorlinks=true,citecolor=blue, linkcolor=blue, urlcolor=blue}

\usepackage{graphicx}
\usepackage[bbgreekl]{mathbbol}

\newtheorem{lemma}{Lemma}
\newtheorem{theorem}[lemma]{Theorem}

\newtheorem{proposition}[lemma]{Proposition}
\newtheorem{claim}[lemma]{Claim}

\newtheorem{corollary}[lemma]{Corollary}
\newtheorem{remark}{Remark}
\newtheorem{definition}[remark]{Definition}

\newcommand{\eps}{\epsilon}

\newcommand{\dist}{\mathrm{dist}}
\newcommand{\diameter}{\mathrm{diameter}}
\newcommand{\distance}{{\tt dist}}
\newcommand{\Tmix}{T_\mathrm{mix}}
\renewcommand{\P}{\mathcal{P}}
\newcommand{\degree}{\mathrm{deg}}

\newcommand{\fpaus}{\mathsf{FPAUS}}
\newcommand{\ball}{{\tt B}}

\newcommand{\Expectation}{\mathrm{E}}
\newcommand{\Variance}{\mathrm{Var}}
\newcommand{\Exp}[1]{{\Expectation\left[{#1}\right]}}

\newcommand{\ExpCond}[2]{{\Expectation\left[{#1} \mid {#2} \right]}}

\newcommand{\Probability}{\mathrm{Pr}}
\newcommand{\Prob}[1]{{\Probability\left[{#1}\right]}}

\newcommand{\indicator}[1]{{\;\mathbf{1}\!\left(#1\right)}}
\newcommand{\U}{\mathcal U}
\newcommand{\B}{\mathcal B}
\newcommand{\TDeg}{\widehat{d}}
\newcommand{\outBound}{\partial_{\mathrm{out}}}
\newcommand{\inBound}{\partial_{\mathrm{in}}}

\newcommand{\EdgeBlockWeight}{\distance}
\newcommand{\poly}{\mathrm{poly}}

\newcommand{\Dis}{D}
\newcommand{\ImpVrtx}{{\partial {\cal B}}}

\newcommand{\ManyDisEvent}{{\cal E}}
\newcommand{\OutOfBalls}{{\cal B}_1}
\newcommand{\FewColors}{{\cal B}_2}
\newcommand{\HighDegreeSet}{\mathbold{L}}
\newcommand{\parent}[1]{{\tt Parent}(#1)}
\newcommand{\IntrstGraphFam}{{\cal F}}
\newcommand{\AvialColors}{A}

\title{Sampling Random Colorings of Sparse Random Graphs}

\author{
Charilaos Efthymiou\thanks{Goethe University, Frankfurt am Main, Germany. Email:
efthymiou@gmail.com. Research supported by DFG grant EF 103/11.}
\and
Thomas P. Hayes\thanks{University of New Mexico, USA. Email: hayes@cs.unm.edu.}
\and
Daniel \v{S}tefankovi\v{c}\thanks{University of Rochester, USA.
 Email: stefanko@cs.rochester.edu. Research
supported in part by NSF grant CCF-1318374.}
\and 
Eric Vigoda\thanks{Georgia Institute of Technology, USA.  Email: ericvigoda@gmail.com. Research supported in part by NSF grants CCF-1617306 and CCF-1563838.
}
}

\begin{document}

\maketitle

\begin{abstract}
We study the mixing properties of the single-site Markov chain known as the
Glauber dynamics for sampling $k$-colorings of a sparse random graph $G(n,d/n)$
for constant $d$.  The best known rapid mixing results for general graphs 
are in terms of the maximum degree $\Delta$ of the input graph $G$
and hold when $k>11\Delta/6$ for all $G$.  Improved results hold when $k>\alpha\Delta$
for graphs with girth $\geq 5$ and $\Delta$ sufficiently large where $\alpha\approx 1.7632\ldots$
is the root of $\alpha=\exp(1/\alpha)$; further improvements on the constant $\alpha$
hold with stronger girth and maximum degree assumptions.
For sparse random graphs the maximum degree is a function of $n$ and the goal is to
obtain results in terms of the expected degree $d$.  The following
rapid mixing results for $G(n,d/n)$ hold with high probability over the choice of the random graph 
for sufficiently large constant~$d$.  Mossel and Sly (2009)
proved rapid mixing for constant $k$, and Efthymiou (2014) improved this to $k$ linear in~$d$.
The condition was improved to $k>3d$ by Yin and Zhang (2016) using non-MCMC methods.
Here we prove rapid mixing when $k>\alpha d$ where $\alpha\approx 1.7632\ldots$ is
the same constant as above.
Moreover we obtain $O(n^{3})$ mixing time of the Glauber dynamics, while in previous
rapid mixing results the exponent was an increasing function in $d$.
As in previous results for random graphs our proof analyzes an appropriately defined
block dynamics to ``hide'' high-degree vertices.  One new aspect in our improved 
approach is utilizing so-called local uniformity properties for the analysis of block dynamics.
To analyze the ``burn-in'' phase we 
prove a concentration inequality
for the number of disagreements propagating in large blocks.

\end{abstract}

\thispagestyle{empty}

\newpage

\setcounter{page}{1}

\section{Introduction}

Sampling from Gibbs distributions is an important problem in many contexts.
For example, in theoretical computer science sampling algorithms are 
often the key element in approximate counting algorithms, 
in statistical physics Gibbs distributions describe the equilibrium state of large physical systems,
and in statistics they are used for Bayesian inference.
In this paper we focus on random colorings, which are an
example of a spin system, corresponding to the zero-temperature
limit of the anti-ferromagnetic Potts model.  The natural combinatorial structure of colorings
makes it a nice testbed for studying connections to statistical physics phase
transitions and its study has led to many new techniques.

Given a graph $G=(V,E)$ of maximum degree $\Delta$
and a positive integer $k$, can we generate a random $k$-coloring of $G$
in time polynomial in $n=|V|$?  To be precise, let $\Omega=\Omega_G$ denote
the set of proper vertex $k$-colorings of $G$, and let $\pi$ denote the
uniform distribution over $\Omega$.  Our goal is to obtain an $\fpaus$ (fully polynomial-time
approximate uniform sampling scheme)
for sampling from $\pi$: given $\delta>0$ in time $\poly(n,\log(1/\delta))$
generate a coloring $X$ from a distribution $\mu$ which is within variation
distance $\leq\delta$ of the uniform distribution $\pi$.

The Glauber dynamics is a simple and well-studied algorithm for sampling colorings, and more
generally, for spin systems it is of particular interest as a model of how a physical
system approaches equilibrium.  
The dynamics is the following single-site spin update Markov chain $(X_t)$ with 
state space $\Omega$.  We
present here the heat-bath version, but our results are
robust and hold for other versions as well.   
The Markov chain $(X_t)$ has the following transitions $X_t\rightarrow X_{t+1}$:
from $X_t$, choose a random vertex $v$, and a random color $c$ not 
appearing in the current neighborhood of $v$, i.e., from $[k]\setminus X_t(N(v))$.
Update $v$ to the new color by setting $X_{t+1}(v) = c$, and keep the coloring
the same on the rest of the graph $X_{t+1}(w) = X_t(w)$ for all $w\neq v$.

The dynamics is ergodic whenever $k\geq\Delta+2$ where $\Delta$ is the maximum 
degree of the input graph $G$, and hence since it is symmetric 
its unique stationary distribution $\pi$ is uniform over $\Omega$ \cite{Jerrum}.  
We measure the convergence time to the stationary distribution by the 
{\em mixing time},  the minimum number of steps $T$, from the worst 
initial state $X_0$, to ensure that the distribution $X_T$ is within variation
distance $\leq 1/4$ of the uniform distribution $\pi$.
Our aim is to show that the 
mixing time is polynomial in $n$, the size of the underlying graph, in which case
we say that the dynamics is {\em rapidly mixing}.  When the mixing time is 
exponential in $n^{\Omega(1)}$ then we say the dynamics is {\em torpidly mixing}.

The study of Gibbs sampling has yielded many beautiful results, we survey the relevant results
for the colorings problem here.  
The natural conjecture is that whenever $k\geq\Delta+2$ then the Glauber dynamics is
rapidly mixing.
The minimal evidence in favor of the conjecture is that {\em uniqueness},
which is a weak form of decay of correlations, holds on infinite $\Delta$-regular trees \cite{Jonasson}.
On the hardness side, \cite{GSV} showed that the dynamics is torpid mixing 
on random bipartite, $\Delta$-regular graphs for even $k$ when $k<\Delta$; more generally, in
this regime the approximate counting problem is NP-hard (unless NP=RP) on triangle-free graphs of maximum
degree $\Delta$.
On the positive side, the best known result for general graphs is $O(n\log{n})$ mixing time
for $k>2\Delta$ \cite{Jerrum} and $O(n^2)$ for $k>\frac{11}{6}\Delta$~\cite{Vigoda}.

Further improvements were made with various assumptions about the graph such as girth or maximum degree.
Dyer and Frieze \cite{DFUniformity} utilized properties of the stationary distribution, later termed
{\em local uniformity properties}, to prove rapid mixing on graphs
with maximum degree $\Delta=\Omega(\log{n})$ and girth $g=\Omega(\log{\Delta})$ when
$k>(1+\eps)\alpha\Delta$  where $\alpha\approx 1.763...$ 
is the root of  $\alpha = \exp(1/\alpha)$.
The girth and maximum degree assumptions were further improved by 
Dyer et al. \cite{DFHVConstantDegree} to 
girth $g\geq 5$ and $\Delta>\Delta_0$ where $\Delta_0=\Delta_0(\eps)$ is a sufficiently
large constant.
Further improvements on the constant~$\alpha$ were made in \cite{Molloy,LM,DFHVConstantDegree,HV-nonmarkovian} 
with stronger girth and maximum degree assumptions;  however, as we'll outline later these improvements required more sophisticated local uniformity properties which
necessitated the stronger conditions and more complicated arguments.
This same threshold $\alpha\Delta$ appeared in the work of 
Goldberg, Martin and Paterson \cite{GMP} who proved a strong form of decay of correlations
on triangle-free graphs when $k>\alpha\Delta$, which implied rapid mixing for amenable graphs.
We utilize similar local uniformity properties to 
\cite{GMP,DFUniformity,TomsUniformity,DFHVConstantDegree,HV-nonmarkovian}
and naturally the  constant $\alpha$ arises in our work.

An intriguing case to study in this context are sparse random graphs,  namely
Erd\"{o}s-R\'{e}nyi random graphs $G(n,d/n)$ for  
constant $d>1$.  Sampling from Gibbs distributions induced by instances of $G(n,d/n)$, or,  
more generally, instances of so-called random constraint satisfaction problems, 
is at the heart 
of  recent endeavors to  investigate connections between 
phase transition phenomena and the efficiency of algorithms 
\cite{ACO,CoEfth,PNAS,GamSud,Sly}.

Whereas the rapid mixing results for general graphs bound $k$ in terms of the maximum degree
$\Delta$, on the other hand for sparse random graphs $G(n,d/n)$ it is 
natural to bound $k$ by the {\em expected degree} $d$. 
This is a substantial difference since typical instances of $G(n,d/n)$ have maximum degree as 
large as $\Theta(\log{n}/\log\log{n})$, while the 
expected degree  $d$ is constant (i.e., independent of $n$).
To this end, for deriving our results, it is necessary to argue about the 
statistical properties of the underlying graph.

The performance of the Glauber dynamics has been studied in 
statistical physics using sophisticated tools, but mathematically non-rigorous. 
In particular, in \cite{PNAS} it is conjectured that 
rapid mixing holds in the uniqueness region and hence it should hold for $k\geq d+2$.
Moreover, it is conceivable that there is a weak form of a sampler down to the
clustering threshold at $k\approx d/\log{d}$~\cite{ACO}.

The first results in this context were  by Dyer et al. \cite{DFFV} who
proved rapid mixing of an associated block dynamics
when $k=\Omega(\log{\log{{n}}}/\log{\log{\log{n}}})$.
A significant improvement was made by Mossel and Sly \cite{MS2}
who established rapid mixing for a constant number of colors $k$ 
(though $k$ was polynomially related to $d$).
This was further improved in \cite{Efthymiou}  to reach $k$ which is linear in $d$, namely
 $k>\frac{11}{2}d$.
Recently, a non-Markov chain $\fpaus$ was presented for colorings that requires $k>3d+O(1)$ \cite{YZ};
however this did not imply any guarantees on the behavior of the Glauber dynamics.
We note that a significantly weaker form of a sampler was presented for 
the case $k \geq (1+\eps)d$ for all $\eps>0$ \cite{Efthymiou-ESA}; this 
only obtains a weak approximation depending on $n$, whereas an $\fpaus$ allows arbitrary close
approximation.

We further improve rapid mixing results for sparse random graphs.  What is especially notable 
in our results is that the threshold on $k/d$ is now comparable to those on
general graphs for $k/\Delta$.
Our main result is rapid mixing of the Glauber dynamics 
on sparse random graphs when $k>\alpha d$.

\begin{theorem}\label{thm:1.76-main}
Let $\alpha\approx 1.763...$ denote the root of 
$\alpha = \exp(1/\alpha)$.
For all $\eps>0$, there exists $d_0$,
for all $d>d_0$, for $k\geq (\alpha+\eps)d$, with probability $1-o(1)$
over the choice of $G\sim G(n,d/n)$, the mixing time of the 
Glauber dynamics is $O(n^{2+1/(\log d)})$.  
\end{theorem}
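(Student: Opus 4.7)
The plan is to route rapid mixing of Glauber through a block dynamics that hides high-degree vertices, analyze the block dynamics via path coupling under local uniformity, and handle the burn-in via a disagreement concentration bound. I would begin with standard structural facts for $G \sim G(n,d/n)$: with high probability the graph has maximum degree $O(\log n/\log\log n)$, the vertices of degree exceeding some threshold $T=T(d)$ are $o(n)$ in number and pairwise far apart, short cycles are rare and isolated, and most of the graph is locally tree-like. Using these properties I would construct a block decomposition $\mathcal{B}$ so that every high-degree vertex sits deep inside a block of bounded diameter, the block-adjacency graph has tame local geometry, and the total number of blocks is $\Theta(n)$; the heat-bath block dynamics then resamples a random block conditional on its boundary coloring.

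For mixing of the block dynamics, I would apply path coupling on a block-weighted Hamming metric. When two configurations disagree at a single boundary vertex of some block $B$, the expected number of new disagreements propagated into $B$ after the coupled resample is, under local uniformity on $B$, at most $d\,e^{-d/k}$; this is strictly less than $1$ precisely when $k>\alpha d$. Iterating this contraction across the block structure yields an $O(n\log n)$ mixing bound for the block dynamics. The key input is the local uniformity statement that, in the stationary conditional distribution on a block, the set of colors available at an internal vertex $v$ behaves like a nearly uniform subset of $[k]$ of size $k(1-1/k)^{d_v}\approx k\,e^{-d/k}$, which is exactly the mechanism by which the constant $\alpha$ (the root of $\alpha=e^{1/\alpha}$) enters the theorem.

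The main obstacle is that local uniformity is a statement about the \emph{stationary} distribution and can fail badly from an arbitrary starting configuration. To bypass this I would precede the coupling argument by a burn-in phase of block updates and prove the concentration inequality announced in the abstract: the number of disagreements propagating during a single large-block resample is, with high probability, only polylogarithmic in $n$ rather than growing with the block size. This requires tracking, via a branching-process-style comparison on the locally tree-like interior of a block, how a disagreement at the boundary spreads; the tail bound must be strong enough to survive a union bound over blocks and burn-in steps. Once local uniformity holds with high probability on every block, the path coupling analysis applies, and the block dynamics mixing bound is converted to a Glauber mixing time via the standard Markov chain comparison, with each block update simulated by polylogarithmic internal Glauber steps (appealing to the existing $k>\alpha\Delta$ mixing results applied to the block as its own subgraph), producing the announced $O(n^{2+1/\log d})$ bound.
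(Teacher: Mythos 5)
Your overall architecture matches the paper's: a sparse block partition that buries high-degree vertices (and short cycles) deep inside tree-like blocks, path coupling for the block dynamics under a weighted metric, local uniformity supplying the $k\exp(-d/k)$ available-colors bound that produces the threshold $\alpha$, a burn-in phase controlled by a concentration/confinement bound for disagreements, and a final Markov-chain comparison to transfer the result to the single-site Glauber dynamics. Up to the comparison step this is essentially the argument in the paper (with one cosmetic difference: the paper confines burn-in disagreements to a ball of radius $O((\log d)\sqrt d)$, i.e.\ constant size in $n$, because the uniformity failure probability is only small in $d$, not in $n$; a ``polylogarithmic in $n$'' disagreement bound would not survive that union bound).

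The genuine gap is in your last step. You propose to simulate each block update by ``polylogarithmic internal Glauber steps, appealing to the existing $k>\alpha\Delta$ mixing results applied to the block as its own subgraph.'' This cannot work: the entire reason the blocks exist is that they contain vertices of degree up to $\Theta(\log n/\log\log n)$ while $k\approx \alpha d$ is a constant, so inside a block one has $k\ll \alpha\Delta_B$ where $\Delta_B$ is the block's maximum degree, and the hypotheses of the bounded-degree results ($k>\alpha\Delta$, girth and degree lower bounds) are violated in exactly the regime that matters. The single-site dynamics on a block (a highly non-regular tree, or unicyclic graph, with frozen boundary) is not known to mix in polylogarithmic time here; the paper instead proves, via a recursive analysis of star-structured block dynamics together with a ``path density'' property of $G(n,d/n)$ (every path from a high-degree vertex or cycle to the block boundary carries enough low-degree vertices), that the relaxation time of Glauber within a block is $n^{O(1/(\log d)^2)}$ --- polynomial in $n$ with a small exponent, not polylog. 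This is precisely where the final exponent $2+1/\log d$ comes from: relaxation time of Glauber $\lesssim$ (block-dynamics relaxation) $\times$ (max block relaxation), then mixing time picks up another factor of $n$. Your version, if it worked, would give roughly $\widetilde O(n)$ Glauber mixing, which does not match the claimed bound and signals that the comparison step needs the separate, non-trivial analysis of the within-block dynamics (and also a justification of ergodicity/implementation of block resampling, which the paper handles by exact dynamic-programming sampling on trees with at most one extra edge).
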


From an algorithmic perspective, we have to consider how to  get the initial configuration of the dynamics.
We use   the well-known polynomial time algorithm by Grimmett and  McDiarmid  \cite{grimmett},
which $k$-colors typical instances of $G(n,d/n)$ for any $k>d/\log d$. Note that $\alpha d\gg d/\log d$.

Previous results for the Glauber dynamics on 
sparse random graphs \cite{MS2,Efthymiou} implied polynomial
mixing time but the exponent was an increasing function of $d$; similarly for the 
running time of the sampler presented in \cite{YZ}.  Here we get a fixed
polynomial.  This results from an improved comparison argument which utilizes
a more detailed analysis of the star graph.  

The previous results \cite{DFFV,MS2,Efthymiou} for sparse random graphs (as does our work) 
use arguments about the statistical properties of  the underlying graph, for example, the distribution
of high-degree vertices.  To achieve a bound below $2d$ we also need to
argue about the {\em statistical properties of random colorings} as well;  
that is, what does a typical coloring of $G(n,d/n)$ look like.  
This poses new challenges in the analysis 
of the Glauber dynamics as it requires a meticulous study of its  behavior when it starts from 
a pathological coloring, see further details in Section \ref{sec:BurnInAnalysis}.

 The first step in our analysis is defining an appropriate block dynamics;
the use of the block dynamics was also done in previous results on random graphs 
\cite{DFFV,MS2,Efthymiou}.
The block dynamics partitions the vertex set $V$ into disjoint blocks $V=B_1\cup B_2\cup \dots \cup B_N$.  
In each step we choose a random block and recolor that block (uniformly at random
conditional on the fixed coloring outside the chosen block).
After proving rapid mixing of the block dynamics, rapid mixing of the Glauber dynamics
will follow by a standard comparison argument, see Section \ref{sec:thm:1.76-Glauber}.

The key insight is to use the blocks to ``hide'' high degree vertices deep inside the blocks.
By high degree we mean a vertex of degree $>(1+\delta)d$ for a small constant $\delta$, 
and the remaining vertices are 
classified as low degree.
The blocks are designed so that from a high degree vertex there
is a large buffer of  low degree vertices to the boundary of the block.
In addition, each block is a tree (or unicyclic), and hence it
is straightforward to efficiently generate a random coloring of the chosen block.
Our  block construction builds upon ideas from \cite{Efthymiou} 
which assigns appropriate weights on the paths of $G(n,d/n)$  
to distinguish which vertices can be used at the boundary of the blocks.
For more details regarding the block construction see Section \ref{sec:IntroBlocks}.

Our first progress is to achieve rapid mixing when $k>2d$.  
Even if the maximum degree was $\Delta$ 
it was unclear how to extend Jerrum's \cite{Jerrum} classic $k>2\Delta$ approach to 
directly analyze the block dynamics, as opposed to the Glauber dynamics.  
That is our first contribution: we present a simple weighting scheme
so that path coupling applies to establish rapid mixing when $k>2\Delta$ for 
the block dynamics with ``simple'' blocks, see Section \ref{sec:2d} for more details.
From there it is straightforward to 
extend to random graphs with expected degree $d$ when $k>2d$
(though technically it requires considerable work to deal with the high degree vertices).

To improve the result from $2d$ to $1.763...d$ we utilize the so-called {\em local uniformity} properties,
in particular the lower bound on available colors as in \cite{GMP,DFUniformity,TomsUniformity,DFHVConstantDegree}.  
The idea is that whereas a worst case coloring has $\Delta$ colors in the neighborhood of
a particular $v$ (we're considering the case of a graph with maximum degree $\Delta$
for simplicity) and hence $k-\Delta$ ``available'' colors, after a short burn-in period 
in the coloring $(X_t)$ we are likely to have $k(1-1/k)^\Delta \approx k\exp(-\Delta/k)$ 
available colors for $v$.  Our approach for establishing local uniformity is  similar in spirit to
that in \cite{DFUniformity}.

Our challenge is that while we are burning-in to obtain this local uniformity property,
we need that the initial disagreement does not spread too far.  For this we need a
concentration bound on the spread of disagreements within a block.
To do that we utilize disagreement percolation, which is now a standard tool in
the analysis of Markov chains and statistical physics models.   This is one of the 
key technical contribution of our work, see  Sections \ref{sec:BurnInAnalysis} and 
\ref{sec:DisPerc}, for further discussion.

Concluding, we  remark that our techniques  find application  to  other  models on $G(n,d/n)$. For example  
in Section \ref{sec:HCRapid4Block}, 
we prove a rapid mixing result for the so-called hard-core model with {\em fugacity} $\lambda$.  
Our result  improves  the   previous best bound, in terms of $\lambda$, 
in \cite{Efthymiou} by a factor $2$. \\ 

\noindent
{\bf Outline of paper}
In Section \ref{sec:IntroBlocks} we introduce the blocks dynamics  for which we show rapid mixing.
Then, our main theorem (Theorem \ref{thm:1.76-main}) for the Glauber dynamics
follows from rapid mixing of the block dynamics via a comparison argument.
In Section \ref{sec:2d} we give an overview of how we obtain rapid mixing for $k>2d$ for
the block dynamics by introducing a new metric for the space of configurations.
In  Section \ref{sec:from2dtoalphad} we discuss the improved $k>1.763... d$ bound,
focusing on utilizing the local uniformity properties and 
the analysis of the  burn-in phase. \\

 
 \noindent
%
{\bf Notation}
We will define a block dynamics with a disjoint set of blocks $\B=\{B_1\cup\dots\cup B_N\}$. 
For a block $B\in\B$, denote its outer and inner boundaries as
\begin{eqnarray*}
   \outBound B &:=& \{ y\in V: y\notin B, \mbox{ there exists } z\in B \mbox{ where } (y,z)\in E\},
   \\
   \inBound B  &:=& \{ z\in V: z\in B, \mbox{ there exists } y\notin B \mbox{ where } (y,z)\in E\}.
\end{eqnarray*}
For the collection $\B$ we will look at the union of the outer boundaries, or
equivalently the union of the inner boundaries, namely:
\[
   \ImpVrtx := \textstyle \bigcup_{B\in \B}  \outBound B = \bigcup_{B\in \B}  \inBound B.
 \]
The degree of vertex $v$ is denoted as $\degree(v)$, and its
set of neighbors is denoted by $N(v)$.
Similarly, for a block $B\in\B$, the neighboring blocks are denoted as $N(B)$.

\section{Rapid mixing for Block dynamics}\label{sec:IntroBlocks}

As mentioned earlier, to prove Theorem \ref{thm:1.76-main} we will prove
rapid mixing of a corresponding block dynamics on $G(n,d/n)$ and then  we employ a  standard 
comparison argument \cite{Martinelli}.
That is, we bound the relaxation time for the Glauber dynamics in 
terms of  the relaxation time of the 
block dynamics and the relaxation time of the Glauber dynamics within a single block.
Since the blocks are trees (or unicyclic) our approach 
requires studying the mixing rate  of the Glauber dynamics on highly non-regular  trees 
and we do so in a manner similar to  \cite{LucMol, TVVY}. We provide some, we believe non-trivial,
bounds on the relaxation times of a star-structured block dynamics.
We refer the interested reader to Section \ref{sec:thm:1.76-Glauber} of the appendix
for the  comparison argument.

First we describe how we create the blocks for the dynamics.  
For this we need use a weighting schema similar to \cite{Efthymiou}.
Assume that we are a given a graph $G=(V,E)$ of {\em maximum degree} $\Delta$.
We specify  weights for  the vertices of $G$.  There are two  {\em parameters},
$\epsilon>0$ and $d>0$.  We let   $\TDeg=(1+\epsilon/6)d$ denote the threshold for ``low/high'' 
degree vertices.  For each vertex  $u\in V$  we define its weight $W(u)$ as follows:
\begin{equation}\label{eq:DefVrtxWeight}
W(u)= 
\begin{cases}
\left ( {1+\epsilon/10}\right)^{-1} & \textrm{if } \degree(u)\leq \TDeg
\\
{ d^{15}\ \degree}(u)  & \mbox{otherwise.}
\end{cases}
\end{equation}

The weighting assigns low-degree vertices, namely those with degree $\leq \TDeg$, a weight~$<1$, whereas high-degree vertices have weight~$\gg1$ which is proportional to their degree.
Given the vertex weights in \eqref{eq:DefVrtxWeight}  for each path $\P$ in $G$ we specify weights, too. 
More specifically,  for each path $\P=u_1,\ldots, u_\ell$ in $G$ define its weight $W(\P)$ 
as the product of the vertex weights:
\begin{equation}
W(\P)=\prod^\ell_{i=1}W(u_i). \label{def:PathWeight}
\end{equation}

\noindent
We use the above weighting schema to specify the blocks for our  dynamics. 
Of particular interest  are the vertices $v$ for which {\em all of the paths} 
that emanate from $v$ are of low weight. 
Given some integer $r\geq 0$,  a vertex $v$ is called a ``$r$-{\em breakpoint}'' if the following holds:
\[  \mbox{For every path $\P$ of length at most $r$ that
starts at $v$ it holds that $W(\P)\leq 1$.}
\]

\noindent
The breakpoints are particularly important for our block construction as we use them to specify the boundary 
of the blocks.  Intuitively, choosing large $r$,  for a $r$-breakpoint we have that high degree vertices are far from it.

We say that the graph $G$, of maximum degree at most $\Delta$, admits
a ``{\em sparse block partition}" ${\cal B}={\cal B}(\epsilon,  d, \Delta)$, for some $\epsilon, d>0$, if 
 $\cal B$ has the following properties:
Each block $B\in {\cal B}$  is a tree with at most one extra edge. Each vertex $u$
 which is at the outer boundary of multivertex block $B$, can only have one neighbour
 inside $B$. More importantly, $u$ is at a sufficiently large distance from the high degree vertices in 
 $B$ as well as the cycle in $B$ (if any).
The high degree requirement translates to  $u$ being an $r$-breakpoint for  large $r$.
Finally, $u$ does not belong  to any cycle of length less than $d^2$. 
To be more specific we have the following:

\begin{definition}[Sparse block partition]\label{def:SparseBlockPart}
For $\epsilon>0$, $d>0$ and $\Delta>0$,  consider a graph  $G=(V,E)$  of maximum degree at most $\Delta$.
We say that $G$ admits a ``sparse block partition" ${\cal B}={\cal B}(\epsilon,  d, \Delta)$ 
 if  $V$ can be partitioned  into the set of blocks  $ {\cal B}$ for which  the following is true:
\begin{enumerate}
\item Every $B\in {\cal B}$ is a tree with at most one extra edge.
\item Each vertex $v$ in the outer boundary of a multi-vertex block $B$ has the following  properties:
\begin{enumerate}
\item  $v$ is an $r$-breakpoint for $r> { \max\{ {\tt diam}(B),  \log\log n \}}$,
\item  $v$ has exactly one neighbor inside $B$,
\item  if $B$ contains a cycle $C$, then 
$\textstyle\distance(v,C)\geq \max \left \{2  \log(|C|\  \Delta),\   \frac{\log\log d}{\log d}\left( {|C|+\log \Delta} \right) \right\}$
\end{enumerate}
\item  Each vertex $u\in \outBound B$, for any $B\in {\cal B}$, 
does not belong to any cycle of length $<d^{2}$.
\end{enumerate}
\end{definition}

To give an idea how such a partition looks like, we consider  the case of  $G(n,d/n)$. 
There, the sparse block partition   ``hides"  the large degree vertices, i.e., $>\TDeg$,  deep inside the blocks, 
and similarly the cycles of length $<d^{-2/5}\log n$.
For the high degree requirement we use $r$-breakpoints at the boundary of multivertex  blocks.
Usually $r\leq \log n/\log^4 d$ and  typically $G(n,d/n)$ has a plethora of $r$-breakpoints.
We also we the fact that, typically,  the  short cycles in $G(n,d/n)$ are far apart from each other.
The plethora of $r$-breakpoint in $G(n,d/n)$ allow to surround the short cycles from the appropriate distance.

Our rapid mixing result for block dynamics is about graphs which admit a sparse block parition 
${\cal B}={\cal B}(\epsilon, d, \Delta)$, for appropriate   $\epsilon, d, \Delta$. We consider block dynamics with set of blocks specified by ${\cal B}$. 
The lower bound on $k$  for rapid mixing will  depend on $d$ rather than the maximum 
degree $\Delta$. 
In that respect   the   interesting case is when  $\Delta\gg d$, like the typical instances of 
$G(n,d/n)$.

So as to show rapid mixing for the graphs which admit vertex partition ${\cal B}(\epsilon, d,\Delta)$, we have to  guarantee that the corresponding block dynamics is  {\em ergodic}.

\begin{definition}
For $\epsilon, d, \Delta>0$, let $\IntrstGraphFam=\IntrstGraphFam(\epsilon, d, \Delta)$ be the family of graphs
on $n$ vertices such that for every $G\in \IntrstGraphFam$ the following holds:
\begin{enumerate}
\item $G$ admits a sparse block partition ${\cal B}(\epsilon, d, \Delta)$
\item The corresponding block dynamics is {\em ergodic} for $k\geq \alpha d$
\end{enumerate}
where the quantity $\alpha$ we use above is the solution of the equation $\alpha^{\alpha}=e$, 
i.e.,  $\alpha=1.7632\ldots$
\end{definition}

\begin{theorem}\label{thrm:RapidMixingAvrgDegGraph176}
For all $\epsilon>0$, there exists $C>0$  such that for all sufficiently large  $d>0$
 and any  graph  $G\in \IntrstGraphFam (\epsilon, d, \Delta)$, where $\Delta>0$ can be a function of $n$,
the following is true: For $k=(\alpha+\epsilon)d$, the 
  block dynamics with set of block ${\cal B}$ has mixing time 
\[
 \Tmix\leq C n \log n,
 \]
 where $\alpha$ is the solution of the equation $\alpha^{\alpha}=e$, i.e., $\alpha=1.7632\ldots$.
 Moreover, each step of the dynamics can be implemented in $O(k^3 B_{\rm max} )$ time, where 
 $B_{\rm max}$ is the size of the largest block. 
\end{theorem}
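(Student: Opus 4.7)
The plan is to prove rapid mixing via path coupling with a carefully weighted Hamming metric, combined with a two-phase analysis (burn-in + coupling). The weighted metric assigns to each disagreement at a vertex $u$ a weight reflecting its ``depth'' in the block structure, so that disagreements on the outer boundary $\ImpVrtx$ cost more than disagreements deep inside blocks. The crucial ingredient that lets us push the threshold down to $k > \alpha d$ (rather than the $2d$ barrier obtained by naive path coupling) is the local uniformity property: once the chain has run for a short burn-in, each low-degree vertex $u$ (with $\degree(u) \le \TDeg$) has, with high probability, at least roughly $k(1-1/k)^{\degree(u)} \approx k \exp(-d/k)$ available colors instead of the worst-case $k-\degree(u)$. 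For $k = (\alpha+\eps)d$, since $\alpha = \exp(1/\alpha)$, this gives strictly more than $d$ available colors, yielding the contraction slack we need.

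For the contraction step I would fix two colorings $X,Y$ differing only at a single vertex $v$ and run the coupled block dynamics that selects the same block $B$ in both chains. If $v \in B$ the boundary conditions on $\partial B$ agree, and a maximal coupling of the two block-recolorings identifies $X$ and $Y$ inside $B$. The interesting case is $v \in \outBound B$: here one has to couple two recolorings of a tree (or unicyclic) block $B$ under boundary conditions differing only at $v$. Using a recursive / path-based coupling on the tree (in the spirit of Goldberg--Martin--Paterson and the earlier coloring literature), the probability that a given $u \in B$ becomes a disagreement is bounded by a product of $1/A_{w}$ factors along the (unique) tree path from $v$'s in-block neighbor to $u$, where $A_w$ is the number of available colors at $w$. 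Summing these, weighted by $W(u)$, over $u \in B$ and over all blocks $B$ with $v \in \outBound B$ gives the expected one-step change in the weighted metric. Choosing the vertex weights so that $W(u)$ decays geometrically with $\distance(u,\outBound B)$ at a rate slightly slower than the $1/A_w$ decay, and using local uniformity to ensure $A_w \ge (1+\Omega(\eps))d$ on low-degree vertices, produces a contraction factor $1 - \Omega(1/N)$ per step, where $N = |\B|$, yielding $\Tmix = O(n \log n)$ by standard path coupling.

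The delicate point is the burn-in: local uniformity fails at $t=0$, so we cannot apply the contraction bound immediately. Here the plan is to first run the block dynamics for a burn-in phase of length $\Theta(n \log n)$ starting from an arbitrary coloring $X_0$, during which each block is recolored many times. For a block $B$ that is recolored ``in isolation'' (boundary fixed), a fresh sample of a tree coloring already satisfies local uniformity at all interior low-degree vertices with very high probability (a direct calculation on tree Gibbs distributions, following Dyer--Frieze's local uniformity analysis). The nuisance is that during burn-in, discrepancies from $X_0$ may propagate through the block graph via repeated block updates; we must show they do not escape a small neighborhood. This is where the disagreement-percolation / concentration argument of Sections~\ref{sec:BurnInAnalysis}--\ref{sec:DisPerc} enters: the fact that boundary vertices of multi-vertex blocks are $r$-breakpoints with $r > \max\{\diameter(B), \log\log n\}$ and lie far from cycles forces any propagating disagreement path to have weight $W(\P) \le 1$, so the expected number of disagreements leaving a given block decays geometrically, and a union bound over the $O(n \log n)$ burn-in steps confines the bad set.

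The main obstacle I expect is exactly this coupling between the burn-in analysis and the contraction analysis: we need local uniformity to hold not just in expectation under $\pi$ but \emph{simultaneously} along an entire coupled trajectory of the two chains used in path coupling, and we need this conditioning not to disrupt the independence used in the disagreement-percolation bound. The plan is to define a ``good'' event $\mathcal{G}$ on which (i) local uniformity holds at all low-degree vertices after the burn-in, and (ii) the support of the burn-in disagreement is confined to small pockets around high-degree vertices and short cycles; then show $\Prob{\mathcal{G}} = 1 - o(1)$, do the path-coupling contraction only on $\mathcal{G}$, and absorb $\mathcal{G}^c$ into the mixing time via a union bound. Finally, for the per-step cost $O(k^3 B_{\max})$, I would implement block recoloring by a standard tree-coloring dynamic program (with the at-most-one extra edge handled by conditioning on a color for one cycle endpoint and summing), since each block is a tree or unicyclic.
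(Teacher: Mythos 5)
Your proposal follows essentially the same route as the paper: path coupling under a metric that discounts disagreements interior to blocks, local uniformity after a short burn-in to obtain contraction at $k=(\alpha+\eps)d$, a disagreement-percolation/concentration argument (relying on the breakpoint weighting to offset high-degree vertices) to confine burn-in disagreements near the initial discrepancy, a good-event decomposition absorbing the failure probability, and a tree/unicyclic dynamic program giving the $O(k^3 B_{\rm max})$ per-step cost. The only cosmetic differences are your geometric-in-depth vertex weights in place of the paper's weight $n^2\degree_{out}(z)$ on block-boundary vertices (weight $1$ inside), and your $\Theta(n\log n)$ burn-in where the paper performs an $O(N)$-step burn-in within each path-coupling comparison before iterating the multi-step contraction.
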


\noindent
The proof of Theorem \ref{thrm:RapidMixingAvrgDegGraph176} appears in Section \ref{sec:RapidMixingColor} of  the appendix.

In light of   Theorem \ref{thrm:RapidMixingAvrgDegGraph176}  we get  rapid mixing  for the block dynamics for $G(n,d/n)$
by considering the following, technical,  result.

\begin{lemma}\label{thrm:RapidMixBlockGnp}
For all $\epsilon>0$ and $\Delta=(3/2) \left( \log n/\log \log n\right)$ and sufficiently large $d>0$
it holds that $\Pr[G(n,d/n)\in \IntrstGraphFam (\epsilon, d, \Delta)]\geq 1-o(1)$.
Moreover, $G(n,d/n)\in \IntrstGraphFam (\epsilon, d, \Delta)$ implies that  $B_{\rm max}\leq n^{1/(\log d)^2}$.
\end{lemma}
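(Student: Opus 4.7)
The plan is to verify with probability $1-o(1)$ over $G\sim G(n,d/n)$ all three conditions defining membership in $\IntrstGraphFam(\epsilon,d,\Delta)$---maximum degree bound, existence of a sparse block partition ${\cal B}$, and ergodicity of the block dynamics for $k\geq\alpha d$---while simultaneously bounding the largest block by $n^{1/(\log d)^{2}}$. I first gather structural facts about $G(n,d/n)$: a union-bounded Chernoff tail on $\mathrm{Bin}(n-1,d/n)$ yields $\Delta\leq(3/2)\log n/\log\log n$ w.h.p.; the fraction of vertices in the high-degree set $H=\{v:\degree(v)>\TDeg\}$ concentrates around $p_H\leq e^{-c_\epsilon d}$ by standard concentration for sums of weakly dependent indicators; and the Poisson paradigm for cycles gives that the number of cycles of length $\ell<d^{2}$ is $O(1)$ and that any ball of radius $\ll\log n/\log d$ contains at most one cycle w.h.p.

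\smallskip
\emph{Abundance of $r$-breakpoints.} Fix $r$ large enough to satisfy Definition~\ref{def:SparseBlockPart}(2a) but $r\leq\log n/\log^{4}d$. For a path $\P=u_{1},\ldots,u_{\ell}$ with $s$ high-degree vertices, equations \eqref{eq:DefVrtxWeight}--\eqref{def:PathWeight} give
\[
W(\P)\leq (1+\epsilon/10)^{-(\ell-s)}\prod_{u_i\in H\cap\P}d^{15}\degree(u_i),
\]
so $W(\P)>1$ forces disproportionately many heavy-degree vertices on $\P$. The expected number of such walks of length $\leq r$ from a fixed vertex is controlled by a first-moment computation combining the path-count bound $(1+o(1))d^{\ell}$ with the joint tail behaviour of $\degree(u_i)$ along a random walk in $G(n,d/n)$. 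For $d$ chosen large compared to $1/\epsilon$, this expectation is $o(1)$ per vertex; standard concentration then gives that the bad set $U$ of non-$r$-breakpoints satisfies $|U|=o(n)$ w.h.p.\ and is spatially clustered within a bounded neighborhood of $H$.

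\smallskip
\emph{Constructing ${\cal B}$ and bounding $B_{\max}$.} Call a vertex \emph{contaminated} if it lies in $H\cup U$, on a cycle of length $<d^{2}$, or within the distance padding required by Definition~\ref{def:SparseBlockPart}(2c). Starting from each contaminated vertex, grow a BFS region outward until its frontier consists entirely of $r$-breakpoints that lie on no short cycle and attach to the region by exactly one edge; merge overlapping regions. Since the density of $r$-breakpoints is $1-o(1)$ and every BFS ball of radius $o(\log n/\log d)$ encloses at most one cycle, the procedure terminates at radius $O(\log n/\log^{4}d)$, producing a region that is a tree plus at most one extra edge, of size at most $\Delta^{O(\log n/\log^{4}d)}\leq n^{1/(\log d)^{2}}$---matching the $B_{\max}$ claim. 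Uncontaminated vertices left over form singleton blocks trivially meeting the definition.

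\smallskip
\emph{Ergodicity and main obstacle.} Ergodicity of the block dynamics follows from the single-edge attachment of outer-boundary vertices (Definition~\ref{def:SparseBlockPart}(2b)): each outer-boundary color imposes exactly one forbidden color on its unique inner neighbour, so the space of proper $k$-colorings of a tree-plus-one-edge under these local constraints is non-empty and connected under single-vertex Glauber for any $k\geq 3$, comfortably below $\alpha d$. The main obstacle is the coupled analysis of the two middle steps: proving sharp enough path-weight tail bounds that non-breakpoints are both rare \emph{and} spatially confined near $H$, so that the BFS closures stay below $n^{1/(\log d)^{2}}$ even after enforcing all of the distance-from-cycle padding in Definition~\ref{def:SparseBlockPart}(2c). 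Everything else---the degree and cycle statistics, the block-wise ergodicity on trees---is either standard or a direct consequence of the construction.
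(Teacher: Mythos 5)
There are genuine gaps. First, your ergodicity argument addresses the wrong object: ergodicity in the definition of $\IntrstGraphFam$ is about the connectivity of the space of proper $k$-colorings of the \emph{whole} graph $G$ under block moves (together with its non-emptiness), not about the ability to recolor a single tree-plus-one-edge block under a fixed boundary. Since $G(n,d/n)$ has vertices of degree $\Theta(\log n/\log\log n)\gg k=\alpha d$, frozen neighborhoods can a priori block transitions, so connectivity really needs an argument; the paper gets it by citing that $G(n,d/n)$ w.h.p.\ has no $t$-core for $t\geq d$ and that absence of a $t$-core implies ergodicity of the (single-site, hence block) dynamics for $k\geq t+2$ \cite{DFFV,k-core}, and it cites the chromatic number results \cite{optas-chrno,chromatic2} for $k$-colorability. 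Your ``any $k\geq 3$'' within-block statement does not substitute for this. Second, your bound on $B_{\rm max}$ fails as computed: a BFS region of radius $\Theta(\log n/\log^{4}d)$ with branching factor $\Delta=\Theta(\log n/\log\log n)$ has size $\exp\bigl(\Theta\bigl((\log n)(\log\log n)/\log^{4}d\bigr)\bigr)=n^{\Theta(\log\log n/\log^{4}d)}$, which is superpolynomial for fixed $d$, not $\leq n^{1/(\log d)^{2}}$. The paper avoids this via Lemma \ref{lemma:GrowthFromBP}: because block-boundary vertices are (adjacent to) breakpoints, the path-weight condition forces the effective branching factor along any path from the boundary to be at most $(1+\epsilon/3)d$ (high-degree vertices are paid for by long stretches of low-degree ones), and only then does diameter $O(\log n/\log^{4}d)$ give $B_{\rm max}\leq n^{1/(\log d)^{2}}$.

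Third, the structural fact you yourself flag as the ``main obstacle''---that non-breakpoints are confined so that every sufficiently long path contains an $r$-breakpoint, and hence the block-growing procedure terminates at small radius with boundary vertices that are breakpoints attached by a single edge and far from short cycles---is exactly the content the paper imports from prior work (Theorem \ref{thrm:BreakablePaths} and Lemma \ref{lemma:BreakPointExtension}, from \cite{Efthymiou}) together with the cycle-separation estimate (Lemma \ref{lemma:DistantCycle}). A per-vertex first-moment bound showing the bad set has density $o(n)$ does not yield this: one needs that \emph{no} elementary path of length $\log n/(\log d)^{5}$ avoids breakpoints, which is a statement about all paths, not about the global density of bad vertices. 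Without that step your construction has no guarantee that the frontier ever consists of breakpoints with single attachment, so the proposal as written does not close.
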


\noindent
The proof of Lemma \ref{thrm:RapidMixBlockGnp} appears in Section \ref{sec:thrm:GnpAdmitsPart} of the appendix.

In light of Theorem \ref{thrm:RapidMixingAvrgDegGraph176} and Lemma \ref{thrm:RapidMixBlockGnp},  Theorem \ref{thm:1.76-main}  follows by a 
comparison argument we present in   Section \ref{sec:thm:1.76-Glauber} in the appendix.

\section{Analysis of Block Dynamics for $k>2d$ - Overview}
\label{sec:2d}

The techniques we present in this section are sufficient to show rapid mixing of the corresponding
block dynamics for $k>2d$.  
Later  we utilize {\em local uniformity properties} to get a better bound on $k$.

\subsection{A new metric - Proof overview for $k>2\Delta$}\label{sec:MaxDegRapid}

We will use path coupling and hence we consider two copies of the block dynamics
$(X_t),(Y_t)$ that differ at a single vertex $u^*$.
Let us first consider the analysis for a graph with maximum degree $\Delta$.
Jerrum's analysis of the single-site Glauber dynamics \cite{Jerrum} 
(and Bubley-Dyer's simplification using path coupling \cite{PathCouplingBD})
are well-known for the case $k>2\Delta$.  They show a coupling so that the
expected Hamming distance decreases in expectation.

Our first task is generalizing this analysis of the Glauber dynamics to the block dynamics.  
The difficulty is that when we update a block $B$ that neighbors the disagree vertex $u^*$
 the number of disagreements may grow by the size of $B$.  However
disagreements that are fully contained within a block do not spread.
Consequently, we can replace Hamming distance by a simple metric, and then we
can prove rapid mixing for $k>2\Delta$ for any block dynamics where the blocks are
all trees. 

In particular,
if some vertex $z$ is internal, i.e., it does not have any neighbors outside its block
it gets { weight} 1. If $z$ is not internal, it    is assigned a { weight} which is $n^2$ times its out-degree from 
its block,  i.e., $\degree_{out}(z) = |N(z)\setminus B|$  where $B$ is the block containing $z$.  
Then for a pair  $X_t,Y_t$
their distance is the sum of the weight of the vertices in their symmetric difference, i.e.,
\begin{equation}\label{eq:DefOfHammingWeights}
\EdgeBlockWeight(X_t,Y_t) =\sum_{z\in V\setminus \ImpVrtx} \indicator{z\in X_t\oplus Y_t}+n^{2} \sum_{z\in  \ImpVrtx}  \degree_{out}(z)\indicator{z\in X_t\oplus Y_t}
\end{equation}

\noindent
To get some intuition, note that the vertices which are internal in the blocks have 
``tiny" weight compared to the rest ones. This essentially captures  that 
the disagreements that matter in the path coupling analysis are those which involve vertices 
at the boundary of blocks, while the ``potential" for such a vertex to spread disagreements to neighboring blocks
 depends on its out-degree.

Using the above metric 
we will derive the following rapid mixing result. For expository reasons we, also,  provide the proof here.

\begin{theorem} \label{thm:block-regular}
There exists $C>0$, for all $g\geq 3$, all $G=(V,E)$ with 
girth $\geq g$, maximum degree $\Delta$ and $k > 2\Delta$,
for any partition of the vertices $V$ into disjoint blocks
 $V=B_1\cup B_2 \cup \dots \cup B_N$ where $\diameter(B_i)\leq g/2-3$
 for all $i$, the mixing time of the block dynamics satisfies:
 \[
\textstyle  \Tmix\leq C\Delta n\log{n}.
 \]
\end{theorem}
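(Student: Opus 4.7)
The plan is to apply path coupling in the weighted metric $\distance$ from~\eqref{eq:DefOfHammingWeights}, reducing to a contraction estimate on elementary pairs $(X_t,Y_t)$ differing at a single vertex $u^*$ (at distance either $1$ or $n^2\degree_{\mathrm{out}}(u^*)$). Both chains pick the same uniformly random block $B\in\B$ and then use a maximal coupling of the two conditional Gibbs distributions on $B$. Letting $B^*$ be the block containing $u^*$, there are three cases. \textbf{(a)} If $B=B^*$, the boundary condition outside $B^*$ coincides in $X_t,Y_t$, so identity coupling gives $X_{t+1}|_{B^*}=Y_{t+1}|_{B^*}$ and the distance collapses to $0$. \textbf{(b)} If $B\neq B^*$ and $u^*\notin\outBound B$, the boundary conditions for $B$ again match and no new disagreements are created; in particular, if $u^*\notin\ImpVrtx$ then only case~(a) can produce any change in the distance. \textbf{(c)} If $u^*\in\outBound B$, the boundary conditions for $B$ differ only at $u^*$, and disagreements may propagate into $B$.

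For the crucial case~(c), the hypotheses $\diameter(B)\le g/2-3$ and girth $\geq g$ force $B$ to be a tree and moreover force $u^*$ to have exactly one neighbor $v_B\in B$ (else the tree path in $B$ between two such neighbors together with the two edges back to $u^*$ would form a cycle of length $\leq\diameter(B)+2<g$). Thus the neighboring blocks of $B^*$ contributing to case~(c) are in bijection with the $\degree_{\mathrm{out}}(u^*)$ out-neighbors of $u^*$. I would couple the two Gibbs distributions on $B$ by a top-down disagreement-percolation coupling along the tree rooted at $v_B$: at each vertex $v$, conditional on already-coupled ancestors and the fixed outside boundary, the two marginals differ in total variation by at most $p:=1/(k-\Delta)$, because the only ancestor disagreement alters a single color in each available set; propagation therefore occurs edge-by-edge with probability at most $p$. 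This yields $\Prob{z\in X_{t+1}\oplus Y_{t+1}}\leq p^{\dist(z,u^*)}$ for every $z\in B$, and summing the weights against this bound gives
\[
\Exp{\sum_{z\in B} w(z)\indicator{z\in X_{t+1}\oplus Y_{t+1}}}\;\leq\;p\,\Psi(v_B),\qquad \Psi(v)\;:=\;w(v)+p\!\!\sum_{c\text{ child of }v}\!\!\Psi(c),
\]
where $w(z)=1$ for $z\notin\ImpVrtx$ and $w(z)=n^2\degree_{\mathrm{out}}(z)$ otherwise. Since $(\Delta-1)p<1$ when $k>2\Delta$, the recursion for $\Psi$ is subcritical and its geometric tail is controlled by $(1-(\Delta-1)p)^{-1}$.

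Summing over the three cases, the expected change in distance is at most
\[
\frac{1}{N}\Bigl(-w(u^*)+\!\!\!\sum_{B:\,u^*\in\outBound B}\!\!\!p\,\Psi(v_B)\Bigr).
\]
Distinguishing interior vertices of $B$ (weight $1$) from $\inBound B$-vertices (weight $n^2\degree_{\mathrm{out}}$) and using $\Delta p<1$, one shows the bracketed quantity is at most $-\gamma\,w(u^*)$ for some $\gamma=\gamma(k/\Delta-2)>0$, giving a one-step contraction of $1-\gamma/(\Delta N)$. The path coupling lemma then yields a contraction in total variation; since the diameter of $\distance$ is $\poly(n,\Delta)$, its logarithm is $O(\log n)$, and the bound $\Tmix\leq C\Delta n\log n$ follows. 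The step I expect to be the main obstacle is the tight weight accounting in case~(c): the subcritical tree percolation easily bounds the total expected number of disagreements in $B$, but because $\inBound B$-vertices carry $n^2$-scale weights, one must show they are sparse enough along the tree (or that their $\degree_{\mathrm{out}}(\cdot)$-weighting lets the edge-count telescope cleanly across neighboring blocks) to prevent their contribution from exceeding the starting weight $n^2\degree_{\mathrm{out}}(u^*)$. Precisely this is why the metric in~\eqref{eq:DefOfHammingWeights} multiplies by $\degree_{\mathrm{out}}(\cdot)$ rather than using a flat large weight.
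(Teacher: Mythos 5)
Your framework is the paper's: path coupling in the metric \eqref{eq:DefOfHammingWeights}, both chains updating the same block, the same three-case analysis, the observation that girth plus the diameter bound forces each block to be a tree met by $u^*$ in exactly one vertex, the vertex-by-vertex maximal coupling with per-edge propagation probability at most $1/(k-\Delta)$, and your recursion $\Psi$ is the paper's quantity $Q_v(T)$ from \eqref{eq:MainTaskFor2Delta} in disguise. But the step you defer --- showing the bracketed drift is at most $-\gamma\, w(u^*)$ --- is the actual crux of the theorem, and the proposal does not contain the idea that closes it. Subcriticality $(\Delta-1)/(k-\Delta)<1$ only controls the expected \emph{number} of disagreeing vertices in $B$; since vertices of $\inBound B$ carry weight $n^2\degree_{out}(\cdot)$ and need not be sparse along the tree, that bound alone does not keep the per-block contribution below $n^2$, as you yourself flagged.

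What closes the argument (and is what the paper proves by induction on $|T|$) is a pointwise exchange between weight and branching, not sparsity: a vertex $z\in B$ contributes weight $n^2\degree_{out}(z)$, but its number of children inside $B$ is at most $\Delta-\degree_{out}(z)$ (its parent, whether $u^*$ or a vertex of $B$, uses up one of the remaining edges), so with the inductive bound $Q_z(T_y)<n^2$ for the children's subtrees one gets
$Q_v(T)\le \frac{n^2\degree_{out}(z)+n^2\bigl(\Delta-\degree_{out}(z)\bigr)}{k-\Delta}=\frac{n^2\Delta}{k-\Delta}\le n^2\Bigl(1-\frac{1}{\Delta+1}\Bigr)$
for $k\ge 2\Delta+1$, uniformly over blocks and boundary conditions. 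Adding the giveaway $|B|\le n$ for internal-vertex disagreements keeps each case-(c) block's expected contribution below $n^2\bigl(1-\frac{1}{2\Delta}\bigr)$, and since the number of such blocks is exactly $\degree_{out}(u^*)$, this is dominated by the coalescence term $-n^2\degree_{out}(u^*)/N$, yielding the one-step contraction $1-1/(2N\Delta)$ and then $\Tmix=O(\Delta n\log n)$ exactly as you outline. So your route is the right one and you located the obstacle correctly, but the decisive accounting --- the exact trade $\degree_{out}(z)+(\text{number of children of }z\text{ in }B)\le\Delta$ at every vertex, multiplied by $1/(k-\Delta)\le 1/(\Delta+1)$ --- is the missing piece rather than a technicality.
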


\begin{proof}
Let $S\subset\Omega\times\Omega$ denote a pair of colorings that differ at a single vertex.
Moreover, partition $S=\cup_{v\in V} S_v$ where $S_v$ contains those
pairs $(X_t,Y_t)$ which differ at $v$. We will define a coupling for all pairs in $S$ where the expected 
distance decreases and then apply path coupling \cite{PathCouplingBD} to derive a coupling for an arbitrary 
pair of states where the distance contracts.

Consider a pair of colorings $(X_t,Y_t)\in S_{u^*}$ the differ at an arbitrary vertex $u^*$. 
In our coupling both chains update the  same block at
each step.  Let $B_t$ denote the block updated for this step $(X_t,Y_t)\rightarrow (X_{t+1},Y_{t+1})$.  Also, let $B^*$ denote the block containing $u^*$.  

We consider two cases for the vertex $u^*$, either: 
(i) $u^*$ is an internal vertex to its block $B^*$, 
i.e., $\degree_{out}(u^*)=0$, or 
(ii) $u^*$ is  on the boundary of its block, i.e., $u^*\in \inBound B^*$.

The easy case is case (i) when $u^*$ is internal.  There are no blocks with disagreements on their boundary, and hence new disagreements cannot form.  
Since the neighborhood of the updated block $B_t$ is the same in
both chains, we can use the identity coupling 
so that $X_{t+1}(B_t) = Y_{t+1}(B_t)$.   The distance cannot increase, and if $B_t=B^*$ then
we have $X_{t+1}=Y_{t+1}$; this occurs with probability $1/N$ where $N$ is the number of blocks.
Therefore, in the case that $u^*\notin\inBound B^*$ we have:
\begin{equation}
  \Exp{ 
\EdgeBlockWeight(X_{t+1}, Y_{t+1})
 \ | \ X_t, Y_t }   \leq  \left( 1 -1/N \right) \EdgeBlockWeight (X_{t}, Y_{t}) . \label{eq:BlockUpdt4Internal}
\end{equation}

Now consider case (ii) where $u^*\in\inBound B^*$. If $u^*\notin\outBound B_t$ then
we can couple $X_{t+1}(B_t)=Y_{t+1}(B_t)$ and hence the distance does not increase.
Moreover if $B_t=B^*$ then we have $X_{t+1}=Y_{t+1}$; thus with probability $1/N$ the
distance decreases by $ -n^2\degree_{out}(u^*)$.
The distance can only increase when $u^*\in\outBound B_t$ and hence our
main task is to bound the expected change in the distance in this scenario.
We will prove the following:
\begin{equation}\label{eq:BlockUpdtCovergentSimple}
\Exp{ \EdgeBlockWeight(X_{t+1}, Y_{t+1})
- \EdgeBlockWeight (X_{t}, Y_{t})
 \ | \ X_t, Y_t, \ B_t, \  u^*\in\outBound B_t}
 \leq n^2\left( 1-{1}/(2\Delta) \right).
\end{equation}

\noindent
All the above imply that having $u^*\in \outBound B^*$ we get that

\begin{eqnarray}
 \Exp{ 
\EdgeBlockWeight(X_{t+1}, Y_{t+1})
 \ | \ X_t, Y_t } &\leq&\EdgeBlockWeight (X_{t}, Y_{t})- \frac{n^2}{N} \degree_{out}(u^*) + \frac{n^2}{N}\sum_{B:\  u^*\in\outBound B} \left(1-{1}/{(2\Delta)}\right) 
\nonumber \\ 
& \leq &  \left ( 1- {1}/{(2N\Delta)} \right) \EdgeBlockWeight (X_{t}, Y_{t}), \label{eq:BlockUpdt4External} 
\end{eqnarray}
where in the first inequality we use the fact that each block is updated with probability $1/N$. The second inequality follows from the observation that $\EdgeBlockWeight (X_{t}, Y_{t})=n^2\degree_{out}(u^*)$, while the number
of sumads  in the first inequality is equal to $\degree_{out}(u^*)$.

In light of \eqref{eq:BlockUpdt4Internal} and \eqref{eq:BlockUpdt4External}, path coupling implies the  following: For two copies of the Glauber dynamics
$(X_t)_{t\geq 0}$, $(Y_t)_{t\geq 0}$ there is a coupling such that for any $T>0$ and any $X_0, Y_0$ we have
\[
\ExpCond{\EdgeBlockWeight(X_T, Y_T)}{X_0, Y_0} \leq \left(1-{1}/({2N\Delta)} \right)^T\EdgeBlockWeight(X_0, Y_0).
\]
Since $\EdgeBlockWeight(X_0, Y_0)\leq 2\Delta n^3$, we have:
$$
\Prob{X_T\neq Y_T} \leq 2\Delta n^3 \exp(-T/(2N\Delta) )\leq \eps,
$$
for $T=20\Delta n\log{n}$, which proves the theorem.

We now prove \eqref{eq:BlockUpdtCovergentSimple}.
The disagreements on the inner boundary of a block are the dominant term in $\distance()$,
hence for a pair of colorings $\sigma, \tau$, let
\begin{displaymath}
\textstyle {\cal R}( \sigma, \tau)=n^2\sum_{z\in \sigma\oplus \tau } \degree_{out}(z).
\end{displaymath}

By simply ``giving away'' all of the vertices in $B_t$ as internal disagreements after the update we
can upper bound the l.h.s. of \eqref{eq:BlockUpdtCovergentSimple} in terms of ${\cal R}()$:
\begin{multline}
 \Exp{ \EdgeBlockWeight(X_{t+1}, Y_{t+1})
- \EdgeBlockWeight (X_{t}, Y_{t})
 \ | \ X_t, Y_t, \ B_t, \  u^*\in\outBound B_t}
 \\
 \nonumber
 \leq 
|B_t|+ \Exp{ {\cal R}( X_{t+1}, Y_{t+1}-{\cal R}( X_{t}, Y_{t})
 \ | \ X_t, Y_t, \ B_t, \  u^*\in\outBound B_t}.
\end{multline}

\noindent
Since $|B_t|\leq n$, \eqref{eq:BlockUpdtCovergentSimple} follows by showing that 
\begin{equation}\label{eq:Target4eq:BlockUpdtCovergentSimple}
\Exp{ {\cal R}( X_{t+1}, Y_{t+1}-{\cal R}( X_{t}, Y_{t})
 \ | \ X_t, Y_t, \ B_t, \  u^*\in\outBound B_t} \leq n^2\left(1-{1}/{(\Delta+1)}\right ).
\end{equation}

\noindent
For  $v\in V$ and $T\subseteq V$, where the induced subgraph on $T$ is a tree and
$\diameter(T)\leq g/2-3$, let
\begin{equation}\label{eq:MainTaskFor2Delta}
 Q_v(T) = \max_{(X_t,Y_t)\in S_v} 
\ExpCond{ {\cal R}(X_{t+1}, Y_{t+1}) - {\cal R}( X_t, Y_t)}{X_t,Y_t \mbox{ and recolor block $T$}}.
\end{equation}
The reader may identify the expectation in  \eqref{eq:Target4eq:BlockUpdtCovergentSimple} as $Q_{u^*}(B_t)$.  
Even though our concern is the blocks of the dynamics,   $Q_v(T)$ is defined for arbitrary $T$. 
Note that if $v\in \outBound T$ and $|N(v)\cap T|\geq 2$ then the diameter assumption for $T$ 
would imply that a cycle of length $<g$ is present in $G$.
Clearly this is not true since $G$ is assumed  to have girth $g$. Therefore, we conclude that if $v\in \outBound T$,  
then it has is exactly one neighbor  in $T$.

We'll prove by induction on $|T|$  that   $Q_v(T)\leq n^2\left(1-{1}/({\Delta+1})\right)$.
When,  $v\notin \outBound T=\emptyset$ we have $Q_v(T)=0$, since there are no disagreements 
on $\outBound T$ and hence we can trivially use the identical coupling for the vertices in $T$.
We proceed with the  case where $v\in \outBound T$.

 Assume that $z\in T$ is adjacent to $v$. Furthermore, assume that the tree is rooted at $z$ and for every 
vertex $y$ let $T_y$ be the subtree which contains $y$ and all its descendants.

The identical coupling is precluded because of the disagreement at $\outBound T$.
The coupling decides the colorings  of a single vertex at a time. It starts  with  $z$
and couples $X_{t+1}(z)$ and $Y_{t+1}(z)$ maximally, subject to the boundary conditions
of $T$.  Then, in a BFS manner it considers the rest of the vertices, starting with the children of $z$. 
For each  $w$ the coupling $X_{t+1}(w)$ and $Y_{t+1}(w)$ is  maximal, subject to the boundary 
conditions of $T$ but also the configuration of the parent of $w$.

Consider  $w\in T$ and let $u$ be its parent (with  $v$ being the parent of $z$).
Given these $w, u$ it is useful to make a few observations: Consider  the coupling
of $X_{t+1}(w)$ and $Y_{t+1}(w)$  given that  $X_{t+1}(u)=Y_{t+1}(u)$. Then, it is direct  that there is 
no disagreement on the boundary of the
subtree $T_w$ and hence we can use the  identical coupling for  $X_{t+1}(w)$ and $Y_{t+1}(w)$,
and in fact,  we can have identical coupling for all of the vertices in $T_w$.
In the other case of disagreement at $u$, note that
\begin{equation}\label{eq:introProbDis}
\textstyle \Pr[X_{t+1}(w)\neq Y_{t+1}(w)\ |\ X_{t+1}(u) \neq Y_{t+1}(u)]  \leq  {1}/({k-\Delta}).
\end{equation} 
since the only disagreement at the boundary of $T_w$ is at $u$ and the probability of disagreement at $w$ is upper bounded by 
the probability of the most likely color for $X_{t+1}(w)$ and $Y_{t+1}(w)$ which is $1/(k-\Delta)$. Since there are
at least $k-\Delta$ available colors for $w$.

Now we proceed with the induction.  The base case is  $T=\{z\}$, then, using \eqref{eq:introProbDis} we have
 $$
 Q_v(T)\ \leq \    n^2 \Delta \Pr[X_{t+1}(z)\neq Y_{t+1}(z) ] \ \leq\  \frac{n^2 \Delta}{k-\Delta}\leq  n^2\left( 1-\frac{1}{\Delta+1}\right),
\ \ \mbox{  for $k>2\Delta$},
$$
where the first inequality follows because the contribution of $z$ to the distance  is  $\leq n^2\Delta$.
This proves the base of induction. 
To continue,  we note that  the following inductive relation holds
\[
Q_v(T) \leq  \textstyle \Pr[X_{t+1}(z)\neq Y_{t+1}(z) ]\left( n^2 \ \degree_{out}(z) + \sum_{y\in N(z)\cap T} Q_z(T_y) \right).
\]
The above follows by noting  $Q_v(T)$ is equal to the expected contribution from $z\in N(u^*)\cap T$ plus 
the  expected contribution  from  each subtree $T_y$.
We multiply the contribution of all $T_y$ with the probability of the event $X_{t+1}(z)\neq Y_{t+1}(z)$
because,  each subtree starts contributing once we have $X_{t+1}(z)\neq Y_{t+1}(z)$. 

The induction hypothesis implies that for any $y$ we have $Q_z(T_y)<n^2$. 
We get that 
\begin{eqnarray*}
Q_v(T)  
&\leq & \textstyle \Pr[X_{t+1}(z)\neq Y_{t+1}(z) ] \left(  n^2\ \degree_{out}(z) + n^2(\Delta- \degree_{out}(z))\right)\\
&\leq &  \frac{n^2\Delta }{k-\Delta}   \hspace{5.7cm} \mbox{ [by  \eqref{eq:introProbDis}]}.
\\ &\leq &  n^2 \left ( 1-{1}/({\Delta+1})\right)  \hspace{3.7cm} \mbox{ [since $k\geq 2\Delta+1$]}.
\end{eqnarray*}

\noindent
The above  bound implies that \eqref{eq:Target4eq:BlockUpdtCovergentSimple} holds, 
since we can  identify the expectation in  \eqref{eq:Target4eq:BlockUpdtCovergentSimple} as $Q_{u^*}(B_t)$.

The theorem follows.
\end{proof}

\subsection{Proof overview for random graphs $G(n,d/n)$ and $k\geq (2+\epsilon)d$}
We extend the above approach to random graphs when 
$k>(2+\epsilon)d$ where $d$ is the expected degree instead of the maximum degree $\Delta$.  
Morally, this  amounts to having blocks whose   behavior, in terms of generating new disagreements, 
is not too different than that of a tree of maximum degree $\TDeg:=(1+\epsilon/6)d$.
Our goal is to prove a result similar to  \eqref{eq:BlockUpdtCovergentSimple},
i.e.,  the expected increase from updating a block which is next to a single disagreement is less than $n^2$.
If we have that,  then  the proof of rapid mixing follows   the same line of  arguments as  that we have in Theorem \ref{thm:block-regular}.

We use blocks from sparse block partition (Definition \ref{def:SparseBlockPart})
The blocks here  are tree-like with at most one extra edge.  There is a buffer of low degree vertices along the inner boundary 
of a block.  (Recall low degree means degree $\leq \TDeg$.)
Note that even though high degree vertices have tiny weight under our 
distance $\dist()$, they can still
have dramatic consequences since their degree may be a function of $n$ while 
$k$ and $d$ are constants, and when a disagreement reaches a high degree vertex it 
then has the potential to propagate along a huge number of paths to the boundary of the block.

The blocks are designed so that  high degree vertices and any possible cycle
 are ``deep'' inside their respective blocks: specifically, for a vertex $v$  
of degree $L>\TDeg$, every path from $v$ 
to the boundary of its block consists of $\Omega(\log{L})$ low degree vertices
(in an appropriate amortized sense).   Using these low degree vertices 
the probability of propagating a disagreement along this
path of low-degree vertices offsets the potentially huge effect of a high degree vertex disagreeing.
Similarly, we work for the cycle inside the block.

More concretely, we get a handle on the expected increase of distance when we update the block $B$ which 
has a disagreement at   $u^*\in \outBound B$ by arguing about the  {\em probability of propagation}  inside the block.
For a vertex $v\in B$ we let { probability of propagation} be the probability  of having a path of disagreeing vertices from $u^*$ to $v$,
 given that  all the vertices in path  but  $v$ are  disagreeing.  We get the desirable bound on 
 the expected increase by showing that for every  {\em low degree}  $v\in B$ which is within small 
 distance from $u^*$  (i.e., $\log^2 d$)  the  probability of propagation  is less than 
 $\frac{1}{(1+\epsilon/2)\degree(v)}$.

For $k\geq (2+\epsilon)d$ the above bound for the probability of propagation
is always true, i.e. for every boundary condition of the block $B$. 
The  details of the argument  appear in  Section \ref{sec:SimpleCriterion}  in the appendix.
However,   to $k>1.76...d$, the extra challenge   is that the vertices 
do not necessarily  have a small probability of propagation.   
This is due to some, somehow, problematic  configuration on $\outBound B$. 
To this end, we show that after  a short burn-in period typically  such a problematic boundary configuration is 
highly unlike to happen.  See in the next section for further details.

\section{Utilizing uniformity - Rapid mixing for $k>1.76...d$ }\label{sec:from2dtoalphad}
 In the $k>2\Delta$ case,
it is illustrative to consider the case when vertices on the inner boundary of a block have only one
neighbor outside the block.  In this case our new weighting scheme simplifies to the standard Hamming
distance.  In this case the probability of propagation is $\leq 1/(k-\Delta)$ whereas the
branching factor (internal to the block) is $\leq\Delta-2$ and hence these offset when $k>2\Delta$.

Here we want to utilize that when a vertex $z$ has large internal branching factor
(i.e., most of $z$'s neighbors are internal to the block) then these neighbors are not worst-case
but are from the stationary distribution of the block (conditional on a fixed coloring on the
block's outer boundary).  Then we want to exploit the so-called ``local uniformity results'' first
utilized by Dyer and Frieze \cite{DFUniformity} (and then expanded upon in \cite{ TomsUniformity,GMP,DFHVConstantDegree,FOCS16}).
The relevant property in this context is that if a set of $\Delta$ vertices receive independently at random colors 
(uniformly distributed over all $k$ colors) then the expected number of available
colors (i.e., colors that do not appear in this set) 
is $\approx k\exp(-\Delta/k)$.   We'd like to replace the 
probability of propagation from  $1/(k-\Delta)$ to $1/ (k\exp(-\Delta/k))$ which yields the threshold $k>\alpha\Delta$
where $\alpha\approx 1.7632\ldots$ is the solution to $\Delta/(k\exp(-\Delta/k)) = 1$ for $k=\alpha\Delta$.

For a vertex $v$ and the block dynamics $(X_t)$, 
let $\AvialColors_{X_t}(v)$  denote the set of available colors for $v$:
\[ \AvialColors_{X_t}(v) := [k]\setminus X_t(N(v)).
\]
Roughly the local uniformity result says that after a short burn-in period of $O(n)$ steps, 
a vertex $v$ has at least the expected number of available colors with high probability (in $d$).  
Let $\U_t(v)$ denote the event that the block $B(v)$ containing $v$ has been
recolored at least once by time $\leq t$.
We prove the following result that after $C_0 n$ steps the dynamics gets the 
uniformity property at $v$ with high probability, and it maintains it for $Cn$ steps for arbitrary $C$
(by choosing $C_0$ sufficiently large).

\begin{theorem}[Local Uniformity]\label{thrm:Uniformity1.76}
For all $\eps,C>0$, there exists $C_0>0, d_0>1$, for all $d>d_0$, for $k\geq (\alpha+\eps)d$,
let 
${\cal I}=\left [C_0 N, (C+C_0)N\right],$ for $v\in V$,
\[
\Pr\left[ 
\exists t\in {\cal I}\;s.t. \;  |\AvialColors_{X_t}(v)|     \leq  \indicator{\U_t(v)}  (1-\eps^2) k \exp\left (-\degree(v)/k  \right)
 \right]
 \leq \textstyle d^{4} \exp\left( -d^{3/4}\right).
\]  
\end{theorem}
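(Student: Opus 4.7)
The plan is to prove the local uniformity bound in three steps: first a stationary concentration estimate, then a coupling-based transfer to $X_t$ after the burn-in $C_0 N$, and finally an upgrade from pointwise to uniform control over $t \in \mathcal{I}$.

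For the first step, at stationarity $\pi$ I will bound $\Expectation_\pi\left[|\AvialColors(v)|\right]$ from below via an inductive argument in the spirit of \cite{DFUniformity,DFHVConstantDegree}. Fixing the colors outside the block $B(v)$ and revealing the colors on $N(v)\cap B(v)$ one vertex at a time along the tree-structure of $B(v)$, each neighbor $u$'s conditional marginal is close to uniform over its available color set, whose size is itself inductively close to $k \exp(-\degree(u)/k)$; these combine to show $\mathrm{Pr}_\pi[c \notin X(N(v))] \geq (1 - 1/k)^{\degree(v)}(1 - O(1/d))$ for each color $c \in [k]$, and summing over $c$ yields $\Expectation_\pi\left[|\AvialColors(v)|\right] \geq (1 - o_d(1)) k \exp(-\degree(v)/k)$. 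For concentration around this mean, a Doob martingale that reveals the colors on $N(v)$ one vertex at a time has bounded-difference scale $\sqrt{\degree(v)}$, and the tolerable deviation $\epsilon^2 k \exp(-\degree(v)/k)$ is many standard deviations, so Azuma gives failure probability considerably smaller than $\exp(-d^{3/4})$.

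Next, I will couple $X_t$ to a stationary chain $X^\star_t$ started at $\pi$, using a shared block-update schedule and the identity coupling inside each updated block, so that disagreements can propagate only across block boundaries. The disagreement percolation bound of Section \ref{sec:DisPerc}, combined with the block-dynamics contraction for $k > 2d$ from Section \ref{sec:2d} (valid here since $\alpha < 2$ and invoked purely as an auxiliary tool), shows that after $C_0 N$ steps with $C_0$ sufficiently large, the probability that $X_t$ and $X^\star_t$ disagree at any fixed $u \in N(v)$ is at most $\exp(-\Omega(d^{3/4}))$. Union-bounding over $u \in N(v)$ and composing with the stationary bound from the previous step yields pointwise local uniformity at time $t$, conditional on $\U_t(v)$.

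Finally, $|\AvialColors_{X_t}(v)|$ is piecewise constant in $t$ and changes only when some block in $B(v) \cup N(B(v))$ is updated. The sparse block partition (Definition \ref{def:SparseBlockPart}) keeps $v$'s neighborhood concentrated in a small number of blocks, so the number of such updates in $\mathcal{I}$ is, with overwhelming probability, $\poly(d)$; absorbing this combinatorial factor into the prefactor $d^4$ and applying the pointwise bound at each change-point gives the stated failure probability. I expect the main obstacle to be the coupling in the second step: controlling rare cascades in which a disagreement reaches a high-degree vertex inside some block and then propagates along many paths to the block boundary. Ensuring that such cascades have an $\exp(-\Omega(d^{3/4}))$ tail is exactly the role of the concentration inequality for disagreements propagating in large blocks advertised in the abstract; without this ingredient the transfer would yield only a weak polynomial-in-$d$ tail and the theorem would not go through.
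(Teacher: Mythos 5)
There is a genuine gap, and it sits exactly where you flagged the ``main obstacle'': your second step cannot work as stated. You propose to couple $X_t$ to a chain $X^\star_t$ started from $\pi$ and to conclude that after the burn-in of $C_0N$ steps the two chains agree at each fixed $u\in N(v)$ except with probability $\exp(-\Omega(d^{3/4}))$. This is both circular and quantitatively unachievable. Circular, because closeness of $X_t$ to stationarity is precisely what the whole argument is ultimately trying to establish; the point of Theorem \ref{thrm:Uniformity1.76} is that local uniformity is a property of the dynamics that emerges after $O(N)$ steps, long before mixing (the mixing time is $\Theta(N\log N)$), so one cannot import stationary statistics by coupling to an equilibrated copy. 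Quantitatively, a worst-case $X_0$ and a stationary $X_0^\star$ differ on up to $n$ vertices; any contraction-based coupling reduces the (expected, weighted) distance by a factor $(1-\Theta(1/N))$ per step, so after $C_0N$ steps with $C_0$ a constant independent of $d$ and $n$ the per-vertex disagreement probability is at best a constant, nowhere near $\exp(-d^{3/4})$ --- you would need on the order of $N(\log n+d^{3/4})$ steps, which the interval ${\cal I}$ does not provide. Moreover the auxiliary tool you invoke, the $k>2d$ contraction of Section \ref{sec:2d}, is not available: $\alpha\approx1.763<2$ means that $k\geq(\alpha+\eps)d$ can be well below $2d$, so ``valid here since $\alpha<2$'' has the implication backwards.

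The paper avoids both problems by comparing $(X_t)$ not with a stationary chain but with the dynamics $(X^*_t)$ on the modified graph $G^*_v$ in which all edges incident to $v$ are deleted, started from the \emph{same} initial coloring. Then the only source of disagreement between the two chains is the missing edges at $v$, so the disagreements are born at $N^*(v)$ and are controlled locally by the disagreement-percolation machinery of Section \ref{sec:DisPerc} (Lemmas \ref{lemma:Restriction2Cycle} and \ref{lemma:GVsGStartNeighbors}): with probability $1-\exp(-d^{3/4})$ at most $\gamma^2 d$ vertices of $N^*(v)$ ever disagree. Local uniformity is then proved directly for $(X^*_t)$ --- in $G^*_v$ the neighbors of $v$ do not see $v$, their last-update colors are only weakly correlated (again via disagreement percolation, Claim \ref{claim:FactorizationOfProbs1}), which yields $\Exp{|\AvialColors_{X^*_\tau}(v)|}\gtrsim k e^{-\degree(v)/k}$ and a Hoeffding tail --- and transferred back to $(X_t)$ through the few-disagreements coupling, followed by the covering/union-bound over the update times of $B_v$ and the sub-intervals between them (this last bookkeeping is the part of your plan that does survive, and is where the polynomial-in-$d$ prefactor comes from). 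Your stationary expectation-plus-Azuma computation in step one is therefore not the missing ingredient; without replacing the stationary comparison by a same-start comparison to a locally modified chain (or some equivalent device that keeps the initial disagreement set local), the argument does not go through.
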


\noindent
The proof of Theorem \ref{thrm:Uniformity1.76} appears in Section \ref{sec:thrm:Uniformity1.76} of the appendix.

Theorem \ref{thrm:Uniformity1.76} builds on \cite{DFUniformity, TomsUniformity}. 
The basic idea is that the vertex $v$ typically gets local uniformity once  most of its neighbors are 
updated at least once, while their interaction is, somehow,  weak prior and during  ${\cal I}$.   
%
%
Since we consider block updates a,  potentially large,  fraction of  $N(v)$ belongs to the same block as $v$.  
Then,  it is possible that the vertex 
gets  local uniformity exactly the moment that its block is updated for  the first time. 
The use of the indicator $\indicator{\U_t(v)}$ expresses exactly this phenomenon.

\subsection{Block dynamics and Burn-in}\label{sec:BurnInAnalysis}
An additional complication with utilizing   local uniformity is the following:
 since the coupling starts from a 
worst-case pair of colorings, in order to attain the local uniformity properties we first
need to ``burn-in'' for $\Omega(n)$ steps so that most neighbors of most vertices are 
recolored at least once.   However during this burn-in stage the initial disagreement at $u^*$
is likely to spread.  

In \cite{DFHVConstantDegree} they consider a ball of radius $O(\sqrt{\Delta})$ around $u^*$.
They show, by a simple disagreement percolation argument, that disagreements
are exponentially (in $\Omega(\sqrt{\Delta})$)  unlikely to escape from this ball.  
Extending this approach to  block dynamics presents an extra challenge.  
Our blocks may be of unbounded size (i.e., a function of $n$) whereas
the ball in which we want to confine the disagreements is constant sized
(roughly $O(\sqrt{d})$ so that the volume of the ball is dominated by the tail 
bound in Theorem~\ref{thrm:Uniformity1.76}).

The disagreements we care about are those on the boundary of a block since these are
the ones that can further propagate.  Hence, let
\[  D_t = (X_t\oplus Y_t) \cap \ImpVrtx.
\]
denote the disagreements at time $t$ which lie on the boundary of some block,
and let $D_{\leq t} = \cup_{r\leq t} D_r$ denote the set of vertices that disagree at some point up
to time $t$.

First we derive a tail bound on the number of disagreements 
generated in $\inBound B$ when the block $B$  has a single
disagreement on its boundary.

\begin{proposition}\label{prop:Tail4ZetaIntro}
For all $\eps>0$, there exists $C>0, d_0>1$, for all $d>d_0$, for $k\geq (\alpha+\eps)d$ 
and any $u^*\in \ImpVrtx$  and any $B$ such that $u^*\in \outBound B$, the following holds.
For a pair of colorings $X_t$ and $Y_t$ such that $X_t\oplus Y_t=\{u^*\}$,
there is a coupling of one step of the block dynamics so that
\begin{equation}\nonumber
\Pr \left [|D_{t+1}\cap \inBound B| \geq \ell \right ]\leq C(dN)^{-1}  \exp\left( -\ell /C \right) \qquad \textrm{for any $\ell\geq 1$}.
\end{equation}
\end{proposition}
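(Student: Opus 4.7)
My plan is to decompose the one-step randomness into the choice of updated block and the update itself. The only pre-existing disagreement is at $u^*\in\outBound B\subseteq V\setminus B$, and $\inBound B\subseteq B$, so a disagreement appearing in $\inBound B$ at time $t+1$ requires some vertex of $B$ to be resampled; this happens only when the chosen block is $B$ itself (probability $1/N$). For any other choice $B'\neq B$ the blocks are disjoint, so only vertices of $B'$ change and the coloring of $B$ is unaffected in both chains, leaving $D_{t+1}\cap\inBound B=\emptyset$. Therefore it suffices to show, conditionally on $\{B_t=B\}$, that $\Pr[|D_{t+1}\cap\inBound B|\geq\ell]\leq C d^{-1}\exp(-\ell/C)$.

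Conditionally on $B$ being updated, I will couple the two block updates by greedy BFS starting from the unique neighbor $z$ of $u^*$ in $B$ (uniqueness is Definition \ref{def:SparseBlockPart}(2)(b)); if $B$ is unicyclic I first delete one edge of its cycle so that what remains is a tree rooted at $z$, absorbing the correction in a union bound. Sample the colors of $B$ in BFS order, at each vertex applying the optimal coupling of the two conditional marginal laws given the colors already fixed on ancestors. A standard one-step calculation then gives
\[
\Pr\bigl[w\in D_{t+1}\bigm|\mathrm{parent}(w)\in D_{t+1}\bigr]\leq q(w):=\min\bigl\{1,\,1/(k-\degree(w))\bigr\},
\]
so $D_{t+1}\cap B$ is stochastically dominated by the cluster of $z$ in site percolation on the spanning tree of $B$ with site probabilities $q(\cdot)$, started by opening $z$ with probability $q(z)$. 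The $1/d$ factor in the target bound is immediate from $q(z)$: since $u^*$ is an $r$-breakpoint, the length-$0$ and length-$1$ paths out of $u^*$ give $W(u^*)\leq 1$ and $W(u^*)W(z)\leq 1$; by \eqref{eq:DefVrtxWeight} this forces both $u^*$ and $z$ to be low-degree, and hence $q(z)\leq 1/(k-\TDeg)=O(1/d)$. So the percolation is trivially empty outside an event of probability $O(1/d)$.

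The hardest part is the $\exp(-\ell/C)$ tail, which requires an amortized subcriticality of the percolation conditional on $z$ being open: the naive branching $(\TDeg-1)\cdot q(\cdot)\approx(1+\epsilon/6)/(\alpha-1)>1$ is not subcritical for $k$ near $\alpha d$, so the proof must exploit the weight potential $W(\cdot)$ and the breakpoint condition together. My plan is to establish a pointwise inequality of the form $q(w)\,\degree(w)\leq \eta\,W(w)$ with $\eta=\eta(\epsilon)<1$ everywhere, where at a low-degree $w$ one obtains $\eta<1$ by exploiting that the block's Gibbs marginal at $w$ given the fixed outer-boundary condition is close to uniform on $\approx k\exp(-\degree(w)/k)$ colors (by a decay-of-correlations argument internal to $B$ viewed as a tree, independent of the global Theorem \ref{thrm:Uniformity1.76}), while at a high-degree $w$ the $d^{15}$ factor built into $W$ provides vast slack. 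Combining this with the random-walk identity $\sum_{v\text{ at depth }\ell}\prod_{w\in\mathrm{path}(z,v)}1/\degree(w)\leq 1$ on the spanning tree and the breakpoint bound $\prod_{w\in\mathrm{path}(u^*,v)}W(w)\leq 1$ yields a geometric-in-$\ell$ bound on the expected number of open sites at depth $\ell$; a standard moment generating function argument for subcritical branching then gives $\mathrm{E}[\exp(\lambda|D_{t+1}\cap B|)\mid z\text{ open}]\leq C_1$ for a small $\lambda=1/C$, and Markov's inequality delivers the claimed tail. Calibrating $\eta$ strictly below $1$ via the block's internal local-uniformity is the central technical task.
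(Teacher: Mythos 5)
Your overall architecture matches the paper's proof: condition on the event that $B$ itself is chosen (probability $1/N$), couple the update vertex-by-vertex starting from the unique neighbor $z$ of $u^*$, dominate the disagreements by an independent site-percolation cluster, extract the extra $O(1/d)$ factor from the fact that the breakpoint condition forces $z$ to be low degree, and make low-degree vertices subcritical via a GMP-style in-block uniformity bound (this is exactly the paper's $p_v\leq 1/((1+\epsilon)\degree_{in}(v))$ in \eqref{eq:DefPuSimple}, backed by Corollary \ref{corr:GMPBiasColour}). The gap is the final step. A geometric bound $\eta^{\ell}$ on the \emph{expected} number of open sites at depth $\ell$ controls only the first moment of the cluster; it does not yield $\Exp{\exp(\lambda|{\cal C}_z|)}\leq C_1$ with constants independent of $n$, and in fact no such bound holds for the \emph{total} in-block cluster. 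The obstruction is precisely the high-degree vertices the blocks are designed to hide: their degrees can grow with $n$ ($\Theta(\log n/\log\log n)$ in $G(n,d/n)$, and $\Delta$ is an arbitrary function of $n$ in Theorem \ref{thrm:RapidMixingAvrgDegGraph176}). If $h\in B$ has degree $D$, the breakpoint weighting only forces the path from $z$ to $h$ to contain roughly $L\approx(15\log d+\log D)/\log(1+\epsilon/10)$ low-degree vertices, so in the dominating percolation the cluster reaches $h$ with probability at least about $((1+\epsilon)d)^{-L}=(d^{15}D)^{-O(\log d)}$, and once it does, $h$ contributes on the order of $D/d$ open children in a single generation. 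Hence $\Pr[|{\cal C}_z|\geq D/(2d)]\geq (d^{15}D)^{-O(\log d)}$, which for $D\to\infty$ with $d$ fixed is far larger than $C\exp(-D/(2Cd))$; equivalently, the moment generating function of $|{\cal C}_z|$ blows up with $n$. So bounding $|D_{t+1}\cap\inBound B|$ by the full cluster size and invoking a subcritical branching MGF argument cannot deliver the stated tail.

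What rescues the statement is that it counts only disagreements on $\inBound B$, and every disagreement spawned at a high-degree vertex must still traverse the long low-degree stretch back to the boundary; the counting has to give credit for that traversal. The paper does this with the recursive weight $\beta(w)=\min\bigl\{1,\,\beta(\parent{w})\big/\bigl((1+\epsilon^2)\degree_{in}(\parent{w})\,p_w\bigr)\bigr\}$ of \eqref{def:ReverseWeight}: by Corollary \ref{cor:FromBreakPointProd} one gets $\beta\geq 1/2$ on $B\cap\ImpVrtx$ (Lemma \ref{lemma:Weight2OfBoudnary}), so ${\cal Z}=\sum_w\beta(w)\mathbf{1}\{w\in{\cal C}_{u^*}\}$ dominates half the boundary count, while by construction the conditional expected $\beta$-weight of the open children of $w$ is at most $\beta(w)/(1+\epsilon^2)$; the tail of ${\cal Z}$ is then obtained not from offspring MGFs but from Freedman's martingale inequality applied to the BFS exposure of ${\cal C}_{u^*}$, using bounded increments $O(\epsilon^{-2})$ and conditional variance $O(\epsilon^{-4}){\cal Z}$. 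Your pointwise inequality $q(w)\degree(w)\leq\eta W(w)$ plays the same role as this telescoping for the first moment, but you would need to upgrade it to a boundary-weighted, martingale-type tail bound (or an equivalent device) rather than a tail bound on the whole cluster. A minor additional point: on a unicyclic block you cannot simply ``delete an edge of the cycle''---that changes the Gibbs marginals used in the coupling; the paper keeps the measure intact, uses a spanning tree only to define the parent relation, sets $p_v=1$ on the cycle, and relies on condition 2(c) of Definition \ref{def:SparseBlockPart} to keep the cycle out of reach.
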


\noindent
The idea in proving Proposition \ref{prop:Tail4ZetaIntro}  is to stochastically dominate the disagreements in $B$
 with an independent Bernoulli percolation process. Then we employ a non-trivial martingale argument to get the desired tail bound. 
 The detailed proof appears in Section~\ref{sec:prop:Tail4ZetaIntro}. 


Extending the ideas we develop for  Proposition~\ref{prop:Tail4ZetaIntro} to a setting
where we have multiple disagreements we  prove that a single initial disagreement
at time $0$ is unlikely to spread very far after $O(N)$ steps.
Before formally stating the lemma, let us introduce some basic notation.
For an integer $R$ and vertex $w$, let $\ball(w, R)$ denote
the set of vertices within distance $R$ from $w$ (this is wrt to the graph $G$, independent
of the blocks $\B$). 

\begin{lemma}\label{lemma:Combined-BurnIn}
For all $\eps,C>0$, there exists $C'>0, d_0>1$, for all $d>d_0$, for $k=(\alpha+\eps)d$ 
the following holds.
Consider two colorings $X_0$ and $Y_0$ where $X_0\oplus Y_0=\{u^*\}$ for some $u^*\in V$.
There is a coupling of the block dynamics such that:
for any  $1 \leq \ell<d^{4/5}$,
\[
\Pr \left [|D_{ \leq CN}| \geq \ell \right ]\leq C'\exp\left( -\ell^{\frac{99}{100}}C' \right)
\]
and for $R=\left \lfloor  \epsilon^{-3}(\log d)\sqrt{d} \right \rfloor$ we have
\[
\Pr \left [ \left( D_{\leq CN}  \right) \not\subseteq 
\ball\left (u^*,  R\right) \right] \leq \textstyle 2\exp\left( -d^{0.49} C' \right). 
\]
\end{lemma}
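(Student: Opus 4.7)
\textbf{Plan for the proof of Lemma~\ref{lemma:Combined-BurnIn}.}
My plan is to bootstrap the single-step bound of Proposition~\ref{prop:Tail4ZetaIntro} into a multi-step disagreement-percolation argument. I would view $D_t$ as evolving like a subcritical branching process: whenever a block $B$ is updated, each disagreement sitting on $\outBound B$ independently spawns a sub-exponential number of new disagreements on $\inBound B$, with the tail supplied by Proposition~\ref{prop:Tail4ZetaIntro}. Because that tail carries an explicit $1/(dN)$ factor, the per-step expected growth of $|D_t|$ is $O(|D_t|/N)$, so over $CN$ steps we are comfortably subcritical with $\Exp{|D_{\leq CN}|}=O(1)$. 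The two claims of the lemma will then come from, respectively, a concentration estimate on the total population of this process and a separate percolation estimate ruling out long chains of propagations.

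For the first bound I would iteratively couple the block dynamics in the spirit of Proposition~\ref{prop:Tail4ZetaIntro}. A preliminary extension of that proposition is needed: when $m:=|D_t\cap \outBound B_t|$ source disagreements are simultaneously present on the updated block's outer boundary, the new disagreements on $\inBound B_t$ should be stochastically dominated by a sum of $m$ independent copies of a sub-exponential variable $\xi$ with $\Prob{\xi\geq \ell}\leq C\,d^{-1}\exp(-\ell/C)$; this follows by a careful source-by-source coupling exploiting the tree-like structure of the block. Granting this, I can write $|D_{\leq CN}|\leq 1+\sum_{t<CN}\Delta_t$, where $\Delta_t$ is dominated by a random sum driven by $|D_t|$ and $\xi$. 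A stopping-time argument (stopping at the first $\tau$ with $|D_{\leq \tau}|\geq \ell$) combined with Bernstein-type concentration for sub-exponential sums then yields $\Prob{|D_{\leq CN}|\geq \ell}\leq C'\exp(-\ell^{99/100}C')$ for $1\leq \ell<d^{4/5}$; the $\ell^{99/100}$ rather than $\ell$ absorbs the trade-off between the random number of summands and the sub-exponential tail of each summand.

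For the ball-containment bound, the key structural observation is that a disagreement appearing at a vertex $w$ with $\dist(u^*,w)\geq R$ witnesses a chain $u^*=w_0,w_1,\dots,w_L=w$ of vertices whose links satisfy $w_{i-1}\in\outBound B_i$ and $w_i\in \inBound B_i$ for some block $B_i$ that was updated at time $t_i$ with $t_0<t_1<\dots<t_L\leq CN$. The sparse-block-partition properties of Definition~\ref{def:SparseBlockPart} (in particular the bound $B_{\max}\leq n^{1/(\log d)^2}$ from Lemma~\ref{thrm:RapidMixBlockGnp}, together with the typical block diameter in $G(n,d/n)$ being much smaller) force $L\geq d^{0.49}$ whenever the chain exits $\ball(u^*,R)$ with $R=\lfloor \epsilon^{-3}(\log d)\sqrt d\rfloor$. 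Each link of such a chain requires an independent propagation event of per-step probability $\leq C/(dN)$ from Proposition~\ref{prop:Tail4ZetaIntro}. Union-bounding over all $L$-chains, with branching factor $O(d)$ per link for the choice of next block and at most $\binom{CN}{L}$ time-orderings, gives a total probability $\leq 2\exp(-C'd^{0.49})$ once $L\geq d^{0.49}$.

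The hardest step will be the multiple-source extension of Proposition~\ref{prop:Tail4ZetaIntro} together with the non-Markovian contamination of later block-boundary conditions by earlier disagreements: the genealogy of disagreements must be tracked carefully so that each new disagreement is charged to an independent tail-event driven by its ancestor. The resolution I envisage is a pessimistic coupling that at every step treats all existing boundary disagreements as independent sources, absorbing the lost independence into constants; the robustness of the sub-exponential tail of $\xi$ tolerates this slack. A secondary delicacy, responsible for the $\ell^{99/100}$ exponent, is that the stopping-time step must simultaneously control the random number of summands and each summand's sub-exponential tail, which prevents us from achieving a linear-in-$\ell$ exponent.
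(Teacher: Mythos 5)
Your plan for the first bound is broadly in line with what the paper does: the paper also dominates the new disagreements created by a block update with several boundary sources by a union of independent percolation processes (Proposition~\ref{prop:StochasticDomination}, which is exactly your "multiple-source extension" and is where the $|\Lambda|\leq d^{9/10}$ restriction and the $(1+\epsilon^3)$ slack in the dominating probabilities earn their keep), controls the number of disagreement updates by a Chernoff bound, and then gets the $\ell^{99/100}$ exponent from an iterated "no big jump" argument over $\log d$ sub-intervals (Lemmas~\ref{lemma:BigJumpInBurnIn} and \ref{lemma:RateOfDisSpreadBurnIn}) rather than from a single Bernstein-type estimate; your sketch is vaguer on this last point but the route is essentially the same.

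The ball-containment argument, however, has a genuine gap. Your union bound is over chains of \emph{block-boundary crossings}, charging each link a flat probability $O(1/(dN))$, and it only becomes small because you assert that exiting $\ball(u^*,R)$ with $R=\lfloor\epsilon^{-3}(\log d)\sqrt d\rfloor$ forces $L\geq d^{0.49}$ crossings. That assertion is false in this setting: the blocks are not small relative to $R$. Their diameter can be as large as order $\log n/(\log d)^4$ (and $B_{\rm max}\leq n^{1/(\log d)^2}$), while $R$ is a constant in $n$, so a single block update can in principle carry a disagreement across the entire ball, making $L=O(1)$ possible — this is precisely the difficulty the paper flags ("our blocks may be of unbounded size whereas the ball is constant sized"). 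The paper's resolution is a \emph{per-edge}, not per-block, decay: conditioning on the first bound so that the escaping path meets at most $\sqrt d$ boundary vertices, it shows (Corollary~\ref{corollary:Point2PointDisagreement}, Proposition~\ref{prop:StepIProbPropag}) that propagating along a path segment of length $\ell_i$ inside a block costs about $((1+\epsilon)d)^{-\ell_i}$, with corrections $(1.45)^{r_i}$ and an extra analysis (the $\gamma(U)$ computation and Claims~\ref{claim:IJRProbUpperBound}, \ref{claim:BPDisProbUpperBound}) for the fact that the same block is updated many times and a disagreement can creep forward incrementally, exiting and re-entering via neighboring blocks. Only this per-edge rate, compared against the ball's growth $((1+\epsilon/3)d)^{R}$ from Lemma~\ref{lemma:GrowthFromBP}, beats the union bound — the margin is just the ratio $(1+\epsilon/3)/(1+\epsilon)$ per edge, so a per-crossing accounting with a flat $O(1/(dN))$ cost cannot recover it. Your proposal is missing both this within-block distance-dependent estimate and the treatment of repeated updates of the same block, and as written the second bound does not go through.
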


\noindent
The proof of Lemma \ref{lemma:Combined-BurnIn} appears in Section \ref{sec:lemma:Combined-BurnIn} of the appendix.

\paragraph{Rapid mixing:}
We give here a brief sketch of how we derive rapid mixing of the block dynamics
from Theorem~\ref{thrm:Uniformity1.76} and Lemma~\ref{lemma:Combined-BurnIn};
 the high-level idea is inspired by the approach in \cite{DFHVConstantDegree} for graphs of maximum degree $\Delta$.
We apply path coupling and hence we start with a pair of colorings $X_0,Y_0$ 
which differ at a single vertex $u^*$.  
We focus our attention on the ball $\ball$ of radius $O((\log d)\sqrt{d})$ around~$u^*$.  
We first run the chains for a burn-in period of
$T=O(n)$ steps.  
By Lemma~\ref{lemma:Combined-BurnIn} with high probability (in $d$) the
disagreements are contained in this local ball $\ball$ around $u^*$. 
Hence we can focus attention inside this local ball $\ball$ (with high probability).
Since the volume of this ball is not too large,
by Theorem~\ref{thrm:Uniformity1.76} all of the low degree vertices have the local uniformity
property and they maintain it for $O(n)$ steps.  Hence
for $k>\alpha d$ we get contraction for disagreements at low degree vertices.  
Since the vertices at the boundaries of the block are all 
low degree vertices and these are the vertices with non-zero weight $\EdgeBlockWeight()$ 
in our path coupling
analysis as in the proof of Theorem~\ref{thm:block-regular}
for the $k>2\Delta$ case, then we get that the expected distance $\EdgeBlockWeight()$
contracts in every step.  Since the number of disagreements
is not too large (by the second part of Lemma~\ref{lemma:Combined-BurnIn})
after $O(n)$ steps we get that the expected weight is small, and we can
conclude that the mixing time is $O(N\log{N})$.

\subsection{Proof of Proposition \ref{prop:Tail4ZetaIntro}}\label{sec:prop:Tail4ZetaIntro}

We couple one step of the dynamics such that both copies update the same block.
In what follows we describe the coupling when the dynamics updates the block $B$.

We couple $X_{t+1}(B)$ and $Y_{t+1}(B)$ by coloring the vertices of $B$ in a vertex-by-vertex manner.
We start with the vertex $z \in B$ which neighbors the disagreement $u^*$.
Then we proceed by induction by first considering any uncolored vertex in $B$ which
neighbors a disagreement.
The colors $X_{t+1}(z)$ and $Y_{t+1}(z)$  are chosen from the marginal distribution over
the random coloring of $B$ conditional on the fixed coloring outside $B$, and 
the coupling minimizes  the probability that $X_{t+1}(z)\neq Y_{t+1}(z)$. 
For subsequent vertices $v\in B$,
the colors $X_{t+1}(v)$ and $Y_{t+1}(v)$ are from the marginal distributions induced by the pair
of configurations on $\outBound B$ as well as the configuration of the vertices in $B$ that the 
coupling considered in the previous steps.   
If  the current vertex does not neighbor any disagreements
then we can use the identity coupling $X_{t+1}(v)=Y_{t+1}(v)$. 
Similar  inductive couplings have also appeared in, e.g.,~\cite{DFFV, GMP}.

Note that the construction of the set of blocks ${\cal B}$ guarantees that there is exactly one vertex
$z\in B$ which is next to $u^*$. 
Since block B contains at most one cycle $C$, and due to the order of the vertices in the coupling definition, when we couple the color choice for 
$v\notin C$ there can be at most one disagreement in its neighborhood. For the vertices on cycle $C$, the block construction guarantees that $C$
 is deep inside the block (see condition 2(c) in Definition \ref{def:SparseBlockPart}), and hence disagreements are unlikely to even reach this cycle.

We  focus on  the probability that the disagreement ``percolates" from a disagreeing vertex $w\in B\cup\{u^*\}$ to 
some neighbor $v\in B$ in the aforementioned coupling. Specifically, we consider the case where  $\degree(v)\leq \TDeg$  and
$v$ does not belong to the cycle of $B$ (if any).
For such a vertex,  it is standard to show that the probability of the disagreement percolating, i.e., having 
$X_{t+1}(v)\neq Y_{t+1}(v)$ given $X_{t+1}(w)\neq Y_{t+1}(w)$, is upper bounded by the probability of the most likely color for $v$ in both copies of dynamics.
Choosing   $k\geq (\alpha+\epsilon)d$,   the probability of a disagreement 
is upper bounded by $1/((1+\epsilon)\degree_{in}(v))$, where  $\deg_{in}(v)$  the degree of $v$ within $B$.
This bound follows from our results from Section \ref{sec:SCD}, which build on~\cite{GMP}.
Roughly speaking,  the key is that for a random coloring of $B$ and a fixed coloring $\sigma$ on $\overline{B}$, 
then, as in~\cite{GMP}, for a low degree vertex $v$ we have 
$\ExpCond{|\AvialColors(v)|}{\sigma} \lesssim (k-\deg_{out}(v)) \exp(-\deg_{in}(v)/k)\approx (1+\epsilon)\degree_{in}(v)$.

For vertex $v$ which is of degree $>\TDeg$ or belongs to the cycle of the block $B$ (if any)
we just use the trivial bound $1$,  for the probability of disagreement.

We will analyze the spread of disagreements in the coupling above using the following
Bernoulli percolation process. 
Let ${\cal S}_p={\cal S}_p(B)$ be a random subset of the block $B$ such that each vertex 
$v\in B$ appears in ${\cal S}_p$, independently, with probability $p_v$, where 
for $v$ outside the cycle in $B$ we have
\begin{equation} \label{eq:DefPuSimple}
p_v = 
\begin{cases} 
 \frac{1}{(1+\epsilon)  \deg_{in}(v)}
& 
\textrm{if } \degree(v)\leq \TDeg   \\ 
1  & \mbox{otherwise}.
\end{cases}
\end{equation}
If $v$ is on the cycle of $B$, then $p_v=1$.

Consider the random set $X_{t+1}(B)\oplus Y_{t+1}(B)$ induced by the aforementioned coupling.
We will show that the disagreements occurring in our coupling are stochastically dominated  
by  the subset ${\cal C}_{u^*}\subseteq {\cal S}_p(B)$  which contains every vertex  $v$ for which there 
exists a path, using  vertices from ${\cal S}_p$,   that connects $v$ to $u^*$. 
In particular,  $X_{t+1}(B)\oplus Y_{t+1}(B)\subseteq {\cal C}_{u^*}$.
Thus,  let  ${\cal P}_{u^*}={\cal C}_{u^*}\cap \inBound B$. We have
\begin{equation}\label{eq:StochDomOneDis}
\Pr[|D_{t+1} \cap \inBound B| \geq \ell \ |\ \textrm{$B$ is updated at $t+1$}]\leq \Pr[|{\cal P}_{u^*}|\geq \ell ] \qquad \textrm{for any $\ell\geq 0$}.
\end{equation}

\noindent
Then using the independent Bernoulli process we derive the following tail bound.
\begin{proposition}\label{prop:SingleSourceBlockdis} 
In the same setting as in Proposition \ref{prop:Tail4ZetaIntro}, there exists $C>0$
such that for large $d>0$ the following is true:
For any block $B\in {\cal B}$ and any $u^*\in \outBound B$ 
the following holds:
\begin{equation}\label{eq:target4Tail4ZetaIntro}
\Pr[|{\cal P}_{u^*}| \geq  \ell ]\leq C d^{-1}\exp\left( -\ell/C \right) \qquad \textrm{for any $\ell\geq 1$}.
\end{equation}
\end{proposition}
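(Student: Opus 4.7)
The plan is to exploit the $r$-breakpoint property of $u^*$ to show that the independent Bernoulli percolation $\mathcal{S}_p(B)$ is subcritical enough that the boundary-restricted component ${\cal P}_{u^*}$ has exponentially small tails, with an additional $1/d$ factor coming from the probability that the percolation propagates from $u^*$ into $B$ in the first place. I will work on the tree part of $B$ rooted at $z$, the unique neighbor of $u^*$ inside $B$ (uniqueness is condition 2(b) of Definition~\ref{def:SparseBlockPart}); if $B$ has the extra edge then the resulting cycle $C$ sits far from $\outBound B$ by condition 2(c), and I will treat paths through $C$ as a lower-order perturbation at the end.

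The first step is to turn the breakpoint condition into a per-path survival bound. Since $u^*$ is an $r$-breakpoint with $r>\diameter(B)$, every path $u^*,z=v_0,\ldots,v_\ell$ in $B\cup\{u^*\}$ satisfies $W(u^*)\prod_i W(v_i)\leq 1$, hence $\prod_i W(v_i)\leq 1+\epsilon/10$. Writing $L$ and $H$ for the low- and high-degree counts among the $v_i$, taking logarithms of the weight product forces $15H\log d\leq (L+1)\log(1+\epsilon/10)$, so $L\geq (1-o_d(1))\ell$. Interior low-degree vertices on the path have $\deg_{in}\geq 2$, giving $p_{v_i}\leq 1/(2(1+\epsilon))$, while $p_{v_i}=1$ at high-degree vertices, and multiplying yields
\[
\prod_{i=0}^\ell p_{v_i}\leq C\,\rho^\ell,\qquad \rho:=(2(1+\epsilon))^{-1/2}<1,
\]
valid for $d$ sufficiently large.

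The tail bound then comes from an exponential-moment recursion on the tree. For $\theta=1+\eta>1$ with $\eta$ small and $T_v$ the subtree rooted at $v$, set
\[
M(v;\theta) := \ExpCond{\theta^{|\mathcal{C}_{u^*}\cap T_v\cap \inBound B|}}{v\in \mathcal{S}_p}.
\]
Independence across the subtrees at distinct children of $v$ gives the product recursion
\[
M(v;\theta) = \theta^{\indicator{v\in \inBound B}}\prod_{w\in\mathrm{child}(v)}\bigl(1-p_w+p_w M(w;\theta)\bigr).
\]
Writing $M(v;\theta)=1+c(v)$ and linearizing gives $c(v)\leq \eta\indicator{v\in\inBound B}+(1+\eta)\sum_w p_w c(w)$, which unrolls to $c(z)\leq \eta\sum_{u\in \inBound B\cap T_z}\prod_{w\in\mathrm{path}(z,u)\setminus\{z\}}(1+\eta)p_w$. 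Picking $\eta$ small so that $(1+\eta)\rho<1$, the per-path bound from the previous paragraph makes the outer sum a convergent geometric series, so $M(z;\theta)=1+O(\eta)$. The exponential Markov inequality then gives $\Prob{|{\cal P}_{u^*}|\geq \ell\mid z\in \mathcal{S}_p}\leq \theta^{-\ell}M(z;\theta)\leq Ce^{-\ell/C}$.

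To conclude I multiply by $\Prob{z\in\mathcal{S}_p}=p_z$, which is $O(1/d)$: the same local-uniformity input underlying the bound $1/((1+\epsilon)\deg_{in}(\cdot))$ on single-step propagation (namely that $|\AvialColors(z)|=\Omega(d)$ on average in the block Gibbs measure) yields a tighter bound $p_z=O(1/d)$ for the first hop from $u^*$ into $B$, giving the target $\Prob{|{\cal P}_{u^*}|\geq \ell}\leq Cd^{-1}e^{-\ell/C}$. When $B$ contains a cycle $C$, condition 2(c) places it at distance $\geq 2\log(|C|\Delta)$ from $\outBound B$, so any path through $C$ contributes at most $C\rho^{2\log(|C|\Delta)}$ by Step~1 and is absorbed into the constant. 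The main obstacle I anticipate is that per-vertex contraction is impossible (low-degree vertices with $\deg_{in}=1$ have $p_v$ close to $1/(1+\epsilon)$, so the MGF factor $1-p_v+p_v M$ need not be $\leq 1$); it is the amortized per-path bound, obtained from the breakpoint inequality, that rescues the recursion.
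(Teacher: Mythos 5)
There is a genuine gap, and it sits at the heart of your Step 2. Your per-path bound $\prod_i p_{v_i}\leq C\rho^{\ell}$ with $\rho=(2(1+\epsilon))^{-1/2}$ a constant independent of $d$ throws away exactly the information that is needed: the factors $p_w\leq 1/((1+\epsilon)\degree_{in}(w))$ must be paired against the branching of the tree, because the number of boundary vertices at distance $\ell$ from $z$ can be as large as $((1+\epsilon/3)d)^{\ell}$ (cf.\ Lemma \ref{lemma:GrowthFromBP}), and is unbounded locally at high-degree vertices where $p_w=1$. Concretely, in a star-like block where $z$ has $\Theta(d)$ children all lying on $\inBound B$, your unrolled bound $c(z)\leq \eta\sum_{u\in\inBound B\cap T_z}\prod_{w}(1+\eta)p_w$ has $\Theta(d)$ summands each of constant size under your per-path estimate, so the sum is $\Theta(d)$, not $O(\eta)$, and the claim that it is ``a convergent geometric series'' fails: the outer sum is over boundary vertices, not over path lengths. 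The correct bookkeeping must cancel $\degree_{in}(\parent{w})$ against $p_w$ at every step and invoke the breakpoint inequality (Corollary \ref{cor:FromBreakPointProd}) to amortize the high-degree vertices where $p_w=1$; this is precisely what the paper's weights $\beta(w)$ in \eqref{def:ReverseWeight} and Lemma \ref{lemma:Weight2OfBoudnary} are built to do. A second, related problem is your linearization: $\prod_w(1+p_wc(w))\geq 1+\sum_w p_wc(w)$, so the inequality you write goes the wrong way unless you first bootstrap an a priori smallness bound on $\sum_w p_wc(w)$; worse, at a vertex of degree $D$ (which can grow with $n$) the product can be of order $e^{D\eta'}$ unless $c(w)$ is degree-aware, i.e.\ of order $1/(D\,p_w)$ for children of that vertex, which a uniform ``small $\eta$'' argument cannot deliver. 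Controlling an exponential moment here genuinely needs either the $\beta$-weighted cluster size with a martingale concentration input (the paper uses Freedman's inequality, Theorem \ref{thrm:CnctrIneqIntro}, with a conditional-variance bound proportional to ${\cal Z}$) or an MGF recursion whose error terms are weighted exactly like $\beta$; as written, your recursion does not close.

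Two smaller points. Your extraction of the $d^{-1}$ prefactor from the first hop is in the right spirit and matches the paper's \eqref{eq:Base4FriedmanApp}, but it should be justified by the worst-case marginal bound $1/(k-\degree(z))=O(1/d)$ (Corollary \ref{cor:GeneralBiasColour}), not by ``local uniformity,'' which is not available at this stage of the argument; and your treatment of the cycle by distance is consistent with condition 2(c) of Definition \ref{def:SparseBlockPart} and with how the paper handles it, so that part is fine once the main estimate is repaired.
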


\noindent
The proof of Proposition \ref{prop:SingleSourceBlockdis} appears in Section \ref{sec:prop:SingleSourceBlockdis}.

Proposition \ref{prop:Tail4ZetaIntro} follows from Proposition \ref{prop:SingleSourceBlockdis}, \eqref{eq:StochDomOneDis}
and noting that $B$ is updated in the dynamics with probability $1/N$.

\subsection{Proof of Proposition \ref{prop:SingleSourceBlockdis}}\label{sec:prop:SingleSourceBlockdis}
We define the following weight scheme for the vertices of $B$.
If $B$ is a tree, then we consider the tree $B\cup \{u^*\}$,
with root  $u^*$. Given the root, for each $w\in B$, 
let $\parent{w}$ denote the parent of $w$.

We assign weight $\beta(w)$ to each $w\in B\cup \{u^*\}$. We set   $\beta(u^*)=1$, while  for each $w\in B$ we have
\begin{equation}\label{def:ReverseWeight}
\beta(w)=\min\left \{1, \frac{\beta(\parent{w} )}{(1+\epsilon^2)\ \degree_{in}(\parent{w} ) }
\left( p_w \right)^{-1}
 \right \},
\end{equation}

\noindent
If the block $B$ is unicyclic, then  we choose a  spanning tree of $B$, e.g., $B'$,
and define the parent relation w.r.t.  $B'\cup \{u^*\}$, rooted at $u$. Then we consider the same weight scheme
as in \eqref{def:ReverseWeight}.  Note that we use $B'$ to specify the parent relation only, i.e.,
$p_w$ is defined w.r.t. the degrees in  $B$.

As in Section \ref{sec:prop:Tail4ZetaIntro}, consider the random set ${\cal S}_p\subseteq B$,
where each vertex $v\in B$ appears in ${\cal S}_p$ with probability $p_v$, defined in \eqref{eq:DefPuSimple}.
Let ${\cal C}_{u^*}$ contain every vertex $w\in B$ for which there exists a path  of vertices in ${\cal S}_p$
 that connects $w$ to $u^*$. Note that it always holds that ${\cal P}_{u^*}\subseteq {\cal C}_{u^*}$.
Also,  let
$$
{\cal Z} = \sum_{w \in B} \mathbf{1}{\{ w \in {\cal C}_{u^*} \}} \ \beta(w).
$$

\noindent
From the definition of  $\beta(\cdot)$ it follows that  for each vertex $w\in B$ we have  $0\leq \beta(w)\leq 1$.
Furthermore, we have the following result for the weight of vertices in $B\cap \ImpVrtx$.

\begin{lemma}\label{lemma:Weight2OfBoudnary}
Consider the above weight schema. For any $w\in B\cap \ImpVrtx$ we have  $\beta(w)\geq 1/2$.
\end{lemma}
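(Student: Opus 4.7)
My plan is to analyze the unique path from $u^*$ to $w$ in the rooted tree $B'\cup\{u^*\}$ (with $B'=B$ if $B$ is a tree, or a spanning tree of $B$ if $B$ is unicyclic) and to unroll the recursion~\eqref{def:ReverseWeight} along it. Writing this path as $u^*=v_0,v_1,\ldots,v_m=w$ and setting
\[
R_i\;:=\;\frac{(p_{v_i})^{-1}}{(1+\epsilon^2)\,\degree_{in}(v_{i-1})},
\]
the ``cap at $1$'' in \eqref{def:ReverseWeight} unfolds to
\[
\beta(w)\;=\;\min_{0\le j\le m}\;\prod_{i=j+1}^{m} R_i,
\]
where the empty product (at $j=m$) equals $1$. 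Hence the lemma reduces to lower-bounding every tail product by $1/2$.

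I would split the tail indices $\{j{+}1,\ldots,m\}$ into the set $L$ of low-degree vertices off the cycle of $B$ and the set $H$ of the remaining vertices (high-degree or on the cycle). Plugging in \eqref{eq:DefPuSimple}, an $L$-vertex contributes $R_i=(1+\epsilon)\,\degree_{in}(v_i)\big/\bigl[(1+\epsilon^2)\,\degree_{in}(v_{i-1})\bigr]$ while an $H$-vertex contributes $R_i=1\big/\bigl[(1+\epsilon^2)\,\degree_{in}(v_{i-1})\bigr]$. Multiplying and telescoping the $\degree_{in}$-factors from the $L$-vertices, the only surviving denominator terms are $\degree_{in}(v_j)$ and $\degree_{in}(v_i)$ for $i\in H\cap\{j{+}1,\ldots,m{-}1\}$. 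Since $(1+\epsilon)/(1+\epsilon^2)>1$, each $L$-vertex contributes a net multiplicative gain, while $H$-vertices contribute losses.

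To dominate those losses I would invoke the structural guarantees of the sparse block partition. The $r$-breakpoint property of $u^*$ (Definition~\ref{def:SparseBlockPart}, condition~2(a)) applies with $r\ge m$ since $r>\diameter(B)$, forcing $W(u^*,v_1,\ldots,v_m)\le 1$ under the vertex weighting~\eqref{eq:DefVrtxWeight}. Expanding yields
\[
\prod_{\substack{i\in H\\v_i\text{ high-deg}}} d^{15}\,\degree(v_i)\;\le\;(1+\epsilon/10)^{|L|+1},
\]
so each high-degree vertex on the path is ``escorted'' by $\Omega((\log d)/\epsilon)$ low-degree non-cycle neighbors, providing more than enough $(1+\epsilon)/(1+\epsilon^2)$-gains to absorb its $1/[(1+\epsilon^2)\,\degree_{in}(v_{i-1})]$-loss. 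Condition~2(c) plays the symmetric role for the at-most-one cycle $C\subseteq B$, supplying $\Omega(\log(|C|\,\Delta))$ non-cycle escorts between $u^*$ and $C$.

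The delicate step is honestly taking the $\min$ over $j$: for tails that start \emph{inside} an escort zone, some escorting $L$-vertices sit in the prefix $\{1,\ldots,j\}$ and no longer appear in the tail product, and a similar issue arises when $\degree_{in}$ drops between consecutive $L$-vertices. I would handle this by focusing on the last reset point $j^\star:=\max\{j:\beta(v_j)=1\}$: by maximality every intermediate partial product $\prod_{i=j^\star+1}^{\ell} R_i$ is $\le 1$, which pins down how gains and losses can interleave and shows that the slack still confined to the tail $\{j^\star+1,\ldots,m\}$ is already enough to keep the product above $1/2$. This reset-point bookkeeping is the main technical obstacle; once it is carried out, the target constant $1/2$ is reached with room to spare, thanks to the generous $d^{15}$ factor built into the vertex weighting.
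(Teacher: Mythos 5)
Your unrolling of the recursion and the low/high split are the right skeleton (the paper does essentially this, though it pairs $p_{v_i}^{-1}$ with $\degree_{in}(v_i)$ of the \emph{same} vertex, so each low-degree factor is cleanly $\geq 1+\epsilon$ and no telescoping across consecutive vertices is needed). But the step you yourself flag as delicate is a genuine gap, and it cannot be closed the way you suggest. The breakpoint property you invoke is anchored at $u^*$, and it only controls \emph{prefixes} of the path (every path \emph{starting} at $u^*$ has weight $\leq 1$). The quantities you must bound are tail products $\prod_{i=j+1}^{m}R_i$, i.e.\ suffixes ending at $w$, and prefix control from $u^*$ simply does not constrain them: consider a path consisting of $u^*$, a long run of low-degree vertices, then a single high-degree vertex $v$ with $d^{15}\degree(v)\leq(1+\epsilon/10)^{\#\text{low}}$, then $w$. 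All prefix weights are $\leq 1$, so your $u^*$-anchored inequality is satisfied; moreover $\beta$ stays pinned at $1$ along the low-degree run, so your last reset point $j^\star$ sits immediately before $v$, and the tail product from $j^\star$ is of order $1/(d\,\degree(v))$, far below $1/2$. The maximality of $j^\star$ gives you that intermediate tail products are $\leq 1$, which is information in the wrong direction — it does not prevent this configuration.

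What actually excludes such configurations, and what the paper uses, is the breakpoint sitting at the \emph{other} end: $w\in B\cap\ImpVrtx$ is adjacent to an outer-boundary $r$-breakpoint, so Corollary~\ref{cor:FromBreakPointProd} applies to the path read \emph{from $w$ backwards} into the block. The paper therefore takes the path from the target vertex up to its nearest ancestor with $\beta=1$ and bounds the product of high-degree weights along that sub-path by $(1+\epsilon/10)^{\ell-m+1}$; this guarantees the escorts lie inside the very tail you need, and the $d^{15}$ slack then yields $\chi\geq\tfrac{k-d}{(1+\epsilon/6)d}(1+\epsilon/10)^{-2}\geq 1/2$. To repair your argument you must replace the $u^*$-anchored bound by this $w$-anchored one (and note that the per-vertex pairing of $p_{v_i}^{-1}$ with $\degree_{in}(v_i)$ also removes your worry about $\degree_{in}$ dropping between consecutive low-degree vertices); as written, the ``reset-point bookkeeping'' is not a proof.
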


\noindent
The proof of Lemma \ref{lemma:Weight2OfBoudnary} appears in Section \ref{sec:lemma:Weight2OfBoudnary} of the appendix.

Recall that ${\cal P}_{u^*}={\cal C}_{u^*}\cap \inBound B$.
In light of Lemma \ref{lemma:Weight2OfBoudnary}, it always holds that
$ |{\cal P}_{u^*} | \leq 2{\cal Z}$ which implies that 
\begin{equation}\label{eq:CalZVsTheta}
\Pr[| {\cal P}_{u ^*} | \geq \ell ]\leq \Pr[{\cal Z}\geq \ell /2].
\end{equation}

\noindent
Eq. \eqref{eq:target4Tail4ZetaIntro}  will follow by getting an appropriate tail bound for ${\cal Z}$ and using \eqref{eq:CalZVsTheta}.
Let $z$ be the single neighbor of $u^*$ inside  block $B$. 
For $\ell\geq 1$,  we have that
\begin{equation}\label{eq:Base4FriedmanApp}
\Pr[{\cal Z}\geq \ell /2] \leq  \Pr[{\cal Z}\geq \ell /2 \ | \ z \in {\cal C}_{u^*} ] \Pr[z \in{\cal C}_{u^*} ] \leq Cd^{-1} \Pr[{\cal Z}\geq \ell /2 \ | \ z \in  {\cal C}_{u^*} ].
\end{equation}
The proposition will follow by bounding appropriately the probability term $\Pr[{\cal Z}\geq \ell /2 \ | \ z\in {\cal C}_{u^*} ]$.
For this we are using  a martingale argument. In particular we use the following result  from \cite{cncntration, FreedmanOrig}.
\begin{theorem}[Freedman]\label{thrm:CnctrIneqIntro}
Suppose $W_1 , . . . , W_n$ is a {\em martingale difference sequence}, and $b$  is an uniform upper bound on the steps 
$W_i$. Let $V$ denote the sum of conditional variances,
\[
\textstyle V=\sum^n_{i=1}\Variance(W_i\ |\ W_1, \ldots, W_{i-1}).
\]
Then for every $\alpha, s>0$  we have that
\[
 \Pr\left [ \sum W_i > \alpha \  \textrm{and}\ V\leq s \right] \leq  \exp\left( -\frac{\alpha^2}{2s+2\alpha b/3}\right).
\]
\end{theorem}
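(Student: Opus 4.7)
The approach is the classical Chernoff-style exponential supermartingale argument. Let $\mathcal{F}_i := \sigma(W_1,\dots,W_i)$, $S_k := \sum_{i=1}^k W_i$, and $V_k := \sum_{i=1}^k \VarCond{W_i}{\mathcal{F}_{i-1}}$, so that $V_n = V$ and $V_k$ is $\mathcal{F}_{k-1}$-measurable (hence $V_{k+1}$ is $\mathcal{F}_k$-measurable). Write $\psi(y):=e^y-1-y$. The first step is the standard one-step MGF bound: for any $\mathcal{F}_{i-1}$-conditionally mean-zero $W_i$ with $|W_i|\le b$ and any $\lambda\ge 0$, the monotonicity of $\psi(y)/y^2$ together with $\lambda W_i\le \lambda b$ gives the pointwise inequality $e^{\lambda W_i}\le 1+\lambda W_i + W_i^2\,\psi(\lambda b)/b^2$; taking conditional expectations and using $1+t\le e^t$ yields
\[
\ExpCond{e^{\lambda W_i}}{\mathcal{F}_{i-1}} \;\le\; \exp\!\left(\VarCond{W_i}{\mathcal{F}_{i-1}}\,\frac{\psi(\lambda b)}{b^2}\right).
\]
Consequently $M_k := \exp\!\bigl(\lambda S_k - (\psi(\lambda b)/b^2)\,V_k\bigr)$ is a nonnegative supermartingale with $M_0=1$.

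To handle the random side condition $\{V\le s\}$, I localize via a stopping time. Set $\tau := \min\{k\ge 0 : V_{k+1}>s\}\wedge n$, with the convention $V_{n+1}:=V_n$. Since $V_{k+1}$ is $\mathcal{F}_k$-measurable, $\tau$ is a bounded stopping time of $(\mathcal{F}_k)$, and $V_\tau\le s$ holds surely by construction. Optional stopping gives $\Exp{M_\tau}\le 1$. On the target event $\{S_n>\alpha,\,V\le s\}$ one has $V_{k+1}\le V_n\le s$ for every $k<n$, so $\tau=n$; therefore Markov's inequality yields
\[
\Prob{S_n>\alpha,\ V\le s} \;\le\; \Prob{M_\tau\ge \exp\!\bigl(\lambda\alpha - (\psi(\lambda b)/b^2)\,s\bigr)} \;\le\; \exp\!\left(-\lambda\alpha+\frac{\psi(\lambda b)}{b^2}\,s\right),
\]
valid for every $\lambda\ge 0$.

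The last step is optimization. Using the routine Taylor estimate $\psi(y)\le (y^2/2)/(1-y/3)$ for $0\le y<3$ and choosing $\lambda := \alpha/(s+\alpha b/3)$, one computes $1-\lambda b/3 = s/(s+\alpha b/3)$; substituting and simplifying, the exponent collapses to $-\alpha^2/\bigl(2s+2\alpha b/3\bigr)$, matching the statement. (If $\lambda b\ge 3$ the asserted bound exceeds $1$ and is already trivial.)

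\textbf{Main obstacle.} The only delicate point is the stopping-time localization. Because $\{V\le s\}$ is an event on the entire trajectory, one cannot just apply Markov to $M_n$—that would leave $V_n$ in the exponent rather than the deterministic $s$. The fix, which I expect to be the main technical care point in the writeup, is to define $\tau$ using the \emph{predictable} process $V_{k+1}$ rather than $V_k$, so that simultaneously (i)~$V_\tau\le s$ holds surely and (ii)~$\tau=n$ on the target event, allowing Markov to go through with the deterministic constant $s$ in the exponent. The remaining ingredients—the one-step MGF bound via monotonicity of $\psi(y)/y^2$ and the Bernstein algebraic inequality $\psi(y)\le (y^2/2)/(1-y/3)$—are routine Taylor-series estimates.
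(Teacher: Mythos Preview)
Your proof is correct and is precisely the standard argument for Freedman's inequality. However, note that the paper does not actually prove this theorem: it is quoted as a known result from \cite{cncntration, FreedmanOrig} and used as a black box in the proof of Proposition~\ref{prop:SingleSourceBlockdis}. So there is no ``paper's own proof'' to compare against; what you have written is essentially the original Freedman/Bennett-style supermartingale argument.

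One small remark: the statement only assumes a one-sided bound $W_i\le b$, whereas you write $|W_i|\le b$. Your argument in fact only uses $\lambda W_i\le \lambda b$ for $\lambda\ge 0$, so it goes through verbatim under the one-sided hypothesis; you may want to adjust the wording accordingly. Also, your parenthetical about the case $\lambda b\ge 3$ is unnecessary: with your choice $\lambda=\alpha/(s+\alpha b/3)$ one always has $\lambda b = \alpha b/(s+\alpha b/3)<3$ whenever $s>0$.
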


\noindent
Consider a process where we expose ${\cal C}_{u^*}$ in a breadth-first-search manner. 
We start by revealing the vertex right next to $u^*$.
Let $z\in B$ be the vertex next to $u^*$ and let  $F_0$ be the event that $z \in {\cal C}_{u^*}$. For 
$i>0$, let  $F_i$ be  the outcome of exposing the $i$-th vertex. 
Let
\begin{eqnarray}
X_0 \ =\  \ExpCond{ {\cal Z}}{F_0 }&\textrm{and}& 
X_i  \  =\  \ExpCond{ {\cal Z}} { F_0, \ldots, F_i}, \nonumber
\end{eqnarray}
for $i\geq 1$.
It is standard to show that  $X_0, X_1, \ldots$ is a martingale sequence.
Also, consider  the martingale difference sequence $Y_i=X_i-X_{i-1}$, for $i\geq 1$.

So as to use Theorem \ref{thrm:CnctrIneqIntro}, we  show the following:
Let  $V=\sum_i \Variance(Y_i \ | \ Y_1, Y_2, \ldots )$. We have that 
\begin{equation}\label{eq:FreedmanConditions} 
(a)  \  X_0 \leq C_1 \qquad (b) \  |X_i-X_{i-1} | \leq  s  \qquad (c) \    V\leq C_2 {\cal Z},
\end{equation}
for positive constants $C_1, C_2$ and $s$.  Before showing that
\eqref{eq:FreedmanConditions} is indeed true, let us show how we use it to get the 
tail bound for ${\cal Z}$.

Assume that the martingale sequence $X_0, X_1, \ldots, $ runs for $T$ steps, i.e., after $T$ steps we have revealed
${\cal C}_{u^*} $.   From Theorem \ref{thrm:CnctrIneqIntro} and \eqref{eq:FreedmanConditions} we get  the following:   
there exists $\hat{C}>0$ such that  for any $\alpha>0$ we have 
\begin{eqnarray}
\Pr[{\cal Z} = \alpha \ |\ z \in {\cal C}_{u^*}  ] & = &\textstyle \Pr\left [ \sum_{i} Y_i = \alpha+X_0 \ \textrm{and}\ V\leq C_2 \alpha \right] \nonumber\\
& \leq & \textstyle \Pr\left [ \sum_{i} Y_i \geq  \alpha+X_0 \ \textrm{and}\ V\leq  C_2 \alpha \right] \leq  \textstyle \exp\left( -2{\alpha}/\hat{C}  \right), 
\end{eqnarray}
where $C_2$ is  defined in   \eqref{eq:FreedmanConditions}.
The first equality follows from the observation that  we always have $V\leq C_2{\cal Z} $.
From the above it is elementary that, for large $C>0$, we have 
\begin{equation}
\Pr[{\cal Z} \geq \alpha \ |\ z\in {\cal C}_{u^*} ]\leq \exp\left( -2{\alpha}/{C}  \right).  \label{eq:FriedmanApp}
\end{equation}
Combining \eqref{eq:FriedmanApp} and \eqref{eq:Base4FriedmanApp} we get that
for $\ell>0$ it holds that
$\Pr[{\cal Z}\geq \ell /2] \leq Cd^{-1}\exp\left( -\ell /C  \right).$
The proposition follows by plugging  the  inequality into  \eqref{eq:CalZVsTheta}.

It remains to show \eqref{eq:FreedmanConditions}.
First we observe the following:
For a vertex $w\in B$, let $F(w)$ be the set of vertices $u$ such that $w=\parent{u}$.
We have that
\begin{equation}\label{eq:claim:MartingaleExpctIncrWeight}
\ExpCond{
 \sum_{v\in F(w)}\beta(v)\ \mathbf{1}\left\{ v\in {\cal C}_{u^*} \right\}}{  w\in {\cal C}_{u^*} }   \leq \frac{\beta(w)}{(1+\epsilon^2)}.
\end{equation}
To see the above note that 
\begin{eqnarray}
\textstyle 
\ExpCond{ \sum_{v\in F(w)}\beta(v) \  \mathbf{1}\left\{ v\in {\cal C}_{u^*} \right\} }{  w\in {\cal C}_{u^*} }   &= &
\textstyle  \sum_{y\in F(w)}  \Pr[y\in {\cal C}_{u^*}  \ | \ w \in {\cal C}_{u^*} ]\ \beta(y) \nonumber \\
&\leq & \degree_{in}(w) \cdot \max_{y\in F(w)}\left \{  \Pr[y\in  {\cal C}_{u^*} \ | \ w \in {\cal C}_{u^*} ]\ \beta(y) \right\}.  \quad
\label{eq:TargetMartingaleExpctIncrWeight}
\end{eqnarray}
Since  $\Pr[y\in {\cal C}_{u^*} \ | \ w\in {\cal C}_{u^*}  ]\leq p_y$, where  $p_{y}$ is  defined in \eqref{eq:DefPuSimple}.
The definition of $\beta(y)$ yields 
\begin{eqnarray}
\Pr[y\in {\cal C}_{u^*} \ | \ w\in {\cal C}_{u^*}  ]\ \beta(y) &\leq  & p_y  \beta(y) 
\ \leq \ \frac{\beta(w)}{\degree_{in}(w) (1+\epsilon^2) }. \nonumber
\end{eqnarray}
Eq. \eqref{eq:claim:MartingaleExpctIncrWeight} follows by plugging the above into
\eqref{eq:TargetMartingaleExpctIncrWeight}.

Now we proceed to prove (a) in \eqref{eq:FreedmanConditions}.
Recall that $z\in B$ is the only vertex next to $u^*\in \partial B$.
Recall, also,  that  $F_0$  is the event that $z\in {\cal C}_{u^*}$.
A simple induction  and \eqref{eq:claim:MartingaleExpctIncrWeight} implies that
\[
\ExpCond{ {\cal Z} }{  z\in {\cal C}_{u^*}  } \leq {2\beta(z)}/{\epsilon^2}.
\]
Since we always have $0<\beta(z)\leq 1$,  (a) in  \eqref{eq:FreedmanConditions} holds for any  $C_1\geq 2\epsilon^{-2}$.

As far as (b) in \eqref{eq:FreedmanConditions} is concerned,
this follows directly from \eqref{eq:claim:MartingaleExpctIncrWeight}  and the fact that
for every $v\in F(w)$ we have $0<\beta(v)\leq 1$.

We proceed by proving (c) in \eqref{eq:FreedmanConditions}.
For a vertex $w\in B$ such that $w\in {\cal C}_{u^*} $, let ${\cal C}_{u^*}^w={\cal C}_{u^*} \cap T_w$,
where $T_w$ is the subtree rooted at $w$, while 
$$
{\cal Z}_w= \textstyle \sum_{v\in T_w}\mathbf{1}\{ v\in {\cal C}_{u^*}^w  \} \ \beta(v).
$$
 Assume that at step $i$ we reveal vertex $w_i$, we have
\begin{eqnarray}
V_i & \leq & \ExpCond{ (X_i-X_{i-1})^2 }{  F_0, F_1, \ldots, F_{i-1} } \nonumber \\
&\leq & \left(  \ExpCond{ {\cal Z}_{w_i} }{  \ w_i\in {\cal C}_{u^*} } \right)^2 \ \leq \  \left( {\beta(w_i)}/{\epsilon^2}\right)^2.\nonumber
\end{eqnarray}
The last inequality follows from  \eqref{eq:claim:MartingaleExpctIncrWeight} and  a simple
induction.  If $w_i \in \outBound {\cal C}_{u^*}$, i.e. it si of small degree and agreeing,  then it is direct that the conditional variance is 
smaller,   it is at most $c_a d^{-2}\beta^2(w_i)$,  for a fixed $c_a>0$. Otherwise, $w_i$ has conditional variance 0.

Using the above, and the fact that $\beta(v)\leq 1$, for any $v\in B$,  we have that
\begin{eqnarray}
V &=&\sum_{i}V_i \ \leq \ 2\sum_{v\in  {\cal C}_{u^*}  } {\beta(v)}/{(\epsilon^4)}
\ \leq \ 2 {\cal Z}/\epsilon^4. \nonumber
\end{eqnarray}
For the third  inequality we need the following: In $V$ there is a contribution from
the vertices in ${\cal C}_{u^*}$, i.e.,   each $v\in {\cal C}_{u^*}$ contributes $\beta^2(v)/\epsilon^4 \leq \beta(v)/\epsilon^4$.
Also, there is a contribution from the vertices in $\outBound {\cal C}_{u^*}\cap B$. For the later 
  we  use the fact that for every $v\in {\cal C}_{u^*}$ the contribution of its children 
that belong to $\outBound {\cal C}_{u^*}\cap B$  is at most $c_a d^{-2} \sum_{w\in F(v)}\beta(w)\leq  c_b d^{-1}\beta(v)$, 
where $c_a$ is  defined previously and  $c_b>0$ is a constant.  Note that the bound on the previous sum follows by working 
as in \eqref{eq:TargetMartingaleExpctIncrWeight}.

Then,  (c) in \eqref{eq:FreedmanConditions}  follows by setting $C_2=2\epsilon^4$. This concludes the proof of Proposition \ref{prop:SingleSourceBlockdis}.
$\hfill \Box$

\section{Conclusions}

Our main contribution is to reduce the ratio $k/d$ to $\alpha\approx 1.763\dots$ for
rapid mixing of the Glauber dynamics on sparse random graphs.  The important aspect is
that the ratio is now comparable to the ratio $k/\Delta$ for
related results concerning rapid mixing of the Glauber dynamics and SSM (strong
spatial mixing) on graphs of bounded degree $\Delta$.
Any improvement in the ratio $\alpha$ would likely
lead to improved results on SSM \cite{GMP}.  In particular, 
our analysis of the spread of disagreements on
a block update builds upon work in \cite{GMP}.  For their purposes they analyze the
expected change in the number of disagreements, whereas we need a concentration bound.  
Hence, significantly improving this ratio $\alpha$ appears to be a major challenge.

\newpage
\appendix

\section{Some remarks about the breakpoints and blocks}\label{sec:BreakPointsObservations}

\noindent
For a graph $G$ which admits a sparse block partition ${\cal B}={\cal B}(\epsilon, d,\Delta)$
we can get an upper bound on the rate at which its grows, starting 
from a breakpoint.  Somehow, it is not surprising that starting from a breakpoint
we have branching factor $\approx d$. More formally, we have the following result.

\begin{lemma}\label{lemma:GrowthFromBP}
Let some $\epsilon>0$, $d>0$, $\Delta >0$ and let $G $ be a graph which admits a sparse block partition 
${\cal B}={\cal B}(\epsilon, d, \Delta)$. Then, for every integer $r\geq 0$ for every $r$-breakpoint $v$ and
for every integer $0\leq \ell \leq r$  the following is true: 

The number of vertices at distance $\ell$ from $v$ is at most $((1+\epsilon/3)d)^\ell$.
\end{lemma}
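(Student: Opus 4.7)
My plan is to prove the bound by induction on $\ell$. The base case $\ell=0$ is immediate since $v$ is the only vertex at distance $0$ from itself. For the inductive step I use the standard BFS inequality
\[
N_\ell(v) \ \leq\ \sum_{w \in V_{\ell-1}} \degree(w),
\]
where $V_{\ell-1}$ denotes the set of vertices at distance exactly $\ell-1$ from $v$, and split the sum by whether $w$ is low-degree ($\degree(w)\leq \TDeg$) or high-degree. The low-degree part contributes at most $\TDeg\cdot N_{\ell-1}(v) \leq \TDeg\cdot ((1+\epsilon/3)d)^{\ell-1}$, which is a factor $\TDeg/((1+\epsilon/3)d)=(1+\epsilon/6)/(1+\epsilon/3)<1$ of the target bound $\alpha_\ell := ((1+\epsilon/3)d)^\ell$. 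This leaves a multiplicative slack of $(\epsilon/6)d\cdot\alpha_{\ell-1}$ to absorb the high-degree contribution $\sum_{w \in V_{\ell-1}^H}\degree(w)$.

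For the high-degree part I will invoke the $r$-breakpoint property. First note that $v$ itself must be low-degree, since the length-$0$ path at $v$ has weight $W(v)\leq 1$ and the only vertex weight $\leq 1$ is $(1+\epsilon/10)^{-1}$; thus $W(v)=(1+\epsilon/10)^{-1}$. Now for every $w \in V_{\ell-1}^H$ the BFS path $\mathcal{P}_w$ of length $\ell-1 \leq r$ from $v$ to $w$ satisfies $W(\mathcal{P}_w)\leq 1$. Writing $h_w$ for the number of high-degree vertices on $\mathcal{P}_w$ (including $w$) and using that each low-degree vertex contributes a factor $(1+\epsilon/10)^{-1}$ to $W(\mathcal{P}_w)$ while each high-degree vertex contributes $d^{15}\degree$, the breakpoint condition yields
\[
\prod_{u \in \mathcal{P}_w \cap H} d^{15}\degree(u) \ \leq\ (1+\epsilon/10)^{\,\ell - h_w}.
\]
In the worst case $h_w=1$ this gives $\degree(w)\leq (1+\epsilon/10)^{\ell-1}/d^{15}$, and no high-degree vertex can exist at depth $\ell-1$ at all unless $(1+\epsilon/10)^{\ell-1}\geq d^{15}\TDeg$, i.e., $\ell \gtrsim 16\log d/\log(1+\epsilon/10)$. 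For $\ell$ below this threshold the low-degree bound alone suffices and the inductive step is immediate.

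The main obstacle is the regime where $\ell$ can be as large as $r$ (which in applications may be as large as $\Theta(\log n/\log^4 d)$): the naive estimate $|V_{\ell-1}^H|\leq |V_{\ell-1}|$ together with the per-vertex bound $\degree(w)\leq (1+\epsilon/10)^{\ell-1}/d^{15}$ fails once $\ell$ grows a few $\log d$'s past the threshold, since $(1+\epsilon/10)^{\ell-1}/d^{15}$ eventually exceeds the per-vertex slack $(\epsilon/6)d$. To close the induction in this regime I would partition $V_{\ell-1}^H$ according to the value of $h_w$ and exploit that each additional high-degree ancestor tightens the total degree product on the path by a factor $d^{-15}$, and moreover forces at least $\Omega(\log d/\epsilon)$ purely low-degree vertices between any two successive high-degree vertices along the BFS path. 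An amortized accounting over the BFS tree then charges the excess branching $\degree(w)-\TDeg$ of each high-degree vertex to its chain of low-degree ancestors, each of which carries a slack of $(\epsilon/6)d$ relative to the target growth rate $(1+\epsilon/3)d$; all routine checks reduce to the fact that $(1+\epsilon/10)/(1+\epsilon/3)<1$, so a geometric factor in $\ell$ absorbs the naive blowup for $d$ sufficiently large. The technically hardest step is thus the bookkeeping of this amortization across the BFS tree.
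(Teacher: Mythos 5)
There is a genuine gap: the step you yourself flag as ``the technically hardest'' --- absorbing the high-degree contribution at general depth $\ell$ --- is never actually carried out, and it is the entire content of the lemma. Your induction hypothesis, ``the number of vertices at distance $\ell-1$ is at most $((1+\epsilon/3)d)^{\ell-1}$,'' carries no information about \emph{how many} of those vertices are high-degree, nor about the weight history of the paths reaching them; the only handle you extract from the breakpoint property is the per-vertex bound $\degree(w)\leq (1+\epsilon/10)^{\ell-1}/d^{15}$, which, as you correctly observe, overwhelms the additive slack $(\epsilon/6)d\cdot((1+\epsilon/3)d)^{\ell-2}$ once $\ell$ exceeds a few multiples of $\log d$. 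The proposed fix --- partitioning $V_{\ell-1}^H$ by $h_w$ and charging excess branching of each high-degree vertex to its chain of low-degree ancestors across the BFS tree --- is only gestured at, and to make it work you would in effect have to strengthen the induction hypothesis to a path-wise statement tracking the accumulated weight along each branch (otherwise nothing prevents, consistently with your hypothesis, a level $\ell-1$ populated by many vertices with $h_w=1$ and degree close to $(1+\epsilon/10)^{\ell-1}/d^{15}$, which breaks the step). So as written the proof does not close.

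The paper avoids this level-by-level accounting altogether with one clean observation that you may want to adopt: in the shortest-path tree $T^v_\ell$ rooted at $v$, the number of vertices at depth $\ell$ is at most the \emph{maximum over root-to-depth-$\ell$ paths} $\P'=(u_0=v,\dots,u_\ell)$ of $\prod_{i=0}^{\ell-1}\degree(u_i)$ (unfold the trivial recursion ``count at depth $\ell$ $\leq \degree(v)\cdot\max_y$ count at depth $\ell-1$ in the subtree at $y$''). Dividing through by $((1+\epsilon/3)d)^{\ell}$ reduces the lemma to showing that every length-$\ell$ path from the breakpoint $v$ has $\prod_i \degree(u_i)\leq((1+\epsilon/3)d)^{\ell}$, and this follows in one line from the path-weight condition (Corollary \ref{cor:FromBreakPointProd}): with $m$ high-degree vertices on the path, the product is at most $\bigl((1+\epsilon/6)d\bigr)^{\ell-m}\cdot d^{-15m}(1+\epsilon/10)^{\ell-m+1}$, and $(1+\epsilon/6)(1+\epsilon/10)<1+\epsilon/3$ finishes it. In other words, the multiplicative, single-path formulation is exactly the strengthened invariant your amortization is trying to simulate, and it makes the bookkeeping disappear.
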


\begin{proof}[Proof of Lemma \ref{lemma:GrowthFromBP}]

\noindent
For every vertex $w$ in  $G$, and  for every integer $\ell\geq 0$,  recall that  
$\ball_{\ell}(w)$ contains all the vertices within distance $\ell$ from vertex $w$.
Furthermore, let $T^w_{\ell}$ be the shortest path tree of the induced subgraph of $G$ 
which includes only the vertices in  $\ball_{\ell}(w)$. The lemma follows by showing that for 
every $r$-breakpoint $v$ in $G$,  the number of vertices at level $\ell$ of $T^v_{\ell}$ is at most $((1+\epsilon/3)d)^{\ell}$.

Let $D(v, {\ell})$ be the ratio between the number of vertices at level $\ell$ of $T^v_{\ell}$ and $((1+\epsilon/3)d)^{\ell}$.
We  show that $D(v,\ell)\leq 1$. For this, note that  $D(v,\ell)$ satisfies the following recursive relation:
\[
 D(v,\ell)  \leq  \frac{\degree(v)}{(1+\epsilon/3)d} \times \max_{y\in N(v) } \left\{  D(y, \ell-1) \right \}   
\]
where for  $y$,  a neighbor of  $v$,   the quantity  $D(y, \ell-1)$ is equal to the ratio between of the number of
vertices at level $\ell-1$ of the subtree $T_y$ and $((1+\epsilon/3)d)^{\ell-1}$.  $T_y$ is the subtree
of $T^v_{\ell}$ that hangs from the vertex $y$.
Repeating the same recursive argument as above we get that
\begin{equation}
 D(v,\ell)  \leq
\max_{\P'=(u_0=v ,u_1,\dots,u_\ell)}  \prod_{i=0}^{\ell-1} 
\frac{\degree(u_i)}{(1+\epsilon/3)d} , \qquad \label{eq:GraphIncreaseBase}
\end{equation}
where the maximum  is over all paths $\P'$ of length $\ell$ in $T^v_{\ell}$ that start from vertex $v$.

Let $M\subseteq \{u_0, \ldots, u_\ell\}$ be the subset of vertices in $\P'$ which are of high degree, i.e.,
of degree greater than $\TDeg=(1+\epsilon/6)d$. Let $m=|M|$. From \eqref{eq:GraphIncreaseBase} we get that
\begin{eqnarray} 
 D(v,\ell)  &\leq &\left( \frac{1+\epsilon/6}{1+\epsilon/3}\right)^{\ell - m}
 \prod_{u_i\in M}  \frac{\degree(u_i)}{(1+\epsilon/3)d}  
\  \leq \ \left( \frac{(1+\epsilon/6)(1+\epsilon/10)}{1+\epsilon/3}\right)^{\ell - m} d^{-15m}\leq 1. \nonumber
\end{eqnarray} 
where $m=|M|$. The second inequality uses Corollary \ref{cor:FromBreakPointProd}  to bound the product of the degrees in $M$.
The lemma follows.
\end{proof}

Another observation which we use in many different places in the paper is the following corollary,
which follows directly from \eqref{def:PathWeight}.

\begin{corollary}\label{cor:FromBreakPointProd}
For all $\epsilon >0$, $\Delta>0$, there exists $d_0>0$ such that for any $d\geq d_0$, 
for every  graph   $G$ which admits block partition  ${\cal B}(\epsilon, d, \Delta)$, 
and any $v\in \ImpVrtx$ the following is true:

For a multi-vertex block $B$ which is incident to $v$, for any vertex $w\in B$ and a path ${\cal P}$ inside $B$ that connects 
$w$ to $v$ the following holds:
\[
\textstyle \prod_{u\in M} d^{15}\ \degree(u) \leq \left(1+\epsilon/10 \right)^{\ell-m+1},
\]
where $M$ is the set of high-degree vertices in $\cal P$, $\ell$ is the length of the path
 and $m=|M|$.
\end{corollary}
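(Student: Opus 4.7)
The plan is to deduce the corollary directly from the $r$-breakpoint definition applied at $v$, exploiting the sparse block partition's guarantee that $r > \mathrm{diam}(B)$. The path $\mathcal{P}$ inside $B$ from $w$ to $v$ can be viewed as (part of) a short path rooted at $v$, so its total weight is at most $1$; separating the contributions of high-degree vertices from those of low-degree vertices then yields the product bound.

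First I would observe that $v$ itself must be low-degree. The $r$-breakpoint property, applied to the trivial single-vertex path $(v)$, gives $W(v) \leq 1$; but the high-degree alternative in \eqref{eq:DefVrtxWeight} would force $W(v) = d^{15}\,\degree(v) \geq d^{15}$, which exceeds $1$ for all sufficiently large $d$. Hence $W(v) = (1+\epsilon/10)^{-1}$. Next, since $B$ is a multi-vertex block and $v \in \outBound B$ has exactly one neighbor in $B$ (condition 2(b) of Definition~\ref{def:SparseBlockPart}), prepending $v$ to $\mathcal{P}$ produces a valid path $\mathcal{P}^{+}$ starting at $v$ whose length is bounded by $\mathrm{diam}(B)+1 \leq r$, using condition 2(a) with $r > \max\{\mathrm{diam}(B),\log\log n\}$ and $n$ large. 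The breakpoint property then gives $W(\mathcal{P}^{+}) \leq 1$.

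Finally I would decompose this weight using \eqref{def:PathWeight} and \eqref{eq:DefVrtxWeight}. Splitting the vertices on $\mathcal{P}^{+}$ into the high-degree set $M$ (of size $m$) and the remaining low-degree vertices, which include $v$ and number $\ell - m + 1$ in total, yields
\[
W(\mathcal{P}^{+}) \;=\; (1+\epsilon/10)^{-(\ell-m+1)}\prod_{u\in M} d^{15}\,\degree(u) \;\leq\; 1,
\]
from which the claimed inequality follows by rearranging. The $+1$ in the exponent is precisely the extra low-degree contribution of $v$ itself. The argument is essentially bookkeeping; the one mild subtlety is matching the convention for ``length'' in the breakpoint definition (number of edges versus number of vertices) against the diameter bound in condition 2(a), but once pinned down no real obstacle remains.
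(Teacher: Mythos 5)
Your argument is correct and is essentially the paper's own (one-line) justification: the paper asserts the corollary ``follows directly from'' the weighting schema, meaning exactly what you do — apply the $r$-breakpoint property of $v\in\outBound B$ from Definition \ref{def:SparseBlockPart}(2a) (with $r>\diameter(B)$, so the path through $B$ qualifies), note $v$ itself must be low degree since $d^{15}\degree(v)>1$, and split $W(\mathcal{P})\le 1$ into the low-degree factor $(1+\epsilon/10)^{-(\ell-m+1)}$ and the high-degree product. The edges-versus-vertices bookkeeping you flag resolves exactly as you say, with the low-degree vertex $v$ supplying the $+1$ in the exponent.
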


\section{A simple criterion for rapid-mixing}\label{sec:SimpleCriterion}

As in the case of maximum degree $\Delta$, for showing rapid mixing with expected degree 
$d$, we need to show a result which is analogous to  \eqref{eq:BlockUpdtCovergentSimple}.
That is, assume we have some graph $G\in \IntrstGraphFam(\epsilon, d, \Delta)$ with set of blocks ${\cal B}$.
We have $(X_t), (Y_t)$ to copies of block dynamics. At time $t$ we update block $B$, 
while there is exactly one $u^* \in \outBound B$ such that $X_{t}(u^*)\neq Y_{t}(u^*)$. 
For showing rapid mixing it suffices to have that the expected number of disagreements
generated by the update of block $B$ is less than one.  In particular, having such a bound
for the expected number of disagreement, rapid mixing follows by following the same line 
of arguments as those we use for Theorem \ref{thm:block-regular}. 

We couple $X_{t+1}(B)$ and $Y_{t+1}(B)$ by coloring the vertices of $B$ in a vertex-by-vertex manner
as we present at the beginning of Section \ref{sec:prop:Tail4ZetaIntro}.
Our focus is on the {\em probability of propagation}. That is, the probability
vertex $v\in B$ becomes a disagreement  in the coupling, given that  
its neighbor $w\in B\cup\{u^*\}$, which is closest to $u^*$,  is a disagreement, too. Let us call this probability $p_v$.

For the coupling $(X_t)$ and $(Y_t)$ such that $X_t\oplus Y_t=\{u^*\}$ we describe above,  we say that the block $B\in {\cal B}$ is in 
a {\em convergent configuration} if the following is true: We can couple the configurations $X_t(B)$ and $Y_t(B)$
such that  for every $v\in B$ the probability of propagation is bounded as follows:
If $v$ is an internal vertex in the block $B$, it is a low degree vertex, i.e., $\degree(v)\leq \TDeg$ and it  does not belong to 
a  cycle in $B$ (if any) we have  
$$   \textstyle  p_v \leq \min\left \{ \frac{1}{(1+\epsilon/2)\degree(v)}, \frac{2}{d} \right \}.  $$
The same bound holds for   $v\in \ImpVrtx \cap B$ which is   within  radius $(\log d)^2$ from $u^*$, as well.

For a graph $G\in \IntrstGraphFam(\epsilon, d, \Delta)$, wether or not some  block $B$ is in a
convergent configuration depends {\em only} on the configuration that $X_t, Y_t$ specify for $\outBound B$.
It the following result we show that if the block is in a convergent configuration the number of
disagreements that are generated is less than one, on average.

\begin{theorem} \label{thrm:BlockUpdtCovergent}
In the same setting as Theorem \ref{thrm:RapidMixingAvrgDegGraph176} the following is true:

Let $(X_t)_{t\geq 0}, (Y_t)_{t\geq 0}$ be  two copies of the  block dynamics  on
the coloring (or hard-core) model on $G$ such that  for some $t\geq 0$ we
 have $X_t\oplus Y_t=\{u^*\}$, where $u^*\in \ImpVrtx$. 
Let ${\cal E}$ be the event that   $X_t, Y_t$ are such that every $B\in {\cal B}$ for which  $u^*\in \outBound B$,
is in a convergent configuration.
For any such $B$  we have that
\[
\ExpCond{
\left( 
\EdgeBlockWeight(X_{t+1}, Y_{t+1})
- \EdgeBlockWeight (X_{t}, Y_{t})
\right) 
\mathbf{1}\{ {\cal E} \}}{ X_t, Y_t, \ B\  \textrm{is updated at $t+1$}
}
\leq   n^2(1-\epsilon/4).
\]
\end{theorem}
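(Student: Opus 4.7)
Assume the event $\cal E$ holds, so that the vertex-by-vertex coupling from Section \ref{sec:prop:Tail4ZetaIntro} can be set up with propagation probability $p_v \le \min\{1/((1+\epsilon/2)\degree(v)),\, 2/d\}$ on every low-degree interior $v \in B$ off the cycle, as well as on every $v \in \inBound B$ within distance $(\log d)^2$ of $u^*$. Since $u^* \notin B$ is unaffected by updating $B$ and still disagrees after the update, the change $\dist(X_{t+1},Y_{t+1}) - \dist(X_{t},Y_{t})$ decomposes into an internal part of total size at most $|B| \le n$ and a boundary part $n^2 \sum_{v \in \inBound B} \degree_{out}(v)\,\tilde{p}_v$, where $\tilde{p}_v$ is the probability that $v$ ends up in $X_{t+1} \oplus Y_{t+1}$. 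So the theorem reduces to proving
$$\sum_{v \in \inBound B} \degree_{out}(v)\,\tilde{p}_v \;\leq\; 1-\epsilon/3.$$

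\textbf{Tree induction.} Let $z$ be the unique vertex of $B$ adjacent to $u^*$, root $B$ (or, in the unicyclic case, a spanning subtree) at $z$, and for each $w \in B$ set
$$Q_w \;\le\; \degree_{out}(w)\,\mathbf{1}\{w \in \inBound B\} + \sum_{y \text{ child of } w} p_y\,Q_y,$$
the maximum over couplings of the expected weighted $\inBound B$-contribution coming from the subtree $T_w$, given that $w$ disagrees. The target inequality becomes $p_z Q_z \le 1-\epsilon/3$. For a low-degree interior $w$ off the cycle, the convergent-configuration bound combined with $\degree_{in}(w) \le \degree(w)$ yields $\sum_{y} p_y \le 1/(1+\epsilon/2)$, so the recursion contracts geometrically at rate $1/(1+\epsilon/2)$ per generation along purely low-degree paths, exactly as in the base case of Theorem \ref{thm:block-regular} but now with $\alpha d$ in place of $2\Delta$.

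\textbf{High-degree vertices, cycle, and far boundary.} A high-degree vertex $h$ can only be handled with the trivial $p_h \le 1$ and branching $\degree(h)$, but Definition \ref{def:SparseBlockPart}(2a) guarantees a long low-degree buffer above $h$, and Corollary \ref{cor:FromBreakPointProd} shows that the telescoping product of $(1+\epsilon/10)^{-1}$ contractions over that buffer dominates the penalty $d^{15}\degree(h)$ from the vertex weighting \eqref{eq:DefVrtxWeight}; hence the net effect of traversing $h$ in the $Q_w$ recursion still preserves the overall contraction. If $B$ is unicyclic with cycle $C$, Definition \ref{def:SparseBlockPart}(2c) forces $C$ to distance at least $2\log(|C|\Delta)$ from $\outBound B$, so even treating the $|C|$ cycle vertices with $p = 1$ the total cycle contribution is at most $|C|(1+\epsilon/2)^{-2\log(|C|\Delta)} \ll 1$ once multiplied by the buffer contraction. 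Finally, vertices $v \in \inBound B$ at distance exceeding $(\log d)^2$ from $u^*$, for which the convergent-configuration bound is not directly postulated, are reached only through $\Omega((\log d)^2)$ low-degree vertices each contributing a $2/d$ factor, so $\tilde{p}_v$ is super-polynomially small in $d$ and the aggregate $\degree_{out}(v)\,\tilde{p}_v \le \Delta\,\tilde{p}_v$ is negligible.

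\textbf{Main obstacle.} The delicate point is the amortization in the third paragraph: the precise choice of vertex weights $d^{15}\degree(u)$ versus $(1+\epsilon/10)^{-1}$ in \eqref{eq:DefVrtxWeight}, together with the $r$-breakpoint and cycle-distance requirements in Definition \ref{def:SparseBlockPart}, must be combined so that the combinatorial blow-up at each high-degree vertex and along the single cycle is rigorously dominated by the geometric decay purchased from the surrounding low-degree buffer, all within the $Q_w$ recursion. Once this bookkeeping is in place, the low-degree contraction of the second paragraph assembles into $p_z Q_z \le 1-\epsilon/3$, and adding the internal contribution $|B| \le n$ yields the claimed bound $n^2(1-\epsilon/4)$ for sufficiently large $n$.
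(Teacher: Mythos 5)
Your overall architecture is the same as the paper's (a rooted recursion for the expected weighted boundary disagreements, contraction along low-degree stretches from the convergent-configuration bound, amortization of high-degree vertices via Corollary \ref{cor:FromBreakPointProd}, the cycle via condition 2(c) of Definition \ref{def:SparseBlockPart}), but the step you rely on for the contraction is false as stated. You claim that for a low-degree interior $w$ the convergent-configuration bound gives $\sum_{y}p_y\leq 1/(1+\epsilon/2)$ over the children $y$ of $w$. The bound $p_y\leq\min\{1/((1+\epsilon/2)\degree(y)),\,2/d\}$ involves the degree of the \emph{child} $y$, not of $w$: if $\degree(w)$ is close to $\TDeg$ and its children are low degree, then $\sum_y p_y$ can be as large as about $2(1+\epsilon/6)>1$, so the recursion does not contract "per generation" in the way you assert. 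What actually closes the induction (and is what the paper does in Lemma \ref{lemma:Bound4QAPlusUniformity}) is a per-vertex matching: with the invariant $p_yQ_y\leq 1/(1+\epsilon/2)<1$ for every child, one gets $Q_w\leq \degree_{out}(w)+(\degree(w)-\degree_{out}(w))\cdot 1\leq \degree(w)$ and hence $p_wQ_w\leq \degree(w)/((1+\epsilon/2)\degree(w))=1/(1+\epsilon/2)$; i.e.\ the propagation probability into a vertex is played against that same vertex's fan-out, not against its siblings' probabilities.

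The same conflation breaks your treatment of boundary vertices at distance larger than $(\log d)^2$ from $u^*$. The paths to them are not purely low degree (only the amortized bound of Claim \ref{claim:yVsl} holds), and even granting a factor $2/d$ per vertex, the number of boundary vertices at distance $\ell$ can be of order $((1+\epsilon/3)d)^{\ell}$ (Lemma \ref{lemma:GrowthFromBP}; blocks have size up to $n^{1/(\log d)^2}$), so $((1+\epsilon/3)d\cdot 2/d)^{\ell}$ diverges and per-vertex smallness of $\tilde p_v$ does not control $\sum_v \degree_{out}(v)\tilde p_v$. One again needs the degree-matched bound $1/((1+\epsilon/2)\degree(v))$ cancelling the local branching, combined with Corollary \ref{cor:FromBreakPointProd}, which is precisely the paper's far-contribution estimate in Lemma \ref{lemma:Bound4QBPlusUniformity}. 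Finally, the high-degree/cycle bookkeeping that you explicitly defer as the "main obstacle" is the actual substance of the paper's argument (Propositions \ref{prop:Bound4QTPlustUniformity} and \ref{prop:BoundHangingTreesPlusUniformity}: the probability of a disagreement path reaching the cycle, the bound on the hanging subtrees, and the distance requirement 2(c)); as written, your proposal sketches the right plan but neither its central contraction step nor its aggregation over far boundary vertices is correct, so it does not yet constitute a proof.
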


\noindent
The proof of Theorem \ref{thrm:BlockUpdtCovergent} appears in   Section \ref{sec:thrm:BlockUpdtCovergent}.
%

%


\newcommand{\WDis}{{\Phi}}

\section{ Analysis for Rapid Mixing - Proof of Theorem \ref{thrm:RapidMixingAvrgDegGraph176}}\label{sec:RapidMixingColor}

\subsection{Spread of disagreements during Burn-In}\label{sec:BurnInResults}

\noindent
For proving Theorem \ref{thrm:RapidMixingAvrgDegGraph176}, apart from Lemma \ref{lemma:Combined-BurnIn}
we also need the following result.

\begin{proposition}\label{prop:RapidMixAux1}

In the same setting as Theorem \ref{thrm:RapidMixingAvrgDegGraph176} the following is true:

Let $(X_t)_{t\geq 0}$ and $(Y_t)_{t\geq 0}$ be two copies of block dynamics. Assume that   $X_{0}\oplus Y_{0}=\{u^*\}$. Let $T=\left \lfloor CN/\epsilon\right \rfloor$. 
Then there is a coupling such that the following holds:
\begin{enumerate}
\item There exists $C'>0$, independent of $d$, such that
$$
\Exp{|(X_T\oplus Y_{T} ) \cap \ImpVrtx | } \leq  \textstyle \exp\left( C'/ \epsilon \right).
$$ 
\item Let $\ManyDisEvent_T$ be the event that at some time $t\leq T$ we have 
$|(X_t\oplus Y_t)\cap \ImpVrtx|>d^{2/3}$.  Then
\[
\Exp{  |(X_T\oplus Y_T)\cap \ImpVrtx|  \ \mathbf{1}\{\ManyDisEvent_T\}  }   \leq \textstyle \exp\left( -\sqrt{d}\right).
\]
\end{enumerate}
\end{proposition}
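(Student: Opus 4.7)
The plan is to combine a recursive multiplicative growth estimate for the cumulative boundary-disagreement count with the sharp tail bounds of Lemma~\ref{lemma:Combined-BurnIn}, and to exploit the strong Markov property at the first time the disagreement count exceeds $d^{2/3}$. Throughout set $\Psi_t := |D_{\leq t}|$, which is monotone in $t$ and satisfies $\Phi_t := |D_t| \leq \Psi_t$.

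For Part~1, I first establish the per-step growth inequality $\ExpCond{\Psi_{t+1}-\Psi_t}{X_t,Y_t} \leq (\kappa/N)\Psi_t + \kappa'/N$ with $\kappa,\kappa'$ depending only on $\epsilon$. Conditional on the updated block being $B$, only $\inBound B$ can contribute new entries to $\Psi_{t+1}\setminus \Psi_t$; extending the vertex-by-vertex coupling of Proposition~\ref{prop:Tail4ZetaIntro} by seeding every vertex of $D_t \cap \outBound B$ stochastically dominates the new disagreements in $\inBound B$ by the union of the Bernoulli percolation clusters of Proposition~\ref{prop:SingleSourceBlockdis}, each of expected size $O(1/d)$. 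Averaging over the uniform block choice and using that every vertex of $\ImpVrtx$ is an $r$-breakpoint of degree at most $\TDeg=(1+\epsilon/6)d$ (and hence lies on $\outBound B$ for at most $O(d)$ blocks), the two $d$-factors cancel. Iterating for $T=\lfloor CN/\epsilon\rfloor$ steps from $\Psi_0 \leq 1$ yields $\Exp{\Psi_T} \leq \exp(C'/\epsilon)$ for a constant $C'$ independent of $d$, and $\Phi_T \leq \Psi_T$ finishes Part~1.

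For Part~2, define the stopping time $\tau := \inf\{t \geq 0 : \Psi_t > d^{2/3}\}$, so that by monotonicity of $\Psi_t$ one has $\ManyDisEvent_T \subseteq \{\tau \leq T\}$. Applying Lemma~\ref{lemma:Combined-BurnIn} with $C$ rescaled to $C/\epsilon$ produces a constant $C''=C''(\epsilon,C)$ with $\Prob{\tau \leq T} \leq C''\exp(-d^{0.66}\,C'')$. By the strong Markov property at $\tau$, from time $\tau$ onwards the chain is a block dynamics restarted from $(X_\tau, Y_\tau)$; running the Part~1 growth estimate with initial count $\Phi_\tau$ in place of $1$ (applied to the shifted cumulative count $\Psi^{(\tau)}_t$) yields $\ExpCond{\Phi_T}{\tau \leq T,\, X_\tau, Y_\tau} \leq \Phi_\tau \cdot \exp(C'/\epsilon)$. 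Moreover $\Phi_\tau \leq d^{2/3} + \Delta_\tau$, where $\Delta_\tau$ is the step-$\tau$ increment; a single-step application of the percolation domination from Proposition~\ref{prop:SingleSourceBlockdis}, summed over the at most $d^{2/3}$ seeds present at time $\tau-1$, gives $\ExpCond{\Delta_\tau}{X_{\tau-1},Y_{\tau-1}} \leq O(\Psi_{\tau-1}/N) = O(d^{2/3}/N)$, so $\ExpCond{\Phi_\tau\mathbf{1}\{\tau\leq T\}}{\tau\leq T} \leq d^{2/3}(1+o(1))$. Combining,
\[
\Exp{\Phi_T \mathbf{1}\{\ManyDisEvent_T\}} \;\leq\; e^{C'/\epsilon}\cdot d^{2/3}(1+o(1)) \cdot C''\exp(-d^{0.66}C'') \;\leq\; e^{-\sqrt{d}},
\]
for $d$ sufficiently large in terms of $\epsilon$ and $C$ (using $d^{0.66} > \sqrt{d}$ for large $d$).

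The main obstacle is making the strong-Markov decomposition fully rigorous: the overshoot $\Phi_\tau - d^{2/3}$ at the stopping time is not automatically controlled by the unconditional growth bound, because at step $\tau$ an entire block is updated at once and its increment has exponential tail with a prefactor proportional to the number of seeds on the outer boundary, which can itself be of order $d^{2/3}$. Turning the heuristic $O(d^{-1/3})$-per-step estimate into a genuine conditional bound $\ExpCond{\Delta_\tau}{\tau\leq T}=O(1)$ requires a careful union argument over the at most $d^{2/3}$ seeds, exploiting the tree-plus-one-edge structure of each block so that distinct seed clusters explore essentially edge-disjoint subtrees and per-source contributions remain $O(1/d)$. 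A secondary delicacy is the multi-source extension of the Part~1 growth bound itself: Proposition~\ref{prop:Tail4ZetaIntro} is stated for a single boundary disagreement, and the stochastic domination must be rederived in the multi-source setting, which the same Bernoulli-percolation argument delivers via the block structure guaranteed by the sparse block partition.
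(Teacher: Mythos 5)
Your Part~1 is essentially the paper's argument: the paper obtains the same per-step multiplicative bound $\ExpCond{H(X_{t+1},Y_{t+1})}{X_t,Y_t}\leq (1+2c/N)\,H(X_t,Y_t)$, but it gets around the multi-source issue you flag as a ``secondary delicacy'' by path coupling — interpolating between $X_t$ and $Y_t$ through pairs differing at a single vertex of $\ImpVrtx$ and invoking Proposition~\ref{prop:Tail4ZetaIntro} on each pair, so expectations just add and no multi-source stochastic domination is needed. With that observation Part~1 is fine.

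Part~2 has a genuine gap, and it is exactly the point you call ``the main obstacle.'' The step from $\ExpCond{\Delta_\tau}{X_{\tau-1},Y_{\tau-1}}=O(\Psi_{\tau-1}/N)$ to $\ExpCond{\Phi_\tau\,\mathbf{1}\{\tau\leq T\}}{\tau\leq T}\leq d^{2/3}(1+o(1))$ does not follow: conditioning on $\{\tau=t\}$ conditions on the step-$t$ increment being large enough to cross the level $d^{2/3}$, so the unconditional per-step expectation is irrelevant; and if instead you bound $\Exp{\Delta_\tau\mathbf{1}\{\tau\leq T\}}=\sum_{t\leq T}\Exp{\Delta_t\mathbf{1}\{\tau=t\}}$ using only the per-step estimate, the sum over $T=\Theta(N/\epsilon)$ steps gives something of order $d^{2/3}$ \emph{without} the exponentially small factor $\Prob{\tau\leq T}$, which destroys the final bound. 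Your proposed fix (edge-disjoint seed clusters, per-source $O(1/d)$ contributions) does not address this conditioning bias at all — it only sharpens the unconditional estimate. What actually closes the gap is a dichotomy on how the level is crossed: a crossing by an increment $\leq d^{2/3}/2$ forces $\Psi_{\tau-1}\geq d^{2/3}/2$, an event of probability $\exp(-\Omega(d^{0.66}))$ by Lemma~\ref{lemma:Combined-BurnIn}, which you then multiply by the conditional probability $O(d^{5/3}/N)$ that the chosen block even touches a current disagreement and by the capped increment; while a crossing by an increment $> d^{2/3}/2$ is exponentially unlikely in a single step (a multi-source tail bound in the spirit of Proposition~\ref{prop:Tail4ZetaIntro}/Lemma~\ref{lemma:BigJumpInBurnIn}), and there are only $O(N)$ steps. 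This is precisely what the paper's proof supplies through its subinterval decomposition, the events $J_A,J_B$, the ${\cal L}$/$\bar{\cal L}$ split and the conditional one-step tail bound of Claim~\ref{claim:IncreaseTail} on ${\tt Inc}(t)$ given that the current count is at most $2d^{2/3}$; your stopping-time skeleton is workable, but without an argument of this type the displayed chain of inequalities in your Part~2 is not valid as written.
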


\noindent
The proof of Proposition \ref{prop:RapidMixAux1} appears in Section \ref{sec:prop:RapidMixAux1}.

\subsection{Results for Local Uniformity}\label{sec:UniformityVrsRapidMixing}

Additionally to Theorem \ref{thrm:Uniformity1.76} we need the following results:
Recall that for the block dynamics $(X_t)_{t\geq 0}$, and a  vertex $u$, we let $\AvialColors_{X_t}(u)$  be the set of  colors which
 are not used for the  coloring $X_t(N(u))$, where $N(u)$  is the neighborhood of vertex $u$.
Furthermore, for a vertex $u$ and  $t\geq 0$,  let the indicator variable $\indicator{\U_t(v)}$ be equal to 
$1$  if vertex $u$ has been updated  up to time $t$ at least once in $(X_t)_{t\geq 0}$.  Otherwise it is 0.

Lemma \ref{lemma:GrowthFromBP},  Theorem \ref{thrm:Uniformity1.76} and a simple 
union bound imply the following corollary.

\begin{corollary}\label{cor:UniformityUnion}
In the same setting as in Theorem \ref{thrm:Uniformity1.76} the following is true:
Let  $v\in \ImpVrtx$ and let $(X_t)_{t\geq 0}$ be  the block dynamics  on $G$.  
For  $\mathbold{I}_1=\left \lfloor N\log\left(\gamma^{-3}\right) \right \rfloor$ and $\mathbold{I}_2= \left \lfloor CN  \right \rfloor$,  
let the time interval ${\cal I}=\left [\mathbold{I}_1, \mathbold{I}_2\right]$.  For each  $w\in \ball(v, r) \cap \ImpVrtx$, where 
$R=10(\log d)\sqrt{d}$ let the event 
 $$
{\cal Z}_w:= \exists   t \in {\cal I}\;s.t. \;
 |\AvialColors_{X_t}(w)|  \leq  \indicator{\U_t(w)}   (1-\gamma)k \exp\left (-\degree(w)/k  \right).
 $$
 Then, it holds that
\[
\Pr\left [ \textstyle \bigcup_{ w\in \ball(v, R)\cap \ImpVrtx} {\cal Z}_w\right ] \leq 
\textstyle \exp\left( -d^{3/5} \right).
\]
\end{corollary}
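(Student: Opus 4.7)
The plan is to apply Theorem \ref{thrm:Uniformity1.76} at each vertex $w \in \ball(v, R) \cap \ImpVrtx$ and then union-bound the failure event over the ball. The per-vertex tail $d^4 \exp(-d^{3/4})$ is super-polynomially small in $d$, so this union bound will survive as long as $|\ball(v, R)|$ is at most $\exp(O(\sqrt{d}\,(\log d)^2))$, which is exactly what Lemma \ref{lemma:GrowthFromBP} will give.

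A small setup step is to match parameters between the corollary and Theorem \ref{thrm:Uniformity1.76}: identify $\gamma$ with the $\epsilon^2$ appearing in the theorem, so that the two uniformity thresholds coincide, and take $\gamma$ small enough that $C_0 \leq \log(\gamma^{-3})$. Then ${\cal I} \subseteq [C_0 N,(C+C_0)N]$ and the theorem's per-vertex bound $\Pr[{\cal Z}_w] \leq d^4 \exp(-d^{3/4})$ applies to every $w$ in the ball throughout ${\cal I}$.

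Next comes the quantitative step. Since $v \in \ImpVrtx$, it lies on the outer boundary of some multi-vertex block $B$; by condition 2(a) of Definition \ref{def:SparseBlockPart}, $v$ is an $r$-breakpoint for $r > \max\{\mathrm{diam}(B), \log\log n\}$. Because $R = 10(\log d)\sqrt{d}$ is independent of $n$ while $\log\log n \to \infty$, $v$ is eventually an $R$-breakpoint, so Lemma \ref{lemma:GrowthFromBP} yields $|\ball(v, R)| \leq \sum_{\ell=0}^{R}((1+\epsilon/3)d)^\ell \leq 2((1+\epsilon/3)d)^R$. The union bound then contributes at most $\exp\bigl(O(\sqrt{d}\,(\log d)^2) - d^{3/4}\bigr)$, which falls below $\exp(-d^{3/5})$ for sufficiently large $d$ since $d^{3/4}$ dominates both $\sqrt{d}\,(\log d)^2$ and $d^{3/5}$. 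No serious obstacle is anticipated; the corollary is really just a union-bound packaging of Theorem \ref{thrm:Uniformity1.76} with the growth estimate of Lemma \ref{lemma:GrowthFromBP}, and the only fiddly item is the interval matching, which is a matter of choosing constants.
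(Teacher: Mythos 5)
Your proposal is correct and is exactly the paper's argument: the paper proves this corollary in one line by combining Theorem \ref{thrm:Uniformity1.76}, Lemma \ref{lemma:GrowthFromBP}, and a union bound over the ball, with the same accounting ($|\ball(v,R)|\leq \exp(O(\sqrt{d}(\log d)^2))$ against the per-vertex tail $d^4\exp(-d^{3/4})$, leaving room below $\exp(-d^{3/5})$). The only nitpick is that a vertex of $\ImpVrtx$ need not itself be an $r$-breakpoint (it may only be adjacent to one, as the paper notes elsewhere), but then $\ball(v,R)\subseteq\ball(u,R+1)$ for a neighboring breakpoint $u$, which changes the count by at most a factor $(1+\epsilon/3)d$ and affects nothing.
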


\noindent
Theorem \ref{thrm:Uniformity1.76} states that for  $(X_t)_{t\geq 0}$  there is a time period
${\cal I}$ during which   some vertex  $v\in \ImpVrtx$ has local uniformity with large probability. 
Corollary \ref{cor:UniformityUnion}, extends this result  by showing local uniformity  not only for 
$v$, but also for all the vertices in $\ImpVrtx$ which are within distance $10(\log d) \sqrt{d}$ from $v$.

\begin{theorem} \label{thrm:BlockUpdtNowSubcritical}

In the same setting as Theorem \ref{thrm:RapidMixingAvrgDegGraph176} the following is true:

Let $(X_t)_{t\geq 0}, (Y_t)_{t\geq 0}$ be  two copies of the block dynamics  on $G$
such that  for some $t\geq 0$ we have $X_t\oplus Y_t=\{u^*\}$, where $ u^* \in \ImpVrtx$. 
Let ${\cal E}(t)$ be the event that for every $z\in \ball(u^*, (\log d)^2)\cap \ImpVrtx$, we have that
\[
\min\left \{|\AvialColors_{X_{t+1}}(z)|, |\AvialColors_{Y_{t+1}}(z)| \right \} \geq (1-\epsilon/10) k \exp\left (-\degree(z)/k  \right).
\]
For any block $B\in {\cal B}$ such that $u^*\in \outBound B$, it holds that
\[
\ExpCond{
\left( 
\EdgeBlockWeight(X_{t+1}, Y_{t+1})-
\EdgeBlockWeight (X_{t}, Y_{t})
\right) \mathbf{1}\{ {\cal E} (t)\}} { X_t, Y_t, \ B\  \textrm{is updated at $t+1$}}
\leq   n^2(1-\epsilon/4).
\]
\end{theorem}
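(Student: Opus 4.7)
My plan is to reduce Theorem \ref{thrm:BlockUpdtNowSubcritical} to Theorem \ref{thrm:BlockUpdtCovergent} by showing that the local uniformity event ${\cal E}(t)$ is sufficient to place the block $B$ into a convergent configuration; then the desired bound $n^{2}(1-\epsilon/4)$ follows immediately by applying Theorem \ref{thrm:BlockUpdtCovergent} to the same vertex-by-vertex BFS coupling used in Section \ref{sec:prop:Tail4ZetaIntro}.

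Recall that the convergent-configuration definition demands a propagation-probability bound $p_v \leq \min\{1/((1+\epsilon/2)\degree(v)),\, 2/d\}$ for every low-degree internal vertex $v \in B$ (off the block's unique cycle) and for every $v \in \ImpVrtx \cap B$ with $\dist(u^*,v) \leq (\log d)^{2}$. For each such $v$, the BFS coupling reveals $v$'s color after its parent $w$, and the standard optimal-coupling bound when only the already-revealed neighbor $w$ disagrees yields a propagation probability of order $1/|\AvialColors(v)|$, so the task reduces to lower-bounding $|\AvialColors(v)|$ on both sides. For $v \in \ImpVrtx \cap B$ near $u^*$ the event ${\cal E}(t)$ directly supplies $\min\{|\AvialColors_{X_{t+1}}(v)|,|\AvialColors_{Y_{t+1}}(v)|\} \geq (1-\epsilon/10)\, k\, \exp(-\degree(v)/k)$. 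For internal $v$ I would invoke the block-Gibbs-measure local uniformity developed in Section \ref{sec:SCD} (after \cite{GMP}): conditioned on the boundary, the marginal at $v$ has at least $(1-\epsilon/10)(k-\degree_{out}(v))\exp(-\degree_{in}(v)/k)$ available colors, which for internal $v$ (with $\degree_{out}(v)=0$) reduces to the same threshold.

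It then remains to verify the numerical inequality: for $k = (\alpha+\epsilon) d$ and any $\degree(v) \leq \TDeg = (1+\epsilon/6) d$,
\[
   (1-\epsilon/10)\, k\, \exp\!\left(-\degree(v)/k\right)
      \;\geq\; \max\{\,(1+\epsilon/2)\,\degree(v),\; d/2\,\}.
\]
At $\epsilon = 0$ and $\degree(v) = d$ both sides equal $d$, because $\alpha\, e^{-1/\alpha} = 1$ by the defining relation $\alpha = e^{1/\alpha}$. A first-order expansion in $\epsilon$ shows the LHS grows strictly faster than the RHS, so the inequality holds with strict slack for sufficiently small $\epsilon>0$; monotonicity in $\degree(v)$ then extends it to all $\degree(v) \leq \TDeg$, and the $d/2$ comparison is automatic from the same expansion. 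Combining the above, on the event ${\cal E}(t)$ every vertex relevant to the convergent-configuration definition satisfies the required propagation bound, i.e.\ ${\cal E}(t) \subseteq \{B \text{ is in a convergent configuration}\}$, and Theorem \ref{thrm:BlockUpdtCovergent} gives the claim.

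The main obstacle lies in the internal-vertex step: one needs an \emph{a-priori} (not merely in-expectation) lower bound on $|\AvialColors(v)|$ under the block's Gibbs measure with an arbitrary boundary, adapted from the GMP tree analysis to the block setting permitted by Definition \ref{def:SparseBlockPart}, which may contain one cycle. The block construction places any cycle deep inside $B$, far from $\outBound B$, so I expect its effect on the marginal at any $v$ near $u^*$ to be absorbed into a $(1+o(1))$ multiplicative correction; nevertheless, a careful control of the BP-style recursion along paths that encounter the cycle will be needed, and similarly of the high-degree vertices buried at distance $\Omega(\log\log n)$ from $\outBound B$, both of which must not destroy the slack that the defining equation for $\alpha$ provides.
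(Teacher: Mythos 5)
Your proposal is correct and follows essentially the same route as the paper: the paper's entire proof is the one-line observation that, on the event ${\cal E}(t)$, the block $B$ is in a convergent configuration, so Theorem \ref{thrm:BlockUpdtCovergent} applies directly. The additional verification you sketch (the GMP-style marginal bounds of Section \ref{sec:SCD} for internal vertices, the treatment of the cycle and hidden high-degree vertices, and the numerical inequality $(1-\epsilon/10)k e^{-\degree(v)/k}\geq \max\{(1+\epsilon/2)\degree(v),\,d/2\}$ resting on $\alpha=e^{1/\alpha}$) is exactly the content the paper delegates to the definition of convergent configuration and to Theorem \ref{thrm:BlockUpdtCovergent} and Corollary \ref{corr:GMPBiasColour}, so no new argument is needed.
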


\noindent
Theorem \ref{thrm:BlockUpdtNowSubcritical} follows as a corollary from Theorem \ref{thrm:BlockUpdtCovergent}
once we notice that when the event ${\cal E}$ occurs the block $B$ is in a convergent configuration.

\subsection{ Proof of Theorem \ref{thrm:RapidMixingAvrgDegGraph176}}

\begin{proposition}\label{prop:RapidMixAux2}

In the same setting as Theorem \ref{thrm:RapidMixingAvrgDegGraph176}, 
there exists $C_1>0$ such that for large $d>0$ the following is true:

Let $(X_t)_{t\geq 0}$ and $(Y_t)_{t\geq 0}$ be two copies of block dynamics  
with set of block ${\cal B}$. Assume that  that $X_{0}\oplus Y_{0}=\{u^* \}$, where 
$u^*\in \ImpVrtx$.
Let $T_m=\lfloor  C_1 N/\epsilon \rfloor$. Then there is a coupling such that 
\[
\Exp{\EdgeBlockWeight(X_{T_m}, Y_{T_m}) }  \leq  (1/3)\ \EdgeBlockWeight(X_0, Y_0).
\]
\end{proposition}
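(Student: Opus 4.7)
I would couple $(X_t)$ and $(Y_t)$ using the block-by-block BFS maximum coupling from Section \ref{sec:prop:Tail4ZetaIntro}, so that the two chains update the same block at every step. Split $[0,T_m]$ into a burn-in phase $[0,T_1]$ with $T_1=\lfloor N\log(\gamma^{-3})\rfloor$ (where $\gamma>0$ depends only on $\epsilon$) and a contraction phase $[T_1,T_m]$. The burn-in establishes local uniformity while keeping the disagreements localised, and then the contraction phase drives $\EdgeBlockWeight$ down by a factor $\exp(-\Omega(C_1/\epsilon))$ via Theorem \ref{thrm:BlockUpdtNowSubcritical}.

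\textbf{Good event.} Set $R=\lfloor\epsilon^{-3}(\log d)\sqrt{d}\rfloor$ and define $\mathcal{G}=\mathcal{G}_1\cap\mathcal{G}_2$, where $\mathcal{G}_1=\{D_{\leq T_m}\subseteq\ball(u^*,R)\}$ and $\mathcal{G}_2$ is the event that every $v\in\ball(u^*,R)\cap\ImpVrtx$ satisfies local uniformity at every $t\in[T_1,T_m]$ in \emph{both} copies. Lemma \ref{lemma:Combined-BurnIn} gives $\Pr[\mathcal{G}_1^c]\leq 2\exp(-d^{0.49}C')$, while applying Corollary \ref{cor:UniformityUnion} to each marginal and a union bound yields $\Pr[\mathcal{G}_2^c]\leq 2\exp(-d^{3/5})$. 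Hence $\Pr[\mathcal{G}^c]\leq\exp(-d^{\Omega(1)})$.

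\textbf{Contraction on $\mathcal{G}$.} On $\mathcal{G}$, for each $t\in[T_1,T_m]$ every disagreement on $\ImpVrtx$ lies in the ball and triggers the convergent-configuration hypothesis of Theorem \ref{thrm:BlockUpdtNowSubcritical}. Mimicking the path-coupling calculation of Theorem \ref{thm:block-regular}, but with the $1-1/(2\Delta)$ factor replaced by the $1-\epsilon/4$ factor supplied by Theorem \ref{thrm:BlockUpdtNowSubcritical}, one obtains per-step contraction
\[
\Exp{\EdgeBlockWeight(X_{t+1},Y_{t+1})\mid X_t,Y_t,\mathcal{G}}\leq \left(1-\epsilon/(4N)\right)\EdgeBlockWeight(X_t,Y_t).
\]
Iterating across the $\Omega(N/\epsilon)$ contraction steps, and using Proposition \ref{prop:RapidMixAux1}(1) together with the fact that every boundary disagreement is an $r$-breakpoint and hence of degree at most $\TDeg$ (so its weight is at most $n^2\TDeg$), gives
\[
\Exp{\EdgeBlockWeight(X_{T_m},Y_{T_m})\mathbf{1}_\mathcal{G}}\leq \exp(-C_1/5)\cdot n^2\TDeg\exp(C'/\epsilon).
\]
Choosing $C_1$ large enough (linear in $\log(\TDeg/\epsilon)+C'/\epsilon$) makes this at most $(1/6)\cdot n^2\degree_\mathrm{out}(u^*)=(1/6)\EdgeBlockWeight(X_0,Y_0)$.

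\textbf{Bad event and main obstacle.} On $\mathcal{G}^c$ I would split according to the event $\mathcal{E}_{T_m}$ of Proposition \ref{prop:RapidMixAux1}(2), which directly absorbs the boundary part; the residual is then polynomial-in-$n$ multiplied by $\Pr[\mathcal{G}^c]\leq\exp(-d^{\Omega(1)})$, hence $o(1)\cdot\EdgeBlockWeight(X_0,Y_0)$ for $d$ large. The main technical obstacle is that Theorem \ref{thrm:BlockUpdtNowSubcritical} is stated for a single disagreement while we must apply path coupling across a telescoping decomposition of $(X_t,Y_t)$, and the local-uniformity event $\mathcal{E}(t)$ is a joint property that must survive the intermediate configurations. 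On $\mathcal{G}$, however, Lemma \ref{lemma:Combined-BurnIn} gives only $O(d^{4/5})$ boundary disagreements, all of degree $\leq\TDeg$, so between any two intermediate configurations in the decomposition the number of available colours at any $z\in\ball(u^*,R)\cap\ImpVrtx$ shifts by at most $O(d^{4/5})$, which is negligible compared with $(1-\epsilon/10)k\exp(-\degree(z)/k)=\Omega(d)$; this is what makes the pairwise application of Theorem \ref{thrm:BlockUpdtNowSubcritical} legitimate.
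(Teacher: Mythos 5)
Your plan follows the paper's own proof almost step for step: the same burn-in/contraction split, the same good event built from ball containment (Lemma \ref{lemma:Combined-BurnIn} / Theorem \ref{thrm:BallResult}) and local uniformity (Corollary \ref{cor:UniformityUnion}), per-step contraction through the path-coupling interpolation via Theorem \ref{thrm:BlockUpdtNowSubcritical}, burn-in expansion controlled as in Proposition \ref{prop:RapidMixAux1}.1, and the many-disagreement bad event absorbed by Proposition \ref{prop:RapidMixAux1}.2. So in substance this is the paper's argument.

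One bookkeeping point needs to be tightened, and it is exactly where the paper's event $\ManyDisEvent$ earns its keep. As you define it, $\mathcal{G}=\mathcal{G}_1\cap\mathcal{G}_2$ does \emph{not} imply that there are at most $O(d^{4/5})$ boundary disagreements: containment in $\ball(u^*,R)$ with $R=\Theta(\epsilon^{-3}(\log d)\sqrt d)$ allows up to roughly $((1+\epsilon/3)d)^{R}$ vertices of $\ImpVrtx$ in the ball, and Lemma \ref{lemma:Combined-BurnIn} is only a tail bound, not a consequence of $\mathcal{G}$. Since your interpolation step (shifting $|\AvialColors(z)|$ by at most the number of disagreements) and your bad-event accounting both need a deterministic cap, you must add the event $\{|D_{\leq T_m}|\leq d^{2/3}\}$ (or $d^{4/5}$) to $\mathcal{G}$, with its complement handled by Proposition \ref{prop:RapidMixAux1}.2 — which is precisely the paper's $\bar{\ManyDisEvent}\cap\bar{\mathcal{B}}$ decomposition. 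The same cap is what rescues your bad-event bound: ``polynomial in $n$ times $\Pr[\mathcal{G}^c]$'' is \emph{not} $o(1)\EdgeBlockWeight(X_0,Y_0)$ in general, because $\Pr[\mathcal{G}^c]$ is small only in $d$ (a constant), not in $n$, while the worst-case weight is of order $dn^3$ against an initial weight of order $n^2$. The argument only closes because on $\bar{\ManyDisEvent}_{T_m}$ the residual weight is at most $n+n^2\TDeg d^{2/3}=O(d^{5/3}n^2)$, i.e.\ $\mathrm{poly}(d)$ times $\EdgeBlockWeight(X_0,Y_0)$, so multiplying by $\exp(-d^{\Omega(1)})$ suffices. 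With that event added to the good event (and the standard measurability/Fubini step to iterate the conditional contraction, as in \eqref{eq:HamDistAfterBurnIn}--\eqref{eq:TbVsTM}), your proof is complete and coincides with the paper's.
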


\noindent
The proof of Proposition \ref{prop:RapidMixAux2} appears in Section \ref{sec:prop:RapidMixAux2}.

\begin{proof}[Proof of Theorem \ref{thrm:RapidMixingAvrgDegGraph176}]
For arbitrary colorings $\sigma, \tau$, consider two copies of block dynamics 
$(X_t)_{t\geq 0}$ and $(Y_t)_{t\geq 0}$ such that $X_0=\sigma$ and $Y_0=\tau$. 
The theorem follows by showing that there is a sufficiently large constant $C>5$, such that
for $T=C n \log n$ we have that $\Pr[X_T\neq Y_T] \leq e^{-1}$. 
It suffices to show that
\begin{equation}\label{eq:Target4RapidMixingAvrgDegGraph176}
\Pr[\EdgeBlockWeight(X_{T}, Y_{T} )> 0] \leq e^{-1}.
\end{equation}

\noindent
For bounding $\Pr[\EdgeBlockWeight(X_{T'}, Y_{T'} )> 0]$ we use path coupling.

Letting $h=(X_0\oplus Y_0) $ and  an arbitrary ordering of the vertices in 
$(X_0\oplus Y_0)$,  e.g., $w_1,\ldots, w_{h}$, we  interpolate $X_0, Y_0$ by using the configurations 
$\{Q_i\}^h_{i=0}$, such that  $Q_0=X_0,  Q_1, \ldots, Q_{h}=Y_0$. 
Furthermore, $Q_{i}$ is obtained from $Q_{i-1}$ by  changing the color of  
$w_i$ from $X_t(w_i)$ to $Y_t(w_i)$. 
Also, let  $Q'_{i+1}, Q'_{i}$ be the resulting pair after coupling $Q_{i+1}$ with $Q_i$ for
$T$ many steps.

If $w_i$ is an internal vertex in some block, then, as we argued in Theorem 
\ref{thm:block-regular} the disagreement does not spread. It only vanishes once we update
its block. Then, we get that
\begin{equation}\nonumber
\Exp{\EdgeBlockWeight(Q'_{i+1}, Q'_i)  } \leq \left(1-1/N \right)^{Cn\log n} \EdgeBlockWeight(Q_{i+1}, Q_i)   \leq n^{-5}\EdgeBlockWeight(Q_{i+1}, Q_i),
\end{equation}
where in the last inequality we use the fact that $N\leq n$. Note that 
$\EdgeBlockWeight(Q_{i+1}, Q_i)=1$.

For $w_i$ which on the boundary of its block,  we use  Proposition \ref{prop:RapidMixAux2}  and get that
$$
\Exp{\EdgeBlockWeight(Q'_{i+1}, Q'_i)  }  \leq  n^{-5} \ \EdgeBlockWeight(Q_{i}, Q_{i+1}).
$$
Then,  path coupling implies that
\begin{equation}\label{eq:ExpctWeightBound}
\Exp{\EdgeBlockWeight(X_{T}, Y_{T}) }
\leq  n^{-5} \EdgeBlockWeight(X_{0}, Y_{0})  \leq n^{-1}.
\qquad \mbox{[since $\EdgeBlockWeight(X_{0}, Y_{0})<2d n^3$]}
\end{equation}
Then we get \eqref{eq:Target4RapidMixingAvrgDegGraph176}  by using  \eqref{eq:ExpctWeightBound}  and Markov's inequality.

For showing that the block update requires $O(k^3 B_{\rm max})$ steps we use the fact that the blocks are trees with at most one 
extra edge. Implementing a transition of the block  dynamics is equivalent to generating a random {\em list coloring} of the
block $B$. 
List coloring is a generalization of the coloring problem, where each vertex $u$ is assigned with a list of available colors
$L(u)$.  Assume that $L(u)\subseteq [k]$.
 In out setting,  when updating block $B$, each vertex $w\in B$ can choose from all but the colors appearing in $N(w)\setminus B$.

It is standard to show that dynamic programing can compute the number of list colorings of a tree efficiently. In particular, 
for a tree on $h$ vertices, the number of list coloring can be computed in time $h \cdot k$.
For our case we consider counting list colorings of a unicyclic block, as well. 
For such a  component, we can simply consider all $\leq ? k^2$ colorings for the endpoints of the extra edge (i.e. arbitrary edge 
in the cycle)  and then recurse on the remaining tree.  It is immediate that this counting requires time $k^3 \cdot r$, for a block of size $r$.
All the above imply that the block updates requires no more time than $O(k^3 B_{\rm max})$.

The theorem follows.
\end{proof}

\subsection{Proof of Proposition \ref{prop:RapidMixAux2}}\label{sec:prop:RapidMixAux2}

Let $T_b=\lfloor N \log\left( (\epsilon/15)^{-1} \right) \rfloor$.  Since $T_m=\lfloor  C_1N/\epsilon \rfloor$, we apply  
Theorem \ref{thrm:Uniformity1.76} and Corollary \ref{cor:UniformityUnion} to conclude that the  necessary
 local  uniformity pro\-per\-ties hold with high probability for all vertices in 
$\ball (v, R')\cap \ImpVrtx$, where $R'=10(\log d)\sqrt{d}$, for all  $t\in I:=[T_b,T_m]$.
We show that the expected $\EdgeBlockWeight(X_t, Y_t)$  decreases for $t\in I$.

For $t\geq T_b$ consider the following events:
\begin{itemize}
\item $\ManyDisEvent(t)$ denotes the event that at some time $s\leq t$, we have $|(X_s\oplus Y_s)\cap \ImpVrtx|\geq d^{2/3}$
\item $\OutOfBalls(t)$  denotes the event that $D_{\leq t} \not\subseteq  \ball(v, R)$, for $R=(\log d)\sqrt{d}$
\item  $\FewColors(t)$ denotes the event that there exists a time  $s\in [T_b,t]$ and $z\in \ball (v, R')\cap \ImpVrtx$,
for $R'=10(\log d)\sqrt{d}$,  such that
\[
\AvialColors_{X_t}(z) < \indicator{\U_t(z)}  (1-\epsilon/15) k  \exp\left(-\degree(z)/k \right).
\]
$ \indicator{\U_t(z)}$ is equal to one if $z$ is updated up to time $t$ (including  $t$),  otherwise it is zero. 
\end{itemize}

\noindent
For the sake of brevity,  let the events
\[
{\cal B}(t)=\OutOfBalls(t)\cup\FewColors(t) \quad \textrm{and} \quad {\cal G}(t)=\bar{\ManyDisEvent(t)}\cap \bar{{\cal B}(t)}.
\]
For any $t>0$,  let $\EdgeBlockWeight_t=\EdgeBlockWeight(X_{t}, Y_{t})$.  We have that
\begin{eqnarray}
\Exp{\EdgeBlockWeight_{T_m}}  &=& \Exp{ \EdgeBlockWeight_{T_m} \mathbf{1}\{\ManyDisEvent\}}+
\Exp{ \EdgeBlockWeight_{T_m} \mathbf{1}\{\bar{\ManyDisEvent} \}  \  \mathbf{1}\{ {\cal B} \}  }+
\Exp{ \EdgeBlockWeight_{T_m} \mathbf{1}\{ {\cal G}\} } \nonumber \\
&\leq & \Exp{ \EdgeBlockWeight_{T_m} \mathbf{1}\{\ManyDisEvent\} } +
2d^{1+2/3} n^2\ \Pr[{\cal B}] +
\Exp{ \EdgeBlockWeight_{T_m} \mathbf{1}\{ {\cal G}\} }. \label{eq:Basis4Aux2}
\end{eqnarray}

\noindent
The second derivation uses  that for each $w\in \ImpVrtx$ we have $\degree(w)\leq \TDeg<2d$.

We have that
\begin{equation}\label{eq:ManyDis4BasisAux2}
\Exp{\EdgeBlockWeight_{T_m} \mathbf{1}\{\ManyDisEvent\}}  \ \leq \ 
 n+n^2\TDeg \ \Exp{  |(X_{T_m}\oplus Y_{T_m})\cap \ImpVrtx|  \ \mathbf{1}\{\ManyDisEvent\}  }
\ \leq \  \textstyle 2n^2d \exp\left(-\sqrt{d} \right),
\end{equation}
where the second inequality follows from Proposition \ref{prop:RapidMixAux1}.
Furthermore, we have that
\begin{equation}\label{eq:ProbOfB4BasisAux2}
\Pr[{\cal B}]  \leq  \Pr[{\cal B}_1(T_m)]+ \Pr[{\cal B}_2(T_m)]  \ \leq \ \textstyle \exp  \left(-d^{1/3}\right).
\end{equation}
The first inequality above follows from the union bound, while the second is from
Corollary \ref{cor:UniformityUnion} and Theorem \ref{thrm:BallResult}.
Finally, we use that
\begin{equation}\label{eq:ConvergenceAllGood}
\Exp{\EdgeBlockWeight_{T_m} \mathbf{1}\{ {\cal G}\}} \leq  (1/9) n^2\ \degree_{out}(u^*).
\end{equation}

\noindent
Before showing that \eqref{eq:ConvergenceAllGood} is indeed true,  we note that the proposition follows
by plugging \eqref{eq:ManyDis4BasisAux2},  \eqref{eq:ProbOfB4BasisAux2} and \eqref{eq:ConvergenceAllGood}
into \eqref{eq:Basis4Aux2} and noting that $\EdgeBlockWeight(X_0, Y_0)=n^2\degree_{out}(u^*)$.

We conclude this proof by showing that \eqref{eq:ConvergenceAllGood}
is indeed true.  For this we use path coupling.
Let ${\cal M}_0=X_t, {\cal M}_1, {\cal M}_2, \ldots, {\cal M}_{h_t}=Y_t$  be a sequence of colorings where
$h_t=|(X_t\oplus Y_t) |$.  
Consider an arbitrary ordering of the vertices in $(X_t\oplus Y_t)$, 
e.g., $w_1,  \ldots, w_{h_t}$.  For each $i$,  we obtain ${\cal M}_{i+1}$ from ${\cal M}_i$ by changing the color of  
$w_i$ from $X_t(w_i)$ to $Y_t(w_i)$.

We  couple ${\cal M}_i$ and ${\cal M}_{i+1}$, maximally, in one step of the block-dynamics to obtain
${\cal M}'_i$, ${\cal M}'_{i+1}$. More precisely, both chains recolor the same block, and maximize the 
probability  of choosing the same new color for the chosen vertex. Let $B_i$ be the block that $w_i$ belongs to.

If $w_i$ is internal in the block $B_i$, then we have that
\begin{eqnarray}\label{eq:w_iInternal}
\ExpCond{ \EdgeBlockWeight ({\cal M}'_i,  {\cal M}'_{i+1})- \EdgeBlockWeight ({\cal M}_i,  {\cal M}_{i+1})}{ {\cal M}_i,  {\cal M}_{i+1} }
&\leq & -1/N.
\end{eqnarray}

Consider $w_i \in \inBound B_i$,  With probability $1/N$ both chains recolor block $B_i$.
Since there is no disagreement at $\outBound B_i$,  we can couple 
${\cal M}_i$ and ${\cal M}_{i+1}$ and    the ``distance" reduces by $n^2\degree_{out}(w_i)$.

Now, consider $z\in N(w_i)\setminus B_i$ and  assume that $z$ belongs to a {\em single vertex block} $B$.
Let $c_1={\cal M}_i(w_i)$ and $c_2={\cal M}_{i+1}(w_i)$. 
Then, a direct observation is that since ${\cal M}_i(w_i)=c_1$ and $z$ is a neighbor of $w_i$,
we have ${\cal M}'_i(z)\neq c_1$ with probability 1.  On the other hand, it could be that $c_1$ is available for
$W'_{i+1}(z)$, if $c_1$ is not used in ${\cal M}_{i+1}$ to color any of the neighbors of $z$.
Similarly,  we have that we have ${\cal M}'_{i+1}(z)\neq c_2$ with probability 1, while
${\cal M}'_{i}(z)$ could be set $c_2$ if   $c_2$ is not used in ${\cal M}_{i}$ to color any of the
 neighbors of $z$.

Therefore, given ${\cal M}_i, {\cal M}_{i+1}$,  for vertex $z\in N(w_w)$ which belongs to a single vertex block, 
we have that
\begin{eqnarray}
\delta_s(z)&:=& n^2 \degree(z)\times \Pr[{\cal M}'_i(z)\neq {\cal M}'_{i+1}(z)\ | \ {\cal M}_i, {\cal M}_{i+1}, \ \textrm{$z$ is updated} ] \nonumber \\
&\leq& n^2\degree(z)\times \frac{\mathbf{1}\left\{ U({\cal M}_i,z,w_i, c_1,c_2) \right\} }
{\min\{\AvialColors_{{\cal M}_i}(z), \ \AvialColors_{{\cal M}_{i+1}}(z)\}},
\end{eqnarray}
where 
$$
U({\cal M}_i,z,w_i, c_1,c_2) = \left \{
\begin{array}{lcl}
1 &\quad & \textrm{if $\{c_1, c_2\}\not\subset X_t(N(w)\setminus \{c\})$} \\ \vspace{-.3cm}\\
0 && \textrm{otherwise.}
\end{array}
\right .
$$

\noindent
Consider $z\in N(w_i)\setminus B_i$ and  assume that $z$ belongs to a multi vertex block
which we call $B_z$.
Then, the number of disagreements introduced is 
$$
\delta_m(z):= 
\ExpCond{ \EdgeBlockWeight({\cal M}_i, {\cal M}_{i+1}) -\EdgeBlockWeight ({\cal M}'_i,  {\cal M}'_{i+1})}
{ {\cal M}_i,{\cal M}_{i+1}, \ \textrm{$B_z$ is updated} } .
$$
Then, we get that
\begin{eqnarray}
\lefteqn{
\ExpCond{ \EdgeBlockWeight ({\cal M}'_i,  {\cal M}'_{i+1})- \EdgeBlockWeight ({\cal M}_i,  {\cal M}_{i+1})}{ {\cal M}_i,  {\cal M}_{i+1} }
} \hspace{2.5cm} \nonumber \\ 
&\leq & \textstyle
N^{-1} \left (-n^2\degree_{out}(w_i)  +  \sum_{z\in N(w_i)\setminus B_i }  
\indicator{ S_z}\delta_{s}(z) +  
\left(1-\indicator{ S_z} \right)\   \delta_m(z)   
\right)\qquad   \label{eq:OneStepDistanceModUniform}
\end{eqnarray}
where $\mathbf{1}\{S_z\}$ is equal to one if vertex $z$ belongs to a single vertex block, otherwise it is zero.

We proceed by bounding $\delta_m(z)$ and $\delta_s(z)$, for every $z\in N(w_i)\setminus B_i$.
First note that the bound for $X_0$ in  \eqref{eq:FreedmanConditions}
implies that updating $B_z$, the block that $z$ belongs to, the expected number of vertices in
$B_z\cap \ImpVrtx$ is $C/d$, for some large constant $C$ which is independent of $d$.
Since every vertex in $\ImpVrtx$ has degree at most $\TDeg=(1+\epsilon/6)d$, 
updating $B_z$ we increase the expected distance between the configurations  by $(1+\epsilon/6)Cn^2$.

The above implies that there is $C_2>0$, independent of $d$, such that 
\[
\ExpCond{ \EdgeBlockWeight ({\cal M}'_i,  {\cal M}'_{i+1})- \EdgeBlockWeight ({\cal M}_i,  {\cal M}_{i+1})}{ {\cal M}_i,  {\cal M}_{i+1} }
\leq  C_2 N^{-1} n^2 \degree_{out}(w_i).
\]

\noindent
Therefore, given $X_t, Y_t$, we have
\begin{equation}\label{eq:HamDistBeforeBurnIn}
\ExpCond{ \EdgeBlockWeight(X_{t+1}, Y_{t+1} )}{ X_t, Y_t} \leq\left(1+ C_2/N \right)\EdgeBlockWeight(X_t, Y_t).
\end{equation}
This bound will be used only for the burn-in phase, i.e., the first $T_b$ steps. For the remaining 
$T_m-T_b$ steps we show that we have {\em contraction}.

 For all   $t\in [T_b, T_m]$, assuming that assuming that
 ${\cal G}(t)$ holds we have the following: For all $0\leq i\leq h_t$,   $z\in \ball(w_i, R)\cap \ImpVrtx$ ,  we have
 \[
 \AvialColors_{{\cal M}_i }(z)\geq \AvialColors_{X_t}(z)-d^{2/3}\geq \Theta_0-d^{2/3}.
 \]
The first inequality follows from the assumption that $\ManyDisEvent(t)$ occurs. The second inequality
comes from our assumption that $\FewColors(t)$ holds. 
Hence, for $t\in [T_b, T_m]$, given ${\cal M}_i,{\cal M}_{i+1}$ and assuming ${\cal G}(t)$, then for $z\in N(w_i)\setminus B_i$ 
which belongs to a single vertex block, we have that
\begin{equation}\label{eq:Delta_sGood}
\delta_s(z) \leq n^2\degree(z)\left(  \Theta_0-d^{2/3} \right)^{-1}\leq n^2(1+\epsilon/3)^{-1}.
\end{equation}
If $z\in N(w_i)\setminus B_i$ belongs to a multi vertex block, then from 
Theorem \ref{thrm:BlockUpdtNowSubcritical}  we have 
\begin{equation}\label{eq:Delta_mGood}
\delta_m(z)\leq n^2(1-\epsilon/4).
\end{equation}  
Combining \eqref{eq:Delta_sGood}, \eqref{eq:Delta_mGood} and  \eqref{eq:OneStepDistanceModUniform}
we get that
\begin{eqnarray}
\lefteqn{
\ExpCond{ \EdgeBlockWeight ({\cal M}'_i, {\cal M}'_{i+1})- \EdgeBlockWeight({\cal M}_i, {\cal M}_{i+1})}{{\cal M}_i, {\cal M}_{i+1} }
}  \hspace{2.6cm}  \nonumber \\
&\leq & n^2
  \left[ -\degree_{out}(w_i)+ (1-\epsilon/5)\degree_{out}(w_i)  \right]
\leq  -  (\epsilon/5) N^{-1}n^2 \degree_{out}(w_i). \nonumber 
\end{eqnarray}
The above and \eqref{eq:w_iInternal} imply that
\begin{equation}\label{eq:HamDistAfterBurnIn}
\ExpCond{ \EdgeBlockWeight(X_{t+1}, Y_{t+1}) \ {\cal G}(t)}{ X_{t}, Y_{t} } \leq  \left(1- (\epsilon/6)N^{-1} \right) \EdgeBlockWeight(X_t, Y_t).
\end{equation}
Let $t\in [T_b, T_m-1]$. We have 
\begin{eqnarray}
\Exp{ \EdgeBlockWeight_{t+1} \ \mathbf{1}\{ {\cal G}(t)\}} &=&
\Exp{  \ExpCond {\EdgeBlockWeight_{t+1} \mathbf{1}\{ {\cal G}(t)\} }{X_0, Y_0, \ldots, X_t, Y_t } } \nonumber \\
&=&
\Exp{  \ExpCond{ \EdgeBlockWeight_{t+1}}{X_0, Y_0, \ldots, X_t, Y_t } \ \mathbf{1}\{ {\cal G}(t)\} } \nonumber \\
&\leq &
\left(1-(\epsilon/5) N^{-1} \right)\Exp{ \EdgeBlockWeight_{t} \ \mathbf{1}\{ {\cal G}(t)\} } \nonumber \\
&\leq &
\left(1-(\epsilon/5)N^{-1} \right)\Exp{ \EdgeBlockWeight_{t}\  \mathbf{1}\{ {\cal G}(t-1)\} }. \nonumber
\end{eqnarray}
The first equality is Fubini's Theorem, while the second equality is because ${\cal G}(t)$ is determined by
$X_0, Y_0, \ldots X_t, Y_t$. The first inequality uses \eqref{eq:HamDistAfterBurnIn}, while the last derivation
follows from the observation that ${\cal G}(t-1)\subset {\cal G}(t)$. 
Using a simple induction, we get
\[
\Exp{  \EdgeBlockWeight_{T_m} \  \mathbf{1}\{ {\cal G} \}}  \leq \left( 1-(\epsilon/5)N^{-1})\right)^{T_m-T_b}
\Exp{ \EdgeBlockWeight_{T_b} \ \mathbf{1}\{ {\cal G}(T_b)\} }.
\]
Also, using  \eqref{eq:HamDistBeforeBurnIn}  and the same arguments as above, we get that
\[
\Exp{ \EdgeBlockWeight_{T_b} \ \mathbf{1}\{ {\cal G} \} }  \leq \left( 1+C_2/N \right)^{T_b} \EdgeBlockWeight_0.
\]
Combining the two above inequalities  we get
\begin{equation}\label{eq:TbVsTM}
\Exp{ \EdgeBlockWeight_{T_m} \mathbf{1}\{ {\cal G} \} } \leq
\left( 1- (\epsilon/5)N^{-1} \right)^{T_m-T_b}\left( 1+ C_2 N^{-1} \right)^{T_b} \EdgeBlockWeight_0.
\end{equation}
The proposition follows by choosing sufficiently large $C_1>0$ in the expression  $T_m=\lfloor C_1N/\epsilon \rfloor$.

\section{Spatial Correlation Decay}\label{sec:SCD}

In this section we present some  results for  the coloring model.  These results are mainly used in the 
context of {\em disagreement percolation} \cite{DisPerc} to, essentially, derive  spatial correlation decay. 
Particularly, they are useful for studying the spread of disagreements during  burn-in of the block dynamics,  see
Section \ref{sec:DisPerc}, as well as the comparison arguments in Section \ref{sec:thm:1.76-Glauber}.

For some given $\epsilon, d, \Delta$ and any graph $G\in \IntrstGraphFam(\epsilon, d,\Delta)$,
we denote by  $\HighDegreeSet$  the set of vertices $v$ such that $\degree(v)>\TDeg$.
We  use the  technical result ~\cite[Lemma 15]{GMP}  to get the following  corollary.

\begin{corollary} 
\label{corr:GMPBiasColour}
For  $\epsilon, d, \Delta, k$   as in  Theorem \ref{thm:1.76-main},
let $G\in \IntrstGraphFam(\epsilon, d,\Delta)$.
Also, let $Z$ be a random $k$-coloring of  $G$.
For any $B\in {\cal B}$, for   any $v\in B$ which does not belong to a
cycle inside $B$  and being such that $\degree(v)\leq \TDeg$, while  $N(v)\cap  \HighDegreeSet=\emptyset$ the following is true:

For any  $B' \subseteq B\backslash\{u\}$, let $B^+=B'\cup \outBound B$. 
For any $c\in [k]$  and any fixed $k$-coloring $\sigma\in [k]^{B\cup \outBound B}$ 
we have that 
\begin{equation}\nonumber
\Pr[Z(u)=c\ |\  Z(B^+_u)=\sigma(B^+_u)] \leq \frac{1}{\max\{1, |N(u)\backslash  B^+| \}}\frac{1}{1+\eps}.
\end{equation}
\end{corollary}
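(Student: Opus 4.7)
The plan is to deduce the statement as a direct application of GMP's Lemma 15, treating the hypotheses on $u$ as the precise conditions under which that lemma can be applied locally. First I would check the geometric prerequisites: since $u$ does not lie on any cycle inside $B$ and, by condition 3 of Definition \ref{def:SparseBlockPart}, the vertex $u$ (lying on the inner boundary side of some block or sufficiently deep in $B$) does not belong to any cycle of length less than $d^2$, the induced subgraph on $\{u\}\cup N(u)$ is in particular triangle-free and locally tree-like at the radius needed by GMP. The assumption $N(u)\cap \HighDegreeSet=\emptyset$ then ensures that every vertex involved in the local analysis has degree at most $\TDeg=(1+\epsilon/6)d$, so the \emph{effective} maximum degree that controls the marginal of $Z(u)$ is $\TDeg$, not the much larger global $\Delta$.

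Next I would match the color threshold. GMP's Lemma 15 produces a bound of the form $1/|\cdot|$ on the conditional marginal whenever $k$ exceeds the critical ratio $\alpha\cdot(\text{local degree})$, where $\alpha$ is the root of $\alpha=\exp(1/\alpha)$. With $k\geq (\alpha+\epsilon)d$ and local degree $\leq \TDeg=(1+\epsilon/6)d$, one has
\[
\frac{k}{\TDeg}\ \geq\ \frac{\alpha+\epsilon}{1+\epsilon/6}\ \geq\ \alpha(1+\epsilon'),
\]
for some $\epsilon'=\epsilon'(\epsilon)>0$ whenever $\epsilon$ is small enough. This strict slack above the GMP threshold is precisely what allows us to pick up the $(1+\epsilon)^{-1}$ factor in the conclusion, after possibly shrinking $\epsilon$ to absorb constant factors.

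Applying GMP's Lemma 15 with $S:=B^+$ and boundary coloring $\sigma|_{B^+}$ then yields the claim: the neighbors of $u$ that lie in $B^+$ forbid at most $|N(u)\cap B^+|$ colors outright, while the marginals of the $|N(u)\setminus B^+|$ remaining ``free'' neighbors, bounded via the same GMP mechanism inductively, guarantee that the effective number of available colors at $u$ is at least $(1+\epsilon)\max\{1,|N(u)\setminus B^+|\}$. Inverting gives
\[
\Pr[Z(u)=c\mid Z(B^+_u)=\sigma(B^+_u)]\ \leq\ \frac{1}{\max\{1,|N(u)\setminus B^+|\}}\cdot \frac{1}{1+\epsilon},
\]
as required. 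The main obstacle I anticipate is verifying the hypothesis of GMP's Lemma 15 \emph{uniformly} over the choice of $B'\subseteq B\setminus\{u\}$; the cycle-free position of $u$ in $B$, combined with $N(u)\cap \HighDegreeSet=\emptyset$, is exactly what rules out pathological configurations (short cycles through $u$, high-degree neighbors with huge multiplicative effect) that would otherwise invalidate the triangle-free, bounded-local-degree setup GMP requires.
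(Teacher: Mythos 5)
Your proposal matches the paper's treatment: the paper obtains this corollary by directly invoking \cite[Lemma 15]{GMP}, using exactly the hypotheses you identify — $\degree(u)\leq\TDeg$, $N(u)\cap\HighDegreeSet=\emptyset$, and the cycle-free position of $u$ inside the tree-like block — to replace GMP's global maximum-degree and triangle-free assumptions by their local counterparts, with the slack $k\geq(\alpha+\epsilon)d$ over the local degree $\TDeg=(1+\epsilon/6)d$ yielding the $(1+\epsilon)^{-1}$ factor. One small correction: the local triangle-freeness at $u$ should be argued from ``$u$ lies on no cycle inside $B$'' together with $B$ being a tree with at most one extra edge (so no two neighbors of $u$ in $B$ can be adjacent), rather than from condition 3 of Definition \ref{def:SparseBlockPart}, which only constrains vertices of $\outBound B$.
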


\noindent
Perhaps the above corollary is most useful when we consider
  $u \in B\cap \ImpVrtx$ and $B'=\emptyset$.  Then,  essentially, it implies that 
\[
\Pr[Z(u)=c\ |\  Z(B^+_u)=\sigma(B^+_u)] \leq \frac{1}{\degree_{in}(u) }\frac{1}{1+\eps}.
\]

\noindent
Corollary \ref{corr:GMPBiasColour} is restricted to low degree vertices which are not next to a high degree vertex. 
For the vertices deep inside a block $B$ which  are not as those  in Corollary \ref{corr:GMPBiasColour},
we have the following result:

\begin{proposition}\label{prop:SpatialMixing4HighDegrees}
For  $\epsilon, d, \Delta, k$   as in  Theorem \ref{thm:1.76-main},
let $G\in \IntrstGraphFam(\epsilon, d,\Delta)$.
Let $Z$ be a random $k$-coloring of  $G$. 
For any $B\in {\cal B}$,  let   $w \in B$ for which either of the  following three
holds: either  $w\in \HighDegreeSet$, either $w\notin \HighDegreeSet$ but
$N(w)\cap  \HighDegreeSet\neq \emptyset$, or $w$ belongs to the unique cycle
in  $B$,  the following is true:

For any  $u\in N(w)$, let $B^+= \outBound B\cup \{u\}$. 
For any $c\in [k]$  and any fixed $k$-coloring $\sigma\in [k]^{B\cup \outBound B}$ it holds that
\begin{equation}\label{eq:SpatialMixing4HighDegrees}
\Pr[Z(w)=c\ |\  Z(B^+)=\sigma(B^+ )]\leq (k- 2)^{-1}+20d^{-2}.
\end{equation}
\end{proposition}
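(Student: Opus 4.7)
The plan is to reveal the colors of two carefully chosen neighbors of $w$ and then apply a conditional-independence argument using the Markov property of the uniform coloring measure. I would first pick $y_1, y_2 \in N(w)\setminus\{u\}$ with the properties that (i) they lie inside $B$, (ii) they are non-adjacent in $G$, and (iii) they are low-degree, i.e., $\degree(y_i)\leq \TDeg$. Since $B$ is a tree with at most one extra edge, any two neighbors of $w$ lying in distinct components of $B-\{w\}$ are automatically non-adjacent; when $w$ lies on the unique cycle $C$ of $B$, condition 2(c) of Definition~\ref{def:SparseBlockPart} forces $C$ to sit deep inside $B$ and be long enough that $w$'s two cycle-neighbors are non-adjacent. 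The $r$-breakpoint condition at $\outBound B$ combined with Lemma~\ref{lemma:GrowthFromBP} guarantees a plentiful supply of low-degree candidates near $w$. If $w$ is so degenerate that no such pair can be found (for instance $\degree(w)\leq 2$), a direct calculation already yields $\Pr[Z(w)=c\mid Z(B^+)=\sigma]\leq (k-1)^{-1}$, which is stronger than the claim.

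With such $y_1, y_2$ in hand, I would decompose
\[
\Pr[Z(w)=c\mid Z(B^+)=\sigma]=\sum_{x_1,x_2\in[k]}\Pr[Z(y_1)=x_1,Z(y_2)=x_2\mid Z(B^+)=\sigma]\,\Pr[Z(w)=c\mid Z(y_1)=x_1,Z(y_2)=x_2,Z(B^+)=\sigma].
\]
Whenever $(x_1,x_2)\neq(\sigma(u),\sigma(u))$, the neighborhood of $w$ already contains two distinct colors (namely $\sigma(u)$ together with whichever of $x_1,x_2$ differs from it), so at most $k-2$ colors remain available for $w$ and the inner probability is bounded by $(k-2)^{-1}$. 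Summing these ``good'' contributions with total weight at most $1$ produces the first term of the claimed bound.

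The remaining ``bad'' event is $\{Z(y_1)=Z(y_2)=\sigma(u)\}$, on which the inner probability is bounded trivially by $1$. Since $y_1,y_2\notin B^+$, the tower property allows further conditioning on $Z(V\setminus\{y_1,y_2\})$; non-adjacency of $y_1,y_2$ together with the Markov property of the uniform coloring measure then makes $Z(y_1)$ and $Z(y_2)$ conditionally independent, each uniform over its available colors. Because $\degree(y_i)\leq \TDeg$, at least $k-\TDeg\geq (\alpha-1+5\epsilon/6)d>0.76\,d$ colors remain at each $y_i$, so each marginal is at most $(k-\TDeg)^{-1}$. Iterating the expectation yields
\[
\Pr[Z(y_1)=Z(y_2)=\sigma(u)\mid Z(B^+)=\sigma]\leq (k-\TDeg)^{-2}\leq 20\,d^{-2},
\]
which delivers the second term. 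The main obstacle will be cleanly verifying the existence of the pair $y_1,y_2$ with all three required properties in each of the three cases for $w$ (high-degree, low-degree adjacent to a high-degree vertex, or on the cycle of $B$); the fine structure encoded in Definition~\ref{def:SparseBlockPart} is essential here, and the handful of degenerate configurations where no such pair exists have to be dispatched by a separate elementary calculation that in fact yields the better bound $(k-1)^{-1}$.
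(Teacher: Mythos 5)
Your proof breaks at the central step. You claim that once the conditioning forces two distinct colors into $N(w)$ (namely $\sigma(u)$ and one of $x_1,x_2$), ``at most $k-2$ colors remain available for $w$ and the inner probability is bounded by $(k-2)^{-1}$.'' This reasoning runs in the wrong direction: knowing that the neighborhood uses \emph{at least} two colors gives an \emph{upper} bound on the number of available colors, and fewer available colors makes each remaining color \emph{more} likely, not less. To bound $\Pr[Z(w)=c\mid\cdot]$ from above one needs a \emph{lower} bound on the number of available colors at $w$ (as in Corollary \ref{cor:GeneralBiasColour}, where the bound is $1/(k-\degree(w))$ and only when $\degree(w)<k$). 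But the proposition is precisely about vertices $w$ that are in $\HighDegreeSet$ or adjacent to $\HighDegreeSet$ or on the cycle: such a $w$ may have degree far exceeding $k$ (up to $\Theta(\log n/\log\log n)$ while $k=O(d)$), so under the conditional law the colors of its many unrevealed neighbors can occupy all but a handful of colors and the conditional probability of a single color can be close to $1$. Revealing two low-degree neighbors $y_1,y_2$ gives no control over this; your ``good'' term is therefore unjustified, and no local counting argument can rescue it. (Your ``bad''-event estimate $(k-\TDeg)^{-2}\le 20d^{-2}$ is fine, and the existence of a suitable pair $y_1,y_2$ is a secondary worry, but neither matters once the main term fails.)

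What is actually needed — and what the paper does — is a correlation-decay statement: because $w$ is buffered from $\outBound B$ by the weighted-path/breakpoint structure of Definition \ref{def:SparseBlockPart}, the marginal of $Z(w)$ in the subtree hanging at $w$ is within $d^{-11}$ of the uniform value $1/k$ \emph{for every} boundary condition (Proposition \ref{prop:UnBiasHighDegree4Trees}, proved by disagreement percolation along paths whose low-degree stretches contribute factors $\approx 1/((1+\epsilon)d)$ that overwhelm the $d^{15}\degree(\cdot)$ weights of high-degree vertices). The paper then conditions on the color $q$ of the single vertex joining this subtree to the rest of $B$ and applies Bayes' rule, so the near-uniform marginal divided by a sum over the $k-2$ colors outside $\{q,\sigma(u)\}$ yields $(k-2)^{-1}+O(d^{-10})$; the cycle vertices are handled by an extra coupling step around $C$ costing $O(1/k^2)$. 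Your proposal contains no analogue of this spatial-mixing input, and without it the claimed bound for high-degree and cycle vertices cannot be reached.
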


\noindent
The proof of Proposition \ref{prop:SpatialMixing4HighDegrees} appears in Section \ref{sec:prop:SpatialMixing4HighDegrees}.

Note that a vertex  $w$ as in  Proposition \ref{prop:SpatialMixing4HighDegrees} should be, somehow, 
away from the boundary of its block. The above proposition implies that any configuration at $\outBound {\cal B}$ 
has essentially no effect on the marginal of the configuration at $w$.
Finally,  we  have the following easy to show result.
\begin{corollary}\label{cor:GeneralBiasColour}
For any $k>0$, for any $k$-colorable graph $G=(V,E)$ and any $k$-coloring 
$\sigma$ the following is true:
Let $Z$ be a random $k$-coloring of $G$. For
any  $v\in V$ and any  $c\in [k]$  it holds that
\begin{equation}\nonumber
\Pr[Z(u)=c\ |\  Z(N(u) )=\sigma(N(u)) ]  \leq  \left \{
\begin{array}{lcl}
\frac{1}{k-\degree(v)} &\quad&\textrm{if $\degree(u)<k$} \\
1 && \textrm{otherwise.}
\end{array}
\right.
\end{equation}
\end{corollary}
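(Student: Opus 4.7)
The plan is to argue that conditional on a fixed proper coloring of $N(u)$, the marginal distribution of $Z(u)$ is uniform over the set of colors not forbidden by the neighborhood, and then lower bound the size of this set by $k-\degree(u)$.

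First, I would fix the event $\{Z(N(u))=\sigma(N(u))\}$ and let $A := [k]\setminus \sigma(N(u))$ denote the set of colors not used on $N(u)$. Since $Z$ is a proper coloring, on this event we must have $Z(u)\in A$, so $\Pr[Z(u)=c\mid Z(N(u))=\sigma(N(u))] = 0$ whenever $c\notin A$. For $c\in A$, I would use the standard swap argument: given any proper $k$-coloring $\tau$ of $G$ with $\tau(N(u))=\sigma(N(u))$ and $\tau(u)=c$, the coloring $\tau'$ obtained from $\tau$ by changing the color of $u$ to any other $c'\in A$ (and leaving everything else the same) is also a proper coloring, since $c'$ is not used by any neighbor of $u$. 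This gives a bijection between proper colorings compatible with the conditioning and color $c$, and those with color $c'$. Since $Z$ is uniform over all proper colorings, the conditional distribution of $Z(u)$ is uniform on $A$, giving
\[
\Pr[Z(u)=c \mid Z(N(u))=\sigma(N(u))] \;\leq\; \frac{1}{|A|}.
\]

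Second, I would bound $|A|$ from below. Since $\sigma(N(u))\subseteq[k]$ and $|\sigma(N(u))|\leq |N(u)| = \degree(u)$, we have $|A|\geq k-\degree(u)$. When $\degree(u)<k$ this is a positive quantity, yielding the first bound $1/(k-\degree(u))$. When $\degree(u)\geq k$ the trivial bound $1$ suffices (and moreover the conditional event may have probability zero, in which case the claim is vacuous).

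There is no real obstacle here; the only point to be careful about is the well-definedness of the conditional probability when the conditioning event has probability zero, but since $G$ is $k$-colorable and $\sigma$ is a proper coloring, the event $\{Z(N(u))=\sigma(N(u))\}$ has positive probability, so the conditional distribution is well defined and the swap argument applies directly.
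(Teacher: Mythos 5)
Your proof is correct and is precisely the standard argument the paper has in mind when it calls this result ``easy to show'' and omits the proof: conditioning on a proper coloring of $N(u)$, the swap bijection shows $Z(u)$ is uniform over the available colors, and there are at least $k-\degree(u)$ of them. Your remark on positivity of the conditioning event (guaranteed since $\sigma$ itself is a compatible proper coloring) is a correct and sufficient justification.
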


\subsection{Proof of Proposition  \ref{prop:SpatialMixing4HighDegrees}}\label{sec:prop:SpatialMixing4HighDegrees}

So as to prove Proposition \ref{prop:SpatialMixing4HighDegrees} first we consider the
case where $w$ is either a high degree vertex or next to a high degree vertex, i.e.,
$w$ does not belong to a cycle in $B$, if any.
For such vertex $w$ we will show that \eqref{eq:SpatialMixing4HighDegrees} is true.

First, consider the case where $B$ is a unicyclic block, e.g. consider the block in Figure \ref{fig:UnicyclicBlock}.  
Let $C$ be the cycle in $B$. Let $C_{adj}$ be the set of vertices in $B$ that is adjacent to the cycle.
Our assumptions imply that  there is  $x\in C_{adj}$  such that  $w\in T_x$. We let $T_{x,w}$ be the subtree of 
$T_x$ rooted at vertex $w$.

Let $e=\{w,v \}$ be the edge that connects $T_{x,w}$ with the rest of the block $B$. W.l.o.g. assume that
$v\neq u$. There is a 
probability measure $\nu:[k]\to [0,1]$ such that the following holds:
Let $Z$ be a random coloring of $B\cup \outBound B$. 
\begin{eqnarray}
\Pr[Z(w)=c \ | \ Z(B^+)=\sigma( B^+)] &=& \sum_{q\in [k]}\nu(q) \Pr[X(w)=c \ | \ Z(B^+)=\sigma(B^+), Z(v)=q] 
\nonumber \\
&\leq & \max_{q\in [k]}\left\{   \Pr[X(w)=c \ | \ Z(B^+)=\sigma(B^+), Z(v)=q]  \right\}.
\label{eq:Reduction2SingleBoundary}
\end{eqnarray}
It is elementary that we can write the probability term $\Pr[Z(w)=c \ | \ Z(B^+)=\sigma(B^+), Z(v)=q] $
in terms of the Gibbs distribution over $T_{x,w}$.  That is,  let $X$ be a random
$k$-coloring of $T_{x,w}$, then 
\begin{eqnarray}
\lefteqn {
\Pr[Z(w)=c \ | \ Z(B^+ )=\sigma( B^+), Z(v)=q] 
}\hspace{1.8in} \nonumber \\
 &= &
\Pr[X(w)=c \ | \ X(B^+\cap T_{x,w} )=\sigma(B^+\cap T_{x,w}), X(w) \neq q]  \nonumber \\ 
&=& \frac{\Pr[X(w)=c \ | \ X( B^+  \cap T_{x,w})=\sigma(B^+\cap T_{x,w})]  }{\sum_{c'\in [k]\setminus \{q, \sigma(u)\}} \Pr[X(w)=c' \ | \ X(B^+ \cap T_{w,x})=\sigma( B^+\cap T_{x,w}) ]  }. \qquad  \label{eq:BayesRuleWithSpatial}
\end{eqnarray}

\noindent
To this end, we utilize the following result, whose proof appears in Section \ref{sec:prop:UnBiasHighDegree4Trees}. 
\begin{proposition}\label{prop:UnBiasHighDegree4Trees}
For  $\epsilon, d, \Delta, k$   as in  Theorem \ref{thm:1.76-main},
 let $G\in \IntrstGraphFam(\epsilon, d,\Delta)$.
Consider  $B\in {\cal B}$ which contains a single cycle $C$. For any $x\in C_{adj}$, for any $w\in T_x$ such that either 
$w\in \HighDegreeSet \cap T_x$  or  $N(w)\cap \HighDegreeSet \neq \emptyset$, 
for any $c\in [k]$ and any $\tau$, a $k$-coloring of $B\cap \outBound B$,
the following is true:

For   $X$  a random $k$-coloring of $T_{x, w}$ we have that 
\begin{equation}\nonumber
\left| \Pr[X(w)=c\ | \ Z(\outBound B)=\tau(\outBound B)  ]-1/k\right| \leq d^{-11}. 
\end{equation}
\end{proposition}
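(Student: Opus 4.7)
The plan is to combine a spatial correlation decay estimate with the color symmetry of the uniform distribution on proper colorings. Given (i) a bound showing that the marginal at $w$ is nearly the same under any two boundary conditions on $\partial B$, and (ii) the invariance of the model under color permutations, the conclusion is immediate: all $k$ values of the marginal at $w$ must lie within $d^{-11}$ of each other, and hence each within $d^{-11}$ of $1/k$.

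For (i), I would establish that for any two boundary conditions $\tau_1,\tau_2$ on $\partial B$,
\begin{equation*}
\bigl|\Pr_{\tau_1}[X(w)=c]-\Pr_{\tau_2}[X(w)=c]\bigr|\leq d^{-11},
\end{equation*}
via disagreement percolation on $B$. Construct the standard recursive coupling of the two Gibbs distributions on $B$ (with boundaries $\tau_1,\tau_2$) by visiting vertices in BFS order from $\partial B$ and coupling maximally at each step. For a low-degree vertex $v$ with no high-degree neighbor and off the cycle, Corollary~\ref{corr:GMPBiasColour} bounds the propagation probability by $1/((1+\epsilon)\degree_{in}(v))$; for the other vertices I use the trivial bound $1$ (or the sharper bound of Proposition~\ref{prop:SpatialMixing4HighDegrees} where applicable). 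In the tree (or unicyclic) block $B$ the probability of disagreement at $w$ is then at most the sum, over $y\in\partial B$, of the product of propagation probabilities along the path(s) from $y$ to $w$.

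The crucial quantitative ingredient is the breakpoint condition on boundary vertices from Definition~\ref{def:SparseBlockPart}: any path $P$ of length at most $r$ from a boundary vertex has $W(P)\leq 1$. Since $w\in\HighDegreeSet$ or has a neighbor in $\HighDegreeSet$, every such path to $w$ contains a vertex of weight at least $d^{15}$; combined with the fact that each low-degree vertex contributes weight $(1+\epsilon/10)^{-1}$, the path must carry $\Omega(\log d/\epsilon)$ low-degree vertices, along which the propagation probabilities multiply to at most $(1+\epsilon)^{-\Omega(\log d/\epsilon)}=d^{-\Omega(1/\epsilon)}\ll d^{-11}$ for small $\epsilon$ and large $d$. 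A subcritical branching argument (the effective branching out of a low-degree vertex in the cluster model is at most $1/(1+\epsilon)$) controls the sum over boundary vertices. For paths entering the unique cycle of a unicyclic $B$, condition~2(c) of Definition~\ref{def:SparseBlockPart} ensures the cycle lies far enough from $\partial B$ that the extra-cycle portion of the path still contains $\Omega(\log d)$ low-degree vertices, yielding the same bound.

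For (ii), observe that the uniform distribution on proper colorings is invariant under any color permutation $\pi$, so $\Pr_{\pi(\tau)}[X(w)=\pi(c)]=\Pr_{\tau}[X(w)=c]$. Applying (i) with $\tau_2=\pi(\tau)$ gives $|\Pr_{\tau}[X(w)=c]-\Pr_{\tau}[X(w)=\pi^{-1}(c)]|\leq d^{-11}$ for every $\pi$; hence all $k$ marginals differ pairwise by at most $d^{-11}$, and since they sum to $1$, each lies within $d^{-11}$ of $1/k$. The main obstacle is part (i): carefully amortizing the per-path decay against the number of boundary vertices and the branching structure of $B$, and invoking the breakpoint condition quantitatively to guarantee the required path length, all while handling both high-degree vertices and the possible cycle inside the block.
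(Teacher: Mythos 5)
Your overall skeleton is the same as the paper's: boundary-condition insensitivity of the marginal at $w$ via a recursive coupling dominated by an independent (disagreement) percolation process, with the breakpoint/weight condition supplying the quantitative decay, and then color symmetry to convert insensitivity into closeness to $1/k$. (Two small remarks in passing: in the case $w\notin\HighDegreeSet$ but $N(w)\cap\HighDegreeSet\neq\emptyset$, it is not true that every path from $\outBound B$ to $w$ contains a heavy vertex -- you must append the heavy neighbor to the path and use the breakpoint condition on that extended path to force the $\Omega(\log d/\epsilon)$ low-degree vertices; and you should not lean on Proposition~\ref{prop:SpatialMixing4HighDegrees}, since its proof in the paper relies on the present proposition, so invoking it here would be circular. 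Also, the cycle plays no role here: the statement concerns the tree $T_{x,w}$, which does not contain $C$.)

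The genuine gap is your treatment of low-degree vertices that are adjacent to a high-degree vertex. Corollary~\ref{corr:GMPBiasColour} explicitly requires $N(v)\cap\HighDegreeSet=\emptyset$, so for such vertices your scheme falls back on the trivial propagation bound $1$. But these vertices have in-degree up to $\TDeg\approx d$, so at each of them the factor (propagation probability)$\times$(branching) is $\approx d$ rather than $\leq 1/(1+\epsilon)$, and the sparse block partition does not bound how many of them a path may contain: the breakpoint condition only forces each such vertex (through its heavy neighbor) to lie at weighted distance $\Omega(\log d/\epsilon)$ from $\outBound B$, after which arbitrarily many of them may occur along a path, each with its heavy neighbor off the path and hence invisible to the path-weight constraint. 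Consequently your ``subcritical branching'' claim fails and the sum over boundary vertices is not controlled; the forced initial decay $d^{-O(1)}$ can be wiped out deeper in the tree. This is precisely the point where the paper inserts an extra inductive step (the proof of \eqref{eq:TargetXiVsPi}): for such a vertex $u$ it shows, by applying the same disagreement-percolation estimate to the subtree below $u$ (using that $u$'s heavy neighbor forces $u$ to be deep inside the block), that $u$'s conditional marginal is within $O(d^{-12})$ of uniform, hence its propagation probability is at most $k^{-1}+O(d^{-12})\leq p_u(0)$ as in \eqref{eq:DefPu}, restoring subcriticality at every non-heavy vertex. Without this (or an equivalent) argument, your part (i) does not go through.
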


\noindent
Combining \eqref{eq:BayesRuleWithSpatial}  with Proposition \ref{prop:UnBiasHighDegree4Trees}, we get that
\[
\Pr[Z(w)=c \ | \ Z( B^+ )=\sigma(B^+), Z(v)=q] 
 \leq (k-2)^{-1}+d^{-10}.
\]
Eq. \eqref{eq:SpatialMixing4HighDegrees} follows  from the above and \eqref{eq:Reduction2SingleBoundary} 
for   the case where $B$ is unicyclic.  The case where $B$ is a tree  is very similar,  for this reason we omit it.
For proving the proposition, it remains to consider the case where $B$ is unicyclic and
$w$ is a vertex on the unique cycle $C$ in the block.

 Let the cycle $C:=w_0, w_1, \ldots, w_{\ell-1}$ be the unique cycle in B, for some $\ell\geq 3$.
For each $w_i\in C$, let ${\cal T}_{w_i}$ be the subgraph of $B$ that corresponds to the set of vertices in the 
connected component of  $B$ that contains vertex $w_i$ once we delete all the edges of $C$.
Let $\outBound {\cal T}_{w_i}$ be the subset of vertices in $\outBound B$
which are incident with ${\cal T}_{w_i}$. 

For what follows, we assume that $w$ is a vertex in $C$ and let  $T^+={\cal T}_w\cup \outBound {\cal T}_w$.
Working as for Proposition \ref{prop:UnBiasHighDegree4Trees}, we get the following:
let $Z$ be a random $k$-coloring of $T^+$. Then,
for every $c\in [k]$ and any $\tau$, a $k$-coloring of $T^+$, we have that
\begin{equation}\label{eq:SpatialTPlusI}
\left| \Pr\left [Z(w)=c | Z(\outBound {\cal T}_{w})=\tau(\outBound {\cal T}_w)  \right] -k^{-1} \right | \leq d^{-10}.
\end{equation}
Note that the above  applies only for $T^+$ and not the whole block $B$ with 
its boundary.

However,  \eqref{eq:SpatialTPlusI} and the observation  that each $w_i$ has exactly 2 neighbors
in  $C$ imply the  following: Let $\sigma, \tau$ two $k$-colorings of $G$,  and let 
$X,  Y$ be two random colorings of $G$. Conditional on that  $X(\outBound B)=\sigma(\outBound B)$
and $Y(\outBound B)=\tau(\outBound B)$  there is a coupling such that the probability
$X(w_i) \neq Y(w_i)$  is less than $3/k$.
To see this, note  that for any color assignment of
$w_{i+1}, w_{i-1}$ (the neighbors of $w_i$) in  $X, Y$ there is always a coupling  
such that $\Pr[X(w_i)\neq Y({w_i})] \leq  3/k$.

Assume that $w_{j+1}$ and $w_{j-1}$ are the neighbors of $w$ in $C$, i.e., $w=w_j$
Using the previous observation and a union bound, there is a coupling such that 
the probability of having either $X(w_{j-1})\neq Y({w_{j-1}})$ or
$X(w_{j+1})\neq Y(w_{j+1})$ is less than 6/k. 

Given the assignments $X(w_{j-1}),  Y(w_{j-1}),  X(w_{j+1}),  Y(w_{j+1})$, 
we have the following: If $X(w_{j-1})=Y(w_{j-1})$ and  $X(w_{j+1})= Y(w_{j+1})$,
then from \eqref{eq:SpatialTPlusI} there is a coupling such that 
the probability of having  $X(w_{j})\neq  Y(w_{j})$ is at most $d^{-8}$.
On the other hand, if $X(w_{j-1})\neq Y(w_{j-1})$, or
$X(w_{j+1})\neq  Y(w_{j+1})$,  then there is a coupling such that
the probability of having  $X(w_{j})\neq  Y({w_{j}})$ is at most $3/k$.
This implies that  there is a coupling such that $X(w_{j})\neq  Y({w_{j}})$
with probability less than $20/k^2$. This completes the proof.

\subsection{Proof of Proposition  \ref{prop:UnBiasHighDegree4Trees} }\label{sec:prop:UnBiasHighDegree4Trees}

Let $\outBound T=T_{x,w}\cap \outBound B$, also, we let $T=T_{x,w}\cup \outBound T$. 
It suffices to show the following: Let $\sigma_1, \sigma_2$ be $k$-colorings of $T$. 
For  random colorings $X, Z$ of the tree $T$ and any color $c\in [k]$,   we have that
\begin{equation}\label{eq:Basis:UnBiasHighDegree4Trees}
\left | \Pr[X(w)=c\ | X(\outBound T)=\sigma_1(\outBound T)]- \Pr[Z(w)=c\ | Z(\outBound T)=\sigma_2(\outBound T)] \right|
\leq d^{-13}.
\end{equation}

\noindent
Let $u_1,  \ldots, u_m$ be an enumeration of the vertices in $\outBound T$, i.e., $m=|\outBound T|$.
Let the sequence of boundary conditions $\tau_0, \ldots, \tau_{m}$ at $\outBound T$.
 For  $i\in [m]$, it holds that $\tau_{i-1}$ and $\tau_{i}$ differ only on the  assignment
of vertex $u_i$, i.e., $\tau_{i-1}(u_i)=\sigma_1(u_i)$ and $\tau_i(u_i)=\sigma_2(u_i)$. 
Triangle inequality implies that
\begin{eqnarray}
\lefteqn {
\left | \Pr[X(w)=c\ | X(\outBound T)=\sigma_1(\outBound T)]- \Pr[Z(w)=c\ | X(\outBound T)=\sigma_2(\outBound T)] \right|
} \hspace{1.6in} \nonumber \\
&\leq & \sum^m_{i=1}
\left | \Pr[X(w)=c\ | X(\outBound T)=\tau_{i-1}]- \Pr[Z(w)=c\ | Z(\outBound T)=\tau_i] \right|. \nonumber 
\end{eqnarray}

\noindent
For each term  $\left | \Pr[X(w)=c\ | X(\outBound T)=\tau_{i-1}]- \Pr[Z(w)=c\ | Z(\outBound T)=\tau_i] \right| $ note that
we have a single disagreement at  $\partial T$.  
For any   coupling of $X, Z$ a path $P\in T$ such that for every $u\in P$ we 
have $X(u)\neq Z(u)$ is called path of disagreement.  Using the 
Disagreement Percolation coupling construction from \cite{DisPerc}  we have the following:
\begin{equation} 
\left | \Pr[X(w)=c\ | X(\outBound T)=\tau_{i-1}]- \Pr[Z(w)=c\ | Z(\outBound T)=\tau_i] \right|
\leq 
\Exp{   \mathbf{1}\{ {\cal P}_i \textrm { is a path of disagreement} \}  }, \qquad  \label{eq:DisPercReduction4BiasColour}
\end{equation}
where the expectation above . is w.r.t. the coupling we  use and
${\cal P}_i$ is the  only path from $u_i$ to $w$.

Since $T$ is a tree ,  whenever  the coupling of $X, Z$ decides the coloring for some vertex $u$, 
the maximum number of disagreements in its neighborhood is at most one. 
Furthermore, for a vertex $u$ whose number of disagreement in the neighborhood is at most 1, 
there is a coupling such that the probability of the event $X(u) \neq Z(u)$ is upper bounded
by the probability of the most likely color for $u$ in the two chains. 
For each vertex $u\in T$, let $\xi(u)$ be the probability of disagreement in the coupling.
Disagreement percolation is dominated by an independent process, that is,
\begin{equation}\label{eq:InfluenceBound4Bias}
\Exp{ \mathbf{1}\{{\cal P}_i \textrm { is a path of disagreement} \}  } \leq \textstyle \prod_{v \in {\cal P}_i} \xi(u).
\end{equation}

\noindent
For every $u\in T$, consider  $p_{u}(0)$, as  defined in \eqref{eq:DefPu}.
We  show that for every $u\in T$ it holds that 
\begin{equation}\label{eq:TargetXiVsPi}
\xi (u)\leq p_{u}(0).
\end{equation}
Before showing that \eqref{eq:TargetXiVsPi} is indeed true, let as show how, using \eqref{eq:TargetXiVsPi},
we get the proposition. 

 Combining  \eqref{eq:DisPercReduction4BiasColour}, \eqref{eq:InfluenceBound4Bias} and \eqref{eq:TargetXiVsPi}
we have that
\begin{eqnarray}
\left | \Pr[X(w)=c\ | X(\outBound T)=\sigma_1(\outBound T)]- \Pr[Z(w)=c\ | X(\outBound T)=\sigma_2(\outBound T)] \right|
& \leq &\sum^m_{i=1} \prod_{v \in {\cal P}_i} p_{u}(0). \qquad \label{eq:TVDReducedToInd}
\end{eqnarray}

\noindent
Consider the  independent process where each vertex $u\in T$ is set with probability 
$p_u(0)$ disagreeing. Let $D(T)$ be the number of paths of disagreement from the root $w$ 
to the vertices which are incident to $\outBound T$. Then, it holds that
\begin{equation}
\Exp{D(T)}  = \textstyle \sum^m_{i=1} \prod_{v \in {\cal P}_i}  p_{u}(0). \label{eq:TVDReducedToExpct} 
\end{equation}
We are going to get an upper bound for the quantities in \eqref{eq:TVDReducedToExpct}. 
Assume first that $w\in \HighDegreeSet$.
Let $D_{\ell}(T)$ denote the number of paths of disagreement from the root $w$
that have length $\ell$. It holds that
\[
\Exp{ D_{\ell}(T) }  = p_w(0)  \textstyle \sum_{y\in N(u)} \Exp{ D_{\ell -1} (T_y) },
\]
where ${T}_y$ is the subtree of ${T}$ rooted at $y$, child of $w$ in $T$.
From the above, we get that
\begin{eqnarray}
\Exp{ D_{\ell}(T) }
&< & p_w(0)  \ \degree_{in}(w^j_r)\ \max_{y\in N(w) } \left\{ \Exp{ D_{\ell-1}(T_y)} \right\}   
\nonumber \\
&\leq& 
\max_{\P'=(u_0=w ,u_1,\dots,u_\ell)}
p_{u_{\ell }}(0)  \prod_{i=0}^{\ell-1} p_{u_i}(0) \times
\left [ \degree_{in}(u_i) \right]. \qquad \label{eq:EDEllTGenBound4Bias}
\end{eqnarray}
Now, recall that  $u_{\ell}\in \ImpVrtx$. Then, weighting schema \eqref{def:PathWeight} implies the
following: Let $M$ be the set of high degree vertices in $\P'$ and let $s=|M|$. Then, 
using Corollary \ref{cor:FromBreakPointProd} and \eqref{eq:EDEllTGenBound4Bias}
we get that 
\begin{eqnarray}
\Exp{ D_{\ell}(T) }  &\leq & 
\max_{\P'=(u_0=w^j_r ,u_1,\dots,u_\ell)} p_{u_{\ell}}(0) \left( \prod_{u_i\notin M} p_{u_i}(0) \times  \degree_{in}(u_i)  \right )
\left( \prod_{u_i \in M}  \degree_{in}(u_i)  \right ) \nonumber \\
&\leq &
\max_{\P'=(u_0=w^j_r ,u_1,\dots,u_\ell)} 
p_{u_{\ell}}(0) \left( \prod_{u_i\notin M} p_{u_i}(0) \times \degree_{in}(u_i) \right  )
\frac{\left( (1+\epsilon/6) \right)^{\ell}}{((1+\epsilon/6)d^{15} )^s}
 \nonumber \\
 &\leq & \textstyle 2 \left( \frac{1+\epsilon/6}{1+\epsilon} \right )^{\ell-s} d^{-15s} 
\ \leq \  2 \left(  {1+2\epsilon/3}  \right)^{-\ell} (d/2)^{-15s}.  \nonumber 
\end{eqnarray}
Note that we used Corollary \ref{cor:FromBreakPointProd} in the second derivation.
The above implies that  
\begin{equation}\label{eq:TVDReducedToExpcIndBoundHighDegree}
\Exp{ D(T)} \leq C' d^{-15},
\end{equation}
for large $C'>0$. Consider $w\notin \HighDegreeSet$ but $N(w)\cap \HighDegreeSet\neq \emptyset$ and
$\bar{w}$ is a high degree neighbour in $N(w)$. Then, since $p_{\bar{w}}=1$, it is direct
to see that the paths of disagreement that reach ${w}$ reach $\bar{w}$, as well. This 
observation, combined with \eqref{eq:TVDReducedToExpcIndBoundHighDegree} implies
that 
\begin{equation}\label{eq:TVDReducedToExpcIndBound}
\Exp{ D(T)} \leq C' d^{-15},
\end{equation}
regardless of weather the root $w\in \HighDegreeSet$ or $N(w)\cap \HighDegreeSet\neq \emptyset$.
Combining \eqref{eq:TVDReducedToInd}, \eqref{eq:TVDReducedToExpct} and
\eqref{eq:TVDReducedToExpcIndBound} we have
\[
\left | \Pr[X(w)=c\ | X(\outBound T)=\sigma_1(\outBound T)]- \Pr[Z(w)=c\ | X(\outBound T)=\sigma_2(\outBound T)] \right|
\leq d^{-14}.
\]

\noindent
It remains to show that \eqref{eq:TargetXiVsPi} is indeed true.
In light of Corollaries  \ref{corr:GMPBiasColour},  \ref{cor:GeneralBiasColour} 
at each step of disagreement percolation which decides on vertex $u$,  where $u$ is such that
$\degree(u)\leq \TDeg$ and $N(u)\cap \HighDegreeSet=\emptyset$,  we get that $\xi(u)\leq p_u(0)$.
Also, for a vertex $u\in \HighDegreeSet$, we trivially have $\xi(u)\leq p_u(0)$, since
for such a vertex $p_u(0)=1$. It remains to consider vertices $u\in T$ such that $u\notin \HighDegreeSet$ and 
$N(u)\cap \HighDegreeSet\neq \emptyset$.

Recall that $\xi(u)$ is the probability
of the most biased color for $u$, in both $X, Y$. 
Consider $T_u$, the subtree of $T$ rooted at $u$, for some $u\in T$
such that $u\notin \HighDegreeSet$ and  $N(u)\cap \HighDegreeSet\neq \emptyset$. 
Also,  consider the independent percolation process where each vertex $v$ is disagreeing 
with probability $\xi(v)$.  We are going to show the following: 
if for every $v\in T_u\setminus \{u\}$  \eqref{eq:TargetXiVsPi} holds, then $\xi(u)\leq p_u(0)$.
Given that, \eqref{eq:TargetXiVsPi} follows by employing a simple induction.

Since \eqref{eq:TargetXiVsPi} holds for every $v\in T_u\setminus \{u\}$, 
with an analysis  similar to what we had before, we get
$
\Exp{ D(T_u)} \leq 2d^{-12}.$
This implies directly  that $\xi(u)\leq k^{-1}+2d^{-12}\leq p_u(0)$, for large $d$.
This completes the proof.

\section{Disagreement Percolation  Results}\label{sec:DisPerc}

\noindent
Given some $\epsilon, \Delta>0$ and sufficiently large $d$, consider $G=(V,E)$ such that 
$G\in \IntrstGraphFam(\epsilon, d, \Delta) $ with set of blocks ${\cal B}$.
Also assume that $k\geq (\alpha+\epsilon)d$. For each vertex $v\in V$ we let $B_v\in {\cal B}$ 
denote  the block in which $v$ belongs. Also, recall that $N=|{\cal B}|$.

\begin{figure}
	\centering
		\includegraphics[height=6.3cm]{./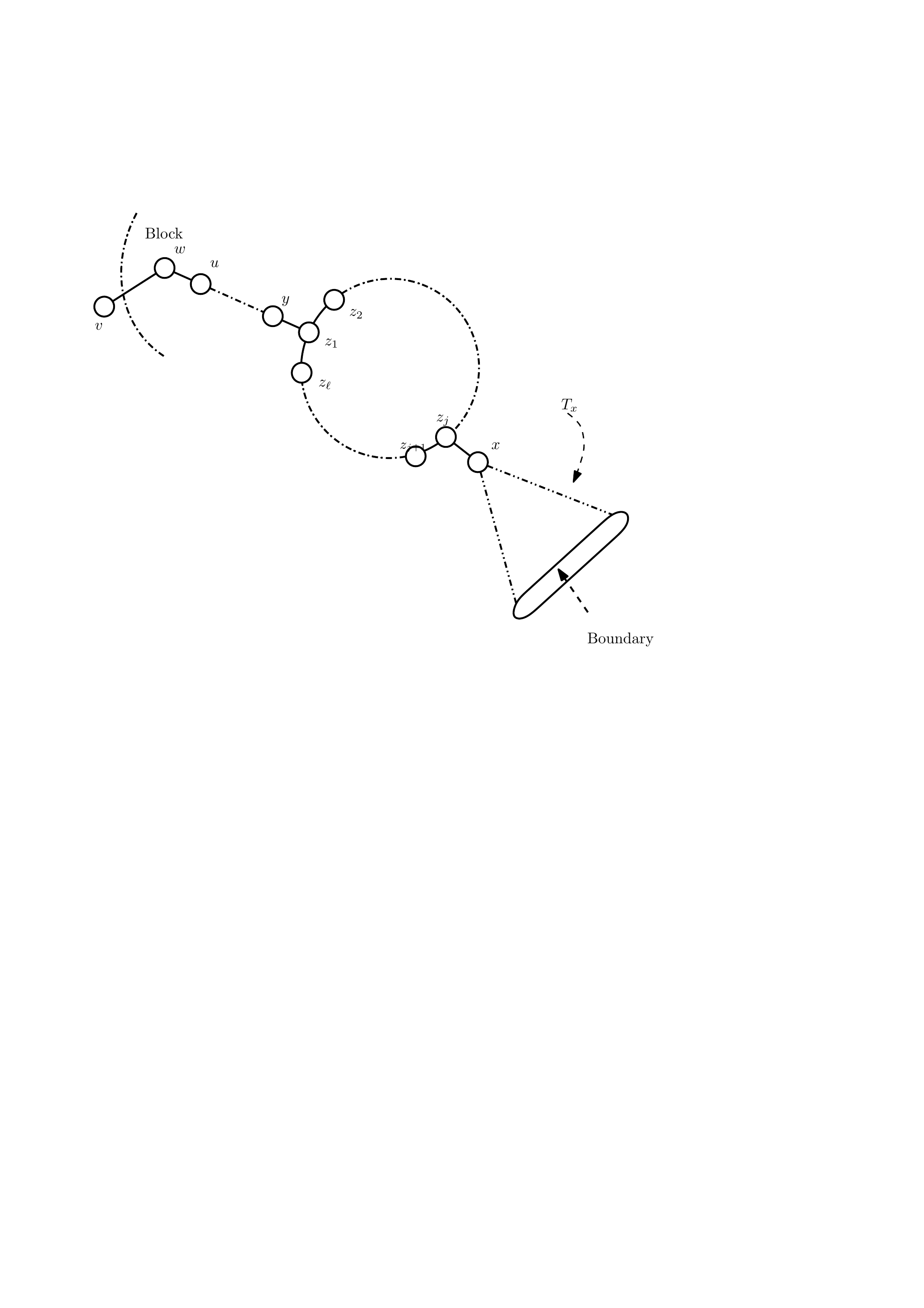}
		\caption{Unicyclic Block}
	\label{fig:UnicyclicBlock}
\end{figure}

Due to our assumptions about  $\cal B$  each $u\in \ImpVrtx$ is either a breakpoint or a vertex adjacent 
to a  breakpoint.
Consider   two copies of the block dynamics $(X_t)_{t\geq 0}$ and $(Y_t)_{t\geq 0}$.
Assume that the two copies of block dynamics are coupled such that at each transition 
the same block is updated in both of them.
In what follows we describe how do we couple the update of a block $B$ in the two chains.
To avoid trivialities, assume that $B$ contains more
than one vertices.
Let   $\Lambda =(X_t \oplus Y_t)\cap \outBound B$ and assume  that at time $t+1$ both 
$(X_t)_{t\geq 0}$ and $(Y_t)_{t\geq 0}$  update block $B$.  
 Our focus is on the set $\WDis_{t+1}\cap B$. 
Recall that for each $t\geq 0$ let  $\WDis_t=X_{t}\oplus Y_t$. 
Also, we have   $\WDis_{\leq t}=\bigcup^t_{s=0} \WDis_s.$

\paragraph{Coupling}
The coupling  decides  $X_{t+1}(B)$  and $Y_{t+1}(B)$ in steps. At each steps it considers
a single vertex $u\in B$ and decides $X_{t+1}(u)$, $Y_{t+1}(u)$ conditional the configurations
at $\outBound B$ and the configurations of the vertices in $B$ that were considered in the coupling
before $u$. The coupling of $X_{t+1}(u)$, $Y_{t+1}(u)$ is {\em maximal}, i.e., minimizes the
probability of the event  $X_{t+1}(u) \neq Y_{t+1}(u)$.

Initially the disagreements are only in $\Lambda\subseteq \outBound B$, but in subsequent steps 
there could also be disagreements inside $B$. The coupling gives priority to vertices which
are next to a disagreement. That is, as long as there are vertices next to a disagreeing vertex 
such that their color is not specified,  the coupling chooses one according to the following rule:

 Consider some, arbitrary, ordering of the vertices in $\Lambda$. E.g. say $u\in \Lambda$ is the first 
 vertex.   The coupling creates a maximal component of disagreeing
vertices around $u$, which we call ${\cal C}_u$. Initially  ${\cal C}_u$ contains only $u$. 
Every time we consider   some arbitrary vertex $w$ which is adjacent to ${\cal C}_u$ and its 
 coloring has not been decided.
The coupling decides  both $X_{t+1}(w)$ and $Y_{t+1}(w)$.
If this vertex ends up being a disagreement it is inserted into  ${\cal C}_u$. Otherwise it is not.
 That is, as we decide the coloring of the vertices of $B$,  ${\cal C}_u$
may  grow. The growth of ${\cal C}_u$ stops when it has no neighbors in $B$ that are 
uncolored.   Then the coupling considers the next vertex in $\Lambda$ in the same manner.

\begin{remark}
For two or more vertices in $\Lambda$, their corresponding components can be identical. E.g. 
let $u, w \in \Lambda$ and ${\cal C}_{u}$ contains  $v$ which is adjacent to $w$. Then,
 ${\cal C}_{u}$ and ${\cal C}_{w}$ are identical. 
\end{remark}

Let $\bar{\psi}=\bar{\psi}_{B, \Lambda}( X_{t}, Y_{t})$ be the distribution over the subset of vertices of $B$,
induced by the disagreeing vertices in the coupling above. That is $\mathbold{\theta}_{\Lambda}$ distributed as in 
$\psi$ contains all the disagreeing vertices from the coupling of $X_{t+1}(B)$ and $Y_{t+1}(B)$.
Note that we have that 
\begin{equation}\label{lemma:StochOrderFromRealToWorstCase}
\left( X_{t+1}(B)\oplus Y_{t+1}(B) \right) \subseteq  \mathbold{\theta}_{\Lambda},
\end{equation}

\noindent
We study the distribution $\bar{\psi}=\bar{\psi}_B( X_{t}, Y_{t})$ by means of  measures
which are easier to analyze. 

For some $\delta>0$,  let ${\cal S}_{\delta}={\cal S}_{\delta}(B)$ be a random subset of the block $B$ such that each vertex 
$v\in B$ appears in ${\cal S}_p$, independently, with probability $p_v(\delta)$ where
\begin{equation}\label{eq:DefPu}
\textstyle p_u (\delta)= 
\begin{cases}
(1+\delta)\min \left \{ \left ( (1+\epsilon) \deg_{in}(u) \right)^{-1}, (k-\degree(u) )^{-1} \right \}  & 
\textrm{if } \degree(v)\leq \TDeg \\ 
1  & \mbox{otherwise}.
\end{cases}
\end{equation}
For  {\em unicyclic } $B$  we have the following: for each $u$ outside the cycle $p_u(\delta)$ 
is the same as above. If $u$ belongs to the cycle, then $p_u(\delta)=1$.

Given ${\cal S}_p$ and $u\in \outBound B$,
let $\mathbold{\theta}_u\subseteq {\cal S}_p$  contain every vertex $w\in B$ such that
there is a path using vertices in $S_p$ that connect $u$ and $w$. 
We let  $\psi_{v}(\delta)=\psi_{v, B}(\delta)$ be the distribution induced by $\mathbold{\theta}_u$.

\begin{proposition}[Stochastic Domination]\label{prop:StochasticDomination}
For all $\epsilon$,  there exist  $d_0$ such that for all $d\geq d_0$,  for $k\geq (\alpha+\epsilon)d$ 
and every  graph  $G\in \IntrstGraphFam(\epsilon, d, \Delta)$, where $\Delta>0$ can depend on $n$,
the following is true:

Consider  some block $B$ and   two $k$-colorings of $G$ $\sigma,\tau$ such that
for   $\Lambda=(\sigma\oplus\tau)\cap B$ and     $|\Lambda| \leq d^{9/10}$.
For $u\in \Lambda$,  let the  {\em independent}  random variables $\mathbold{\theta}_u$,
be distributed as $\psi_u(\epsilon^3)$, respectively.
Let  $\mathbold{\theta}_{\Lambda}$ be distributed as in $\bar{\psi}=\bar{\psi}_{\Lambda, B}(\sigma,\tau)$.

There is a coupling between $\mathbold{\theta}_{\Lambda}$ and  $\cup_{u\in \Lambda} \mathbold{\theta}_u$
such that with probability 1 we have
\[
\mathbold{\theta}_{\Lambda} \subseteq \textstyle \bigcup_{u\in \Lambda} \mathbold{\theta}_u.
\]
\end{proposition}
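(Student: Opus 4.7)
The plan is a disagreement-percolation style coupling: I would simulate $\bar\psi$ by growing each component ${\cal C}_u$ of disagreements one vertex at a time in BFS fashion, and dominate the per-vertex disagreement events by Bernoulli random variables of parameter $p_v(\epsilon^3)$ whose joint realisation gives the percolation set $S_{\epsilon^3}$ underlying the target $\mathbold{\theta}_u$'s. The multiplicative slack $(1+\epsilon^3)$ in the definition \eqref{eq:DefPu} of $p_v(\epsilon^3)$ is there precisely to absorb the small additive error terms that arise at vertices adjacent to $\HighDegreeSet$ and on the unique cycle of $B$.

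Concretely, I would place independent uniforms $U_v\sim\mathrm{Unif}[0,1]$ on each $v\in B$, set $v\in S_{\epsilon^3}\iff U_v\leq p_v(\epsilon^3)$, and let $\mathbold{\theta}'_u$ be the connected component of $u$ in $S_{\epsilon^3}$, so each $\mathbold{\theta}'_u$ has the correct marginal $\psi_u(\epsilon^3)$. Then I would enumerate $\Lambda=\{u_1,u_2,\dots\}$ and simulate $\bar\psi$ by sequentially growing ${\cal C}_{u_1},{\cal C}_{u_2},\dots$ in BFS, using the variable $U_v$ at each newly processed vertex to couple the event ``$v$ is a disagreement'' maximally into $\{U_v\leq p_v(\epsilon^3)\}$. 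To reconcile this shared-randomness construction with the ``independent $\mathbold{\theta}_u$'' target in the statement, I would observe that, by FKG applied to Bernoulli percolation, the shared-realisation union $\bigcup_u\mathbold{\theta}'_u$ is stochastically dominated in the subset order by the independent union $\bigcup_u\mathbold{\theta}_u$; Strassen's theorem then yields a coupling in which $\bigcup_u\mathbold{\theta}'_u\subseteq\bigcup_u\mathbold{\theta}_u$ almost surely.

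The workhorse is the per-vertex bound: at the moment we decide $v$, the conditional probability of a disagreement at $v$ given everything already exposed must be at most $p_v(\epsilon^3)$. For a low-degree $v$ outside the cycle of $B$ and with no high-degree neighbour, this is Corollary \ref{corr:GMPBiasColour} applied with $B^+$ enlarged to include the already exposed vertices of $B$; it delivers $((1+\epsilon)\degree_{in}(v))^{-1}$, which lies below $p_v(\epsilon^3)$. For a low-degree vertex adjacent to $\HighDegreeSet$ or on the cycle of $B$, Proposition \ref{prop:SpatialMixing4HighDegrees} gives $(k-2)^{-1}+20d^{-2}$, which for $k\geq(\alpha+\epsilon)d$ is again below $p_v(\epsilon^3)$ because the multiplicative $(1+\epsilon^3)$ slack comfortably dominates the $20d^{-2}$ additive error. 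For $v\in\HighDegreeSet$ or a vertex on the unique cycle of $B$ itself, $p_v(\epsilon^3)=1$ and the bound is trivial. Once these bounds are in place, each disagreement declared by $\bar\psi$ at a processed vertex $v$ forces $U_v\leq p_v(\epsilon^3)$, hence $v\in S_{\epsilon^3}$.

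To close the argument, if $v\in\mathbold{\theta}_\Lambda$ then during the simulation $v$ was attached to some component ${\cal C}_{u_i}$ through a path $u_i=v_0,v_1,\dots,v_m=v$ of declared disagreements; each $v_j$ with $j\geq 1$ lies in $S_{\epsilon^3}$ by the coupling, so $v$ is connected to $u_i$ through $S_{\epsilon^3}$ and thus $v\in\mathbold{\theta}'_{u_i}\subseteq\bigcup_u\mathbold{\theta}_u$, as required. The main technical obstacle I anticipate is the uniformity of the GMP-type bound under a random, rather than fixed, conditioning: Corollary \ref{corr:GMPBiasColour} is stated for a deterministic boundary $B^+$, but during the simulation the exposed set evolves with the BFS and is determined by the coupling's own randomness. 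The hypothesis $|\Lambda|\leq d^{9/10}$ enters precisely here: it caps the total number of BFS phases, keeping the exposed portion of $B$ small enough that one can first condition on the exposed configuration and still apply the local uniformity estimates uniformly over every step of the coupling. Trading the additive $20d^{-2}$ error from Proposition \ref{prop:SpatialMixing4HighDegrees} against the $(1+\epsilon^3)$ slack of $p_v(\epsilon^3)$ is the most delicate bookkeeping in the argument.
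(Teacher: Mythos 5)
Your proposal breaks at its central step, the per-vertex bound. You claim that at the moment the coupling decides a vertex $v$, the conditional probability of a disagreement at $v$ given the exposed configuration is at most $p_v(\epsilon^3)$, so that a maximal coupling with a single Bernoulli mark $\{U_v\leq p_v(\epsilon^3)\}$ works. This is only true when $v$ sees a \emph{single} nearby disagreement. In the present setting a vertex of $\inBound B$ may be adjacent to many vertices of $\Lambda\subseteq\outBound B$ (only the reverse direction is restricted by Definition \ref{def:SparseBlockPart}), and components grown from different sources can converge on the same vertex; so the vertex $w_i$ being processed can have $q\geq 2$ adjacent disagreements, and the correct bound on its conditional disagreement probability (obtained by telescoping over single-boundary-vertex swaps and disagreement percolation, as in Claim \ref{claim:DisPathProbs}) is of order $q\,p_{w_i}(0)$, plus distance-two corrections through unrevealed vertices. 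Since $q$ can be as large as $|\Lambda|\approx d^{9/10}$, the multiplicative slack $(1+\epsilon^3)$ cannot absorb it, and your coupling step ``disagreement at $v$ forces $U_v\leq p_v(\epsilon^3)$'' fails. This is exactly why the paper's proof compares each step of the true process against the \emph{union of the $q$ or more independent clusters} attached to the sources in a set $\Lambda'$ with $|\Lambda'|\geq q$: by inclusion--exclusion that union has probability at least $(1+\epsilon^3)q\,p_{w_i}(0)\bigl(1-O(d^{-1/10})\bigr)$, which beats $q\,p_{w_i}(0)\bigl(1+O(d^{-1/10})\bigr)$. The hypothesis $|\Lambda|\leq d^{9/10}$ is used precisely to control these $O(d^{-1/10})$ error terms (the distance-two path corrections and the second-order inclusion--exclusion term), not, as you suggest, to keep the exposed region small for the GMP estimates.

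A second, more minor gap is your reduction from the shared percolation realisation to the independent copies demanded by the statement: you invoke ``FKG plus Strassen'' to claim that $\bigcup_u\mathbold{\theta}'_u$ (components of the sources in one configuration) is stochastically dominated, as a random set, by $\bigcup_u\mathbold{\theta}_u$ (independent configurations). FKG gives positive association of increasing events under a single product measure; it does not by itself compare these two different laws, so this domination would need its own proof — and the natural way to prove it is a vertex-by-vertex revealing induction with exactly the multiplicity bookkeeping described above, i.e.\ essentially the argument the paper already carries out directly between $\bar\psi$ and the independent union. Your remaining ingredients (the GMP-type bound of Corollary \ref{corr:GMPBiasColour} for low-degree vertices away from $\HighDegreeSet$ and the cycle, Proposition \ref{prop:SpatialMixing4HighDegrees} near high-degree vertices, and the trivial bound $p_v=1$ on $\HighDegreeSet$ and the cycle) are the right local estimates, but without handling the multi-disagreement case the proposed domination by a single Bernoulli field does not go through.
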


\noindent
The proof of Proposition \ref{prop:StochasticDomination} appears in Section \ref{sec:prop:StochasticDomination}.

Using the above proposition we get the following useful result.

\begin{lemma}\label{lemma:BigJumpInBurnIn}
For all $\epsilon, \Delta, C>0$, there exist $C', d_0>0$, 
such that for  all $d > d_0$, for $k=(\alpha+\epsilon)d$ 
and every  graph  $G\in \IntrstGraphFam(\epsilon, d, \Delta)$, where $\Delta>0$ can depend on $n$
the following is true:

Consider two copies of block dynamics $(X_t)_{t\geq 0}$ and $(Y_t)_{t\geq 0}$ such that
$| X_0\oplus Y_0  |=S$,  for some integer  $0<S  \leq d^{4/5}$.  Letting 
$r=CN/\log d$,  there is a coupling such that
\[
\Pr\left [  |D_{\leq r}  |  \geq (1+q)S \right] \le C'\exp\left( -qS/C'\right),
\]
for  any $q$ such that $\left (\log d \right)^{-1/2}\leq q$ and $(1+q)S\leq d^{9/10}$.
\end{lemma}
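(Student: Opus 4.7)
The plan is to iterate a one-step moment generating function estimate obtained from Propositions~\ref{prop:StochasticDomination} and~\ref{prop:SingleSourceBlockdis} across the $r=CN/\log d$ time steps and then apply Markov's inequality; no heavy concentration machinery such as Freedman is needed. First I would couple the two chains so that the same block is updated at every step, and within each updated block use the vertex-by-vertex maximal coupling described right before Proposition~\ref{prop:StochasticDomination}. Under this coupling, $|D_{\leq r}|=S+\sum_{t=1}^{r}M_t$, where $M_t:=|D_{\leq t}|-|D_{\leq t-1}|$ counts boundary disagreements appearing for the \emph{first time} at step $t$; only the updated block $B_t$ can contribute, and only if it has at least one disagreement on $\outBound B_t$. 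To stay inside the applicability of Proposition~\ref{prop:StochasticDomination}, I would introduce the stopping time $\tau:=\inf\{t:|D_{\leq t}|>(1+q)S\}$, so that for $t\le\tau$ no disagreement set exceeds $(1+q)S\le d^{9/10}$; the failure event $\{|D_{\leq r}|\ge(1+q)S\}$ is exactly $\{\tau\le r\}$.

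For $t\le\tau$, by Proposition~\ref{prop:StochasticDomination}, conditional on $\mathcal{F}_{t-1}$ and on $B_t$, $M_t$ is dominated by $\sum_{u\in D_{t-1}\cap\outBound B_t}|\mathbold{\theta}_u\cap\inBound B_t|$ with independent $\mathbold{\theta}_u\sim\psi_u(\epsilon^3)$. Proposition~\ref{prop:SingleSourceBlockdis} gives $\Prob{|\mathbold{\theta}_u\cap\inBound B_t|\ge\ell}\le (C/d)\exp(-\ell/C)$, which for $\lambda$ smaller than a fixed constant yields $\Exp{\exp(\lambda|\mathbold{\theta}_u\cap\inBound B_t|)}\le 1+C_1\lambda/d$ per source by summation by parts. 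Averaging the product over the $1/N$ probability of selecting $B_t$, and using that each low-degree vertex lies in the outer boundary of at most $\TDeg=O(d)$ blocks so that $\sum_B |D_{t-1}\cap\outBound B|\le O(d)\,|D_{t-1}|$, one obtains
\[
\Exp{\exp(\lambda M_t)\mid\mathcal{F}_{t-1}}\ \le\ 1+\frac{C_2\lambda\,|D_{t-1}|}{N}\ \le\ \exp\!\Big(\frac{C_2\lambda\,|D_{\leq t-1}|}{N}\Big).
\]

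Writing $Z_t:=|D_{\leq t\wedge\tau}|$ and $\mu_t(\lambda):=\Exp{e^{\lambda Z_t}}$, this step bound upgrades to the recursion $\mu_t(\lambda)\le \mu_{t-1}\!\big(\lambda(1+C_2/N)\big)$, so iterating $r$ times gives $\mu_r(\lambda)\le \exp\!\big(\lambda S(1+C_2/N)^r\big)\le \exp\!\big(\lambda S(1+C_3/\log d)\big)$ for large $d$. Markov's inequality then delivers
\[
\Prob{|D_{\leq r}|\ge(1+q)S}\ =\ \Prob{Z_r\ge(1+q)S}\ \le\ \exp\!\big(-\lambda S\,(q-C_3/\log d)\big)\ \le\ \exp(-\lambda qS/2),
\]
the last step using $q\ge(\log d)^{-1/2}\gg 1/\log d$. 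Fixing $\lambda$ to be any positive constant smaller than the admissible threshold in the one-step MGF estimate produces the claimed bound $C'\exp(-qS/C')$.

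The main obstacle is making the one-step stochastic domination from Proposition~\ref{prop:StochasticDomination} line up cleanly across many time steps: the new disagreements born at step $t$ live in the just-updated block and are not literally independent of the past. The resolution is that the proof of Proposition~\ref{prop:StochasticDomination} constructs an \emph{independent} percolation dominator which can be resampled afresh at every step for MGF purposes, legitimizing the iterative calculation above. A secondary technical point is to arrange a constant (rather than a $\log d$) factor in the exponent: this is precisely why one works directly with the MGF iteration instead of truncating increments and invoking a Freedman-type bound, and why the hypothesis $q\ge(\log d)^{-1/2}$ is exactly what is needed to dominate the $C_3/\log d$ bias coming from the factor $(1+C_2/N)^r$.
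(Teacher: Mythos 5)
Your proposal is correct, but it takes a genuinely different route from the paper. The paper's proof splits the argument in two stages: it first bounds the number $W$ of ``disagreement updates'' over the horizon $CN/\log d$ by $m=O\bigl((1+q)S\,\TDeg/\log d\bigr)$ via a Chernoff bound (using, as you do, that every vertex of $\ImpVrtx$ has degree at most $\TDeg$, hence neighbours at most $\TDeg$ blocks), and then bounds $\Pr\bigl[\sum_{j\le m}|\mathbold{\zeta}_j|\ge qS\bigr]$ for $m$ independent dominating bursts by a union bound over compositions of $qS$, exploiting the $Cd^{-1}$ prefactor per burst through the combinatorial estimate of Claim~\ref{claim:EntropyFromPartitions}; the condition $q\ge(\log d)^{-1/2}$ enters there to make the $\sqrt{qSm/d}$ term negligible against $qS$. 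You instead package the same inputs (the per-step stochastic domination of Proposition~\ref{prop:StochasticDomination}, applied afresh conditionally on the filtration, together with the per-source tail of Proposition~\ref{prop:SingleSourceBlockdis}) into a one-step conditional MGF bound $\Exp{e^{\lambda M_t}\mid\mathcal F_{t-1}}\le\exp(C_2\lambda|D_{\le t-1}|/N)$, iterate it over the $r=CN/\log d$ steps of the stopped process, and finish with Markov; here $q\ge(\log d)^{-1/2}$ is what beats the multiplicative drift $(1+C_2/N)^r\le 1+C_3/\log d$. Your stopping time plays the same role as the paper's stopped coupling in keeping the source set below $d^{9/10}$ so that Proposition~\ref{prop:StochasticDomination} remains applicable, and your use of the tail of Proposition~\ref{prop:SingleSourceBlockdis} for the $(1+\epsilon^3)$-inflated percolation measure mirrors exactly what the paper does. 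The trade-off: the paper's two-stage argument isolates ``how many chances'' from ``how large each burst'' and leans on a combinatorial count, whereas your MGF iteration avoids Claims~\ref{lemma:Tail4DisUpdates} and~\ref{claim:EntropyFromPartitions} entirely and makes the role of the hypotheses $r=CN/\log d$ and $q\ge(\log d)^{-1/2}$ transparent; the only points you should make explicit in a write-up are that the one-step bound is needed uniformly for the slightly inflated parameters $\lambda(1+C_2/N)^{j}$ (which stay below the fixed admissibility threshold for large $d$), and that on the stopped event the increment is zero so the conditional bound holds on all of $\mathcal F_{t-1}$.
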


\noindent
The proof of Lemma \ref{lemma:BigJumpInBurnIn} appears in Section \ref{sec:lemma:BigJumpInBurnIn}

\subsection{Proof of Proposition \ref{prop:StochasticDomination}}\label{sec:prop:StochasticDomination}

For the sake of brevity, we let $\delta=\epsilon^2$. Consider,  first, the case where  $B$ has multiple vertices. 
For each $u\in \Lambda$  consider an independent copy of ${\cal S}_{u}$. Each ${\cal S}_{u}$
is a subset of $B$ where each vertex $v$ is included, independently of the other vertices with probability $p_v(\delta)$,
where $p_v$ is defined in \eqref{eq:DefPu}.  Then we define each $\mathbold{\theta}_u$ w.r.t. ${\cal S}_u$.

In the coupling we reveal the vertices in $\mathbold{\theta}_{\Lambda}$ in the same order as we consider them in
the coupling in Section \ref{sec:DisPerc}, i.e, we gave priority to vertices next to disagreements. The disagreeing
vertices are the vertices which are already inside $\mathbold{\Lambda}$ and those which are not, are non disagreeing.
That is we couple $\mathbold{\theta}_{\Lambda}$ and $\cup_{u\in \Lambda}\mathbold{\theta}_u$ in steps.

At $i$-th step assume that we deal with vertex $w_i\in B$, while we have revealed  $\theta^i $ from $\mathbold{\theta}_{\Lambda}$
and  $\theta^i_u$, from $\mathbold{\theta}_u$   where $u\in \Lambda$.
It suffices to show that for every $i\geq 1$,  we have that
$
 \theta^{i-1}  \subseteq \bigcup_{u\in \Lambda}\theta^{i-1}_u$, 
{while  there is $\Lambda' \subseteq \Lambda$ such that  the probability that $w_i\in \theta^i$ is upper bounded by the
probability $w_i\in \bigcup_{u\in \Lambda'}\theta^i_u$. }  Note that    $\theta^0=\bigcup_{u}\theta^0_u=\Lambda$.

Let ${\cal M}_i$ be the set of paths of unrevealed  vertices in $B$,  from $w_i$ to the components of $\theta^{i-1}$.
Note that $\mathbold{\theta}_{\Lambda}$ may have more than one components.
We have the following results.

\begin{claim}\label{claim:DisPathProbs}
For any integer $i\geq 1$,  
If $\theta^{i-1}  \subseteq \bigcup_{u\in \Lambda}\theta^{i-1}_u$ holds, then 
\[
\Pr[w_i\in \theta^i]\leq \textstyle \sum_{ P \in  {\cal M}_i} \prod_{v \in P} p_{w}(0).
\]
\end{claim}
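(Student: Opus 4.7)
The plan is to unwind the BFS coupling of Section~\ref{sec:DisPerc} one vertex at a time and read off the desired bound. Because the coupling gives priority to vertices adjacent to existing disagreements and uses the identity coupling whenever the current vertex has no disagreeing neighbor, for $w_i$ to become a disagreement at step $i$ there must exist a path $P=(v_1,\ldots,v_s=w_i)$ of previously unrevealed vertices in $B$ such that $v_1$ is adjacent to some vertex already in $\theta^{i-1}$ and each $v_j$ turns into a disagreement at the moment it is processed. Consequently,
\[
\{w_i\in\theta^i\}\ \subseteq\ \bigcup_{P\in{\cal M}_i}\{\text{every vertex of }P\text{ is a disagreement}\},
\]
so the claim will follow from a union bound together with the per-path estimate $\Pr[\text{every vertex of }P\text{ is a disagreement}]\le\prod_{v\in P}p_v(0)$.

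To justify the per-path estimate I would expand the joint probability along $P$ as a product of one-step conditional probabilities, in the order in which $v_1,v_2,\ldots,v_s$ are revealed by the BFS. Since the coupling of $X_{t+1}(v_j)$ and $Y_{t+1}(v_j)$ is maximal, the conditional probability that $v_j$ is a disagreement is bounded by the maximum, over colors $c$, of the marginal probability that $v_j$ receives color $c$, conditioned on $\outBound B$ together with the colors of the already-revealed vertices in $B$. These marginals are exactly what Section~\ref{sec:SCD} controls: Corollary~\ref{corr:GMPBiasColour} gives the bound $1/((1+\epsilon)\degree_{in}(v_j))$ when $v_j$ is a low-degree vertex not adjacent to a high-degree vertex and not on a cycle in $B$; Corollary~\ref{cor:GeneralBiasColour} provides the trivial upper bound $1/(k-\degree(v_j))$; and Proposition~\ref{prop:SpatialMixing4HighDegrees} yields the bound $1$ in the remaining cases, namely $v_j$ being high-degree, adjacent to a high-degree vertex, or lying on the cycle of $B$. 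Taking the minimum of these bounds reproduces precisely $p_{v_j}(0)$ as defined in~\eqref{eq:DefPu}, so multiplying along $P$ gives the required estimate.

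The main obstacle is the bookkeeping around the conditioning: at the moment the BFS processes $v_j$, the relevant marginal is that of a random coloring of $B$ conditioned not only on the fixed coloring of $\outBound B$ but also on the already-assigned colors of previously revealed internal vertices. I need to argue that the estimates of Corollary~\ref{corr:GMPBiasColour} and Proposition~\ref{prop:SpatialMixing4HighDegrees} still apply in this enlarged-boundary setting; this is the case because both results are stated uniformly over all choices of boundary configuration $\sigma\in[k]^{B\cup\outBound B}$, and the previously revealed colors together with the fixed coloring on $\outBound B$ simply form one such augmented boundary for the subgraph still to be coloured. A secondary but harmless point is that distinct paths in ${\cal M}_i$ may share vertices or endpoints, so the events they describe are not disjoint; since we only want an upper bound, the union bound across $P\in{\cal M}_i$ is unaffected by this overlap, and the claim follows.
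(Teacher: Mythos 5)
There is a genuine gap, and it comes from conflating two different couplings. The event inclusion $\{w_i\in\theta^i\}\subseteq\bigcup_{P\in{\cal M}_i}\{\mbox{every vertex of }P\mbox{ is a disagreement}\}$ is not meaningful: at step $i$ only $w_i$ is revealed, and the interior vertices of a path $P\in{\cal M}_i$ are, by the definition of ${\cal M}_i$, still \emph{unrevealed}, so they cannot have ``turned into a disagreement at the moment they are processed'' --- they are processed, if at all, only after $w_i$. Moreover, the priority rule already guarantees that $w_i$ has a revealed disagreeing neighbour, so your inclusion would collapse to the single trivial path and the bound $p_{w_i}(0)$, which is not what the claim asserts and is not valid in general (compare the case analysis in the proof of Proposition \ref{prop:StochasticDomination}, where the bound is $p_{w_i}(0)(1+d^{-1/12})$ or $q\,p_{w_i}(0)(1+d^{-1/10})$ precisely because several revealed disagreements act at once). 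The path sum in the claim is not about disagreements created by the dynamics along $P$; it is a spatial-mixing bound on the conditional marginal of $w_i$: its colour is drawn from the Gibbs marginal on the unrevealed region $B_i$ given two revealed configurations that may differ at \emph{several} vertices of $\partial B_i$ (and at $\Lambda$), and each such revealed disagreement influences $w_i$ through paths of unrevealed vertices.

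Because several revealed disagreements act simultaneously, the per-vertex ``most likely colour'' bound you invoke is not directly available either: that bound controls the disagreement probability only when the two boundary conditions seen by the vertex differ at a \emph{single} site. The paper's proof first telescopes, via the triangle inequality, over the disagreeing revealed vertices $x_1,\dots,x_m$ using boundary conditions $\tau_0,\dots,\tau_m$ that differ one site at a time, and then, for each single-site difference, runs an \emph{auxiliary} disagreement-percolation coupling of two random colourings $W_j,W_{j+1}$ of $B_i$; there the tree (or unicyclic) structure of $B_i$ guarantees at most one adjacent disagreement at each decision, which is what yields the factor $p_v(0)$ per vertex and the product over each path in ${\cal M}_{i,j}$. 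Your proposal has neither the interpolation step nor the auxiliary coupling, and the factorization ``along $P$ in the order the BFS reveals it'' cannot be carried out inside the dynamics' coupling. (Your side remarks are fine but do not repair this: Corollary \ref{corr:GMPBiasColour} indeed tolerates conditioning on already-revealed internal vertices since it allows arbitrary $B'\subseteq B$, and the overlap of paths is harmless for an upper bound.) To fix the argument you essentially need to reproduce the paper's two-stage structure: fix the revealed configurations, write the disagreement probability at $w_i$ as a distance between two conditional Gibbs marginals, interpolate over the boundary disagreements, and apply disagreement percolation to each single-site difference.
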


\noindent
The proof of Claim \ref{claim:DisPathProbs} appears after this proof.

\begin{claim}\label{claim:Degrees4CloseDis}
For  integer $i \geq 1$, assume that
$w_i$, at step $i$ of the coupling,   is within distance two from  at least two disagreements.  Then the following is true:

 If $w_i$ does not belong to a cycle inside $B$,  then, for every $v \in \ball(w_i,4)$ it holds that $\degree(v)\leq \TDeg$.
If   $w_i$  belongs to a cycle inside $B$, then there can be at most 2 paths in ${\cal M}_i$
of length $1$.
\end{claim}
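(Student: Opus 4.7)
Proof plan. The argument leverages the tree-like structure of $B$ (a tree with at most one extra edge) combined with the $r$-breakpoint weight condition~\eqref{def:PathWeight} on $\outBound B$ vertices, which amortizes high-degree vertices at distance $\Omega(\log d)$ from the boundary.

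I first dispose of Part 2, since it is direct. Suppose $w_i$ lies on the unique cycle $C\subseteq B$. Partition the neighbors of $w_i$ in $B$ into the two cycle-neighbors $c_1,c_2$ and the remaining tree-neighbors, each of which roots a pendant subtree of $B$ connected to $B\setminus\{w_i\}$ only through $w_i$. A disagreement in a pendant subtree would need to propagate from $\Lambda$ through $w_i$ first, but $w_i$ has not yet been revealed at step $i$, so no tree-neighbor can lie in $\theta^{i-1}$. Consequently the only length-$1$ paths in $\mathcal{M}_i$ are the candidates $[w_i,c_1]$ and $[w_i,c_2]$, yielding at most two.

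For Part 1, assume $w_i$ is not on the cycle of $B$; then by Definition~\ref{def:SparseBlockPart}(2c) the induced subgraph $B\cap\ball(w_i,4)$ is a tree. Each of the two revealed disagreements $u_1,u_2\in\theta^{i-1}$ within distance $2$ of $w_i$ can be traced through the coupling's chain of adjacencies to a breakpoint $v_j^\ast\in\Lambda\subseteq\outBound B$ via a path $\pi_j\subset B$ avoiding $w_i$ (since $w_i$ is unrevealed). Because $B\cap\ball(w_i,4)$ is a tree, the paths $\pi_1,\pi_2$ reach $w_i$'s neighborhood through distinct branches of $w_i$: otherwise $u_1,u_2$ would lie on a single chain of disagreement out of the same branch, which collapses to a single path in $\mathcal{M}_i$ rather than producing the two-branch configuration required of the hypothesis when invoked inside Proposition~\ref{prop:StochasticDomination}.

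Suppose for contradiction that some $v\in\ball(w_i,4)$ satisfies $\degree(v)>\TDeg$. Then by~\eqref{eq:DefVrtxWeight}, $W(v)=d^{15}\,\degree(v)>d^{16}$. Consider the unique path $\pi$ in $B$ from $v_1^\ast$ to $v$; it passes through one of the two branches of $w_i$ used by $\pi_1$, so its length is $\ell\le\diameter(B)\le r$, whence the $r$-breakpoint condition applies. Its weight satisfies
\[
W(\pi)\;\ge\;W(v)\,(1+\epsilon/10)^{-(\ell-1)}\;\ge\;d^{16}(1+\epsilon/10)^{-(\ell-1)},
\]
so $W(\pi)\le 1$ forces $\ell\ge 1+16\log_{1+\epsilon/10}(d)=\Omega(\log d)$. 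The analogous bound applies symmetrically to $v_2^\ast$. Concatenating $\pi_1$ and $\pi_2$ through a short arc in $\ball(w_i,2)$ then produces a closed walk of length $O(\diameter(B))$ in $B$; using Corollary~\ref{cor:FromBreakPointProd} to amortize the extra weight of $v$ against the low-degree padding along each branch forces the two paths to contain a common interior vertex, and hence to close into a cycle in $B$ passing within distance $4$ of $w_i$. This contradicts Definition~\ref{def:SparseBlockPart}(2c), which keeps the unique cycle of $B$ (if any) at distance much larger than $4$ from $w_i$ given that $w_i$ is off the cycle. The main technical obstacle is this cycle-forcing step: I expect Corollary~\ref{cor:FromBreakPointProd} to provide the decisive amortized weight accounting, since its bound on the number of high-degree vertices along a breakpoint-emanating path is precisely what quantifies the incompatibility of a high-degree $v$ with two disjoint breakpoint paths converging near $w_i$.
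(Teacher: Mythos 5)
Your Part~1 argument does not close, and you flag the problem yourself: the ``cycle-forcing step'' has no mechanism behind it. Concatenating $\pi_1$ and $\pi_2$, which end at two \emph{distinct} breakpoints $v_1^*\neq v_2^*$, gives a $v_1^*$--$v_2^*$ walk, not a closed walk, so no cycle is produced; Corollary~\ref{cor:FromBreakPointProd} is only a weight inequality and cannot ``force a common interior vertex''; and Definition~\ref{def:SparseBlockPart}(2c) keeps the cycle of $B$ far from $\outBound B$, not from an arbitrary off-cycle vertex $w_i$, which may well be adjacent to the cycle --- this also invalidates your opening assertion that $B\cap\ball(w_i,4)$ is a tree. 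The missing idea is that the hypothesis forces $w_i$ to within distance two of $\outBound B$, after which no amortization along long paths is needed. Concretely (reading ``disagreements'' as the endpoints relevant to ${\cal M}_i$, i.e.\ those joined to $w_i$ through uncolored vertices of $B$, which is how the claim is applied): a disagreement of an earlier, finished component is unreachable this way, since every neighbour of a finished component is already colored while $w_i$ and any intermediate vertex are not; and if two of the nearby disagreements belong to the current (connected) component, the $\le 2$-step connections from $w_i$ to them, together with the path joining them inside the component, close a cycle of $B$ through $w_i$, which is excluded when $w_i\notin C$. Hence some disagreement within distance two of $w_i$ lies in $\Lambda\subseteq\outBound B$, or is reached through a vertex of $\outBound B$; in either case an $r$-breakpoint $u$ with $r>\log\log n$ satisfies $\dist(u,w_i)\le 2$, so any $v\in\ball(w_i,4)$ with $\degree(v)>\TDeg$ would give a path from $u$ of length at most $6\le r$ and weight at least $(1+\epsilon/10)^{-6}\,d^{15}\,\degree(v)>1$, contradicting the breakpoint property. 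Your $\ell=\Omega(\log d)$ lower bound points in the right direction but is applied to the wrong vertex: the decisive fact is that $w_i$ itself sits at bounded distance from a breakpoint.

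Your Part~2 argument is also unsound, although the bound of two happens to be correct. The premise that a disagreement in a pendant subtree hanging from $w_i$ ``would need to propagate from $\Lambda$ through $w_i$ first'' is false: the vertices of $\outBound B$ attach to the pendant trees hanging from the cycle, so a component can enter the subtree rooted at a tree-neighbour $t$ of $w_i$ directly from a boundary disagreement attached to that subtree and reach $t$ without ever touching $w_i$; thus $t$ can lie in $\theta^{i-1}$. The correct count is: $w_i\in C$ has no neighbour in $\outBound B$ by Definition~\ref{def:SparseBlockPart}(2c); no neighbour of $w_i$ lies in an earlier, finished component (all its neighbours are colored, while $w_i$ is not); and at most two neighbours of $w_i$ can lie in the single connected component currently being grown, because three such neighbours, or a tree-neighbour together with a cycle-neighbour, would force a cycle of $B$ other than $C$, contradicting that $B$ is a tree with at most one extra edge. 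This gives at most two length-one paths in ${\cal M}_i$, with the extremal case being the two cycle-neighbours, but not for the reason you state.
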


\noindent
Claim \ref{claim:Degrees4CloseDis} follows easily from the definition of the set of blocks ${\cal B}$ and the way we
have defined the coupling for the update of block $B$, in Section \ref{sec:DisPerc}. For this reason we omit 
this proof.

For $i\geq 1$, let the event ${\cal A}_i:=$ ``$\theta^{i-1} \subseteq \bigcup_{u}\theta^{i-1}_u $.
We show that for every $i\geq 1$,  we have that
\begin{equation}\label{eq:Target4StochasticDomReal}
\Pr[ w\in \theta^i  \ | \ {\cal A}_i ] \leq \Pr[\cup_u  \left (w_i\in \theta^i_u  \right)\ | \ {\cal A}_i].
\end{equation}

\noindent
First we assume that  $B$ is a tree.  Let $q=|N(w_i)\cap \theta^{i-1} |$, i.e., $q$ is the number of  disagreement  right next to $w_i$ at step $i$.
We consider the following cases regarding $q$:  $q=1$,  $q>1$ and   $q=0$.

\paragraph{Case : $q=1$.}
Assume that $w_i$ is right next to $v \in N(w_i)\cap \theta^{i-1}$.
Furthermore, conditioning on the event ${\cal A}_i$ implies that 
there is a  non empty $\Lambda'\subseteq \Lambda$ such that for every $u\in \Lambda'$
we have    $v\in \theta^{i-1}_u $.

We consider two cases regarding the degree of $w_i$.
The first is $\degree(w_i)> \TDeg$ and the second is $\degree(w_i)<\TDeg$.
The first case is trivial since, by definition we have 
 $\Pr[\forall u\in \Lambda'\ w_i\in \theta^i_{u}  ]=1$.

We proceed with the case    $\degree(w_i)\leq \TDeg$.   
Note that $\Pr[w_i \in \theta^i \ | \ {\cal A}_i]$ is maximized when  there are $|\Lambda|-1$ disagreements at distance 2 from $w_i$, 
let $\bar{N} \subseteq N(w_i)$ contain the neighbors of $w_i$ which are  are adjacent to these  disagreements.
Letting $p_{\max}= \max_{z\in \bar{N}} \{ p_{z}(0)\}$,   Claims \ref{claim:DisPathProbs} implies that
\begin{eqnarray}
\Pr[w_i \in \theta^i \ | \ {\cal A}_i] & \leq  & p_w(0) + p_w(0) \sum_{z\in \bar{N}}p_{z}(0) 
\nonumber \\
 & \leq  & \textstyle 
\left ( 1 + |\bar{N}|\ p_{\max}  \right)p_w(0) 
\ \leq \ (1+d^{-1/12}) \  p_w(0), \label{eq:QEq1LowPTau} 
\end{eqnarray}
For \eqref{eq:QEq1LowPTau}  we use that  for any $z\in \bar{N}$
we have that $p_z(0) \leq Cd^{-1}$ and   $|\bar{N}|<|\Lambda| \leq  d^{9/10}$.
As far as $\theta^{i}_u$s are regarded, we have the following:
\begin{eqnarray}\label{eq:QEq1LowPSigma} 
\Pr[\cup_{u\in \Lambda'} \left ( w_i\in  \theta^{i}_{u} \right) \ | \ {\cal A}_i] &\geq &p_w(\delta) 
\ = \ (1+\epsilon^3)\ p_w(0),
\end{eqnarray}
where second derivation holds because $|\Lambda'|\geq 1$.
Eq.  \eqref{eq:Target4StochasticDomReal} follows from \eqref{eq:QEq1LowPSigma} and \eqref{eq:QEq1LowPTau}.

\paragraph{Case: $1<q\leq |\Lambda |$.} Due to Claim \ref{claim:Degrees4CloseDis} we have that
if $q>1$, then  $\degree(w_i)\leq \TDeg$. Furthermore, conditioning on ${\cal A}_i$, implies that 
there is a  non empty $\Lambda'\subseteq \Lambda$ such that $|\Lambda'|\geq q$, while
for every $u\in \Lambda'$ we have that   $v \in \theta^{i-1}_u $, where $v\in N(w_i)\cap \theta^{i-1}$.

Note that   ${\cal M}_i$ contains at most $|\Lambda|-q$ paths of length greater than 1.
This fact implies that the probability of having $w_i\in \theta^i$  is maximized
by assuming that there are $|\Lambda|-q$ disagreements at distance 2 from $w_i$.
Let $\bar{N}\subseteq N(w_i)$  contain the neighbors of $w_i$ that are adjacent to these disagreements.
Claim \ref{claim:DisPathProbs} implies that
\begin{eqnarray}
\Pr[w_i\in \theta^i  \ | \ {\cal A}_i] & \leq  & q\ p_{w_i}(0)+ p_w(0)\sum_{z\in \bar{N} }p_{z}(0)\nonumber \\
 & \leq  & q \ p_{w_i}(0) \textstyle 
\left ( 1 +   \sum_{z\in \bar{N}}p_{z}(0) \right)  \hspace*{1cm}\mbox{[since $q>1$]}\nonumber \\
 & \leq  & q \ p_{w_i}(0) \textstyle 
\left ( 1 +  d^{-1/10}\right). \label{eq:QEq1HighPTau} 
\end{eqnarray}
where the last derivation follows with the same arguments as those we use for \eqref{eq:QEq1LowPTau}.

 Applying inclusion-exclusion we have
\begin{eqnarray}
\Pr[\cup_{u\in \Lambda'} \left ( w_i\in  \theta^{i}_{u} \right) \ | \ {\cal A}_i]
&\geq & |\Lambda'| \  p_{w_i}(\delta)-{|\Lambda'| \choose 2}\left( p_{w_i}(\delta)\right)^2 \nonumber \\
&=& |\Lambda'| \  p_{w_i}(\delta)\left(1-(|\Lambda'|-1) p_{w_i}(\delta)/2\right)   \nonumber \\
&\geq& \textstyle (1+\delta)q \  p_{w_i}(0)\left(1-(Cd^{1/10})^{-1}\right), \label{eq:QEq1HighPSigma}
\end{eqnarray}
for large $C>0$.
For  the last derivation we use the following facts: it holds that  $q\leq |\Lambda'|\leq  d^{9/10}$.
Furthermore, according to Claim \ref{claim:Degrees4CloseDis},  we have $\degree(w_i)\leq \TDeg$. For such vertex it is easy to show that there exist appropriate constance $C>0$
such that  $p_{w_i}(\delta)\geq (Cd)^{-1}$.

Choosing  any fixed $\delta>0$ and large $d>0$, from  \eqref{eq:QEq1HighPSigma} and \eqref{eq:QEq1HighPTau},
we get that \eqref{eq:Target4StochasticDomReal} is indeed true.

\paragraph{Case: $q=0$.} This case  is  straightforward. Due to the way we define
the coupling, once  $q=0$ we have that $\Pr[ w_i\in \theta^i \ | \ {\cal A}_i ]=0$, whereas 
$\Pr[\cup_u  \left (w_i\in \theta^i_u \right)\ | \ {\cal A}_i]\geq 0$.
Then \eqref{eq:Target4StochasticDomReal} is indeed true.
\\ \vspace{-.2cm}

\noindent
Now consider the case where $B$ is {\em unicyclic}.  In such block the cycles are hidden away from $\outBound B$.
The order we consider the vertices in the coupling ensures  that we can only have more than 2 disagreements
around a vertex only when the vertex is close to the boundary. Recall that close to the boundary there are only
vertices of degree at most $\TDeg$.

For the case where $B$ is unicyclic we work as in   the case where $B$ is a tree. 
The cases where $w_i$ is not in the cycle of $B$
is identical to the previous, i.e., when  $B$ is a tree.  The case where $w_i$ belongs to the cycle 
of $B$ follows trivially because  in $p_{w_i}(\delta)=1$ for such a vertex.

We conclude with the single vertex block. This  is identical to the
case where $B$ is a tree and $q=|\Lambda|$. 
The proposition follows.
$\hfill \Box$

\begin{proof}[Proof of Claim \ref{claim:DisPathProbs}]

For the sake of simplicity consider a 
$X,Y$ be two random colorings of $B\cup \outBound B$ and
$X(\outBound)\oplus Y(\outBound)=\Lambda$.
Assume that $X(B)$ and $Y(B)$ are coupled as specified in Section \ref{sec:DisPerc}.

We reveal $\mathbold{\theta}$ in steps, as we reveal the configuration of $X(B)$ and $Y(B)$ in the coupling.
Assume that step $i$ we reveal vertex $w_i$ and let $\theta^{i}$ be the configuration of $\mathbold{\theta}_{\Lambda}$
we have revealed.

Let $B_i\subseteq B$ be the set of vertices whose coloring has not been specified at step $i$.
 Let $\partial B_i\subset B $ contain the vertices  in $B$ whose coloring has 
been  decided by step $i$  and they are next to a vertex whose color has not been specified. 
The claim follows by showing that 
\begin{equation}\label{eq:Target4claim:DisPathProbs}
\ExpCond{ \mathbf{1}\{X(w_i) \neq Y(w_i)\} }{ X(\outBound B_i), \ 
Y(\outBound  B_i) } \leq \sum_{ P \in  {\cal M}_i} \prod_{v \in P} p_{w}(0).
\end{equation}

Let ${\tt Dis} \subseteq  \partial  B_i $ contain every vertex $u$ such $X(u)\neq Y(u)$.
Clearly,  ${\cal M}_i$ be the set of paths in $B$ from $w_i$ to some vertex in ${\tt Dis}$ such that all
but the last vertex in the path belongs to $B_i$.

Let $x_1, x_2, \ldots, x_m$ be some arbitrary ordering of the vertices in ${\tt Dis}$. 
Let $\tau_0, \tau_2, \ldots \tau_m$ be colorings of $\outBound B_i$ such that 
$\tau_0=X(\outBound B_i),\tau_m=Y(\outBound B_i)$, while
$\tau_{j}$ and $\tau_{j+1}$ differ only on the assignment of vertex $x_i$. In particular, 
$\tau_{j-1}(x_i)=X(x_j)$ and $\tau_{j}(x_j)=Y(x_j)$.

For $j=0, \ldots m-1$, consider a coupling of $W_j, W_{j+1}$, two random colorings of 
$B_i\cap \outBound B_i$ such that $W_j(\outBound B_i)=\tau_i$ and $W_{j+1}(\outBound B_i)=\tau_{j+1}$.
Note that for each $W_j, W_{j+1}$, the boundary conditions differ on the assignment of  exactly one vertex. 
It holds that
\begin{eqnarray}\label{eq:ReductionToSeperateInf}
\lefteqn{
\ExpCond{ \mathbf{1}\{X(w_i) \neq Y(w_i)\} }{  X(\outBound B_i), \ Y(\outBound  B_i)} 
} \hspace{.6in} \nonumber \\
&\leq& \sum^{m-1}_{j=0}
\ExpCond{ \mathbf{1}\{W_j(w_i) \neq W_{j+1}(w_i)\} }{ W_j(\outBound B_i)=\tau_j, \ 
W_{j+1}(\outBound  B_i)=\tau_{j+1}}. 
\end{eqnarray}

\noindent 
 Let ${\cal M}_{i,j}$ be the set of paths in $B_i$ that connect $x_j$ to $w_i$. 
In the coupling of $W_{j}, W_{j+1}$ a path $P$  such that   $W_{j}(u)\neq W_{j+1}(u)$ 
for  every $u\in P$, is called path of disagreement.  
It holds that
\begin{eqnarray} 
\lefteqn{
\ExpCond{ \mathbf{1}\{W_j(w_i) \neq W_{j+1}(w_i)\} }{  W_j(\outBound B_i)=\tau_j, \  W_{j+1}(\outBound  B_i)=\tau_{j+1}}
} \hspace{2.4in} \nonumber \\
&\leq &  \sum_{  P\in {\cal M}_{i,j} }
\Exp{   \mathbf{1}\{P \textrm { is a path of disagreement} \}  }.  \label{eq:DisPercReduction}
\end{eqnarray}

\noindent
Recall that $B_i\subseteq B$ is a tree with at most one extra edge.  This implies that whenever  the coupling
of $W_j, W_{j+1}$ decides the coloring for some vertex $u$, if $u$ does not belong to a cycle, the maximum
number of disagreements in its neighborhood is at most one. If $u$ is on the cycle the maximum number of disagreements
in its neighborhood is at most 2. 

Furthermore, for a vertex $u$ whose number of disagreement in the neighborhood is at most 1, 
there is a coupling such that the probability of the event $W_j(u) \neq W_{j+1}(u)$ is upper bounded
by the probability of the most likely for $u$ in the two chains. 
In light of Corollaries \ref{corr:GMPBiasColour} and  \ref{cor:GeneralBiasColour} 
at each step of disagreement percolation which decides on vertex $u$, the probability of having a
 new disagreement is at most $p_{u}(0)$, as  defined in \eqref{eq:DefPu}.  
For  each $P\in {\cal M}_{i,j}$ we have that
\begin{equation}\label{eq:InfluenceBound}
\Exp{ \mathbf{1}\{P \textrm { is a path of disagreement} \}  } \leq \textstyle \prod_{v \in P} p_{w}(0).
\end{equation}

\noindent
The claim follows by combining  \eqref{eq:ReductionToSeperateInf}, \eqref{eq:DisPercReduction}, 
\eqref{eq:InfluenceBound} and get
\begin{eqnarray}
\ExpCond{ \mathbf{1}\{X(w_i) \neq Y(w_i)\}} { X(\outBound B_i), \ 
Y(\outBound  B_i)}  
&\leq& \sum^{m-1}_{j=0} \sum_{  P\in {\cal M}_{i,j} } \prod_{v \in P} p_{w}(0) 
\ \leq \ \sum_{  P\in {\cal M}_{i} } \prod_{v \in P} p_{w}(0). \nonumber
\end{eqnarray}
\end{proof}

\subsubsection{Proof of Lemma \ref{lemma:BigJumpInBurnIn}}\label{sec:lemma:BigJumpInBurnIn}

The proof of Lemma \ref{lemma:BigJumpInBurnIn} makes use of the concepts and results from Section \ref{sec:DisPerc}.

Consider the evolution of the maximally coupled $(X_t)_{t\geq 0}$ and $(Y_t)_{t\geq 0}$ 
such that $|X_0\oplus Y_0|=S$.  Assume  that we couple the two chains from time $0$  up to time 
$T=\min\{t', CN/\log d\}$, where $t'$ is the random time at which  we first have 
$|(X_{t'} \oplus Y_{t'})\cap \ImpVrtx |\geq (1+q)S$.

Letting $A_T=|D_{\leq T} \cap \ImpVrtx|$, it is direct to show that
\begin{equation}\label{eq:Basis4BigJump}
\Pr\left [ \exists t \in [0, CN/\log d]  \ \textrm{s.t. } \  |(X_t \oplus Y_t ) \cap \ImpVrtx |  \geq (1+q)S \right]  \leq
\Pr\left [ A_T  \geq (1+q)S \right].
\end{equation}

\noindent
The lemma will follow by bounding appropriately the r.h.s. of \eqref{eq:Basis4BigJump}.

Each time a block next to a disagreeing vertex is  updated, we say that we have a {\em disagreement
update.}  Since different disagreeing vertices can be adjacent to the same block, we can have
multiple disagreement updates with a single  block  update. 
Let $W$ be the number of disagreement updates up to time $T$. The number of new disagreements generated by  $W$ 
disagreement updates can be dealt by considering $W$  independent processes,  as implied by  
Proposition \ref{prop:StochasticDomination}. For $j=1,\ldots, W$, assume that the disagreement update
influences the block $B_j$ and involves vertex $w_j\in \outBound B_i$.   For each $w_i$, we define $\mathbold{\theta}_i$, 
distributed as  $\psi_{w_i}(\epsilon^3)$, independent with each the other.
Finally, for each $\mathbold{\theta}_i$ let  $\mathbold{\zeta}_i=\mathbold{\theta}_i \cap \ImpVrtx$.

Proposition \ref{prop:StochasticDomination} implies the following: 
for $m=7C(1+q)S\frac{\TDeg}{\log d}$, we have
\begin{eqnarray}
\Pr\left [ A_T  \geq (1+q)S \right]  & \leq & \Pr\left[ \sum_{j \in [W] } |\mathbold{\zeta}_j| \geq qS\right] 
\ \leq \ \Pr\left [\sum_{j\in [W]}| \mathbold{\zeta}_j| \geq qS \ | \ W \leq  m \right ]  + 
\Pr\left [W> m  \right]. \qquad \label{eq:4DefZetaj}
\end{eqnarray}

\noindent
So as to bound the second probability term in the r.h.s. of \eqref{eq:4DefZetaj} we use the following result.

\begin{claim}\label{lemma:Tail4DisUpdates}
In the setting of Lemma \ref{lemma:BigJumpInBurnIn},   and for $m=7C(1+q)S\frac{\TDeg}{\log d}$, we have that 
\[
\Pr[W> m] \leq \exp\left( - d/(2 \log d)\right).
\]
\end{claim}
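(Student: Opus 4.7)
The plan is to write $W$ as a sum of per-step contributions, exploit the stopping rule defining $T$ to bound their conditional means and individual sizes, and then apply Freedman's inequality (Theorem \ref{thrm:CnctrIneqIntro}) to the resulting martingale.

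First I would let $B_t$ denote the block updated at step $t$ and set $W_t := |D_{t-1}\cap \outBound B_t|$, so that $W=\sum_{t=1}^{T}W_t$. The stopping rule gives $|D_{t-1}|\leq (1+q)S$ for every $t\leq T$. A key structural input from Definition \ref{def:SparseBlockPart} is that every vertex $v\in \ImpVrtx$ has $\degree(v)\leq \TDeg$, so $v$ lies on $\outBound B$ for at most $\TDeg$ blocks $B$. This gives the pathwise bound
\[
\textstyle \sum_{B\in\mathcal{B}}|D_{t-1}\cap \outBound B| \;=\; \sum_{v\in D_{t-1}}|\{B : v\in \outBound B\}| \;\leq\; (1+q)S\,\TDeg,
\]
so $\ExpCond{W_t}{\mathcal{F}_{t-1}}\leq (1+q)S\,\TDeg/N$, and summing over $t\leq T$ yields $\sum_{t}\ExpCond{W_t}{\mathcal{F}_{t-1}}\leq C(1+q)S\,\TDeg/\log d = m/7$ pathwise.

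Next I would form the martingale differences $Y_t=W_t-\ExpCond{W_t}{\mathcal{F}_{t-1}}$. Because $W_t\leq |D_{t-1}|\leq (1+q)S$ for $t\leq T$, I can take the uniform bound $b=(1+q)S$, and the crude inequality $W_t^2\leq (1+q)S\cdot W_t$ then gives a total conditional variance bounded by $s=(1+q)S\cdot m/7$. Since $\{W>m\}\subseteq\{\sum_t Y_t>6m/7\}$, an application of Freedman with $\alpha=6m/7$ makes the denominator $2s+2\alpha b/3$ equal to $6m(1+q)S/7$, producing the exponent
\[
-\,\frac{(6m/7)^2}{2s+2\alpha b/3} \;=\; -\,\frac{6m}{7(1+q)S} \;=\; -\,\frac{6C\,\TDeg}{\log d}.
\]
Since $\TDeg\geq d$, this is of order $d/\log d$ and, for the constant chosen in $m$, dominates $d/(2\log d)$ as claimed.

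The main obstacle, and the reason a direct Chernoff bound is not enough, is that a single block update can trigger as many as $|D_{t-1}|$ simultaneous disagreement updates when several boundary disagreements share a common neighbouring block. This forces the worst-case increment bound $W_t\leq (1+q)S$, which by itself would make Freedman's quadratic term weak. The saving grace is the second-moment estimate $\ExpCond{W_t^2}{\mathcal{F}_{t-1}}\leq (1+q)S\cdot \ExpCond{W_t}{\mathcal{F}_{t-1}}$, which ties the conditional variance back to the much smaller conditional mean and keeps the concentration exponent at order $\TDeg/\log d$, as needed.
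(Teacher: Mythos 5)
Your proof is correct, but it takes a genuinely different route from the paper's. The paper decomposes $W$ per vertex rather than per step: it sets ${\tt Dis}=D_{\le T}\cap\ImpVrtx$, uses $W\le\sum_{u\in{\tt Dis}}W_u$ where $W_u$ counts all updates of blocks adjacent to $u$ up to time $T$ (regardless of whether $u$ is disagreeing at that moment), observes that each $W_u$ is stochastically dominated by ${\tt Binomial}(CN/\log d,\,\TDeg/N)$, and finishes with a Chernoff bound for each $W_u$ at level $7C\TDeg/\log d$ plus a union bound over the at most $(1+q)S\le\TDeg$ vertices of ${\tt Dis}$. That argument is more elementary, but it leans on a cardinality bound for the set of vertices that \emph{ever} disagree by time $T$, a cumulative quantity that the stopping rule controls less directly than the running count. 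Your per-step decomposition $W=\sum_{t\le T}\left|D_{t-1}\cap\outBound B_t\right|$ uses exactly what the stopping rule gives ($|D_{t-1}|\le(1+q)S$ for $t\le T$), exploits only the uniformity of the block choice for the conditional mean, and replaces the union bound by Freedman's inequality, which is already in the paper's toolbox (Theorem \ref{thrm:CnctrIneqIntro}); your second-moment trick $W_t^2\le(1+q)S\,W_t$ plays the role that binomial domination plays in the paper, and your exponent $6C\TDeg/\log d$ is of the same order as the paper's effective exponent $7C\TDeg/\log d-\log\TDeg$. Both arguments absorb the stated constant $1/2$ in the same slightly cavalier way (strictly one needs $C$ bounded below by an absolute constant, e.g.\ $C\ge 1/12$ in your bound, which the paper also implicitly assumes). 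For a fully rigorous writeup you should phrase the martingale with the stopped increments $Y_t\,\mathbf{1}\{t\le T\}$, noting that $\{t\le T\}$ is $\mathcal{F}_{t-1}$-measurable because $T$ is a stopping time; this is routine and does not affect the computation.
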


\begin{proof}
Let ${\tt Dis} \subseteq \ImpVrtx$ be the set of vertices which become disagreeing at least once during the time interval $[0, T]$, 
 For each $u\in {\tt Dis}$,  let  $W_u$ be the number of adjacent blocks that are updated up to time $T$.Note that $W_u$ does not
 consider whether $u$ is disagreeing when a neighboring block is updates.
This implies that $\textstyle W\leq \sum_{u\in {\tt Dis}}W_u.$
It turn, we get that
\begin{equation}\label{eq:WVSWus}
\Pr[W> m ] \leq \Pr[\exists u\in {\tt Dis} \ \textrm{s.t.} \ W_u\geq 7 C\TDeg/\log d ].
\end{equation}
Note that the above holds since we always have $|{\tt Dis}|\leq (1+q)S\leq d$.

Each vertex $u\in {\tt Dis}$ has degree at most $\TDeg$.  That is, there are at most $\TDeg$ blocks that
are neighboring to $u$. At each step we have a neighboring block  updated with probability, at most, $\TDeg/N$. Since $T\leq CN/\log d$, 
we have that $W_u$ is dominated by ${\tt Binomial}(CN/\log d, \TDeg/N)$. Using this observation and 
Chernoff bounds we get that 
\begin{equation}\label{eq:TailsForSingeDisagreementUpdate}
\Pr[W_u\geq 7 C\TDeg/\log d ] \le \exp\left( -7{ C\TDeg}/{\log d}\right).
\end{equation}
A   simple union bound over $u\in {\tt Dis}$, and \eqref{eq:TailsForSingeDisagreementUpdate}  implies the following 
\begin{equation}\label{eq:WuDeviateMuch}
\Pr[\exists u\in {\tt Dis} \ \textrm{s.t.} \ W_u\geq 7 C\TDeg/\log d ] \le \TDeg \exp\left( -{7 C\TDeg}/{\log d}\right).
\end{equation}
For the above inequality we also use the observation that  $|{\tt Dis} |\leq (1+q)S\leq \TDeg$. 
The claim follows by plugging  \eqref{eq:WuDeviateMuch} into  \eqref{eq:WVSWus}
and recalling that $d<\TDeg <2d$.
\end{proof}

\noindent
In light of Claim \ref{lemma:Tail4DisUpdates}, it suffices to show that 
\begin{equation}\label{eq:Target4:lemma:BigJumpInBurnIn}
\textstyle \Pr\left [\left . \sum^W_{j=1}|\mathbold{\zeta}_j |  \geq q S \  \right | \  W \leq m   \right] \ \leq \ \textstyle \Pr\left [  \sum^m_{j=1}|\mathbold{\zeta}_j |  \geq q S   \right] \ \leq \ 
{ \textstyle \exp\left( -3qS/(2C) \right)}.
\end{equation}
In the second inequality we may assume $m$, worst case, disagreement updates. 
It holds that
\begin{equation}
\Pr\left [  \sum^m_{j=1}|\mathbold{\zeta}_j |  \geq q S   \right]  \  \leq \   \sum_{ \substack{ (\alpha_1, \ldots, \alpha_m)\\ \sum_{i}\alpha_i=qS\\ \alpha_i\geq 0}}
\prod^m_{j=1} \Pr[ | \zeta_{j} |  \geq \alpha_j]  \  \leq  \ 
\sum_{ \substack{ (\alpha_1, \ldots, \alpha_m )\\ \sum_{i}\alpha_i=qS\\ \alpha_i\geq 0}}
 \prod_{\substack{ j \in [m ] \\ j: \ \alpha_j\neq 0}  } \Pr\left [ | \zeta_{j} | \geq \alpha_j \right ] \nonumber \\ 
 \end{equation}
 Then Proposition \ref{prop:SingleSourceBlockdis} implies that
 \begin{eqnarray}
\Pr\left [  \sum^m_{j=1}|\mathbold{\zeta}_j |  \geq q S   \right] &\leq &
\sum_{ \substack{ (\alpha_1, \ldots, \alpha_m)\\ \sum_{i}\alpha_i=qS\\ \alpha_i\geq 0}}
 \prod_{\substack{ j \in [m] \\ j: \ \alpha_j\neq 0}  } Cd^{-1} \exp\left(-\alpha_j /C \right)
\nonumber \\ 
&\leq & \exp\left(-2qS/C \right)
\sum_{ \substack{ (\alpha_1, \ldots, \alpha_m)\\ \sum_{i}\alpha_i=qS\\ \alpha_i\geq 0}}
 \prod_{\substack{ j \in [m] \\ j: \ \alpha_j\neq 0}  } Cd^{-1} 
 \hspace*{1.75cm}\mbox{[since $\sum_i\alpha_i=qS$]}\nonumber \\ 
&\leq &  \exp\left( -2qS/C \right) \sum^m_{r=1} {m \choose r} {qS-1 \choose r-1} (C/d)^{r}. \label{eq:TailBoundSumParts}
\end{eqnarray}

\begin{claim}\label{claim:EntropyFromPartitions}
Set $\ell=qS$.  It holds that
\begin{equation}\label{eq:claim:EntropyFromPartitions}
 \sum^m_{r=1} {m \choose r} {\ell-1 \choose r-1} (C/d)^{r} \leq \frac{m^2C}{d} \exp\left( \sqrt{1+\frac{4eC\ell m}{d}}  \right).
\end{equation}
\end{claim}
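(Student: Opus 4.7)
The plan is to bound each binomial coefficient by the standard $n^k/k!$ estimate, recognize the resulting series as being dominated by an exponential in $\sqrt{m\ell C/d}$, and then observe that the right-hand side has plenty of slack to absorb the resulting estimate.

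First, I would apply $\binom{m}{r}\le m^{r}/r!$ and $\binom{\ell-1}{r-1}\le \ell^{r-1}/(r-1)!$, extend the sum from $r\le m$ to $r<\infty$, and factor an $mC/d$ out of the $r$-th term. After substituting $j=r-1$ and writing $a:=m\ell C/d$, this reduces the problem to estimating
$$\sum_{r=1}^{m}\binom{m}{r}\binom{\ell-1}{r-1}\left(\frac{C}{d}\right)^{r}\;\le\;\frac{mC}{d}\sum_{j=0}^{\infty}\frac{a^{j}}{(j+1)!\,j!}\;\le\;\frac{mC}{d}\sum_{j=0}^{\infty}\frac{a^{j}}{(j!)^{2}},$$
where the last inequality uses $(j+1)!\ge j!$.

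Next, I would bound the Bessel-type series $\sum_{j\ge 0}a^{j}/(j!)^{2}$ by $e^{2\sqrt{a}}$. This follows from the elementary inequality $\binom{2j}{j}\le 4^{j}$ (immediate from $\binom{2j}{j}\le\sum_{k}\binom{2j}{k}=4^{j}$), which rearranges to $a^{j}/(j!)^{2}\le (4a)^{j}/(2j)!$. Summing gives $\sum_{j\ge 0}a^{j}/(j!)^{2}\le \cosh(2\sqrt{a})\le e^{2\sqrt{a}}$. Combined with the previous step, this yields the intermediate bound
$$\sum_{r=1}^{m}\binom{m}{r}\binom{\ell-1}{r-1}(C/d)^{r}\;\le\;\frac{mC}{d}\,\exp\!\left(2\sqrt{m\ell C/d}\right).$$

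Finally, I would compare exponents. Since $4m\ell C/d\le 1+4em\ell C/d$ (equivalently, $1+4(e-1)m\ell C/d\ge 0$, which is trivial), we have $2\sqrt{m\ell C/d}\le\sqrt{1+4eC\ell m/d}$. And since $m\ge 1$, the prefactor satisfies $mC/d\le m^{2}C/d$. Combining these two observations with the intermediate bound gives \eqref{eq:claim:EntropyFromPartitions}, in fact with slack: a factor $\sqrt{e}$ in the exponent and a factor $m$ in the prefactor. There is no substantive obstacle in the argument; the only point deserving care is the Bessel-type estimate $\sum_{j\ge 0}a^{j}/(j!)^{2}\le e^{2\sqrt{a}}$, which the comparison $\binom{2j}{j}\le 4^{j}$ handles in an elementary way, avoiding any appeal to special functions.
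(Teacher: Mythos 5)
Your proof is correct, and after the common opening (bounding $\binom{m}{r}\le m^r/r!$, $\binom{\ell-1}{r-1}\le \ell^{r-1}/(r-1)!$, factoring out $mC/d$ and setting $a=m\ell C/d$) it diverges from the paper in the key estimate. The paper bounds the generic term $\bigl[\tfrac{mC(\ell-1)e^2}{j(j+1)d}\bigr]^j$ pointwise: it applies Stirling's lower bound on the factorials and then maximizes the function $f(x)=(A/(x(x+1)))^x$ over real $x>0$ by calculus, which produces the exponent $\sqrt{1+4eA/e^2\cdot e}=\sqrt{1+4e\ell mC/d}$ and forces an extra factor $m$ from summing $m$ identical worst-case terms, hence the $m^2C/d$ prefactor. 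You instead sum the whole series at once, comparing $\sum_j a^j/((j+1)!\,j!)\le\sum_j a^j/(j!)^2$ with $\cosh(2\sqrt a)$ via the elementary inequality $\binom{2j}{j}\le 4^j$, which rearranges to $a^j/(j!)^2\le (2\sqrt a)^{2j}/(2j)!$. This is both more elementary (no Stirling, no optimization) and strictly sharper: you get prefactor $mC/d$ rather than $m^2C/d$ and exponent $2\sqrt a$ rather than $\sqrt{1+4ea}$, and the final comparisons $2\sqrt a\le\sqrt{1+4ea}$ and $m\le m^2$ are trivial, so the claimed bound follows with room to spare; your argument also holds for all parameter values, whereas the paper's optimization step implicitly assumes $A\gg1$. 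The only cosmetic caution is to note explicitly that extending the sum to $r=\infty$ and replacing $\ell-1$ by $\ell$ are legitimate because all terms are nonnegative — which you effectively do.
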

\begin{proof}
Applying Stirling's approximation for factorial $s!\geq \sqrt{2\pi s}(s/e)^s$, we have that
\begin{eqnarray}
\sum^m_{r=1} { m \choose r} {\ell-1 \choose r-1} (C/d)^{r} &=&\sum^{m}_{r=1} \frac{1}{(r-1)!r!}\left[ \frac{mC(\ell-1)}{d}\right]^{r-1}\frac{mC}{d}
\nonumber \\
&\leq &\sum^{m-1}_{j =0} \frac{1}{j! (j+1)!}\left[ \frac{mC(\ell-1)}{d}\right]^{j}\frac{mC}{d} \qquad\qquad\mbox{[we set $j=r-1$]}
\nonumber \\
&\leq &\sum^{m-1}_{j =0} \frac{1}{2\pi \sqrt{j (j+1)}}
\left[ \frac{mC(\ell-1)e^2}{j (j+1)d}\right]^{j} \frac{mC e}{d}. \label{eq:Target4ClainEntropyFromPartitions}
\end{eqnarray}
Consider the function $f(x)=(A/(x(x+1)))^x$, for real $x>0$ and $A\gg 1$. Direct calculations imply  that 
$f'(x)=f(x)\left( \log\left(\frac{A}{x(x+1)}\right)-\frac{2x+1}{x+1}\right)$. Since $f(x)>0$, for any $x>0$,
and the fact that $\log\left(\frac{A}{(x(x+1)}\right)$ is monotonically decreasing and $\frac{2x+1}{x+1}$
is monotonically increasing, imply that the equation $f'(x)=0$ has at most one solution. 
In particular, it has one solution  $x_0$ which satisfies 
\begin{equation}\label{eq:GeneralX0VsA}
\frac{A}{x_0(x_0+1)}=\textstyle \exp\left( 2\left (1-\frac{1}{2(x_0+1)} \right)\right).
\end{equation}
Noting that the above implies that $e<\frac{A}{x_0(x_0+1)}$,
elementary calculations yield 
$x_0 \leq \frac{-1+\sqrt{1+4A/e}}{2}$.
From the definition of $f(x)$ and \eqref{eq:GeneralX0VsA} we have that
\begin{eqnarray}
f(x_0)&=&  \textstyle \exp\left( 2x_0\left (1-\frac{1}{2(x_0+1)} \right)\right) 
\ \leq \ 
\exp\left( 2x_0\right) \ \leq \ 
 \exp\left(-1+ \sqrt{1+4A/e} \right),  \label{eq:UpperBound4MaxF}
\end{eqnarray}
where in the first inequality we use that $x_0>0$.
Substituting $A$ with $\frac{MC(qS-1)e^2}{d}$, then \eqref{eq:UpperBound4MaxF}  implies that 
for any integer $j>0$ we have that
\begin{equation}\label{eq:BoundForMSOverD}
\left[ \frac{mC(\ell-1)e^2}{j (j+1)d}\right]^{j} \leq \exp\left(-1+ \sqrt{1+4mC(\ell-1)e/d} \right).
\end{equation}
Plugging \eqref{eq:BoundForMSOverD} into \eqref{eq:Target4ClainEntropyFromPartitions} 
we get \eqref{eq:claim:EntropyFromPartitions}. The claim follows.
\end{proof}

\noindent
Recall that $m=7C(1+q)S\frac{\TDeg}{\log d}$.
For any  $S \leq d^{4/5}$ and any $q>(\log d)^{-1/2}$  it holds that $\lim_{d\to \infty}\frac{\sqrt{qSM/d}}{qS}=0$.
Combining  this observation, Claim \ref{claim:EntropyFromPartitions}, 
and   \eqref{eq:TailBoundSumParts} we get \eqref{eq:Target4:lemma:BigJumpInBurnIn}.
The lemma follows.

\section{Proof of Lemma \ref{lemma:Combined-BurnIn}}\label{sec:lemma:Combined-BurnIn}

Lemma \ref{lemma:Combined-BurnIn} follows as a corollary from the following two results.

\begin{lemma}\label{lemma:RateOfDisSpreadBurnIn}
For all $\epsilon, \Delta, C>0$, there exist $C', d_0>0$, 
such that for  all $d > d_0$, for $k\geq (\alpha+\epsilon)d$ 
and every  graph  $G\in \IntrstGraphFam(\epsilon, d, \Delta)$, where $\Delta>0$ can depend on $n$
 the following is true:

Let $(X_t)_{t\ge 0}$ and $(Y_t)_{t\ge 0}$ be   two copies of block dynamics  such that
$X_0\oplus Y_0=\{u^*\}$, for some vertex $u^*$. There is a coupling such that for any  $
1 \leq \ell<d^{4/5}$, we have 
\begin{equation}\nonumber
\Pr \left [|D_{ \leq CN} | \geq \ell \right ]\leq \textstyle C'\exp\left( -\ell^{\frac{99}{100}} C' \right).
\end{equation}
\end{lemma}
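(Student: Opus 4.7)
The plan is to iterate Lemma~\ref{lemma:BigJumpInBurnIn} across $K=\Theta(\log d)$ consecutive epochs of length $r=CN/\log d$ that tile $[0,CN]$, using the Markov property of the block dynamics to reset the disagreement count at each epoch boundary, and then union-bound the failure events. A preliminary reduction handles $u^*\notin\ImpVrtx$: $u^*$ is then interior to some block $B_{u^*}$, the identity coupling applies to every other block update (creating no boundary disagreement), and updating $B_{u^*}$ kills the initial disagreement, so $|D_{\le CN}|=0<\ell$. Assume henceforth $u^*\in\ImpVrtx$, so $|D_{\le 0}|=1$.

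Fix a deterministic schedule $1=L_0<L_1<\cdots<L_K=\ell$ with per-epoch ratios $q_i=L_i/L_{i-1}-1$. The Markov property together with Lemma~\ref{lemma:BigJumpInBurnIn} applied at epoch $i$ with $S=L_{i-1}$ gives
\[
\Pr\bigl[|D_{\le ir}|\ge L_i\bigm|\,|D_{\le(i-1)r}|\le L_{i-1}\bigr]\le C'\exp\bigl(-(L_i-L_{i-1})/C'\bigr),
\]
whenever the hypotheses $q_i\ge(\log d)^{-1/2}$ and $L_i\le d^{9/10}$ hold; the latter is automatic from $\ell<d^{4/5}$. Summing over epochs, the problem reduces to choosing the $L_i$ so that the smallest increment satisfies $L_i-L_{i-1}\ge\ell^{99/100}$, after which the $\log d$ overhead from the union bound is absorbed into $C'$. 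For $\ell\ge(\log d)^{100}$, roughly equal increments $L_i-L_{i-1}\asymp\ell/\log d$ suffice since $\ell^{1/100}\ge\log d$. For $\ell<(\log d)^{100}$, a single ``big-jump'' first epoch sending $|D|$ from $1$ to $\ell^{99/100}$ (valid as $q_1=\ell^{99/100}-1\ge(\log d)^{-1/2}$ away from $\ell=O(1)$, and $\ell^{99/100}<d^{(4/5)(99/100)}<d^{9/10}$) produces the dominant failure term $\exp(-\ell^{99/100}/C')$; subsequent epochs of geometric growth reach $\ell$ and each contribute $\exp(-q_iL_{i-1}/C')\le\exp(-\Omega(\ell^{99/100})/C')$ since $L_{i-1}\ge\ell^{99/100}$. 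The trivial case $\ell=O(1)$ is handled by noting that the target bound $C'\exp(-\ell^{99/100}/C')$ already exceeds $1$.

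The main obstacle is the scheduling itself: Lemma~\ref{lemma:BigJumpInBurnIn}'s constraint $q\ge(\log d)^{-1/2}$ couples the minimum per-epoch increment to the current disagreement count, which interacts awkwardly with the fixed $K=\Theta(\log d)$-epoch budget and prevents any single uniform schedule from working across all $\ell\in[1,d^{4/5})$. The fractional $1/100$ in the exponent $\ell^{99/100}$ is precisely the slack needed for two different schedules (one for large $\ell$, one for small) to each satisfy $\min_i(L_i-L_{i-1})\gtrsim\ell^{99/100}$ while absorbing the logarithmic union-bound overhead. A secondary technicality is that Lemma~\ref{lemma:BigJumpInBurnIn} uses the total disagreement count $|X_0\oplus Y_0|$ rather than just $|D_{\le 0}|$; at each epoch boundary one must therefore invoke Proposition~\ref{prop:Tail4ZetaIntro} to bound the internal (non-boundary) disagreements generated during the previous epoch by a constant multiple of $L_{i-1}$ with tail probability comparable to $\exp(-\ell^{99/100}/C')$, thereby closing the iteration.
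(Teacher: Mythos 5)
Your overall strategy---tiling $[0,CN]$ by $\Theta(\log d)$ epochs of length $CN/\log d$, chaining Lemma~\ref{lemma:BigJumpInBurnIn} through the Markov property, and making a single jump to scale $\ell^{99/100}$ the dominant cost---is the same strategy the paper's proof uses, and the technicality you flag about $|X_0\oplus Y_0|$ versus $|D_{\le 0}|$ is real but harmless (interior disagreements are never adjacent to another block, so only boundary disagreements feed the percolation argument behind Lemma~\ref{lemma:BigJumpInBurnIn}; the paper relies on this silently). The genuine gap is in the scheduling, which is the entire remaining content of the proof. Your displayed per-epoch bound is only available when $q_i=L_i/L_{i-1}-1\ge(\log d)^{-1/2}$, and neither of your schedules meets this. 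For $\ell\ge(\log d)^{100}$ with equal increments $L_i-L_{i-1}\asymp\ell/\log d$ you have $L_{i-1}\asymp i\,\ell/\log d$, hence $q_i\asymp 1/i<(\log d)^{-1/2}$ for all but the first $\sqrt{\log d}$ epochs, so Lemma~\ref{lemma:BigJumpInBurnIn} cannot be invoked on most epochs. For small $\ell$, constant-ratio geometric growth from $\ell^{99/100}$ reaches $\ell$ after $O(\log\ell)\ll\log d$ epochs; on the remaining ``plateau'' epochs the conditioning $|D_{\le(i-1)r}|<\ell$ gives no lower bound on the relative size of the jump needed to cross $\ell$, so the hypothesis $q\ge(\log d)^{-1/2}$ again fails, while spreading the growth evenly over all $\log d$ epochs makes the ratios about $\frac{\ln\ell}{100\log d}$, which is also too small. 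In fact no deterministic ladder with a threshold at every epoch boundary, $L_K\le\ell$ (needed so that the union of first-crossing events covers $\{|D_{\le CN}|\ge\ell\}$), $\Theta(\log d)$ epochs, all increments $\gtrsim\ell^{99/100}$ and all ratios $\ge 1+(\log d)^{-1/2}$ can exist when, say, $(\log d)^{2}\le\ell\le e^{\sqrt{\log d}}$: the ratio constraint forces $L_K\ge \ell^{99/100}e^{c\sqrt{\log d}}>\ell$, contradicting coverage.

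The paper avoids this by not placing a threshold at every epoch. It conditions on the (random) first interval $j'$ at which $|D_{\le t_{j'}}|$ exceeds $\ell^{1-\gamma}$ for a small fixed $\gamma$, bounds the overshoot event $\{|D_{\le t_{j'}}|\ge\hat C\ell^{1-\gamma}\}$ by one application of Lemma~\ref{lemma:BigJumpInBurnIn} (an additive jump of $(\hat C-1)\ell^{1-\gamma}$ from a base at most $\ell^{1-\gamma}$, so the relative jump is a large constant), and then bounds the probability that any later interval exhibits a multiplicative increase by the fixed factor $1+\gamma/2$ from a base already of size $\ge\ell^{1-\gamma}$; in every one of these applications the relative jump is at least the constant $\gamma/2\ge(\log d)^{-1/2}$ and the additive jump is at least $(\gamma/2)\ell^{1-\gamma}$, so both hypotheses of Lemma~\ref{lemma:BigJumpInBurnIn} hold, and the union over the $\log d$ intervals is absorbed into the constants. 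To repair your argument you would need to replace the deterministic ladder by a decomposition of this kind (first passage of the level $\ell^{1-\gamma}$, plus ``no interval thereafter grows by a fixed constant factor''), rather than tune the $L_i$.
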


\begin{proof} 
Let the time interval ${\cal I}=[0, T]$, where $T=CN$. Consider the partition of ${\cal I}$ into 
$\log d$ time intervals ${\cal I}_1, \ldots, {\cal I}_{\log d}$ such that  $|{\cal I}_j|=\left \lceil |{\cal I}|/\log d \right \rceil$
(the last interval  can be smaller).
We let $t_j$ be the first time step in ${\cal I}_j$, e.g., ${\cal I}_j=[t_j, \ldots, t_{j+1}-1]$.
Also, we fix some small number $0<\gamma<10^{-3}$, independent of $d$.

Let $j'$ be the minimal $j\in [1, \ldots, \log d]$ such that  $|D_{\leq  t_{j'}} | >\ell^{1-\gamma}$.
That is, for any $j<j'$ we have $|D_{\leq  t_{j}} |  \leq \ell^{1-\gamma}$.
Let $\hat{C}>0$ be a large and  let ${\cal A}$ be the event that $|D_{\leq  t_{j'}}| \geq \hat{C} \ell^{1-\gamma}$.  
It holds that
\begin{equation}\label{eq:basis4lemma:RateOfDisSpreadBurnIn}
\Pr \left [|D_{ \leq CN} | \geq \ell \right ] \leq \Pr[{\cal A}]+
\Pr \left [|D_{ \leq CN} | \geq \ell \ | \  {\cal A}^c\right ].
\end{equation}
First consider $\Pr[{\cal A}]$. If $|D_{\leq  t_{j'}} | \geq \hat{C} \ell^{1-\gamma}$
and  $|D_{\leq  t_{j'-1}} |  \leq  \ell^{1-\gamma}$, then during the interval ${\cal I}_{j-1}$
there was a ``big jump" on the number of disagreements in $\ImpVrtx$. That is, more than
$(\hat{C}-1) \ell^{1-\gamma}$ new disagreement where created.  From  Lemma \ref{lemma:BigJumpInBurnIn}
we get that such a jump only occurs with probability at most  $C_1\exp\left(- \ell^{1-\gamma}/C_0 \right)$,
for large constants constant $C_0, C_1>0$. This implies that
\begin{equation}\label{eq:PrABoundRateOfDisSpreadBurnIn}
\Pr[{\cal A}] \leq C_1\exp\left(- \ell^{1-\gamma} C_0 \right).
\end{equation}

\noindent
Assuming   that $|D_{\leq  t_{j'}} |  < \hat{C} \ell^{1-\gamma}$, 
so as to have $|D_{ \leq CN} | \geq \ell$,   there should be at least one $j\geq j'$ such that during 
the interval  ${\cal I}_{j}$   the number of disagreements increased by a factor, more than,
$(1+\gamma/2)$.  From  Lemma \ref{lemma:BigJumpInBurnIn}
we have that such a jump occurs with probability at most $C_2\exp\left(- \ell^{1-\gamma}/C_3 \right)$,
for appropriate constants $C_2,C_3>0$. This implies that
\begin{equation}\label{eq:PrCondOnACBoundRateOfDisSpreadBurnIn}
\Pr \left [|D_{ \leq CN}| \geq \ell \ | \  {\cal A}^c\right ] \leq C_2\exp\left(- \ell^{1-\gamma} C_3 \right).
\end{equation}
The lemma follows by plugging \eqref{eq:PrCondOnACBoundRateOfDisSpreadBurnIn} and \eqref{eq:PrABoundRateOfDisSpreadBurnIn} into 
\eqref{eq:basis4lemma:RateOfDisSpreadBurnIn}.
\end{proof}

\begin{proposition}\label{thrm:BallResult}
In the same setting as Theorem \ref{thrm:RapidMixingAvrgDegGraph176} the following is true:

Let $(X_t)_{t\geq 0}$ and $(Y_t)_{t\geq 0}$ be two copies of block dynamics. Assume that $X_0\oplus Y_0=\{u^*\}$, for some vertex $u^*$. 
For $r=\left \lfloor \epsilon^{-3} (\log d)\sqrt{d}  \right  \rfloor $ It holds that
\[
\Pr \left [ \left( D_{\leq CN}  \right) \not\subseteq 
\ball\left (u^*, r \right) \right] \leq \textstyle 2\exp\left( -d^{0.49}/C \right).
\]
\end{proposition}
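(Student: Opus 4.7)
The plan is to combine the tail bound on the total number of disagreements (Lemma \ref{lemma:RateOfDisSpreadBurnIn}) with a per-block-update bound on the graph-distance reach of disagreement percolation obtained via Proposition \ref{prop:StochasticDomination} and the breakpoint growth bound of Lemma \ref{lemma:GrowthFromBP}. First, I apply Lemma \ref{lemma:RateOfDisSpreadBurnIn} with $\ell = \lceil d^{1/2}\rceil$ to conclude that the event $\mathcal{A} := \{|D_{\leq CN}| \geq d^{1/2}\}$ has probability at most $C'\exp(-d^{99/200}/C') \leq \exp(-d^{0.49}/C)$ for large enough $d$. It therefore suffices to bound $\Pr\left[\{D_{\leq CN} \not\subseteq \ball(u^*, R)\} \cap \mathcal{A}^c\right]$.

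On $\mathcal{A}^c$, every $v\in D_{\leq CN}$ admits a \emph{lineage}: a time-ordered chain of disagreement-producing block updates $U_1,\dots,U_s$ with $s\leq d^{1/2}$ that traces $v$ back to $u^*$. Each $U_i$ updates a block $B_i$ having a boundary disagreement at some vertex $u_i$ at that moment, and by Proposition \ref{prop:StochasticDomination} the new disagreements created inside $B_i$ are stochastically dominated by an independent Bernoulli percolation cluster $\mathbold{\theta}_{u_i}$ with vertex probabilities $p_w(\epsilon^3)$ from \eqref{eq:DefPu}. Writing $\ell_i$ for the maximum graph-distance depth reached inside $B_i$ from $u_i$ by $\mathbold{\theta}_{u_i}$, the graph distance from $u^*$ to $v$ is at most $\sum_{i=1}^{s}\ell_i$. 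Since each $u_i$ is an $r$-breakpoint (or adjacent to one), Lemma \ref{lemma:GrowthFromBP} together with Corollary \ref{cor:FromBreakPointProd} shows, exactly as in the derivation at \eqref{eq:EDEllTGenBound4Bias}, that the expected number of depth-$\ell$ vertices in $\mathbold{\theta}_{u_i}$ is at most $C_0(1+\epsilon/4)^{-\ell}$; Markov then gives $\Pr[\ell_i\geq \ell]\leq C_0(1+\epsilon/4)^{-\ell}$, so each $\ell_i$ is dominated by a geometric random variable of parameter $\Omega(\epsilon)$, and the $\ell_i$'s along a fixed lineage are independent (each arising from the fresh randomness of a distinct block update).

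A standard Chernoff bound then gives $\Pr\left[\sum_{i=1}^s\ell_i\geq R\right]\leq \exp(-c\epsilon R)$ whenever $R\gg s/\epsilon$, which is the case here since $R=\epsilon^{-3}(\log d)\sqrt{d}$ and $s\leq \sqrt{d}$. A union bound over the at most $\sqrt{d}$ distinct lineages contributes a total of at most $\sqrt{d}\cdot\exp(-c\epsilon R)\leq \exp(-\sqrt{d})$, which combined with $\Pr[\mathcal{A}]$ yields the stated $2\exp(-d^{0.49}/C)$. \emph{The main obstacle} is making the lineage/independence argument formal: because the choice of which block is updated at each step is random and lineages may branch or share prefixes, one must carefully argue that the dominating percolation clusters from Proposition \ref{prop:StochasticDomination} indeed produce mutually independent $\ell_i$'s along a single lineage and control the combinatorial explosion when union-bounding over possible lineage structures --- this is essentially the same technical challenge addressed in the proof of Lemma \ref{lemma:BigJumpInBurnIn} via Claim \ref{lemma:Tail4DisUpdates} and the partition enumeration of Claim \ref{claim:EntropyFromPartitions}, and I would reuse those devices here.
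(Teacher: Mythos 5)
Your first step matches the paper: the event $\{|D_{\leq CN}|\geq \sqrt{d}\}$ is killed by Lemma \ref{lemma:RateOfDisSpreadBurnIn}, exactly as the paper conditions on its event ${\cal E}$ that $|\Dis_{\leq T}|\leq\sqrt d$. The rest of your argument, however, has a genuine gap in the union-bound/conditioning structure. You bound, for a \emph{fixed} lineage, the tail of $\sum_i\ell_i$ by a Chernoff bound for independent geometric depths, and then union "over the at most $\sqrt d$ distinct lineages". But the lineages are random objects determined by the very same randomness that generates the $\ell_i$'s, so you cannot fix a realized lineage and invoke "fresh randomness" independence; and the number of realized causal chains is in any case not bounded by $\sqrt d$. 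A rigorous version must sum over \emph{deterministic} candidate objects, which is what the paper does: it union bounds over the $\leq((1+\epsilon/3)d)^{R+1}$ vertices $w$ of the sphere at distance $R+1$ (Lemma \ref{lemma:GrowthFromBP}), uses that the ball induces a tree (Definition \ref{def:SparseBlockPart}, conditions 2(c) and 3) so each $w$ has a unique path ${\cal P}(u^*,w)$, and bounds the probability of propagating along that specific path, block segment by block segment, by $(1.45)^{r_i}((1+\epsilon)d)^{-\ell_i}$ (Proposition \ref{prop:StepIProbPropag}). The point is that the per-path probability $\approx((1+\epsilon)d)^{-R}$ is what beats the $((1+\epsilon/3)d)^{R}$ path entropy. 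Your cluster-depth tail $\Pr[\ell_i\geq\ell]\lesssim(1+\epsilon/4)^{-\ell}$ has already \emph{aggregated} the within-block path entropy, so it cannot be reused to account for the choice of which boundary vertex seeds the next step; consequently the per-lineage bound $\exp(-c\epsilon R)$ is far too weak to survive any honest enumeration of candidate propagation routes.

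A second, related gap is the treatment of repeated updates of the same block. The distance a disagreement travels inside a block over $[0,CN]$ is not the depth of a single update's dominating cluster: an update can push the disagreement partway along the path, spill disagreements onto breakpoints outside the block which persist between updates, and a later update of the same block is then re-seeded from a deeper position. The paper's Proposition \ref{prop:StepIProbPropag} is devoted to this (the $\gamma(U)$ sum over the number of updates at which progress is made, with Claims \ref{claim:IJRProbUpperBound} and \ref{claim:BPDisProbUpperBound} bounding each such restart by $(\log d)^{5}/d$), and it is also why the conditional bound is stated with respect to the filtration ${\cal F}_i$ rather than via independence. Your suggestion to reuse Claim \ref{lemma:Tail4DisUpdates} and Claim \ref{claim:EntropyFromPartitions} does not fill this hole: those devices control the \emph{number} of disagreements created, not the distance advanced across multiple correlated updates of one block, and the claimed independence of the $\ell_i$'s fails precisely for lineage steps that revisit a block whose seeding depends on earlier steps.
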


\noindent
For the full proof of  the proposition see in Section \ref{sec:thrm:BallResult}.

\subsection{Proof of Proposition \ref{thrm:BallResult}}\label{sec:thrm:BallResult}

Recall that for each time $t\geq 0$ let  $\Dis_t=(X_{t}\oplus Y_t)\cap \ImpVrtx$.  Also, we let 
$\Dis_{\leq t}=\bigcup^t_{s=0} \Dis_s.$
Also, we let $\WDis_t=(X_{t}\oplus Y_t)$, i.e, as opposed to $\Dis_t$, $\WDis_t$
is not restricted to $\ImpVrtx$. Analogously to $\Dis_{\leq t}$, we define
$$
\WDis_{\leq  s} =\bigcup^s_{t=0}\WDis_t
$$

Let $C'>\epsilon^{-3}$ and let $R=C' (\log d)\sqrt{d}$. The subgraph of $G$ induced by  $\ball(u^*, R+1)$ is a tree.
This follows from the condition 2.c of Definition \ref{def:SparseBlockPart}.
Let $T_0$ be the random time at which $\WDis_{\leq T_0}$ includes, for the first time, 
vertices outside $\ball(u^*, R)$.  For $T=\min\{T_0, CN\}$,  
let ${\cal A}$ be the event that  $\WDis_{\leq T}  \not\in \ball(u^*, R)$.  
Also, let ${\cal E}$ be the event that $|\Dis_{\le T}|\leq \sqrt{d}$.
It holds that
\begin{equation} 
\Pr \left [ D_{\leq T} \not\in \ball(u^*, R) \right] \  \leq \   \Pr[{\cal A}]  
\  \leq\  \Pr[{\cal E}^c]+\Pr[{\cal A}\ |\ {\cal E}]. \label{eq:Base4TailBound4DtGamma}
\end{equation}
The proposition  follows by bounding appropriately $\Pr[{\cal E}^c]$, $\Pr[{\cal A}\ |\ {\cal E}]$.

Noting that  $T\leq CN$, Lemma \ref{lemma:RateOfDisSpreadBurnIn} implies that 
\begin{equation}\label{eq:CalE^cTailBound}
\Pr[{\cal E}^c]\leq  \textstyle \exp\left( -d^{1/2-\gamma}/C\right).
\end{equation}

\noindent
As far as $\Pr[{\cal A}\ |\ {\cal E}]$ is regarded, we have the following:
Consider some vertex $w\in S(u^*,R+1)$. 
Let ${\cal P}(u^*,w)$ be the  unique  path that connects $u^*,w$ in $\ball(u^*, R+1)$.
Let $B_1, B_2, \ldots, B_h$ be the sequence of block we encounter as we traverse the  
${\cal P}(u^*,w)$ from $u^*$ towards $w$. 

Consider the subpath induced by ${\cal P}(u^*,w)\cap B_j$, for every $j\in[h]$. Let $v^j_a, v^j_b$ be the first and
the last vertex in this subpath  as we traverse vertices from $u^*$ to $w$. It could be that
$v^j_a, v^j_b$ are identical, i.e., for some $j$ we have $|{\cal P}(u^*,w)\cap B_j| =1$.
Let $\ImpVrtx_j$ be the set that contains every $u\in {\cal P}(u^*,w)\cap B_j\cap \ImpVrtx$.
Note that if $j<h$, then both $v^j_a, v^j_b\in \ImpVrtx_j$. Also, it holds that $v^h_a\in \ImpVrtx$,
whereas $v^h_b=w$ could be an internal vertex of $B_h$. 

For every $i\in [h]$, let $t_i\in [T]\cup\{\infty\}$ be the least $t$ such that $v^i_b\in D_{\leq t}$. 
So as to have $w\in D_{\leq T}$, it is necessary to have $t_h\leq T$.
Let $Q_w$ be the event that  $t_h\leq T$.

Since every for every $i <q$ we have $v^i_{a}, v^i_b\in \ImpVrtx$,
conditioning on the event ${\cal E}$ implies that $h\leq \sqrt{d}$.
With this observation and  a simple union bound, we get that
\begin{equation}\label{eq:PrAEReductFromUnionBound}
\textstyle \Pr[{\cal A}\ |\ {\cal E}] \ \leq \   \sum_{w\in \hat{S}(u^*, R+1) }\Pr[Q_w \ | \ {\cal E}],
\end{equation}
where   $\hat{ S}(u^*,R+1)\subseteq S(u^*,R+1)$ contains the  vertices $u$  such that 
the path ${\cal P}(u^*, u)$ contains   at most $\sqrt{d}$ vertices in $\ImpVrtx$.
We  get a upper bound  for $\Pr[{\cal A}\ |\ {\cal E}] $,  by   bounding  appropriately each  
$\Pr[Q_w \ | \ {\cal E}]$ in \eqref{eq:PrAEReductFromUnionBound}
and using the fact that  that the number of summads in \eqref{eq:PrAEReductFromUnionBound}
is at most $((1+\epsilon/3) d)^{R+1}$.

Recall that it is assumed that $u^*\in \ImpVrtx$. This implies that $u^*$ is either a break-point
or a vertex next to a break-point. Then, the bound on the cardinality of ${\cal S}(u^*,R+1)$ follows from
 Lemma \ref{lemma:GrowthFromBP}.

\begin{proposition}\label{prop:StepIProbPropag}
Let  $\epsilon, k, d, {\cal B},  G, u^*,C, C'$ and the copies of block dynamics 
$(X_t)_{t\ge 0}$ and $(Y_t)_{t\ge 0}$
as defined in the statement of Theorem \ref{thrm:BallResult}. 
Also, let  $R=C'(\log d)\sqrt{d}$. 

For a vertex  $w\in {\cal S}(u^*, R+1)$,
and the path ${\cal P}(u^*,w)$  consider the sequence of blocks $B_1, B_2, \ldots, B_h$  as defined above.
For every $i\in [h]$, letting ${\cal F}_i$ be the $\sigma$-algebra generated by $t_1, \ldots, t_{i-1}$,
we have
\[
\beta(i):=  \max_{{\cal F}_{i}}\  \Pr[ t_{i} < T \ | \ {\cal F}_{i}] \ \leq  \ 
(1.45)^{r_i} \left( (1+\epsilon)d \right)^{-\ell_i},
\]
where $r_i=|\ImpVrtx_i|$ and $\ell_i$ is the length of ${\cal P}(u^*,w)\cap B_i$.
\end{proposition}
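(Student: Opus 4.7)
The plan is to bound, conditional on $\mathcal{F}_i$, the probability that the disagreement percolates through block $B_i$ along the specified subpath ${\cal P}(u^*,w)\cap B_i$ of length $\ell_i$ before time $T$. Once $\mathcal{F}_i$ is revealed, the time $t_{i-1}$ is known, and $v_b^{i-1}\in\outBound B_i$ becomes a boundary disagreement for $B_i$ at that time. For $t_i<T$ to hold, at some update of $B_i$ in $[t_{i-1},T]$ the disagreement at $v_b^{i-1}$ must propagate along ${\cal P}(u^*,w)\cap B_i$ all the way to $v_b^i$. I will bound this event via a product of (i) a per-update propagation probability and (ii) a tail bound on the number of updates of $B_i$ within $[0,CN]$.

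For (i), I apply Proposition \ref{prop:StochasticDomination} with the single boundary disagreement at $v_b^{i-1}$: the disagreement set inside $B_i$ after one update is dominated by the independent Bernoulli percolation ${\cal S}_{\epsilon^3}(B_i)$ with marginals $p_v(\epsilon^3)$ from \eqref{eq:DefPu}. For the propagation to reach $v_b^i$, every vertex on the subpath (apart from the starting boundary vertex) must belong to ${\cal S}_{\epsilon^3}$, yielding per-update probability at most $\prod_{v} p_v(\epsilon^3)$. For a low-degree $v$ this is $\leq (1+\epsilon^3)/((1+\epsilon)\degree_{in}(v))$, and for a high-degree $v$ it equals $1$.

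For (ii), since each step recolors $B_i$ with probability $1/N$, the number of updates $W_i$ of $B_i$ in a window of length $CN$ is stochastically dominated by a Binomial$(CN,1/N)$, so a Markov/union bound over updates contributes only a constant multiplicative factor. Combining this with (i), the heart of the proof is to show
\[
\prod_{v \in {\cal P}(u^*,w)\cap B_i} p_v(\epsilon^3) \;\leq\; \big(1.45/C_0\big)^{r_i}\,\big((1+\epsilon)d\big)^{-\ell_i},
\]
for a constant $C_0$ that absorbs the expected number of updates. To establish this I invoke the weight calculus from Section \ref{sec:IntroBlocks}: the vertices of $\ImpVrtx_i$ on the subpath are exactly the breakpoints (or neighbors of breakpoints), and by Corollary \ref{cor:FromBreakPointProd} the product $\prod d^{15}\degree(u)$ of high-degree weights along any path starting from such a breakpoint is bounded by $(1+\epsilon/10)^{\text{length}-m+1}$. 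Telescoping between consecutive breakpoints on the path turns $\prod \degree_{in}(v)^{-1}$ into a clean $d^{-\ell_i}$ at the cost of a bounded multiplicative constant per breakpoint; choosing $d$ large enough makes this constant at most $1.45$, and the $(1+\epsilon^3)$ factors from $p_v(\epsilon^3)$ are absorbed by the gap between $1.45$ and $e^{1/e}$.

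The main obstacle is the bookkeeping needed to pass from the local ``$\prod 1/\degree_{in}$'' bound to the global ``$d^{-\ell_i}$'' form while keeping the constant per breakpoint as small as $1.45$. Within a block the subpath can contain vertices of very different degrees (including high-degree vertices buried deep inside, as permitted by Definition \ref{def:SparseBlockPart}), so one must segment the subpath at elements of $\ImpVrtx_i$ and apply Corollary \ref{cor:FromBreakPointProd} separately on each segment. A secondary technical issue is that updates of $B_i$ at different times could in principle combine (a partial propagation from one update followed by extension in a later one), but since each update resamples the whole block from scratch, the event $\{t_i<T\}$ is covered by a union over individual updates, so the per-update bound plus the tail bound on $W_i$ suffice.
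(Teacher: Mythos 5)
There is a genuine gap, and it sits exactly where you wave it away as a ``secondary technical issue.'' Your covering claim --- that since each update resamples the whole block, the event $\{t_i<T\}$ is a union over individual updates of ``the disagreement percolates the full subpath from $v^{i-1}_b$ in this one update'' --- is false. Resampling the block does reset the interior, but it does not reset the boundary condition: partial progress made in one update can be \emph{stored outside} $B_i$ and re-injected later. Concretely, during one update the disagreement may reach only up to some $w^j_r$ on ${\cal P}(u^*,w)\cap B_i$, but also spill into the subtree hanging off $w^j_r$ and hit vertices of $\ImpVrtx$ there; between updates of $B_i$, neighbouring blocks are updated and breakpoints outside $B_i$ adjacent to those spill-over disagreements can become (and stay) disagreeing. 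At the next update of $B_i$ these sit on $\outBound B_i$ close to $w^j_r$, so the percolation can resume essentially where it stopped, and the full traversal of the subpath is achieved piecewise across several updates with no single update paying the full $\big((1+\epsilon)d\big)^{-\ell_i}$ cost. Your union bound over single-update full traversals simply does not cover this event, and multiplying by $\Exp{W_i}=O(1)$ does not repair it.

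This cross-update accumulation is the actual content of the paper's proof. There, the per-update point-to-point bound (your product of $p_v$'s, organized via the breakpoint weight calculus into $(1.45)^{r_i}((1+\epsilon)d)^{-\ell_i}$, stated as Corollary \ref{corollary:Point2PointDisagreement}) is only the starting point; the proof then bounds the probability that a stalled front at $w^j_r$ (respectively at a breakpoint $u_j$) is reactivated at the next update of $B_i$ by $d^{-3/2}$ (respectively $(\log d)^4 d^{-1}$) --- Claims \ref{claim:IJRProbUpperBound} and \ref{claim:BPDisProbUpperBound}, which require their own argument about disagreements travelling out of the block and back --- and then sums over the number $m$ of segments into which the path is split across the $U$ updates, via $\gamma(U)=\sum_{m\ge 1}\binom{U}{m}\binom{\ell_i-1}{m}\big((\log d)^5/d\big)^m$, finally taking expectation over $U$ dominated by ${\tt Binomial}(CN,1/N)$ to get an extra factor $O(d^{-1/5})$. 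Your proposal contains none of this machinery, and without it the stated bound on $\beta(i)$ is not established. The rest of your outline (stochastic domination by the Bernoulli process, the telescoping of $\prod \degree_{in}^{-1}$ into $d^{-\ell_i}$ with a per-breakpoint constant absorbed into $1.45$ via Corollary \ref{cor:FromBreakPointProd}) is consistent with the paper's single-update estimate, but that is the easy half of the proposition.
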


\noindent
The proof of Proposition \ref{prop:StepIProbPropag} appears  in Section \ref{sec:prop:StepIProbPropag}.

Additionally, we have that
\[
 \Pr[Q(w)\ | \ {\cal E}] \ \leq \ \frac{\Pr[Q(w)] }{\Pr[ {\cal E}] } \ \leq \ 2\Pr[Q(w) ] \ \leq\  2  \prod^h_{j=1} \beta(j),
\]
where the third inequality follows from \eqref{eq:CalE^cTailBound}, while $\beta(j)$ is defined in Proposition \ref{prop:StepIProbPropag}. 

Using  Proposition  \ref{prop:StepIProbPropag} we have that 
\begin{equation}
\Pr[ Q(w)\ | \ {\cal E}]   \leq   2 (1.45)^{\sum^h_{i=1}r_i}\left( (1+\epsilon)d \right)^{-\sum^h_{j=1}\ell_j} 
\ \leq \ 2 (1.45)^{\sqrt{d} }\left( (1+\epsilon)d \right)^{-(R+1-h)}.  \label{eq:BasicPw|CondE}
\end{equation}
The sum of $r_j$s, counts the number of vertices in $\ImpVrtx \cap {\cal P}(u^*,w)$.
In the last inequality, above, we used the fact that $\sum_{j}r_j\leq \sqrt{d}$, due to the choice of the path.
Also,  we have argued,  previously, that   $h\leq \sqrt{d}$. 
Since $(1+\epsilon)>(1+\epsilon/3)(1+\epsilon/2)$ and   $C>\epsilon^{-3}$,  the above inequality 
yields
\begin{eqnarray}
\Pr[ Q(w)\ | \ {\cal E}] &\leq & 2    ((1+\epsilon/3)d )^{-(R+1)} \left( (2d)^{\sqrt{d}}(1+\epsilon/2)^{-(R+1)}\right) \nonumber \\
&\leq &   \textstyle ((1+\epsilon/3)d )^{-(R+1)} 
 \exp \left( -\epsilon^{-1} (\log d) \sqrt{d}  \right).
\label{eq:BasicBound4PrQw} 
\end{eqnarray}

\noindent
Combining  \eqref{eq:BasicBound4PrQw},  the observation that   $|\hat{ S}(u^*,R+1)\leq ((1+\epsilon/3)d )^{-(R+1)}|$,
and \eqref{eq:PrAEReductFromUnionBound} we get that
\begin{equation}\label{eq:PrCalACondCalEFinal}
\Pr[{\cal A}\ |\ {\cal E}] \ \leq   \textstyle \exp \left( -\epsilon^{-1}(\log d)\sqrt{d}  \right).
\end{equation}

\noindent
The proposition follows by plugging \eqref{eq:PrCalACondCalEFinal}  and  \eqref{eq:CalE^cTailBound}
into \eqref{eq:Base4TailBound4DtGamma}. 
$\hfill \Box$

\subsubsection{Proof of Proposition \ref{prop:StepIProbPropag}} \label{sec:prop:StepIProbPropag}

A direct corollary from Corollaries \ref{corr:GMPBiasColour} and  \ref{cor:GeneralBiasColour} is the following
result.
\begin{corollary}\label{corollary:Point2PointDisagreement}
Let  $\epsilon, k, d, {\cal B}, G, u^*,R$ and the copies of block dynamics 
$(X_t)_{t\ge 0}$ and $(Y_t)_{t\ge 0}$
as defined in the statement of Proposition  \ref{prop:StepIProbPropag}.

For any $B$ such that $u^*\in \outBound B$ and for any  $u\in \inBound B\cap \ball(u^*, R)$  it holds that  
\[
\Pr[u\in D_1 \ | \ \textrm{$B$ is updated at time $t=1$}] \leq  (1.45)^{r} \left( (1+\epsilon)d \right)^{- \ell},
\]
where $\ell$ is the length of the shortest path, in $B$, between  $u^*$ and $u$ and $r$
is the number of vertices in $\ImpVrtx$ which belong to  this path.
\end{corollary}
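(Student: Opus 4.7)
The plan is to reduce to a disagreement percolation argument along the unique shortest path from $u^{*}$ to $u$ inside the updated block $B$, and then to bound the per-step propagation probability using the two corollaries cited in the statement.

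First, I would fix the maximal BFS coupling of $X_{1}(B)$ and $Y_{1}(B)$ described at the start of Section~\ref{sec:DisPerc}: reveal first the unique neighbour $z\in B$ of $u^{*}$ (which exists and is unique by condition~2(b) of Definition~\ref{def:SparseBlockPart}) and then proceed in BFS order, at each step coupling maximally subject to the configuration on $\outBound B$ and on the already-revealed vertices of $B$. Since $u^{*}\in\ImpVrtx$ does not belong to any cycle of length less than $d^{2}$ (condition~3 of Definition~\ref{def:SparseBlockPart}) and since $R+1<d^{2}$ for large~$d$, the subgraph induced on $\ball(u^{*},R+1)$ is a tree. Hence the shortest path $\mathcal{P}=u^{*},v_{1},\dots,v_{\ell}=u$ of length $\ell\le R$ is the unique path from $u^{*}$ to $u$ inside this ball, and $u\in D_{1}$ can occur only if every vertex of $\mathcal{P}$ is a disagreement in the coupling.

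Second, I would invoke the disagreement percolation identity used in the proof of Claim~\ref{claim:DisPathProbs} and Proposition~\ref{prop:UnBiasHighDegree4Trees} to obtain
\[
\Pr[u\in D_{1}\mid B\text{ updated at }t=1]\;\le\;\prod_{i=1}^{\ell}\xi(v_{i}),
\]
where $\xi(v_{i})$ upper bounds the conditional probability that $v_{i}$ disagrees given that $v_{i-1}$ disagrees. For each $v_{i}$ that is low-degree with $N(v_{i})\cap\HighDegreeSet=\emptyset$ and does not lie on the cycle of $B$, Corollary~\ref{corr:GMPBiasColour}, applied with $B^{+}$ equal to $\outBound B$ together with the already-revealed vertices of $B$, yields $\xi(v_{i})\le 1/((1+\epsilon)\max\{1,|N(v_{i})\setminus B^{+}|\})$; in the BFS ordering only $v_{i-1}$ is revealed among the in-block neighbours of $v_{i}$, so this is bounded by $1/((1+\epsilon)d)$ whenever $\deg_{in}(v_{i})$ is comparable to $d$, which holds for ``typical'' interior low-degree vertices because there $\deg_{in}(v_{i})=\degree(v_{i})$. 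For the remaining vertices of $\mathcal{P}$---members of $\HighDegreeSet$, vertices adjacent to $\HighDegreeSet$, vertices on the cycle of $B$, and the vertices in $\inBound B\subseteq\ImpVrtx$ whose in-block degree is strictly smaller than their degree---I would fall back on Corollary~\ref{cor:GeneralBiasColour}, which gives $\xi(v_{i})\le 1/(k-\degree(v_{i}))\le 1/((\alpha-1-\epsilon/6)d)$, and this is at most $1.45/((1+\epsilon)d)$ for small enough~$\epsilon$.

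Third, I would partition the vertices of $\mathcal{P}$ by type and multiply out: each of the $r$ vertices in $\mathcal{P}\cap\ImpVrtx$ contributes at most $1.45/((1+\epsilon)d)$, and each of the other $\ell-r$ vertices contributes at most $1/((1+\epsilon)d)$, directly giving the desired bound $(1.45)^{r}((1+\epsilon)d)^{-\ell}$. The one case needing extra care is when $\mathcal{P}$ contains high-degree vertices, for which only the trivial bound $\xi\le 1$ is available: here I would invoke Corollary~\ref{cor:FromBreakPointProd}, which uses the $r'$-breakpoint property of $u^{*}$ with $r'\gg R$ (condition~2(a)) to bound $\prod_{v_{i}\in\HighDegreeSet\cap\mathcal{P}}d^{15}\degree(v_{i})\le(1+\epsilon/10)^{\ell-m+1}$; this slack is more than sufficient to supply the missing $((1+\epsilon)d)^{-1}$ factor at each of the $m$ high-degree steps. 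The main obstacle is precisely this last piece of bookkeeping---verifying that the breakpoint slack cleanly covers the high-degree trivial bounds and that the $\ImpVrtx$ vertices lose at most a factor of $1.45$ each, without the various small $\epsilon$-overheads compounding into a worse constant.
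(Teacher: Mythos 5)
Your overall route is the one the paper intends: the paper omits the argument as ``immediate'', and its explicit analogous computations (the proofs of Proposition~\ref{prop:UnBiasHighDegree4Trees}, Claim~\ref{claim:IJRProbUpperBound} and Lemma~\ref{lemma:Weight2OfBoudnary}) do exactly what you sketch -- maximal BFS coupling, uniqueness of the path inside the tree-like ball $\ball(u^*,R+1)$, a product of per-vertex propagation probabilities bounded via Corollaries~\ref{corr:GMPBiasColour} and~\ref{cor:GeneralBiasColour}, and Corollary~\ref{cor:FromBreakPointProd} to pay for the high-degree vertices. The problem is that your final bookkeeping does not actually produce the stated bound, and the places where it fails are precisely the non-immediate part of the corollary.

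Concretely: (i) your claim that every interior low-degree vertex of the path contributes a factor $((1+\epsilon)d)^{-1}$ is not delivered by the two corollaries. Corollary~\ref{corr:GMPBiasColour} gives only $\bigl((1+\epsilon)\,|N(v)\setminus B^+|\bigr)^{-1}$, which exceeds $((1+\epsilon)d)^{-1}$ whenever $\degree(v)<d$, and Corollary~\ref{cor:GeneralBiasColour} gives $(k-\degree(v))^{-1}$, which exceeds $((1+\epsilon)d)^{-1}$ whenever $\degree(v)>(\alpha-1)d$; so for interior vertices with degree roughly in the window $((\alpha-1)d,\,d)$ neither bound suffices, and since $\alpha<2$ this window is nonempty. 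Appealing to ``typical'' vertices has no force here: the statement is deterministic, quantified over every $G\in\IntrstGraphFam$ and every vertex on the path. (ii) Your second bucket (vertices adjacent to $\HighDegreeSet$, cycle vertices, boundary vertices) is charged $1.45/((1+\epsilon)d)$, but only the $r$ vertices of the path lying in $\ImpVrtx$ may carry the factor $1.45$ in the target bound; vertices adjacent to $\HighDegreeSet$ and cycle vertices need not be in $\ImpVrtx$, so the product you multiply out does not equal $(1.45)^{r}((1+\epsilon)d)^{-\ell}$. Moreover, for a vertex on the cycle of $B$ the safe bound is only $\xi\le 1$ (two disagreeing revealed neighbours are possible there, which is why the paper sets $p_v=1$ on cycles), not $1.45/((1+\epsilon)d)$; these excesses, like the high-degree ones, would have to be absorbed into the $d^{15}$-slack of Corollary~\ref{cor:FromBreakPointProd} together with condition 2(c) of Definition~\ref{def:SparseBlockPart}, and your sketch does not carry out that accounting. (A minor further slip: $k-\degree(v)\ge(\alpha-1+5\epsilon/6)d$, and with this constant the comparison with $1.45/((1+\epsilon)d)$ holds for all $\epsilon>0$, whereas your version with $(\alpha-1-\epsilon/6)$ only works for small $\epsilon$, while the corollary is claimed for every $\epsilon$.)
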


\noindent
The proof is immediate, for this reason we omit it.

\begin{proof}[Proof of Proposition \ref{prop:StepIProbPropag}]

Since the coupling stops when the disagreements escape $\ball(u^*,R+1)$, the girth assumption 
about $G$ implies that if $v^i_{b}$ becomes disagreeing then the disagreement can only  come from 
the disagreement at vertex $v^{i-1}_{b}$. 

Let ${\cal P}_i$  be the subpath  ${\cal P}(u^*,w)\cap B_i$.  Also, recall that  $\ImpVrtx_i= {\cal P}_i \cap \ImpVrtx$.
Consider the vertices  $u_1, \ldots, u_{s} \in \ImpVrtx_i$ in the order we discover them as we
traverse ${\cal P}(u^*,w)$ from $u^*$ towards $w$, i.e.,  $s=|\ImpVrtx_i |$.
Between $u_j$ and $u_{j+1}$ we encounter the vertices $w^j_1, w^j_2, \ldots, w^j_m$.

Note that $B_i\cap \ball(u^*, R+1)$ induces a graph which is a tree which we call it ${\cal T}$. We assume that
the root of the tree is vertex $v^i_a$. Also, for each vertex $u\in T\cap {\cal P}_i$, let ${\cal T}(u)$ be the subtree 
rooted at  $u$ and contains $B_i\cap \ball(u^*, R+1)$ apart from the vertices in  ${\cal P}_i$ that follow $u$.
Also, we let  $\Gamma(u)={\cal T}(u) \cap (\ImpVrtx\setminus \Gamma_i)$.

Consider the first update of block $B_i$, given that there is a disagreement at vertex $v^{i-1}_{b} $. Then,
according to Corollary  \ref{corollary:Point2PointDisagreement} the disagreement reaches vertex 
$v^{i}_{b}$ with probability   $\rho(\ell_i)$, where
$$
 \rho(\ell_i)\leq (1.45)^{r_i} \left( (1+\epsilon)d \right)^{-(\ell_i+1)}.
$$

\noindent
Consider, now, the next  update of $B_i$, i.e., the second one. It could be that
during the first update the disagreement did not reach $v^i_{b}$. However, it could have proceeded towards
this vertex, as follows: There is some $j < s$ and $r>0$ such that during the first update
the disagreeing reached up to vertex $w^j_{r} \in {\cal P}_i$. Furthermore, the disagreement continued  towards
some vertex in $\Gamma(w^j_{ r})$, i.e., following a different direction than that of ${\cal P}_i$. 
Then, between the first and second update of $B_i$, it could be that
some breakpoints, outside $B_i$,  but   neighboring to  disagreeing vertices in $\Gamma(w^j_{ r})$,
$\Gamma(w^j_{ r-1}), \ldots$ 
were updated and became disagreeing and remained disagreeing even during the second update of $B_i$.
In such a situation the probability of creating a disagreement on $v^i_{b}$ during the second update 
could become higher.   In particular,  if for $\ell'$, the distance between the closest disagreeing break
point next to $B_i$ and $v^{i}_{b}$ during the second update we have $\ell \ll \ell_i$, then  the 
probability of getting $v^i_{b}$  is substantially higher than $\rho(\ell_i)$.

We have a  similar situation  if   the disagreement at the first updated  stop at some vertex $u_j$, for
$j<s$ and between the first and second update of $B_i$, neighboring breakpoints became 
disagreeing.

\begin{claim}\label{claim:IJRProbUpperBound}
Assume that the last update of $B_i$ reached up to vertex $w^j_r$ in ${\cal P}_i$, for some $j,r=1,\ldots$
Let ${\cal I}_{j,r}$ be the event that $w^j_r$ becomes disagreeing at the next update. 
Then, it holds that
\[
\Pr[{\cal I}_{j,r}]\leq d^{-3/2}.
\]
\end{claim}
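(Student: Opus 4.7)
The plan is to bound $\Pr[\mathcal{I}_{j,r}]$ by applying disagreement percolation from $\outBound B_i$ to $w^j_r$ within the newly sampled configuration of $B_i$. Following the stochastic-domination argument of Claim \ref{claim:DisPathProbs}, the probability that $w^j_r$ is disagreeing in the new sample is bounded by summing, over boundary vertices $u\in\outBound B_i$ that are currently disagreeing, the weighted path-percolation probability from $u$ to $w^j_r$ inside $B_i$. There are two sources of such boundary disagreements: (i) the original disagreement $v^{i-1}_b$, which has been present since before the last update of $B_i$, and (ii) any new disagreements at $\outBound B_i\setminus\{v^{i-1}_b\}$ created by updates of neighboring blocks between the two consecutive $B_i$-updates.

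For source (i), the relevant path inside $B_i$ runs from $v^{i-1}_b$ into $v^i_a\in\ImpVrtx_i$ and then through $u_1,\ldots,u_j$ before reaching $w^j_r$. By Corollary \ref{corollary:Point2PointDisagreement}, combined with the weighting schema of Definition \ref{def:SparseBlockPart} and Corollary \ref{cor:FromBreakPointProd}, this contribution is bounded by $(1.45)^{j+1}\bigl((1+\epsilon)d\bigr)^{-(\ell+1)}$, where $\ell$ is the path length. Since $w^j_r$ lies strictly between the $\ImpVrtx$-vertices $u_j$ and $u_{j+1}$, the path must traverse at least one additional low-degree step beyond reaching $u_j$, and this contribution is comfortably smaller than $\tfrac{1}{2}d^{-3/2}$.

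For source (ii), I will use Proposition \ref{prop:Tail4ZetaIntro} to bound the expected number of new disagreements at $\outBound B_i$: each update of a neighboring block creates boundary disagreements with probability $O(1/(dN))$, and the number of neighboring-block updates between consecutive $B_i$-updates is concentrated around $O(|\outBound B_i|)$. Each potential new source $u'\in\outBound B_i\setminus\{v^{i-1}_b\}$ is an $r$-breakpoint with $r>\max\{\mathrm{diam}(B_i),\log\log n\}$ by property 2(a) of Definition \ref{def:SparseBlockPart}, so by Corollary \ref{corollary:Point2PointDisagreement} the percolation probability from $u'$ to $w^j_r$ is bounded by $(1.45)^{r'}\bigl((1+\epsilon)d\bigr)^{-\ell'}$ with a nontrivial $\ell'$; aggregating over all possible new sources via a union bound gives a total contribution of at most $\tfrac{1}{2}d^{-3/2}$.

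The main obstacle will be handling the dependencies when summing over new boundary disagreements: while updates of distinct neighboring blocks are independent, the disagreements they generate are not, and the stochastic-domination framework of Proposition \ref{prop:StochasticDomination} must be adapted to provide the required bound uniformly over all histories compatible with the conditioning event ``the last update reached $w^j_r$.'' Care is also needed to avoid overcounting percolation paths inside $B_i$ that share segments near $w^j_r$, which likely requires organizing the sum by the entry point of the percolation path into the vicinity of $w^j_r$.
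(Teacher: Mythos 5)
There is a genuine gap in your treatment of source (ii), and that is where the whole content of the claim lives. The paper's proof identifies the dominant mechanism precisely: during the previous update the disagreement, besides stopping at $w^j_r$ on ${\cal P}_i$, also leaks into the subtree ${\cal T}(w^j_r)$ hanging off the path and reaches vertices of $\ImpVrtx\cap B_i$ there; between the two consecutive updates of $B_i$, external neighbors of those leaked vertices (which lie in $\outBound B_i$) become disagreeing; at the next update these external disagreements percolate back through ${\cal T}(w^j_r)$ to $w^j_r$. The quantitative core is then threefold: (a) the expected number of leaked $\ImpVrtx$-vertices at distance $\ell$ from $w^j_r$ inside ${\cal T}$ is bounded by a branching-times-propagation computation, $\Exp{D_\ell({\cal T})}\leq 2(1+2\epsilon/3)^{-\ell}d^{-14m}$, where the breakpoint weighting (Corollary \ref{cor:FromBreakPointProd}) absorbs the high-degree vertices -- note this count decays only geometrically in $\ell$ at a rate close to $1$, not like $d^{-\ell}$, so it cannot be waved away; (b) the number of external disagreements spawned by each leaked vertex during the interval is dominated by a ${\tt Binomial}(CN,\approx((1+\epsilon/2)N)^{-1})$ and is at most $(\log d)^5$ except with probability $e^{-(\log d)^5}$; (c) since $w^j_r\notin\ImpVrtx$, every return path has length $\ell\geq 2$, so summing $(1+2\epsilon/3)^{-\ell}\cdot(\log d)^5\cdot((1+\epsilon)d)^{-\ell}$ over $\ell\geq 2$ gives $O((\log d)^5 d^{-2})\leq d^{-3/2}$. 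Point (c) is exactly what separates $d^{-3/2}$ from a useless $\Theta(1/d)$ bound, and points (a)--(b) are what make the ``union bound over new sources'' legitimate.

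Your proposal contains none of these three steps. You bound the number of new boundary disagreements by invoking Proposition \ref{prop:Tail4ZetaIntro} (``$O(1/(dN))$ per neighboring-block update'') together with a count of ``$O(|\outBound B_i|)$'' neighboring-block updates; but that proposition controls disagreements generated \emph{inside} a block from one boundary source, not the appearance of new sources on $\outBound B_i$ adjacent to leaked vertices, and the count you need is distance-resolved (how many sources whose entry point into $B_i$ is at distance $\ell$ from $w^j_r$, for each $\ell$), which requires the leakage estimate (a). Likewise, asserting that each new source is an $r$-breakpoint with $r>\mathrm{diam}(B_i)$ and therefore has ``a nontrivial $\ell'$'' does not give a usable lower bound on $\ell'$: the dangerous sources sit right next to leaked vertices of ${\cal T}(w^j_r)$, so the relevant within-block distance can be as small as $2$, and you never establish even that it is at least $2$ (which hinges on $w^j_r\notin\ImpVrtx$), nor that sources at this minimal distance are few. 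As written, ``aggregating over all possible new sources via a union bound gives at most $\frac12 d^{-3/2}$'' is an assertion, not an argument, and the dependency issue you flag at the end (which the paper resolves by conditioning on the event that no leaked vertex spawns more than $(\log d)^5$ external disagreements before taking expectations of the number of return paths) is exactly the part left undone. Your source (i) analysis is fine but peripheral: the paper's proof in fact only bounds disagreements reaching $w^j_r$ from ${\cal T}(w^j_r)$, the along-the-path restarts being accounted for elsewhere in Proposition \ref{prop:StepIProbPropag}.
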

\noindent
The proof of Claim \ref{claim:IJRProbUpperBound} is right after this proof.

\begin{claim}\label{claim:BPDisProbUpperBound}
Assume that the last update of $B_i$ reached up to vertex $u_j$ in ${\cal P}_i$, for some $j < s$.
Let ${\cal I}_{j}$ be the event that $u_j$ becomes disagreeing at the next updated. 
 Then, it holds that
\[
\Pr[{\cal I}_{j,r}]\leq (\log d)^4d^{-1}.
\]
\end{claim}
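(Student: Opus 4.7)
The plan is to bound $\Pr[{\cal I}_j]$ by enumerating the candidate source vertices on $\outBound B_i$ from which disagreement could propagate to $u_j$ during the next update of $B_i$, then combining for each source the probability that it is disagreeing at that moment with the conditional propagation probability into $B_i$. The target bound $(\log d)^4/d$ should arise as the product of a polylogarithmic inter-update factor and a single $1/d$ decay factor corresponding to the one step of percolation needed to enter $B_i$ and reach $u_j$.

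First, I would exploit that $u_j\in\ImpVrtx$ is, by Definition \ref{def:SparseBlockPart}(2a), an $r$-breakpoint (or adjacent to one) for $r>\log\log n$. By Lemma \ref{lemma:GrowthFromBP}, the number of vertices at distance $\ell$ from $u_j$ is at most $((1+\epsilon/3)d)^\ell$, and by Corollary \ref{cor:FromBreakPointProd} the product of degrees along any path starting at $u_j$ is controlled. Both will be used to absorb the geometric growth in the number of possible source-paths. Second, for a candidate source $v\in\outBound B_i$ at path distance $\ell$ (inside $B_i$) from $u_j$, Corollary \ref{corollary:Point2PointDisagreement} bounds the probability that a single update of $B_i$ propagates a disagreement from $v$ to $u_j$ by $(1.45)^{r'}((1+\epsilon)d)^{-\ell}$, where $r'$ counts breakpoints on that path.

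Third, I would bound the probability that such a $v$ is actually disagreeing at the moment of the next update of $B_i$. Between consecutive updates of $B_i$ only $O(N\log d)$ dynamics steps occur with high probability, so every neighboring block of $B_i$ is updated $O(\log d)$ times, and by an argument analogous to Claim \ref{claim:IJRProbUpperBound} each candidate $v$ on $\outBound B_i$ has probability at most $\mathrm{polylog}(d)/d$ of becoming disagreeing via external propagation through neighboring blocks. The contribution from the original disagreement at $v^{i-1}_b$ reaching a \emph{different} entry point of $B_i$ near $u_j$ is handled analogously, since every intermediate hop between breakpoints costs a factor of $1/d$.

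Finally, I would union-bound over all candidate sources $v$: the decay $((1+\epsilon)d)^{-\ell}$ inside $B_i$ dominates the path-count growth $((1+\epsilon/3)d)^\ell$ from Lemma \ref{lemma:GrowthFromBP}, so the sum is controlled by its $\ell=1$ term, giving the leading $1/d$ factor; the $(\log d)^4$ factor absorbs the polylogarithmic contribution from the multiple updates of neighboring blocks and from the external-propagation estimates. The main obstacle will be to manage the inter-update disagreements without double counting: during the window between two updates of $B_i$, disagreements at several positions of $\outBound B_i$ can arise through correlated external events, and I would invoke Proposition \ref{prop:StochasticDomination} to stochastically dominate the accumulated disagreements on $\outBound B_i$ by an \emph{independent} percolation process restricted to a local neighborhood of $u_j$, at which point the union bound above becomes rigorous.
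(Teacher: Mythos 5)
Your proposal is correct and takes essentially the same route as the paper, which omits this proof and points to Claim \ref{claim:IJRProbUpperBound}: both arguments bound the disagreements accumulating next to the block boundary between consecutive updates of $B_i$ by binomial domination over the inter-update window, then bound the probability that these percolate back to $u_j$ at the next update via the $((1+\epsilon)d)^{-\ell}$ path decay played against the breakpoint-controlled growth (Lemma \ref{lemma:GrowthFromBP}, Corollary \ref{cor:FromBreakPointProd}), the only change from the sibling claim being that $u_j\in\ImpVrtx$ admits the direct length-one echo from its own external neighbours, which is what yields the weaker $\mathrm{polylog}(d)\,d^{-1}$ bound. Your indexing of the echo paths by sources on $\outBound B_i$ (each bounded uniformly by $\mathrm{polylog}(d)/d$, with dependencies handled through Proposition \ref{prop:StochasticDomination}) instead of by the disagreeing vertices of $\inBound B_i$ created by the previous update is only a cosmetic reorganization of the same computation.
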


\noindent
The proof of Claim \ref{claim:BPDisProbUpperBound} is very similar to the proof of Claim 
\ref{claim:IJRProbUpperBound} for this reason we omit it.

Let $U$ be the number of updates of block $B_i$ from $t_{i-1}+1$ up to time $T$. 
Also let $m$ be the number of updates of $B_i$, out of these $U$, at which the disagreement
propagates further towards $u_s=v^i_{b}$.
Given $U$ the probability that $v^i_{b}$ becomes disagreeing is at most
$ (1.45)^{r_i} \left( (1+\epsilon)d \right)^{-(\ell_i)}  \times \gamma(U)$, where 
\[
\textstyle \gamma(U)= \sum^{U}_{m=1} {U \choose m} {\ell_i-1 \choose m} \left( \frac{(\log d)^5}{d}\right)^{m}.
\]
Moreover, it holds that
\begin{equation}\label{eq:basis:prop:StepIProbPropag}
\beta(i)  \leq    (1.45)^{r_i} \left( (1+\epsilon)d \right)^{-(\ell_i)}  \times \gamma(U) \ExpCond{ \gamma(U)}{  {\cal F}_i},
\end{equation}
where the expectation is w.r.t. to the randomness of $U$.

So as to proceed, consider the following:
Noting that $\ell_i\leq d^{3/5}$, we have that
\begin{eqnarray}
\gamma(U) &\leq &   \sum^U_{m= 1}{U \choose m} 
\left( \frac{(\ell_i-1)e (\log d)^5}{d m}\right)^{m} \ 
\leq   \     \sum^U_{m = 1}{U \choose m} 
\left( d^{-1/5} \right)^{m}  \nonumber\\
&\leq &  U  d^{-1/5} \sum^{U-1}_{m=0}{U-1 \choose m} 
\left( d^{-1/5}  \right)^{m} \nonumber \\
 &\leq &  U  d^{-1/5}  \ \exp \left(U d^{-1/5} \right). \nonumber
\end{eqnarray}
Since $T\leq CN$, conditional on ${\cal F}_i$,  $U$ is dominated
by ${\tt Binomial}(CN,1/N)$.  Noting that $f(x)= ax \exp( a x)$ is an increasing function of $x$, when $a>0$,
it is standard to show that
\[
\ExpCond{ \gamma(U)}{ {\cal F}_i}  \ \leq \  \ExpCond{ U  d^{-1/5}  \ \exp \left(U d^{-1/5}  \right)}{  {\cal F}_i }
\ \leq  \textstyle \ C  d^{-1/5}  \ \exp \left(C d^{-1/5} \right) 
\ \leq \  2C d^{-1/5}.
\]
The proposition follows by plugging the above inequality into \eqref{eq:basis:prop:StepIProbPropag}.
\end{proof}

\begin{proof}[Proof of Claim \ref{claim:IJRProbUpperBound}]

For the sake of brevity, in this proof, we let ${\cal T}={\cal T}(w^{j}_r)$. Let $D_{\ell}({\cal T})$ be the number of 
disagreeing vertices in ${\cal T} \cap (\ImpVrtx\setminus \Gamma_i)$ which are at distance $\ell$ from $w^j_r$.
It holds that
\[
\textstyle \Exp{ D_{\ell}({\cal T}) }  = \Pr[\textrm{$w^j_r$ disagrees}]\sum_{y\in N(w^{j}_r)\cap B} \Exp{ D_{\ell-1}({\cal T}_y) },
\]
where ${\cal T}_y$ is the subtree of ${\cal T}$ rooted at $y$.  The general form for the  above inequality, i.e., for any 
$w\in {\cal T}$ at level $i  < \ell$, is as follows:
\[
\textstyle \Exp{ D_{\ell-i}({\cal T}_w) }  =  \Pr[\textrm{$w$ disagrees}]\sum_{y\in N(w) \cap B } \Exp{ D_{\ell-i-1 } ({\cal T}_y) }.
\]

\noindent
From the above, we get that
\begin{eqnarray}
\Exp{ D_{\ell}({\cal T}) }
&< & \Pr[\textrm{$w^j_r$ disagrees}] \ \degree_{in}(w^j_r)\ \max_{y\in N(w^j_r)\cap B} \left\{ \Exp{ D_{\ell-1}({\cal T}_y) } \right\}   
\nonumber \\
&\leq& 
\max_{\P'=(u_0=w^j_r ,u_1,\dots,u_\ell)}
p_{u_{\ell }}(0)  \prod_{i=0}^{\ell-1} p_{u_i}(0) \times
\left [ \degree_{in}(u_i) \right], \qquad \label{eq:EDEllTGenBound}
\end{eqnarray}
where the quantities $p_{v_i}$ above are defined in \eqref{eq:DefPu}.
Let $M$ be the set of high degree vertices in $\P'$ and let $m=|M|$. 
Recalling that $u_{\ell}\in \ImpVrtx$, Corollary \ref{cor:FromBreakPointProd} and  \eqref{eq:EDEllTGenBound} imply that
\begin{eqnarray}
\Exp {D_{\ell}({\cal T}) }  &\leq & 
\max_{\P'=(u_0=w^j_r ,u_1,\dots,u_\ell)} p_{u_{\ell}}(0) \left( \prod_{u_i\notin M} p_{u_i}(0) \times\left [ \degree_{in}(u_i) \right]\right )
\left( \prod_{u_i \in M}  \degree_{in}(u_i)  \right ) \nonumber \\
&\leq &
\max_{\P'=(u_0=w^j_r ,u_1,\dots,u_\ell)} p_{u_{\ell}}(0) \left( \prod_{u_i\notin M} p_{u_i}(0) \times\left [ \degree_{in}(u_i) \right]\right )
\frac{\left( (1+\epsilon/6) \right)^{\ell}}{((1+\epsilon/6)d^{15} )^m}
 \nonumber \\
 &\leq & 2 \left(  {1+2\epsilon/3}  \right)^{-\ell} d^{-14m}. \label{eq:DlTBound}
\end{eqnarray}

\noindent
Consider the disagreeing vertex  $w\in \ImpVrtx(w^j_r)$ at distance $\ell$ from $w^j_r$. 
The vertex $w$ has at most $\TDeg-1$ neighbors in $N(w)\setminus B$.
The number of steps between two consecutive updates of $B$ is at most $CN$.
A vertex in  $ N(w)\setminus B$ is chosen to be updated with probability  $|N(w)\setminus B| /N\leq (1+\epsilon/6)d/N$ 
and each update creates a new disagreement with probability at most $1/((1+\epsilon)d)$.

From the above remarks, we conclude that the number of vertices in $N(w)\setminus B$ which becomes
disagreeing between two consecutive updates of $B$ is dominated by the binomial distribution with
parameters $CN$,  $({(1+\epsilon/2) N})^{-1}$.   Chernoff bounds implies that with probability greater than
$1-\exp(-(\log d)^5)$, the number of disagreements of $w$, when $B$ is updated again is less than $(\log d)^5$.

At the next update of $B$, the disagreements next to $\ImpVrtx(w^j_r)$ travel back, towards vertex $w^j_r$.
Let $R_{\ell}({\cal T})$ be the number of paths of disagreements, of length $\ell$, that reach back $w^j_r$. 
Let $K$ be the event that there exists some $ w\in \ImpVrtx(w^j_r)$, at distance $\ell$ from $w^r_j$, 
which has more than $(\log d)^5$ disagreements
in its neighborhood.  It holds that
\begin{equation}\label{eq:Basis4PRlTPositive}
\Pr[R_{\ell}({\cal T}) >0 ] \ \leq \ \Pr[K]+\Pr[R_{\ell}({\cal T}) >0 \ |\ K^c]  \ \leq \  \Pr[K]+\ExpCond{ R_{\ell}({\cal T}) }{  K^c} .
\end{equation}
From the union bound we have $\Pr[K\ |\ D_{\ell}({\cal T})] \leq D_{\ell}({\cal T}) \exp(-(\log d)^5)$. Also,  it holds that 
\begin{equation}\label{eq:Basis4PRlTPositiveFirst}
\Pr[K] \leq \left(  {1+2\epsilon/3}  \right)^{-\ell} d^{-14m} \exp(-(\log d)^5).
\end{equation}
Furthermore, we have that
\begin{eqnarray}
\ExpCond{ R_{\ell}({\cal T})}{  K^c }  &\leq & \Exp{ D_{\ell}({\cal T}) } (\log d)^5 \left( (1+\epsilon/2)d\right)^{-(\ell-m)} 
\nonumber  \\
&\leq &   (\log d)^5 \left( (1+2\epsilon/3)(1+\epsilon/2)d\right)^{-\ell} (d/(1+\epsilon/2))^{-14m} \nonumber \\
&\leq &   (\log d)^5 \left( (1+\epsilon)d\right)^{-\ell} (d/2)^{-14m} .
\label{eq:Basis4PRlTPositiveSecond}
\end{eqnarray}
Plugging \eqref{eq:Basis4PRlTPositiveFirst} and \eqref{eq:Basis4PRlTPositiveSecond} into  \eqref{eq:Basis4PRlTPositive}
we get that
\begin{equation}
\Pr[R_{\ell}({\cal T}) >0 ]  \leq \left(  {1+2\epsilon/3}  \right)^{-\ell} d^{-14m} \exp(-(\log d)^5)+
(\log d)^5 \left( (1+\epsilon)d\right)^{-\ell} (d/2)^{-14m}. \nonumber
\end{equation}

\noindent
Let ${\cal E}(w^j_r)$ be the event that when we have of disagreement  at $w^j_r$ coming from ${\cal T}$.
It holds that
$$
\Pr[{\cal E}(w^j_r)] \leq \textstyle \sum_{\ell \geq 2}\Pr[R_{\ell}({\cal T}) >0 ]\ \leq \  \hat{C}{(\log d)^5}/{d^2},
$$
for large $\hat{C}>0$.   Note that we set $\ell \geq 2$ in the above summation  since we assumed that $w^j_r\notin \ImpVrtx$. 
The claim follows
\end{proof}

\section{Proof of Percolation Results}

\subsection{Proof of Lemma \ref{lemma:Weight2OfBoudnary}}\label{sec:lemma:Weight2OfBoudnary}

The proof of Lemma  \ref{lemma:Weight2OfBoudnary} assumes the results in Section \ref{sec:BreakPointsObservations}.
Also, for each vertex $w\in B$ let
\begin{equation}\label{def:ReverseWeightSSS}
\chi(w) = 
\frac{\beta(\parent{w} )}{(1+\epsilon^2)\ \degree_{in}(\parent{w} ) }
\left( p_w \right)^{-1}.
\end{equation}
For $w$ it holds that $\beta(w)=\min\{1, \chi(w)\}$. The lemma follows by showing that $\chi(w)\geq 1/2$ for
every $w\in \inBound B$.

Consider some vertex $u \in \inBound B$. Let $w$ be the closest  ancestor of $u$ such
that $\beta(w)=1$. Let ${\cal P}(u,w)$ be the unique path (sequence of ancestors) in $B$ that connects 
$v,w$. E.g. let the path ${\cal P}:=v_0,v_1, \ldots, v_{\ell}$, where $u=v_0$ and $w=v_{\ell}$.

As far as $\chi(v_0)$ is regarded we have the following:
\begin{equation}\label{eq:Basis4Weight2OfBoudnary}
\chi(v_0)  \geq  \frac{\beta(v_1 )}{(1+\epsilon^2)\ \degree_{in}(v_1 ) } (p_{v_0})^{-1}  \ \geq \  
 \frac{ (p_{v_0})^{-1} }{(1+\epsilon^2)^{\ell}\degree_{in}(v_{\ell})}  \prod^{\ell -1}_{i=1}\frac{  (p_{v_i})^{-1}  }{\degree_{in}(v_i)}.
\end{equation}
We proceed by getting a lower bound for the product on the r.h.s. of the inequality above.
Let $S_1\subseteq \{1,\ldots, \ell-1\}$ be such that for every $j\in S_1$ we have $\degree(v_j)>\TDeg$.
Let $S_2\subseteq \{1,\ldots, \ell-1\}$ be such that for every $j\in S_2$ we have $\degree(v_j) \leq \TDeg$.
Then, we have 
\begin{eqnarray}
\prod^{\ell -1}_{i=1}\frac{  (p_{v_i})^{-1}  }{ \degree_{in}(v_i)} &=&
\left(\prod_{i\in S_1}\frac{  1 }{ \degree_{in}(v_i)} \right)
\left(\prod_{i\in S_2}\frac{  (p_{v_i})^{-1} }{ \degree_{in}(v_i)} \right) 
\ \geq \ \left(\prod_{i\in S_1}\frac{  1 }{ \degree_{in}(v_i)} \right)
\left (  (1+\epsilon)/(1+\epsilon)^2 \right)^{|S_2|}, \qquad \label{eq:StepA4ProductLowerBound}
\end{eqnarray}
where in the last derivation we use the fact that
for $v\in S_2$ we have $(\rho_v)^{-1}\geq \left (\frac{(1+\epsilon)}{1+\epsilon^3}\degree_{in}(v) \right)^{-1}$.

Since $v_0\in \ImpVrtx\cap B$, from  Corollary \ref{cor:FromBreakPointProd} we have 
\begin{equation}\label{eq:HighDegreeLowerBound}
\textstyle \prod_{v: \degree(v)>\TDeg} \left[  \degree(v) \right]^{-1} \geq  (1+r)^{-\ell-2+m} d^{-15m},
\end{equation}
where $r=\epsilon/10$ and $m$ is the number of large degree vertices in ${\cal P}$.
Note that the r.h.s. of \eqref{eq:HighDegreeLowerBound} includes vertex $v_{\ell}$, if it is a
high degree vertex. 

Assume first that $\degree(v_{\ell}) \leq \TDeg$. Note that if $|S_1|=m$, then  $|S_2|=\ell-m$. Using this 
observation and \eqref{eq:HighDegreeLowerBound} for  \eqref{eq:StepA4ProductLowerBound}, we get that
\begin{eqnarray}
\prod^{\ell -1}_{i=1}\frac{  (p_{v_i})^{-1}  }{ \degree_{in}(v_i)} &\geq &
(1+r)^{-2}\left( \frac{1+\epsilon}{(1+\epsilon^3)(1+r)}\right)^{\ell} \left( \frac{(1+r)(1+\epsilon^3)}{d^{15}(1+\epsilon)} 
\right)^{m}.
\end{eqnarray}
Plugging the above inequality into \eqref{eq:Basis4Weight2OfBoudnary} we get 
\begin{equation} \nonumber
\chi(v_0)  \geq  \frac{ (1+r)^{-2} (p_{v_0})^{-1} }{\degree_{in}(v_{\ell})}
\left( \frac{1+\epsilon}{(1+\epsilon^2)(1+\epsilon^3)(1+r)}\right)^{\ell} \left( \frac{(1+r)(1+\epsilon^3)}{d^{15}(1+\epsilon)} 
\right)^{m}.
\end{equation}

\noindent
Using the fact that  $r=\epsilon/10$ and $(1+\epsilon)\geq (1+\epsilon/9)(1+\epsilon/2)$, from the above we get that 
\begin{eqnarray}
\chi(v_0)  
&\geq &(1+r)^{-2}\frac{(p_{v_0})^{-1} }{\degree_{in}(v_{\ell})}   \left( 1+\epsilon/2 \right)^{\ell}
\left(\frac{(1+\epsilon/11)}{d^{15}(1+\epsilon)}\right)^m 
\nonumber \\
&\geq &(1+r)^{-2}\frac{(p_{v_0})^{-1}}{\degree_{in}(v_{\ell})}   \exp\left(  \epsilon (1-\epsilon/4)\ell/2\right)
\left( \frac{(1+\epsilon/11)}{d^{15}(1+\epsilon)}\right)^m  \mbox{[as $\ln(1+x)\geq x-x^2/2 $]}. \quad \label{eq:chio0Organized}
\end{eqnarray}

\noindent
For every  $v\in S_1$ there should be a certain number $\ell_v$ of low degree vertices to compensate for
the high weight $W(v)$. In particular, for every $v\in S_1$ it holds that
\[
\ell_v \geq   r^{-1}  \left[ 15\log d+\log (\degree(v))\right].
\]
Also, recalling that $m=|S_1|$, we have that
\begin{eqnarray}
\ell+1-m &  \geq & \textstyle r^{-1}  \left[ 15m\log d+\sum_{v\in M}\log (\degree(v))\right]
\ \geq \  16r^{-1} m \log d.\hspace{1.5cm} \mbox{[as $\TDeg>d$]} \label{eq:EllVsM}
\end{eqnarray}

\noindent
Plugging \eqref{eq:EllVsM} into \eqref{eq:chio0Organized}  yields
\begin{eqnarray}
\chi(v_0) &\geq &(1+r)^{-2}\frac{ (p_{v_0})^{-1} }{\degree_{in}(v_{\ell})}   \exp\left(  \frac{7\epsilon}{r}m\log d (1-\epsilon/4) \right)
\left( \frac{(1+\epsilon/11)}{d^{15}(1+\epsilon)}\right)^m \hspace{1cm} \mbox{[from \eqref{eq:EllVsM}]} \nonumber \\
&\geq &(1+r)^{-2}\frac{ (p_{v_0})^{-1} }{\degree_{in}(v_{\ell})} d^{60m} \left( \frac{(1+\epsilon/11)}{d^{15}(1+\epsilon)}\right)^m 
\hspace{4.65cm} \mbox{[as $r=\epsilon/10$]} \nonumber \\
&\geq &(1+r)^{-2} \frac{(p_{v_0})^{-1} }{\degree_{in}(v_{\ell})}  \nonumber \\
&\geq &(1+r)^{-2} \frac{k-d}{(1+\epsilon/6)d} \ \ \geq 1/2.  \nonumber 
\end{eqnarray}

\noindent
For the case where 
$\degree(v_{\ell}) > \TDeg$,  in \eqref{eq:Basis4Weight2OfBoudnary} we include
 ${\degree(v_{\ell})}$ in the product of degrees.  We bound the product of degrees in
 the same manner, i.e., using \eqref{eq:HighDegreeLowerBound}. Then, we get the results by
 using  almost identical arguments as above. For this reason we omit the details.
The lemma follows.

\noindent
\section{Proofs of Burn-In Analysis}

\subsection{Proof of Proposition \ref{prop:RapidMixAux1}}\label{sec:prop:RapidMixAux1}

\begin{proof}[Proof of Proposition \ref{prop:RapidMixAux1}.1]

For proving Proposition \ref{prop:RapidMixAux1}.1 we use path coupling and
Proposition \ref{prop:Tail4ZetaIntro}.

For any $t>0$, given $X_t, Y_t$ we  let $W_0=X_t, W_1, W_2, \ldots, W_h=Y_t$  be a sequence of 
colorings where $h= |(X_t\oplus Y_t)\cap \ImpVrtx|$.  
Consider an arbitrary ordering of the vertices in $(X_t\oplus Y_t)\cap \ImpVrtx$, 
e.g., $w_1, w_2, \ldots, w_h$.  We obtain $W_{i+1}$ from $W_i$ by changing the color of  
$w_i$ from $X_t(w_i)$ to $Y_t(w_i)$. It could be that  $w_i $ belongs to the block $B$ such 
that  there is no $j>i$ such that $w_j\in B$,  while there exists  $B'\subset B$  such that
$B'\in X_t\oplus Y_t$. This means that there are disagreements in  $B$ which do not 
belong to $\ImpVrtx$, while $w_i$ is the last vertex in $\ImpVrtx\cap B$ we consider.
If this is the case for $w_i$, then so as to get from $W_{i}$ to $W_{i+1}$  we not only
change $X_t(w_i)$ to $Y_t(w_i)$ but we change $X_t(B')$ to $Y_t(B')$, too.

We couple each pair $W_i, W_{i+1}$, for $i=0, \ldots, h-1$, and we get 
$W'_i, W'_{i+1}$. Recall that $N=|{\cal B}|$.  Proposition \ref{prop:Tail4ZetaIntro} implies that 
there is a coupling such that
\[
\ExpCond{ H(W'_i, W'_{i+1})} { W_i, W_{i+1} }  \leq \left(1+c \TDeg/(Nd) \right) \leq \left(1+2c/N \right),
\]
where $c>0$ is a fixed number, independent of $d$, while we use the fact that $\TDeg\leq 2d$.

Any disagreement which does not belong to $\ImpVrtx$ cannot spread during any update.
Then, path coupling implies that there is a coupling such that
\[
\ExpCond{ H(X_{t+1}, Y_{t+1})}{  H(X_{t}, Y_t) }  \leq \left(1+2c/N \right)H(X_{t}, Y_t).
\]
A simple induction on $t$ yields
$
\Exp{ H(X_{t}, Y_{t}) } \leq \exp\left( 2tc/N\right).$ The result follows by setting   $t=CN/\epsilon$,
in the previous inequality.
\end{proof}

\begin{proof}[Proof of Proposition \ref{prop:RapidMixAux1}.2]
It holds that
\begin{equation}\label{eq:BasisRapidMixAux1ReductionToDt}
\Exp{ |(X_T\oplus Y_T)\cap \ImpVrtx| \   \mathbf{1}\{\ManyDisEvent_T\} } \leq
\Exp{  |D_{\leq T} \cap \ImpVrtx| \   \mathbf{1}\{\ManyDisEvent_T\} }.
\end{equation}
 For small { $\gamma>0$ we specify later},  let $\mathbf{I}=[0, T]$
 and  let $\mathbf{I}_1, \ldots, \mathbf{I}_m$ be a partition of $\mathbf{I}$ into 
 $m$ subintervals each of length $\left \lfloor  T/m \right \rfloor$ (the last interval which maybe smaller),
 where $m=\left \lceil \gamma^{-1}\log d \right \rceil$.

Let $T' $ be the first time such that $|\Dis_{\le T'}|\geq d^{2/3}$.
Using similar arguments to those for  Theorem \ref{thrm:BallResult}, we see that  so as to have $T' \leq T$, at least 
one of the  following two events should happen:

\begin{description}
\item[$J_A:=$] There exists a subinterval $\mathbf{I}_{j}=[t_j, t_{j+1}-1]$ such that 
$|D_{t_j}| <d^{2/3-\gamma}$ and the increase in the number of
disagreements in the set $\ImpVrtx$, during $\mathbf{I}_j$, is at least $Cd^{2/3-\gamma}$,
for  large  $C>0$.

\item[$J_B:=$] There is a subinterval $\mathbf{I}_{j}=[t_j, \ldots, t_{j+1}-1]$ such that 
\[
d^{2/3-\gamma}\leq |D_{\leq t_j} | \leq d^{2/3},
\]
during which  the increase in the number of disagreements in $\ImpVrtx$  is
at least $(1+\gamma/2)|D_{\leq t_j}|$.
\end{description}

\noindent
Let ${\cal J}_T=J_A\cup J_B$.  Noting that  
$\ManyDisEvent_T\subseteq  {\cal J}_T$  we have 
\begin{equation}
\Exp{ |D_{\leq T} | \   \mathbf{1}\{\ManyDisEvent_T\} }
\leq \Exp{  |D_{\leq T} | \   \mathbf{1}\{{ \cal J}_T \} }.
\end{equation}

\noindent
In what follows, we let $\mathbf{I}_j$ be the set that the is involved in the realization of  ${ \cal J}_T$.
Also, let ${\cal L}$ be the event that there is at least one $t'\in \mathbf{I}_j$ such that
\[
\textstyle | D_{\leq t'} |-| D_{\leq t_j} | \in   (1\pm \delta)\frac{\gamma}{2} 
\max\left \{ 
 |D_{\leq t_j} |,  \ d^{2/3-\gamma}  
\right\}
\]
for (any) small fixed $\delta\in (10^{-3}, 10^{-2})$. 
Intuitively, the event ${\cal L}$ requires that $\mathbf{I}_j$ contains a $t'$ during which
the increase in $|D_{\leq t'} |$, compared to $|D_{\leq t_j} |$ falls within
a specific interval. 
It holds that
\begin{equation}\label{eq:EpxJt2JtL-Lc}
\Exp{  |D_{\leq T} | \   \mathbf{1}\{ {\cal J}_T\} } =
\Exp{ |D_{\leq T} | \   \mathbf{1}\{{ \cal J}_T, {\cal L}\} } +
\Exp{  |D_{\leq T}| \   \mathbf{1}\{{ \cal J}_T,  \bar{\cal L}\} }.
\end{equation}
We proceed by bounding  the two expectations on the r.h.s. of the above 
equality.

Consider $\Exp{ |D_{\leq T} | \   \mathbf{1}\{{ \cal J}_T, \bar{\cal L}\} }$.
If  ${\cal J}_T$ and $ \bar{\cal L}$ hold,  then, there should be a moment in $\mathbf{I}_j$ such that
a lot of disagreements are generated, i.e.,
there exist  $t''$ such that $t'', t''+1 \in \mathbf{I}_j$ and
\begin{equation}\label{eq:TPlus1Condition}
| D_{\leq t''+1} |-| D_{\leq t''} | \geq \gamma\ \delta 
\max\left \{ 
 |D_{\leq t_j} |,  \ d^{2/3-\gamma}  
\right\},
\end{equation}
while 
\begin{equation}\label{eq:T''Condition}
| D_{\leq t''} | < |D_{\leq t_j}|+ (1- \delta)({\gamma}/{2})\max\left \{  |D_{\leq t_j}\cap \ImpVrtx|,  \ d^{2/3-\gamma}   \right\}.
\end{equation}

\noindent
For the subinterval $\mathbf{I}_j$,  $t''$ is the latest  moment  that \eqref{eq:T''Condition} is true.
If, subsequently, $t''+1$ does not satisfy \eqref{eq:TPlus1Condition}, then the event must $\cal L$ occur.
The condition in \eqref{eq:TPlus1Condition} implies that   at time $t''+1$   a lot of  disagreements are generated in $\ImpVrtx$.

Let ${\cal R}$ be the  following event: There exists   $\mathbf{I}_{s}$, for some $s=1,2,\ldots, m$,
and  $t'', t''+1 \in \mathbf{I}_s$ which satisfy \eqref{eq:TPlus1Condition},\eqref{eq:T''Condition}, respectively,  while 
$| D_{\leq t''} | \leq 2d^{2/3}$. 
Noting that ${\cal J}_T \cap \bar{\cal L} \subseteq {\cal R}$, we have
\begin{equation}\label{eq:EpxJtLc2R}
\Exp{  |D_{\leq T} | \   \mathbf{1}\{{ \cal J}_T,  \bar{\cal L}\} } \leq
\Exp{  |D_{\leq T} | \   \mathbf{1}\{ {\cal R} \} }.
\end{equation}

\noindent
Let ${\tt Inc}(t)$ be the number of new disagreements in $\ImpVrtx$ generated at the update at time $t$. From path coupling we get that
\begin{eqnarray}
\Exp{  |D_{\leq T} | \   \mathbf{1}\{ {\cal R} \} }  &\leq &
\Exp{  |D_{\leq t''+1}  | \ \mathbf{1}\{ {\cal R} \} }   \   \Exp{ |D_{\leq T} |  }
\nonumber \\
&\leq &
\left( \Exp{ |D_{\leq t''}  | \ \mathbf{1}\{ {\cal R} \}  }  
+ \Exp{   {\tt Inc}(t''+1) \ \mathbf{1}\{ {\cal R} \}  }  \right )  \ \Exp{ |D_{\leq T} |  }
\nonumber \\
&\leq &
\left( 2d^{2/3} \Exp{ \mathbf{1}\{ {\cal R} \}  }
+ \Exp{   {\tt Inc}(t''+1) \ \mathbf{1}\{ {\cal R} \}  }   \right )  \ \Exp{ |D_{\leq T} | },
\label{eq:ExpctDisEventRBasis}
\end{eqnarray}
the last inequality follows from the direct observation that   $|D_{\leq t''} | \leq 2d^{2/3}$.

\begin{claim}\label{claim:IncreaseTail}
Let  $\gamma, \delta$ be as defined above,  $t\in \mathbf{I}$ and  let $s\in [m]$ be such that $t\in \mathbf{I}_s$. 
 Let $\lambda(t)=\max\left \{  |D_{\leq t_s}|,  \ d^{2/3-\gamma}  \right\}$.
There exists $C'>0$ such that  for any $\ell\geq \gamma\delta \lambda(t) $ the following is true:

Let ${\cal A}_t$ be the event that $| D_{\leq t-1}| \leq |D_{\leq t_s}|+ (1- \delta)\frac{\gamma}{2}\lambda(t)$
and $| D_{t-1}| \leq 2d^{2/3}$. Then, 
\begin{equation}\nonumber
\Pr[{\tt Inc}(t)\geq \ell \  |\  {\cal A}_t, D_{\leq t-1} ]  \leq {C'}N^{-1} \textstyle \exp\left( -\ell/C'\right).
\end{equation}
\end{claim}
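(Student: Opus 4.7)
The plan is to decompose over which block is updated at time $t$ and combine Proposition \ref{prop:StochasticDomination} with the single-source tail bound of Proposition \ref{prop:SingleSourceBlockdis}. Conditional on $\mathcal{A}_t$ and $D_{\leq t-1}$, only blocks $B$ with $\Lambda_B := (X_{t-1}\oplus Y_{t-1})\cap \outBound B \neq \emptyset$ can contribute to ${\tt Inc}(t)$; since every vertex in $\ImpVrtx$ has degree at most $\TDeg=O(d)$ and $|D_{t-1}|\leq 2d^{2/3}$ under $\mathcal{A}_t$, the set ${\cal B}^*:=\{B:\Lambda_B\neq\emptyset\}$ has $|{\cal B}^*|\leq 2d\cdot d^{2/3}=O(d^{5/3})$. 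Moreover, for each such $B$, $|\Lambda_B|\leq |D_{t-1}|=O(d^{2/3})$, well within the hypothesis $|\Lambda|\leq d^{9/10}$ of Proposition \ref{prop:StochasticDomination}.

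First, I would write
\[
\Pr[{\tt Inc}(t)\geq \ell \mid \mathcal{A}_t,D_{\leq t-1}] \;\leq\; \frac{1}{N}\sum_{B\in{\cal B}^*}\Pr\!\left[|(X_t(B)\oplus Y_t(B))\cap \ImpVrtx|\geq \ell\,\Big|\,B\text{ updated},\mathcal{A}_t,D_{\leq t-1}\right].
\]
For a fixed $B\in{\cal B}^*$, Proposition \ref{prop:StochasticDomination} dominates $(X_t(B)\oplus Y_t(B))\cap \ImpVrtx$ by $\bigcup_{u\in\Lambda_B}\mathbold{\zeta}_u$, where $\mathbold{\zeta}_u:=\mathbold{\theta}_u\cap \ImpVrtx$ are independent across $u\in\Lambda_B$ and each satisfies $\Pr[|\mathbold{\zeta}_u|\geq \alpha]\leq Cd^{-1}\exp(-\alpha/C)$ by Proposition \ref{prop:SingleSourceBlockdis}. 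The per-block tail therefore reduces to bounding $\Pr\!\left[\sum_{u\in\Lambda_B}|\mathbold{\zeta}_u|\geq \ell\right]$.

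Next, I would bound this sum by re-running the compositions-and-Stirling estimate of \eqref{eq:TailBoundSumParts} and Claim \ref{claim:EntropyFromPartitions}: decomposing over the number $r$ of indices with nonzero contribution yields the upper bound $\exp(-2\ell/C)\cdot \sum_{r}\binom{|\Lambda_B|}{r}\binom{\ell-1}{r-1}(C/d)^r$, and Claim \ref{claim:EntropyFromPartitions} applied with $m=|\Lambda_B|=O(d^{2/3})$ controls the remaining sum by $\exp\!\bigl(O\bigl(\sqrt{|\Lambda_B|\,\ell/d}\bigr)\bigr)\cdot\mathrm{poly}(d)$. For $\ell\geq \gamma\delta\lambda(t)\geq\gamma\delta d^{2/3-\gamma}$ the square-root exponent is $O(\sqrt{\ell\cdot d^{-1/3}})=o(\ell)$, so the whole bound collapses to $\exp(-\ell/C'')$ for some $C''>0$.

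Finally, summing across $B\in{\cal B}^*$ multiplies by $|{\cal B}^*|=O(d^{5/3})$, which is absorbed into the exponential because $\ell$ is at least polynomial in $d$ and hence $\gg \log d$: $|{\cal B}^*|\exp(-\ell/C'')\leq \exp(-\ell/C')$ after shrinking the constant slightly. The $1/N$ factor comes out front as claimed. The main obstacle I expect is the tail-sum step --- with $|\Lambda_B|$ possibly as large as $\Theta(d^{2/3})$ but a per-summand exponential rate of only $1/C$, one must crucially exploit the $(C/d)^r$ weight in the composition estimate to stop the combinatorial coefficients $\binom{|\Lambda_B|}{r}\binom{\ell-1}{r-1}$ from overwhelming the exponential decay. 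This is exactly what Claim \ref{claim:EntropyFromPartitions} achieves, so the remaining work is just verifying its hypothesis in the regime $\ell\geq\gamma\delta\lambda(t)$, $|\Lambda_B|=O(\lambda(t))$, rather than reproving anything from scratch.
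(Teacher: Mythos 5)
Your proposal is correct and follows essentially the route the paper intends: the paper omits this proof, stating it follows by the same arguments as Lemma \ref{lemma:BigJumpInBurnIn}, and your argument is exactly the single-step specialization of that proof, combining Proposition \ref{prop:StochasticDomination}, the per-source tail of Proposition \ref{prop:SingleSourceBlockdis}, and the compositions estimate of Claim \ref{claim:EntropyFromPartitions}, with the $N^{-1}$ factor arising cleanly from decomposing over which block is updated. The verifications you flag (that $|\Lambda_B|\leq 2d^{2/3}\leq d^{9/10}$, that $|{\cal B}^*|=O(d^{5/3})$ and the $\mathrm{poly}(d)$ prefactors are absorbed because $\ell\geq\gamma\delta d^{2/3-\gamma}\gg\log d$, and that the square-root exponent $O(\sqrt{\ell\,d^{-1/3}})$ is $o(\ell)$) are the right ones and all go through.
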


\noindent
We omit the proof of the above claim, since it  follows by  using very similar arguments to those we use for Lemma \ref{lemma:BigJumpInBurnIn}.

For $t\in \mathbf{I}$,  consider the quantity $\lambda(t)$ and the event ${\cal A}_t$ as defined in Claim \ref{claim:IncreaseTail}.
We  have that
\begin{eqnarray}\label{eq:ExpctDisEventRBasisAComponent}
\Exp{ \mathbf{1}\{ {\cal R} \}   } &= & \Pr[{\cal R}] \ \leq \ \textstyle 
 \sum_{t\in \mathbf{I}} \Pr[{\tt Inc}(t)\geq \gamma\delta\lambda(t), {\cal A}_t   ]   \hspace{5cm} \mbox{[union bound]}\nonumber \\
& = &  \sum_{t\in \mathbf{I}}  \sum^{2d^{2/3}}_{ r=0 } \Pr[{\tt Inc}(t)\geq \gamma\delta \lambda(t) \ | \ {\cal A}_t, \ | D_{t-1}\cap \ImpVrtx| =r  ]
\ \Pr[{\cal A}_t, \ | D_{t-1}\cap \ImpVrtx| =r] \nonumber  \\
& \leq & \sum_{t\in \mathbf{I}} N^{-1}C' \exp\left( -d^{2/3-\gamma}/C'\right ) \sum^{2d^{2/3}}_{ r=0 }
\Pr[{\cal A}_t, \ | D_{t-1}\cap \ImpVrtx| =r] \nonumber \hspace{1.5cm} \mbox{[from Claim \ref{claim:IncreaseTail}]}\\
& \leq & C_0 \textstyle \exp\left( -d^{2/3-\gamma}/C_0\right ),
\end{eqnarray}
where $C_0>0$ is a sufficiently large constant, independent of $d$.
Furthermore, we have that
\begin{eqnarray}
\Exp{  {\tt Inc}(t''+1) \ \mathbf{1}\{ {\cal R} \}   }  &\leq &   \textstyle 
\sum_{t\in \mathbf{I}}  \Exp{ {\tt Inc}(t) \ \mathbf{1}\{ {\tt Inc}(t)\geq \gamma \delta \lambda(t) \}\ \mathbf{1}\{ {\cal A}_t \} }    
\nonumber\\
&\leq & \textstyle  \sum_{t\in \mathbf{I}} 
\Exp{  \ExpCond{ {\tt Inc}(t) \ \mathbf{1}\{ {\tt Inc}(t)\geq \gamma\delta\lambda(t) \}\ \mathbf{1}\{ {\cal A}_t \} }{D_{\leq t_s} }   }, \label{eq:ExpctDisEventRBasisComponentB-A}
\end{eqnarray}
in the above inequality we assume  that $t\in \mathbf{I}_s$, for some $s\in [m]$.

Note that $\lambda(t)$ is fully specified by $D_{\leq t_s}$.  For 
any $D_{\leq t_s}$ such that  $|D_{\leq t_s}\cap \ImpVrtx|\leq 2d^{2/3}$,  we have 
\begin{eqnarray}
\ExpCond{ {\tt Inc}(t) \ \mathbf{1}\{ {\tt Inc}(t)\geq \ell_0(t) \}\ \mathbf{1}\{ {\cal A}_t \} }{  D_{\leq t_s} }
&\leq&  \sum_{j \geq \gamma\delta\lambda(t)} 
j \Pr \left [ {\tt Inc}(t)=j, \ \mathbf{1}\{ {\cal A}_t \}  \ | \ D_{\leq t_s} \right]     \nonumber\\
&\leq& \sum_{j \geq \gamma\delta\lambda(t)} 
j \Pr \left [ {\tt Inc}(t) \geq j |  {\cal A}_t,  D_{\leq t_s} \right]     \nonumber\\
&\leq&  C_1N^{-1} d^{2/3} \textstyle \exp\left(-d^{2/3-\gamma}/C_1 \right), \label{eq:After:eq:ExpctDisEventRBasisComponentB-A}
\end{eqnarray}
for large $C_1>0$.
Due to  the indicator of ${\cal A}_t$, we have
$\ExpCond{ {\tt Inc}(t) \ \mathbf{1}\{ {\tt Inc}(t)\geq \gamma\delta\lambda(t) \}\ \mathbf{1}\{ {\cal A}_t \}}{  D_{\leq t_s} } =0$,
if   $|D_{\leq t_s} | > 2d^{2/3}$.
Combining  this observation with \eqref{eq:After:eq:ExpctDisEventRBasisComponentB-A}  and   \eqref{eq:ExpctDisEventRBasisComponentB-A},  
we get that
\begin{equation}\label{eq:ExpctDisEventRBasisComponentB}
\Exp{ {\tt Inc}(t''+1) \ \mathbf{1}\{ {\cal R} \}  }  \leq C_3d^{2/3}
 \textstyle \exp\left(-d^{2/3-\gamma}/C_3 \right),
\end{equation}
for large constant $C_3>0$.
Finally,  combining  \eqref{eq:ExpctDisEventRBasisAComponent},  \eqref{eq:ExpctDisEventRBasisComponentB} 
and  \eqref{eq:ExpctDisEventRBasis}  we get that
\begin{equation}\label{eq:EpxJtLc}
\Exp{ |D_{\leq T} | \   \mathbf{1}\{{ \cal J}_T,  \bar{\cal L}\} }  \leq 
  \textstyle \exp\left(-d^{3/5} \right).
\end{equation}

\noindent
Now consider  the quantity $\Exp{  |D_{\leq T} | \   \mathbf{1}\{{ \cal J}_T, {\cal L}\} }$.
For some  interval $\mathbf{I}_j$,   such that $|D_{t_j} | <d^{2/3-\gamma}$,
 the probability  that  event ${\cal J}_T, {\cal L}$ happens is  less than $\exp\left(- d^{2/3-2\gamma}\right)$.
This  follows from   Lemma \ref{lemma:BigJumpInBurnIn}.
Similarly, for interval $\mathbf{I}_j$ such that $d^{2/3-\gamma}\leq |D_{t_j}  | \leq d^{2/3}$,
the probability that  event ${\cal J}_T,{\cal L}$ happens is   less than $\exp\left(- d^{2/3-2\gamma}\right)$.

Furthermore, when  ${\cal J}_T$ and ${\cal L}$ occurs,  the expected number of
disagreements is at most 
$$
|D_{\leq t''} | \  \Exp{ |D_{\leq T} | } \leq 
2d^{2/3} \ \Exp{   |D_{\leq T}| }.
$$
The above follows from  path coupling.  Combining all the above  together we have that
\begin{eqnarray}
\Exp{  |D_{\leq T} | \   \mathbf{1}\{{ \cal J}_T,  {\cal L}\} }
 & \leq  &  \textstyle10 \exp\left(-d^{2/3-2\gamma} \right)d^{2/3}  \Exp{  |D_{\leq T}| } \nonumber \\
  & \leq  &  \textstyle10 \exp\left( C'/ \epsilon \right)  \exp\left(-d^{2/3-2\gamma} \right)d^{2/3} 
   \ \leq \  \exp\left(-d^{3/5} \right), \label{eq:EpxJtL}
\end{eqnarray}
where in the last inequality holds for any  $\gamma\in (0, 0.02)$. The  second inequality,  uses the first part of the
proposition to bound $\Exp{ |(X_T\oplus Y_{T} ) \cap \ImpVrtx | }$.

Combining \eqref{eq:EpxJtL}, \eqref{eq:EpxJtLc}, \eqref{eq:EpxJt2JtL-Lc} and 
\eqref{eq:BasisRapidMixAux1ReductionToDt}
we get  Proposition \ref{prop:RapidMixAux1}.2.
\end{proof}

\section{Local Uniformity: Proof of Theorem \ref{thrm:Uniformity1.76}}\label{sec:thrm:Uniformity1.76}

Given the vertex $v$,  let $G^*_v$ denote the graph which is derived from $G$ when we delete all
the edges that are incident to the vertex $v$. Also, let $(X^*_t)_{t\geq 0}$ denote the
corresponding block dynamics on $G^*_v$ where the blocks are identical to those
of $(X_t)_{t\geq 0}$.

In the graph $G^*_v$, the neighborhood of $v$ is empty, since we have deleted all the
incident edges.   However, we follow the convention and call ``neighborhood of $v$" the set of 
vertices which are adjacent to  $v$ in the graph $G$. We denote this set by $N^*(v)$.

\begin{lemma} \label{lemma:Restriction2Cycle}

For all $\epsilon, \Delta, C>0$ there exists positive $d_0$ such that for all $d\geq d_0$
for $k=(\alpha+\epsilon)d$  and every  graph   $G\in \IntrstGraphFam(\epsilon, d, \Delta)$, where 
$\Delta>0$ can depend on $n$,  with  set of blocks ${\cal B}$, for any $v\in \ImpVrtx$ the following is true:

Consider,  $G^*_v$ and the  block dynamics $(X^*_t)_{t\geq 0}$ and $(Y^*_t)_{t\geq 0}$.
Assume  that $X^*_0(w)\neq Y^*_0(w)$ for every $w\in N^*(v)$,
while $X^*_0(w)=Y^*_0(w)$ for every $w\notin N^*(v)$.
There is a coupling for  $ (X^*_t)_{t\geq 0}$ and $(Y^*_t)_{t\geq 0}$  such that 
\[
\Pr\left[ 
D^*_{\leq CN} \not \subseteq \ball \left (N^*(v), d^{4/5} \right) 
 \right]\leq  \textstyle \exp\left( -d^{3/4} \right),
\]
where for any time $s$,  we let  $\WDis^*_{\leq s}=\bigcup_{t'\leq s} (Y^*_{t'}\oplus X^*_{t'})$
\end{lemma}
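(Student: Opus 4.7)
The approach mirrors the proof of Proposition \ref{thrm:BallResult}, generalized to handle $|N^*(v)| \leq \TDeg \leq (1+\epsilon/6)d$ simultaneous initial disagreements while exploiting the much larger escape radius $R := d^{4/5} \gg \epsilon^{-3}(\log d)\sqrt{d}$. First, I would set $T := \min\{T_0, CN\}$ where $T_0$ is the earliest time some vertex of $\WDis^*_{\leq t}$ escapes $\ball(N^*(v), R)$, and split the event of interest as $\{{\cal E}^c\} \cup \{{\cal A} \cap {\cal E}\}$, where ${\cal E}$ is the event that the total number of $\ImpVrtx$-disagreements remains below some threshold $K := d^{4/5-\gamma}$ throughout $[0,T]$ for a small constant $\gamma > 0$, and ${\cal A}$ is the event that some disagreement escapes the ball.

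To bound $\Pr[{\cal E}^c]$, I would use the stochastic-domination machinery of Proposition \ref{prop:StochasticDomination} together with the martingale-style tail estimate of Proposition \ref{prop:SingleSourceBlockdis}. Since all edges of $v$ are deleted in $G^*_v$, the initial disagreements at the various $w \in N^*(v)$ may be analyzed as independent sources (their evolutions couple only through paths of $G \setminus \{v\}$, and within any single block-update only a bounded number of them appear on the boundary by the $r$-breakpoint property of $v$). Applying Proposition \ref{prop:SingleSourceBlockdis} to each of the $\leq 2d$ sources and summing tail probabilities yields $\Pr[{\cal E}^c] \leq \exp(-d^{3/4 + o(1)})$, with room to spare because the per-source expected contribution to $\ImpVrtx$ is $O(1)$.

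Conditional on ${\cal E}$, I would bound $\Pr[{\cal A} \mid {\cal E}]$ by the path-counting argument of Section \ref{sec:thrm:BallResult}. For each candidate destination $z$ at $G^*_v$-distance $R$ from some $w \in N^*(v)$, Proposition \ref{prop:StepIProbPropag} bounds the probability that a chain of block updates propagates disagreement along the (essentially unique) path from $w$ to $z$ by $(1.45)^{r}((1+\epsilon)d)^{-R}$ with $r \leq K$; and Lemma \ref{lemma:GrowthFromBP} (applied to $v$, which is an $r$-breakpoint for $r > \log\log n \gg d^{4/5}$) bounds the number of such destinations summed over starting points by $|N^*(v)| \cdot ((1+\epsilon/3)d)^{R+1}$. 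The union bound then telescopes to
\[
\Pr[{\cal A} \mid {\cal E}] \leq 2d \cdot (1.45)^{K} \left(\frac{1+\epsilon/3}{1+\epsilon}\right)^{R} \leq \exp\bigl(-\Omega(\epsilon\, d^{4/5})\bigr),
\]
comfortably below $\exp(-d^{3/4})$ for large $d$.

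The main obstacle is the first step: establishing $\Pr[{\cal E}^c] \leq \exp(-d^{3/4})$ despite the initial disagreement count being $\sim d$, which exceeds the $d^{9/10}$ threshold of Proposition \ref{prop:StochasticDomination}. The resolution is that the hypothesis of that proposition only needs to hold per block-update: any individual block receives only a small number of $N^*(v)$ sources on its outer boundary, by the breakpoint structure around $v$ together with the block-diameter condition 2(a) of Definition \ref{def:SparseBlockPart}. A careful implementation---batching the sources and iterating Lemma \ref{lemma:BigJumpInBurnIn} across batches of size $\leq d^{4/5}$---should yield the required tail estimate, with the key enabling fact being that disagreements emanating from distinct neighbors of $v$ cannot interact in $G^*_v$ except via paths of length $\Omega(\log d)$ (by the breakpoint and girth conditions on $G$), so the inter-batch coupling is a lower-order effect that does not destroy the exponential decay.
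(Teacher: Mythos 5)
There is a genuine gap, and it sits exactly where you yourself flag the ``main obstacle''. Your plan needs a tail bound on the \emph{total} number of $\ImpVrtx$-disagreements generated from the roughly $|N^*(v)|\leq \TDeg\approx (1+\epsilon/6)d$ initial sources, but none of the tools you invoke delivers it: Proposition \ref{prop:StochasticDomination} requires the relevant disagreement set to have size at most $d^{9/10}$ and Lemma \ref{lemma:BigJumpInBurnIn} requires the initial set to have size at most $d^{4/5}$, both of which are already violated at time $0$; and ``applying Proposition \ref{prop:SingleSourceBlockdis} to each of the $\leq 2d$ sources and summing tail probabilities'' only bounds the probability that \emph{some single} source contributes more than a given amount, not that the \emph{sum} exceeds $K=d^{4/5-\gamma}$ --- with $\sim 2d$ sources the per-source budget is $K/(2d)=o(1)$, where the individual tail bound says nothing. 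The ``batching'' of sources and the assertion that ``inter-batch coupling is a lower-order effect'' is precisely the statement that would need a proof: the coupled clusters emanating from distinct neighbors of $v$ do interact (through shared block updates, and once their clusters approach one another), and nothing in Section \ref{sec:DisPerc} hands you a product structure over sources, nor does the per-block smallness of boundary disagreements follow obviously from Definition \ref{def:SparseBlockPart}.

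The paper closes this hole with a device your proposal does not contain: a telescoping interpolation in the style of path coupling. Order $N^*(v)=\{w_1,\dots,w_h\}$ and define colorings $\tau_0=X^*_0,\tau_1,\dots,\tau_h=Y^*_0$ with $\tau_i$ and $\tau_{i+1}$ differing only at one vertex of $N^*(v)$; run the $h+1$ chains $(W^i_t)$ coupled simultaneously. Since $X^*_t\oplus Y^*_t\subseteq\bigcup_{i}\left(W^i_t\oplus W^{i+1}_t\right)$, escape of the true disagreement set from $\ball(N^*(v),d^{4/5})$ forces escape for some adjacent pair, and each adjacent pair starts from a \emph{single} disagreement, so the single-source analysis of Theorem \ref{thrm:BallResult} (with the larger radius $d^{4/5}$, and with Lemma \ref{lemma:RateOfDisSpreadBurnIn} keeping each pair's own disagreement count small) applies verbatim and gives $\Pr[B_i]\leq\exp(-d^{0.77})$; a union bound over the $h\leq 2d$ pairs yields $\exp(-d^{3/4})$. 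This removes any need to control the aggregate disagreement count or to argue quasi-independence across sources, which is exactly the part your argument leaves unsubstantiated. Your second step, the path-counting bound for the escape probability conditioned on few disagreements, is fine in spirit, but without a valid first step the decomposition does not close.
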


\begin{proof}
The proof of  Lemma \ref{lemma:Restriction2Cycle} uses the results from Section \ref{sec:DisPerc}.
In particular, the proof is very similar to that of Theorem \ref{thrm:BallResult}.
The main difference between this proof and that of Theorem \ref{thrm:BallResult} is the assumption
that all the vertices in $N^*(v)$ at time $t=0$ are disagreeing. Since $v\in \ImpVrtx$,
it could be that $|N^*(v)|=(1+\epsilon/6)d$, whereas for proving Theorem \ref{thrm:BallResult} we
need the assumption that the number of disagreements cannot get larger than $d^{9/10}  \ll d$.

So as to prove the lemma, first we reduce the problem to studying the spread of disagreements 
for a pair of chains which,  at time $t=0$,  disagree on a single vertex in $N^*(v)$.
More specifically, we work as follows: Consider an arbitrary ordering of the vertices in $N^*(v)$, 
e.g., $w_1, w_2, \ldots, w_h$, where $h=|N^*(v)|$. 
We have a sequence of configurations $\tau_0, \tau_1, \ldots, \tau_h$ such that
$\tau_0=X^*_0$ and $\tau_{h}=Y^*_0$, while $\tau_{i}$ differs from $\tau_{i+1}$
in the assignment of vertex $w_i$.
Furthermore, consider block dynamics $(W^i_t)_{t\geq 0}$, for $i=0, \ldots, h$, 
such $(W^0_t), (W^h_t)$ are identical to $(X^*_t)$ and $(Y^*_t)$, respectively.
Furthermore, we assume that $W^i_0=\tau_i$.

We are  coupling the pairs $(W^i_t)$ and $(W^{i+1}_t)$, for $i=0, \ldots, h-1$, simultaneously.
That is, at each time $t$, we have
a transition for $(W^0_t)$, then, given this transition  the coupling decides a move for 
$(W^1_t)$, then, given the move of $(W^1_t)$, it decides the move for $(W^2_t)$ and so on.
In this setting,  let $B_i$ be the event that $\left( \bigcup_{t\leq CN} (W^i_t \oplus W^{i+1}_{t}) \right)\not \subseteq
\ball \left (N^*(v), d^{4/5} \right)$. It is elementary to verify that
\begin{equation}
\Pr\left[  \WDis^*_{CN} \not \subseteq \ball \left (N^*(v), d^{4/5} \right)  \right] 
\  \leq \ \textstyle \Pr\left[ \bigcup^{h-1}_{i=1} B_i\right] \ \leq \ \sum^{h-1}_{i=1}\Pr[B_i]. \label{eq:basis:lemma:Restriction2Cycle}
\end{equation}
The last inequality follows from the union bound.

For bounding $\Pr[B_i]$ we just work as in the proof of Theorem \ref{thrm:BallResult}. 
Then, for every $i=0,\ldots, h-1$, we have that
$\Pr[B_i]\leq  \textstyle \exp\left( -d^{0.77}\right).$
The lemma follows by plugging the above bound into 
\eqref{eq:basis:lemma:Restriction2Cycle}.
\end{proof}

\begin{lemma}[$G$ versus $G^*_v$]\label{lemma:GVsGStartNeighbors}

In the same setting as Lemma \ref{lemma:Restriction2Cycle} the following is true:

Let $(X_t)_{t\geq 0}$ be  the block dynamics  on $G$.  Also,  consider $G^*_v$ and the corresponding block 
dynamics $(X^*_t)_{t\geq 0}$. Assume that $X_0=X^*_0$. For any time $s$,  let  
$\WDis_{\leq s}=\bigcup_{t'\leq s} (X_{t'}\oplus X^*_{t'})$. There is a coupling of  $(X_t)_{t\geq 0}$ and 
$(X^*_t)_{t\geq 0}$ such that 
\[
\Pr\left[ 
|\WDis_{\leq CN} \cap N^*(v)| \geq \gamma^2 d
 \right]\leq  \textstyle3\exp\left( -d^{3/4}\right).
\]
\end{lemma}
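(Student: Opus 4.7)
The plan is to couple the two chains so that the only \emph{sources} of new disagreement are updates of a small collection of blocks close to $v$, and then to bound both the number of such ``source'' updates and the disagreements they deposit on $N^*(v)$.

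Let $\mathcal{N}_v = \{ B \in \mathcal{B} : B \cap (\{v\} \cup N^*(v)) \neq \emptyset \}$, i.e., the blocks containing $v$ or a neighbor of $v$ in $G$. Since $v \in \ImpVrtx$ implies $\degree(v) \le \TDeg \le 2d$, we have $|\mathcal{N}_v| \le \TDeg + 1$. Couple the dynamics by updating the same block $B$ at each step; if the chains currently agree on $V$ and $B \notin \mathcal{N}_v$, then the marginals $\pi_B(\cdot \mid X_t(V \setminus B))$ and $\pi^*_B(\cdot \mid X^*_t(V \setminus B))$ on $G$ and $G^*_v$ coincide (no edge from $v$ enters $B$), so the identity coupling keeps the chains equal; otherwise we use the maximal coupling block-by-block as in Section~\ref{sec:DisPerc}. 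In particular, new disagreements are only created during updates of blocks in $\mathcal{N}_v$ (possibly amplified by pre-existing disagreements on boundaries).

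First I would bound the number $M$ of times a block in $\mathcal{N}_v$ is updated during $[0,CN]$: $M$ is stochastically dominated by $\mathrm{Binomial}(CN, |\mathcal{N}_v|/N)$, so $\Exp{M} \le 3Cd$ and a standard Chernoff bound gives $M \le 4Cd$ with probability $1 - \exp(-\Omega(d))$. Next, for a single $\mathcal{N}_v$-update of a block $B$ with current boundary disagreement $\Lambda_B = (X_t \oplus X^*_t) \cap \outBound B$, I would show by the same disagreement-percolation argument that underlies Proposition~\ref{prop:Tail4ZetaIntro} (Section~\ref{sec:prop:Tail4ZetaIntro}) that the number of new disagreements created inside $B \cap N^*(v)$ is stochastically dominated by the vertices-in-$N^*(v)$ count of an independent Bernoulli percolation rooted at $\Lambda_B$ together with one extra ``virtual'' source at $v$ (to absorb the difference between $\pi_B$ and $\pi^*_B$, whose only effect is to forbid color $X_t(v)$ at vertices of $B \cap N^*(v)$). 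Applying Proposition~\ref{prop:SingleSourceBlockdis} to each source yields the tail bound: for every $\ell \ge 1$,
\[
\Pr\!\left[\, |\mathbold{\theta}_B \cap N^*(v)| \ge \ell \,\big|\, \Lambda_B \,\right] \le C(|\Lambda_B|+1)\, d^{-1} \exp(-\ell / C).
\]
Summing over the $M$ sources then gives $|\WDis_{\leq CN} \cap N^*(v)| = O(M/d)\cdot \mathrm{polylog}$, and the stretched-exponential tails feed into a Freedman-style martingale argument as in Section~\ref{sec:prop:SingleSourceBlockdis}, producing a tail of the form $\exp(-d^{3/4})$ after combining with the tail on $M$.

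The main obstacle is the feedback loop: a disagreement born from an earlier $\mathcal{N}_v$-update can migrate (during updates of other blocks) onto the outer boundary of some block in $\mathcal{N}_v$ that is later updated, and such boundary disagreements are amplified according to the percolation process. I would control this by invoking Lemma~\ref{lemma:BigJumpInBurnIn}-style tail bounds to show that between two consecutive $\mathcal{N}_v$-updates, the total number of disagreements on $\bigcup_{B \in \mathcal{N}_v} \outBound B$ grows by only a constant factor, so the cumulative source mass seen across all $\mathcal{N}_v$-updates is bounded by a subcritical branching process with branching factor $\le 1/(1+\epsilon)$ (the criticality Corollary~\ref{corr:GMPBiasColour} gives for $k > \alpha d$). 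Carrying the subcriticality through the $\TDeg \le 2d$ many potential sources, the total contribution to $N^*(v)$ is $o(d)$ in expectation and concentrated to within $\gamma^2 d$ with the claimed $\exp(-d^{3/4})$ failure probability. Adding the $\exp(-\Omega(d))$ failure probabilities for the Chernoff bound on $M$ and for the containment of the percolation inside $\ball(N^*(v), d^{4/5})$ (Lemma~\ref{lemma:Restriction2Cycle}) yields the $3\exp(-d^{3/4})$ in the lemma statement.
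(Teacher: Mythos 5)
Your overall strategy (same-block coupling, observing that disagreements on $N^*(v)$ can only be created when a block meeting $\{v\}\cup N^*(v)$ is updated, and controlling the feedback with the burn-in machinery) runs parallel to the paper's, but the step that carries the whole lemma is not justified as you state it. You invoke Proposition~\ref{prop:SingleSourceBlockdis} to get the per-update tail $\Pr[|\mathbold{\theta}_B\cap N^*(v)|\ge \ell\mid\Lambda_B]\le C(|\Lambda_B|+1)d^{-1}\exp(-\ell/C)$, but that proposition controls only $|\mathcal{C}_{u^*}\cap \inBound B|$: its proof rests on Lemma~\ref{lemma:Weight2OfBoudnary}, which guarantees $\beta(w)\ge 1/2$ only for $w\in B\cap\ImpVrtx$, so the martingale bound on $\mathcal{Z}$ says nothing about how many \emph{interior} vertices of $B$ are hit. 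The dangerous set is exactly interior: when $B=B_v$ is the (possibly large) block containing $v$, the vertices of $N^*(v)\cap B_v$ are in-block neighbors of $v$ whose $\beta$-weights can be arbitrarily small, and this set can have size $\Theta(d)$ --- comparable to the threshold $\gamma^2 d$ --- so a single update of $B_v$ could in principle exhaust the budget, and no citation of the boundary-weighted percolation bound rules this out. The paper handles precisely this case differently: after conditioning on containment in the ball and on few total disagreements on $\ImpVrtx$ (your events, their $\mathcal{B}_1,\mathcal{B}_2$), the girth condition gives that each same-block neighbor of $v$ becomes disagreeing with probability at most $2/d$ per update of $B_v$ \emph{regardless of the status of the other neighbors}; since $B_v$ is updated at most $d^{4/5}$ times with probability $1-\exp(-d^{4/5})$, each such neighbor ever-disagrees with probability at most $2d^{-1/5}$, and a Binomial/Chernoff bound gives $\exp(-\Omega(\gamma^6 d))$. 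Some argument of this kind (a direct per-vertex domination, not the $\inBound B$-weighted tail) is what your proposal is missing; without it the Freedman-style aggregation over the $M\approx Cd$ source updates cannot be started.

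Two smaller points. First, the error budget at the end miscounts: containment of the disagreements in $\ball(N^*(v),d^{4/5})$ is not given by Lemma~\ref{lemma:Restriction2Cycle} (that lemma concerns two chains on $G^*_v$ started from a fully disagreeing $N^*(v)$, and its bound is $\exp(-d^{3/4})$, not $\exp(-\Omega(d))$); for the $G$-versus-$G^*_v$ coupling you need to reprove containment as in Theorem~\ref{thrm:BallResult}, and together with the bound on disagreements in $\ImpVrtx$ these $\exp(-d^{3/4})$ terms are what dominate the final probability. Second, describing the difference between $\pi_B$ and $\pi^*_B$ as merely ``forbid color $X_t(v)$ at vertices of $B\cap N^*(v)$'' undersells the joint update of $B_v$: in $G$ the color of $v$ is resampled jointly with its in-block neighbors while in $G^*_v$ it is uniform and independent, so $v$ itself typically disagrees after the first update of $B_v$ and then acts as a persistent source at every later update of every block meeting $N^*(v)$; your virtual-source bookkeeping can absorb this, but it should be said explicitly, since it is the reason the seeding never stops and the per-update $O(1/d)$ probability (rather than a one-off event) is the right object to control.
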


\begin{proof}
We  couple   $(X_t)$ and $(X^*_t)$ such that at each time step we update the same block for both chains.
Then,  it is possible that disagreements are generated because of the fact that in $G^*_v$ the vertices in 
$N^*(v)$ are not connected with $v$. E.g.,  consider some vertex $w\in N^*(v)$ and assume that this is a 
single vertex block. If the coupling updates $w$  at time step $t$,  then,  for  setting $X_t(w)$ we need to 
consider the coloring of vertex $v$. On the other had  the choice of $X^*_t(w)$ is oblivious to the coloring of $v$.
This difference can create disagreement at vertex $w$.

If some vertices in $N^*(v)$ becomes disagreeing, then, subsequently, the disagreements generated 
 propagate to the whole graph.  That is, disagreements in  $N^*(v)$ generate disagreements to vertices at 
 further distances.

Let $t_0=CN$. Assume that we couple the two chains $(X_t)$ and $(X^*_t)$ up to the point  in time  
$T\leq t_0$ such that at least one of the following happens (whatever happens first):
\begin{enumerate}
\item  there are disagreements outside the ball $\ball \left (N^*(v), d^{4/5} \right)$.
\item  $|\WDis_{\leq T} \cap \ImpVrtx|  \geq  d^{3/4}$
\item  $|\WDis_{\leq T} \cap N^*(v)| \geq \gamma^2 d$ 
\item  we have run the coupling for $t_0$ steps.
\end{enumerate}

\noindent
Let ${\cal B}_1$ be the event that $\WDis_{\leq T}\cap \ball \left (N^*(v), d^{4/5} \right)\neq \emptyset$.
Let  ${\cal B}_2$ be the event that $|\WDis_{\leq T} \cap \ImpVrtx| \geq d^{31/40}$.
Finally, let ${\cal B}_3$ be the event that $|\WDis_{\leq T} \cap N^*(v)| \geq \gamma^2 d$. 
Clearly, it holds that
\begin{eqnarray}
\Pr\left[ |\WDis_{\leq t_0} \cap N^*(v)| \geq \gamma^2 d \right]
& \leq &  \Pr\left[ {\cal B}_1\cup {\cal B}_2\cup {\cal B}_3 \right] \nonumber \\
 & \leq &  \Pr\left[ {\cal B}_1 \right]+ \Pr\left[ {\cal B}_2 \right]+  \Pr\left[ {\cal B}^c_1\cap {\cal B}^c_2\cap {\cal B}_3 \right].
 \label{eq:Basis4lemma:GVsGStartNeighbors}
\end{eqnarray}

\noindent
The lemma will follow by bounding appropriately the probability terms on the r.h.s. of 
\eqref{eq:Basis4lemma:GVsGStartNeighbors}. The approach we follow is very similar for 
all the terms. In particular we use  results from Section \ref{sec:DisPerc}.

Working as in  Theorem \ref{thrm:BallResult} we get that
\begin{equation}\label{eq:lemma:GVsGStartNeighbors-B1Bound}
\Pr[{\cal B}_1]\leq   \textstyle \exp\left( -d^{3/4} \right).
\end{equation}

\noindent
Furthermore, using Lemma \ref{lemma:RateOfDisSpreadBurnIn} we get that
\begin{equation}\label{eq:lemma:GVsGStartNeighbors-B2Bound}
\Pr[{\cal B}_2]\leq   \textstyle \exp\left( -d^{3/4} \right).
\end{equation}

\noindent
Assume that the events ${\cal B}^c_1$ and $B^c_2$ hold. Then our girth assumption for $G$ imply
that there is $\hat{N}\subseteq N^*(v)$ which contains all but at most two vertices of $N^*(v)$ such that
the  following is true: Every time a vertex $w\in \hat{N}$ is updated it becomes disagreeing with
probability at most $2/d$, regardless of whether the other vertices in $\hat{N}$ are disagreeing or not.

Let us  be more specific.   If $w$ belongs to a single vertex block, then ${\cal B}^c_1$ and the girth assumptions 
imply that the disagreement at $w$ can only be caused by the lack of edge between $w$ and $v$. 
If, on the other hand, $w$ belongs to a multi-vertex  block, then the disagreement at $B_w\cap \ImpVrtx$ can  influence
$w$ and generate a disagreement. However, when ${\cal B}^c_1$, ${\cal B}^c_2$ hold, then the girth assumption and
Proposition \ref{prop:StochasticDomination} imply that the influence on $w$ by distant disagreements is minor.

We proceed by considering the rate at which disagreements are generated at $N^*(v)$.  If $v$ belongs to a multi-vertex block then
it can be that many vertices in $\hat{N}$ are updated simultaneously.  Still, as long as ${\cal B}^c_1, {\cal B}^c_2$ occur, 
the probability of disagreements at each vertex $\hat{N}$ is  at most $2/d$, regardless of whether the other vertices in 
$\hat{N}$ are disagreeing or not.  On the other hand,  if $v$ belongs to a single-vertex block then only one vertex in 
 $\hat{N}$ are updated at a time.

Let $\tilde{N}\subset N^*(v)$ contain the vertices which belong to the same block as $v$.
Let $S_1=\WDis_{\leq T}\cap \tilde{N}$. Also, let $S_2=\WDis_{\leq T}\cap (N^*(v)\setminus \tilde{N})$.
Let $\tilde{N}$ be such that $|\tilde{N}|=a d$, for some $a\in [0,1]$.  We will get tail bounds for the
cardinalities of $S_1$ and $S_2$, respectively, by considering cases for $a$. 

First, assume that $a>\gamma^2/10$.  Let 
${\cal B}_4$ be the event that $|S_1|\geq  (\gamma^2/5) d$. 
Also, let ${\cal B}_5$ be the event that $|S_2|\geq (\gamma^2/5) d$. 
Then, we have that
\begin{eqnarray}
\Pr\left[ {\cal B}^c_1\cap {\cal B}^c_2\cap {\cal B}_3 \right] &\leq &
\Pr\left[ {\cal B}^c_1\cap {\cal B}^c_2\cap ({\cal B}_4\cup {\cal B}_5)  \right]\nonumber \\
&\leq &\Pr\left[ {\cal B}^c_1\cap {\cal B}^c_2\cap {\cal B}_4 \right]+
\Pr\left[ {\cal B}^c_1\cap {\cal B}^c_2\cap {\cal B}_5  \right], \label{eq:ReductionofBc1Bc2B3ToBc1Bc2B45LargeA}
\end{eqnarray}
where the second inequality follows from the  union bound.

First we consider $\Pr\left[ {\cal B}^c_1\cap {\cal B}^c_2\cap {\cal B}_4  \right]$.
At each step, the update chooses  $N^*(v)$ with probability  $1/N$. 
Let ${\cal Q}$ be the event that the block that $v$ belongs is updated
at least $d^{4/5}$ times, during the time interval $[0,T]$.
Then,  we have that
\begin{equation}\label{eq:ReductionofBc1Bc2B3ToBc1Bc2B5LargeABoundB4-Num1}
\Pr\left[ {\cal B}^c_1\cap {\cal B}^c_2\cap {\cal B}_4 \right] \leq 
\Pr\left[ {\cal B}^c_1\cap {\cal B}^c_2\cap {\cal B}_4 \ |\ {\cal Q}^c\right] +\Pr[{\cal Q}]
\end{equation}
For each block $B\in {\cal B}$,  the number of updates in the interval $[0,T]$ is
dominated by the binomial distribution with parameters $C'N$ and $1/N$.  Taking
large $d$, Chernoff's bound imply that
\begin{equation}\label{eq:ReductionofBc1Bc2B3ToBc1Bc2B5LargeABoundB4-Num2}
\Pr[{\cal Q}]\leq  {\textstyle \exp\left( -d^{4/5}\right)}.
\end{equation}

\noindent
Given ${\cal Q}^c$ and that the events ${\cal B}^c_1\cap {\cal B}^c_2$ hold, 
at time $T$, each  $w\in \tilde{N}$ is disagreeing with probability at most
$2d^{-1/5}$, regardless of the other vertices in $\tilde{N}$. That is, their number
is dominated by ${\tt Binomial }((1+\epsilon/6)d, 2d^{-1/5})$. From Chernoff bounds we get that following:
\begin{equation}\label{eq:ReductionofBc1Bc2B3ToBc1Bc2B5LargeABoundB4-Num3}
\Pr\left[ {\cal B}^c_1\cap {\cal B}^c_2\cap {\cal B}_4 \ |\ {\cal Q}^c\right] \leq   
\textstyle \exp\left( -\gamma^6 d\right).
\end{equation}

\noindent
Plugging  \eqref{eq:ReductionofBc1Bc2B3ToBc1Bc2B5LargeABoundB4-Num2} and
\eqref{eq:ReductionofBc1Bc2B3ToBc1Bc2B5LargeABoundB4-Num3} into
\eqref{eq:ReductionofBc1Bc2B3ToBc1Bc2B5LargeABoundB4-Num1}, we have
\begin{equation}\label{eq:ReductionofBc1Bc2B3ToBc1Bc2B5LargeABoundB4}
\Pr\left[ {\cal B}^c_1\cap {\cal B}^c_2\cap {\cal B}_4 \right] \leq  \textstyle 2\exp\left( -d^{4/5}\right).
\end{equation}

\noindent
Now, we focus on  $\Pr\left[ {\cal B}^c_1\cap {\cal B}^c_2\cap {\cal B}_5 \right]$.
At each step, the update chooses one vertex in $N^*(v)\setminus \tilde{N}$ with probability 
$|N^*(v)\setminus \tilde{N}|/N$. If such a vertex is chosen, then we have a new disagreement
with probability at most $2/d$.  Noting that  $|N^*(v)\setminus \tilde{N}|\leq (1+\epsilon/6)d$
and $T\leq C'N$, we get the following: The cardinality of $S_2$ is dominated by the binomial
distribution with parameters $C'N$ and $2(1+\epsilon/6)/N$.  Then, Chernoff's bound implies that
\begin{equation}\label{eq:ReductionofBc1Bc2B3ToBc1Bc2B5LargeABoundB5}
\Pr\left[ {\cal B}^c_1\cap {\cal B}^c_2\cap {\cal B}_5 \right] \leq  \textstyle \exp\left ( -\gamma^5 d \right).
\end{equation}

\noindent
Plugging into \eqref{eq:ReductionofBc1Bc2B3ToBc1Bc2B5LargeABoundB4} and 
\eqref{eq:ReductionofBc1Bc2B3ToBc1Bc2B5LargeABoundB5} into  
 \eqref{eq:ReductionofBc1Bc2B3ToBc1Bc2B45LargeA} we get the following:
For $a\geq \gamma^2/10$, we have that 
\begin{equation}
\Pr\left[ {\cal B}^c_1\cap {\cal B}^c_2\cap {\cal B}_3 \right] \leq \textstyle 3\exp\left( -d^{4/5}\right).
\label{eq:ReductionofBc1Bc2B3ToBc1Bc2B45LargeA}
\end{equation}

\noindent
For the case where $a<\gamma^2/10$ we work in the same manner. That is, it holds that
\begin{equation}
\Pr\left[ {\cal B}^c_1\cap {\cal B}^c_2\cap {\cal B}_3 \right] \leq 
\Pr\left[ {\cal B}^c_1\cap {\cal B}^c_2\cap {\cal B}_5  \right] 
\end{equation}
We bound $\Pr\left[ {\cal B}^c_1\cap {\cal B}^c_2\cap {\cal B}_5 \right]$ in the same manner as 
for  \eqref{eq:ReductionofBc1Bc2B3ToBc1Bc2B5LargeABoundB5}. That is, 
for $a< \gamma^2/10$, we have that 
\begin{equation}
\Pr\left[ {\cal B}^c_1\cap {\cal B}^c_2\cap {\cal B}_3 \right] \leq  \textstyle \exp\left ( -\gamma^5 d \right).
\label{eq:ReductionofBc1Bc2B3ToBc1Bc2B5SmallA}
\end{equation}
The lemma follows by plugging  \eqref{eq:lemma:GVsGStartNeighbors-B1Bound},
\eqref{eq:lemma:GVsGStartNeighbors-B2Bound},
\eqref{eq:ReductionofBc1Bc2B3ToBc1Bc2B5LargeABoundB4} and
\eqref{eq:ReductionofBc1Bc2B3ToBc1Bc2B5SmallA} into \eqref{eq:Basis4lemma:GVsGStartNeighbors}.
\end{proof}

\noindent
Given  the previous two lemmas, it is immediate to get the following two results.

\begin{corollary}\label{corollary:EscapeFromBall} 

In the same setting as Lemma \ref{lemma:Restriction2Cycle} the following is true:

Consider,  $G^*_v$ and the  block dynamics $(X^*_t)_{t\geq 0}$ and $(Y^*_t)_{t\geq 0}$.
Let   ${\cal T}=\{\tau_1, \tau_2, \ldots \}$ be the random times at which  $B_v$ is updated  in
$(X^*_t)$ during the  time interval  
${\cal I} =\left[\left \lfloor N\log(\gamma^{-3}) \right \rfloor, \left \lfloor CN  \right \rfloor \right ]$.
Assume  that $X^*_{0}$, $Y^*_{0}$  are such that  $X^*_{0}(w)\neq Y^*_{0}(w)$ for every $w\in N^*(v)$,
while $X^*_{0}(w)=Y^*_{0}(w)$ for every $w\notin N^*(v)$. Then, there is a coupling
such that for any $\tau\in {\cal T}$ we have
\begin{equation} 
\Pr\left[ \mathbf{I}\{{ \cal T} \neq \emptyset \}  \wedge \left(
\WDis_{\leq \tau} \not \subset \ball \left (N^*(v), d^{4/5} \right) \right )    \right]\leq \textstyle \exp\left( -d^{3/4} \right).
\end{equation}
\end{corollary}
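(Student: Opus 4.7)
The plan is to obtain this corollary as an essentially immediate consequence of Lemma \ref{lemma:Restriction2Cycle}. First I would invoke that lemma to supply the coupling of $(X^*_t)_{t\ge 0}$ and $(Y^*_t)_{t\ge 0}$: the corollary does not ask for a new coupling, only for the existence of one with the stated property, and the coupling produced in the lemma already updates the same block in both chains at each step. In particular, the random set $\cal T$ of times in $\cal I$ at which $B_v$ is updated in $(X^*_t)$ is a well-defined, shared object of the joint process, and every element of $\cal T$ is bounded above by $\lfloor CN\rfloor$ by the definition of $\cal I$.

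The second step is to exploit the deterministic time-monotonicity of the disagreement history $\WDis_{\le s}=\bigcup_{t'\le s}(X^*_{t'}\oplus Y^*_{t'})$. For every $\tau\in\cal T$ we have $\tau\le CN$, and therefore
\[
\WDis_{\le\tau} \;\subseteq\; \WDis_{\le CN}.
\]
Consequently, on the event $\{\cal T\ne\emptyset\}$, if there exists $\tau\in\cal T$ with $\WDis_{\le\tau}\not\subset\ball(N^*(v),d^{4/5})$, then automatically $\WDis_{\le CN}\not\subset\ball(N^*(v),d^{4/5})$. This containment of events is the key move, because it trades the random time $\tau$ for the deterministic horizon $CN$ without requiring any union bound over $\cal T$.

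Finally, I would apply the tail bound of Lemma \ref{lemma:Restriction2Cycle} to this enlarged event to conclude
\[
\Pr\bigl[\mathbf{1}\{\cal T\ne\emptyset\}\wedge(\WDis_{\le\tau}\not\subset\ball(N^*(v),d^{4/5}))\bigr]\;\le\;\Pr\bigl[\WDis_{\le CN}\not\subseteq\ball(N^*(v),d^{4/5})\bigr]\;\le\;\exp(-d^{3/4}).
\]
I do not anticipate any genuine obstacle: the only subtle point is recognising that the quantifier ``for any $\tau\in\cal T$'' is handled by passing to the deterministic upper endpoint $CN$ rather than by a union bound over the random set. All of the real work — the disagreement-percolation coupling, the reduction of the full boundary disagreement to single-vertex disagreements, and the $\exp(-d^{3/4})$ tail bound — has already been carried out in the proof of Lemma \ref{lemma:Restriction2Cycle}, itself relying on the stochastic-domination machinery of Section \ref{sec:DisPerc}.
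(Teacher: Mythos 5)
Your proposal is correct and matches the paper's treatment: the paper states that Corollary \ref{corollary:EscapeFromBall} follows immediately from Lemma \ref{lemma:Restriction2Cycle}, and your argument — reuse the lemma's coupling, note every $\tau\in{\cal T}$ satisfies $\tau\leq CN$ so that $\WDis_{\leq\tau}\subseteq\WDis_{\leq CN}$, and hence the corollary's event is contained in the lemma's event — is exactly the intended monotonicity reduction, with the indicator $\mathbf{1}\{{\cal T}\neq\emptyset\}$ only shrinking the event further.
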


\noindent 
Furthermore, we have the following:
\begin{corollary}\label{cor:GVsGStartConditioning}

In the same setting as Lemma \ref{lemma:Restriction2Cycle} the following is true:

Let $(X_t)_{t\geq 0}$ be  the block dynamics  on $G$. 
Also,  consider $G^*_v$ and the corresponding block dynamics $(X^*_t)_{t\geq 0}$.
Let   ${\cal T}=\{\tau_1, \tau_2, \ldots \}$ be the random times at which  $B_v$ is updated  in
$(X^*_t)$ during the  time interval  
${\cal I} =\left[\left \lfloor N\log(\gamma^{-3}) \right \rfloor, \left \lfloor CN  \right \rfloor \right ]$.
For any $\tau\in {\cal T}$, conditional that $X_{0}=X^*_{0}$,  there is a coupling such that 
\begin{equation}\label{eq:cor:GVsGStartConditioningA}
\Pr\left[ 
\mathbf{I} \{{\cal T}\neq \emptyset \} \wedge 
\left ( |\WDis_{\tau} \cap N^*(v)| \geq \gamma^2 d  \right)
 \right]\leq  \textstyle \exp\left( -d^{3/4} \right).
\end{equation}
\end{corollary}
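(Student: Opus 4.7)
The plan is to deduce the corollary almost directly from Lemma~\ref{lemma:GVsGStartNeighbors}, using the same coupling between $(X_t)$ and $(X^*_t)$ that was constructed there. Recall that Lemma~\ref{lemma:GVsGStartNeighbors} bounds the probability that, over the full horizon $[0, CN]$, the total number of disagreements accumulated in $N^*(v)$ between the two chains exceeds $\gamma^2 d$. The corollary asks for the same bound but at the specific stopping time $\tau \in \mathcal{T}$ rather than at time $CN$; since $\tau \leq CN$ by definition of $\mathcal{I}$, this is strictly weaker.

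The first step is to adopt verbatim the coupling of $(X_t)$ and $(X^*_t)$ supplied by Lemma~\ref{lemma:GVsGStartNeighbors}: both chains update the same block at each step, and the color choices are coupled maximally. Under this coupling, the cumulative disagreement set $\WDis_{\leq s} = \bigcup_{t' \leq s}(X_{t'} \oplus X^*_{t'})$ is monotone non-decreasing in $s$. In particular, on the event $\mathcal{T} \neq \emptyset$ we may take $\tau \in \mathcal{T} \subseteq \mathcal{I} \subseteq [0, CN]$, so that
\[
\WDis_{\tau} \cap N^*(v) \;\subseteq\; \WDis_{\leq CN} \cap N^*(v).
\]
Consequently,
\[
\Pr\left[ \mathbf{I}\{\mathcal{T} \neq \emptyset\} \wedge \bigl(|\WDis_{\tau} \cap N^*(v)| \geq \gamma^2 d\bigr) \right]
\;\leq\;
\Pr\left[ |\WDis_{\leq CN} \cap N^*(v)| \geq \gamma^2 d \right].
\]

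The second step is to invoke Lemma~\ref{lemma:GVsGStartNeighbors} directly, which gives an upper bound of $3\exp(-d^{3/4})$ on the right-hand side. Absorbing the constant $3$ into the exponent (which is legitimate for all sufficiently large $d$, by adjusting $d_0$ appropriately) yields the stated bound $\exp(-d^{3/4})$ and completes the deduction.

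There is no genuine obstacle: the corollary is a direct restatement of Lemma~\ref{lemma:GVsGStartNeighbors} adapted to a stopping time $\tau$ in $\mathcal{I}$. The only subtleties to check are (i) that $\tau$ is indeed bounded above by $CN$, which is immediate from $\mathcal{I} = [\lfloor N\log(\gamma^{-3})\rfloor, \lfloor CN\rfloor]$, and (ii) that the indicator $\mathbf{I}\{\mathcal{T}\neq\emptyset\}$ can be absorbed without any additional cost, since on the complementary event the probability contribution is zero. Both points are trivial, so the corollary follows.
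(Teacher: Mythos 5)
Your deduction matches the paper's: the paper states this corollary as an immediate consequence of Lemmas \ref{lemma:Restriction2Cycle} and \ref{lemma:GVsGStartNeighbors}, and your pathwise monotonicity argument ($\WDis_{\tau}\cap N^*(v)\subseteq \WDis_{\leq CN}\cap N^*(v)$ on the event $\{{\cal T}\neq\emptyset\}$, since $\tau\leq CN$) is exactly that intended deduction. The only cosmetic caveat is that the factor $3$ from Lemma \ref{lemma:GVsGStartNeighbors} cannot literally be absorbed into $\exp(-d^{3/4})$ by enlarging $d_0$; one instead uses the slack in the intermediate bounds of that lemma's proof (which are of order $\exp(-d^{4/5})$ and $\exp(-\gamma^5 d)$), a point the paper itself glosses over.
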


\begin{lemma}\label{lemma:ExpctACondBt}

In the same setting as Lemma \ref{lemma:Restriction2Cycle} the following is true:

Let $(X_t)_{t\geq 0}$ be  the block dynamics  on $G$.  Also,  consider $G^*_v$ and the corresponding 
block dynamics $(X^*_t)_{t\geq 0}$. Let ${\cal T}=\{\tau_1, \tau_2, \ldots \}$ be the set of random times 
at which $B_v$ is updated during the time interval  $[N\log(\gamma^{-3}),  CN]$ in $(X^*_t)_{t\geq 0}$.
Given $X^*_{0}$, for any $\tau\in {\cal T}$ it holds that
\[
\Exp{ |\AvialColors_{X^*_{\tau}}(v)|  \  \mathbf{1} \{{\cal T}\neq \emptyset\} }  \geq  k e^{-\degree(v)/k} 
(1-50\gamma^3),
\]
where $\mathbf{1} \{{\cal T}\neq \emptyset\} $ is the indicator  of the event that ${\cal T}\neq \emptyset$.
\end{lemma}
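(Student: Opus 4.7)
The plan is to reduce the expectation to a sum over colors and lower bound each per-color term via a maximal coupling with an idealized independent-uniform process on $N^*(v)$. Expanding by linearity,
\begin{equation*}
\Exp{|\AvialColors_{X^*_\tau}(v)|\,\mathbf{1}\{{\cal T}\neq\emptyset\}} \;=\; \sum_{c\in[k]} \Pr\bigl[c \notin X^*_\tau(N^*(v)),\ {\cal T}\neq\emptyset\bigr],
\end{equation*}
so it suffices to establish, for every $c\in[k]$, the per-color bound $\Pr[c \notin X^*_\tau(N^*(v)),\ {\cal T}\neq\emptyset] \geq (1-1/k)^{\degree(v)}(1-49\gamma^3)$. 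Summing yields $k(1-1/k)^{\degree(v)}(1-49\gamma^3)$, and since $k\geq(\alpha+\epsilon)d$ implies $(1-1/k)^{\degree(v)} \geq e^{-\degree(v)/k}(1-O(1/d))$ via $\ln(1-1/k) = -1/k - O(1/k^2)$, the residual $O(1/d)$ is absorbed into the stated slack once $d_0$ is large enough, delivering $k e^{-\degree(v)/k}(1-50\gamma^3)$.

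For the per-color bound I would attach to each $w \in N^*(v)$ an independent uniform color $U_w \in [k]$, so that $\Pr[c \notin \{U_w\}_{w\in N^*(v)}] = (1-1/k)^{\degree(v)}$ exactly, and then couple $(X^*_t)$ to the $U_w$'s so as to force $X^*_\tau(w) = U_w$ for all $w\in N^*(v)$ with probability at least $1-49\gamma^3$. The rule is: at the first time $t_w \in [0,\lfloor N\log(\gamma^{-3})\rfloor]$ that the block $B_w$ is selected for update in $(X^*_t)$, use a maximal coupling between the Gibbs-conditional law of $X^*_{t_w}(w)$ and the uniform law of $U_w$. Because vertex $v$ is isolated in $G^*_v$, the conditional law on $B_w$ never sees $v$, and Corollary \ref{corr:GMPBiasColour} (or Proposition \ref{prop:SpatialMixing4HighDegrees} for $w$ that is high-degree or lies close to a block cycle) bounds the probability of any specific color at $w$ by $1/k + O(d^{-2})$, so this maximal coupling succeeds per vertex with probability $1 - O(d^{-2})$. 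A Chernoff bound on block selection further shows that the burn-in length $\lfloor N\log(\gamma^{-3})\rfloor$ forces $\Pr[B_w \text{ is not updated by time }\tau] \leq \gamma^3$ uniformly in $w$.

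The principal obstacle is preventing the per-vertex coupling errors from accumulating over $\degree(v)\asymp d$ neighbors and overwhelming the $(1-1/k)^{\degree(v)}$ target. To control this I would invoke Proposition \ref{prop:StochasticDomination} to dominate the random set of ``bad'' $w \in N^*(v)$---those at which either the maximal coupling fails, or some later update of $B_w$ in $(t_w,\tau]$ drives $X^*_\cdot(w)$ away from $U_w$---by two independent Bernoulli percolation processes: one with per-vertex probability $O(d^{-2})$ capturing the maximal-coupling failures, and one with per-vertex probability $\gamma^3$ capturing blocks that never get updated in the burn-in window. Late re-updates of $B_w$ are controlled via Corollary \ref{corollary:EscapeFromBall}, which confines the disagreement propagation triggered by any single fictitious color perturbation to $\ball(N^*(v), d^{4/5})$ and ensures that the re-sampled color is again within $O(d^{-2})$ of uniform. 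The total expected size of the bad set is then $\degree(v)\cdot O(d^{-2}) + \gamma^3\degree(v) \leq 49\gamma^3$ for $d_0$ sufficiently large, which by Markov's inequality combined with the factorization $\Pr[c \notin X^*_\tau(N^*(v))] \geq \Pr[c\notin\{U_w\}] - \Pr[\exists w:X^*_\tau(w)\neq U_w]$ delivers the required per-color inequality and completes the proof.
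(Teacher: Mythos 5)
There is a genuine gap, and it sits at the heart of your argument: the claim that, when a block $B_w$ with $w\in N^*(v)$ is updated, the conditional law of $X^*_{t_w}(w)$ is within $O(d^{-2})$ of the uniform law on $[k]$. It is not. The freshly sampled color of $w$ is (essentially) uniform over its set of available colors $S_w$, whose size is typically of order $k\exp(-\degree(w)/k)$ and in any case can be as small as $k-\TDeg$; for $k\approx\alpha d$ this is a constant fraction of $k$, so the total variation distance from uniform on $[k]$ is a constant, roughly $1-|S_w|/k$, not $O(d^{-2})$. Corollary \ref{corr:GMPBiasColour} only bounds point probabilities by roughly $1/((1+\eps)\degree_{in}(w))$, which is \emph{larger} than $1/k$, and Proposition \ref{prop:SpatialMixing4HighDegrees} applies only to vertices deep inside a block, not to the (mostly boundary) vertices making up $N^*(v)$. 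Consequently each maximal coupling with $U_w$ fails with constant probability, the expected size of your ``bad set'' is $\Theta(\degree(v))=\Theta(d)$ rather than $O(\gamma^3)$, and the final inequality $\Pr[c\notin X^*_\tau(N^*(v))]\geq \Pr[c\notin\{U_w\}]-\Pr[\exists w: X^*_\tau(w)\neq U_w]$ becomes vacuous because the subtracted term is close to $1$.

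A second, independent problem is that the per-color reduction itself cannot hold: the bound $\Pr[c\notin X^*_\tau(N^*(v)),\,{\cal T}\neq\emptyset]\geq (1-1/k)^{\degree(v)}(1-49\gamma^3)$ is false for unfavorable colors. If the boundary conditions make $c$ available to every $w\in N^*(v)$, then each $w$ uses $c$ with probability about $1/|S_w|>1/k$, and the unused-probability is about $\prod_w(1-1/|S_w|)$, which for $|S_w|\approx k e^{-\degree(w)/k}$ is strictly and exponentially (in $\degree(v)/k$) below $(1-1/k)^{\degree(v)}$. Only the \emph{average over colors} obeys the desired bound, because each vertex uses exactly one color in total ($\sum_c\alpha_{w,c}=|S_w|$). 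This is exactly what the paper's proof exploits: Claim \ref{claim:FactorizationOfProbs1} (proved via Corollary \ref{corollary:EscapeFromBall} and disagreement percolation) factorizes $\Pr[c\text{ unused}]$ into $\Exp{\prod_{w\in R(\tau,v)}(1-|S_w|^{-1})^{\alpha_{w,c}}}$, and then the arithmetic--geometric mean inequality converts the sum over $c$ into $k\prod_w(1-|S_w|^{-1})^{|S_w|/k}\geq k e^{-\degree(v)/k}(1-o(1))$, with a separate Chernoff bound on the never-updated neighbors. Your decoupling of the neighbors and your treatment of unupdated blocks do parallel those two ingredients, but the uniform-coupling shortcut and the per-color target cannot be repaired without reinstating the averaging-over-colors step, i.e., without working with the sets $S_w$ directly as the paper does.
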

\begin{proof}

Let $R(\tau ,v)\subseteq N^*(v)$ be the set of vertices which are updated at least once 
during the  time interval $[0, \tau]$.  Also, for each $w\in R(\tau ,v)$ let $\tau_w$ 
be the time of the last update of vertex $w$ up to time time $\tau$. Let $S_w$ be the set of available 
colors for vertex $w\in R(\tau ,v)$  when it is updated at time $\tau_w$. For every $j \in [k]$ let 
$\alpha_{w,j}=1$ if $j\in S_w$ and $\alpha_{w,j}=0$, otherwise.

Corollary \ref{corollary:EscapeFromBall}, combined with standard {\em disagreement percolation}
implies the following:
 
 \begin{claim}\label{claim:FactorizationOfProbs1}
 For every $j\in [k]$  let $\mathbf{I}_{\{ j \}}$ be the event that  the color $j$ is not used by any
vertex $w\in R(\tau,v)$  at time $\tau$. Then, given  $X^*_{0}$, for any $j\in [k]$ it holds that
\begin{equation}
\textstyle \left | \Pr\left [\mathbf{I}_{\{j\}} \wedge \mathbf{I}_{\{{\cal T}\neq \emptyset \}}   \right] - 
\Exp{   \mathbf{1}\{{\cal T}\neq \emptyset \} \ \prod_{w\in R(\tau,v)}\left(1-|S_w|^{-1} \right)^{\alpha_{w,j}}  }  \right|\leq  
\textstyle  \exp\left( -d^{3/4} \right),
\end{equation}
where the expectation on the product is w.r.t. $R(\tau,v)$ and $S_w$. 
\end{claim}

\noindent
Let  $Q_\tau$ be the number of colors that are not used by any vertex in $R(\tau,v)$ at time $\tau$. 
Noting that
\[
\Exp{  Q_\tau   \   \mathbf{1} \{{\cal T} \neq \emptyset \} }  = \textstyle 
\sum^k_{j=1}\Pr[\mathbf{I}_{\{ j\} }\wedge \mathbf{I}_{\{ {\cal T}\neq \emptyset \}}    ] \nonumber \\
\]
we have that
\begin{eqnarray}
\Exp {Q_\tau   \   \mathbf{1} \{{\cal T} \neq \emptyset \} }
&\geq & \textstyle \Exp{   \mathbf{1} \{{\cal T}\neq \emptyset \} \   \sum^k_{j=1} \ \prod_{w\in R(\tau,v)}\left(1-|S_w|^{-1} \right)^{\alpha_{w,j}}  }  - 2 d e^{-d^{3/4}}  \qquad \mbox{[From Claim \ref{claim:FactorizationOfProbs1}]}\nonumber \\
&\geq & \textstyle  k\cdot  \Exp{  \mathbf{1} \{{\cal T}\neq \emptyset \}   \prod^k_{j=1} \ \prod_{w\in R(\tau,v)}\left(1-|S_w|^{-1} \right)^{\alpha_{w,j}/k}   }  - 2 d e^{-d^{3/4}},  \nonumber \\
&\geq & \textstyle 
k\cdot  \Exp{  \mathbf{1} \{{\cal T}\neq \emptyset \}  \prod_{w\in R(\tau,v)}  \prod^k_{j=1}  \left(1-|S_w|^{-1} \right)^{\alpha_{w,j}/k}  }  - 2 d e^{-d^{3/4}}
\end{eqnarray}
the second line uses the arithmetic-geometric mean inequality. Since  $\sum_{j}\alpha_{w,j}=|S_w|$ we get that
\begin{eqnarray}
\Exp{ Q_\tau   \   \mathbf{1} \{{\cal T} \neq \emptyset \}  }
&\geq & \textstyle  k\cdot  \Exp{  \mathbf{1} \{{\cal T}\neq \emptyset \}   \prod_{w\in R(\tau,v)}   \left(1-|S_w|^{-1} \right)^{|S_w|/k} }  - 2 d e^{-d^{3/4}}
 \nonumber  \\
&\geq & \textstyle  k\cdot  \Exp{ \mathbf{1}\{{\cal T}\neq \emptyset \}   \prod_{w\in R(\tau,v)}   \left(1-(k-\TDeg )^{-1} \right)^{(k-\TDeg)/k}   }  - 2 d e^{-d^{3/4}}, 
\nonumber 
\end{eqnarray}
where in the last derivation we use the fact that for any $w\in R(\tau,v)$ it holds 
$(1-|S_w|^{-1})^{|S_{w}|}\geq (1-(k-\TDeg))^{k-\TDeg}$. Finally, using the observation 
that $|R(\tau,v)|\leq \degree(v) $, we get that
\begin{eqnarray} 
\Exp{ Q_\tau    \mathbf{1} \{{\cal T} \neq \emptyset \}  } &\geq   &
k \left(1-\frac{1}{k-\TDeg }\right)^{\degree(v)(k-\TDeg)/k} \Exp{  \mathbf{1}\{{\cal T} \neq \emptyset \} }
- 2 d e^{-d^{3/4}} , \nonumber \\
&\geq   & k e^{-\degree(v)/k}(1-2\gamma^3),   \label{eq:LowerBound4ExpctQt}
\end{eqnarray}
where the last inequality follows from the, easy to derive, bound that 
$\Exp{  \mathbf{1} \{{\cal T} \neq \emptyset \} } =1-\gamma^{3}$.

Let $U_1$ be the number of  vertices in $N^*(v)\setminus B_v$ which are not updated in the time interval
$[0,\tau]$. Each vertex in  $N^*(v)\setminus B_v$ is not updated with probability less than $\gamma^{3}$ 
independently  of the other vertices. Since  $|N^*(v)\setminus B_v|\leq \TDeg$,  $U_1$ is dominated by the  binomially 
distribution with parameters parameters $\TDeg$ and $\gamma^{3}$. 

Let ${\cal U}$, ${\cal A}$ be the events,
 $U_1< 15\gamma^3 \TDeg \wedge \mathbf{I} \{{\cal T}\neq \emptyset \}$
and  $U_1 \geq  15 \gamma^3 \TDeg \wedge \mathbf{I} \{{\cal T}\neq \emptyset \}$, respectively.
 From   Chernoff's bounds we get that
\begin{equation} \label{eq:UnUpdatedBound1}
\Pr[{\cal A}] \leq \exp\left( -10\gamma^3 \TDeg \right).
\end{equation}

\noindent
Also, we have that
\begin{eqnarray}
\Exp{ Q_\tau   \  \mathbf{1}\{{\cal T}\neq \emptyset \} } &=& 
\ExpCond{ Q_\tau  \  \mathbf{1} \{{\cal T}\neq \emptyset \} }{ {\cal A}  } \  \Pr[{\cal A} ]  
+ \ExpCond{ Q_\tau  \  \mathbf{1} \{{\cal T}\neq \emptyset \}}{  {\cal U} }  \  \Pr[ {\cal U}  ]  \nonumber \\
&\leq & k  \Pr[ {\cal A}  ]  
+ \ExpCond{ Q_\tau  \  \mathbf{1} \{{\cal T}\neq \emptyset \}}{  {\cal U} }  \  \Pr[{\cal U}  ] 
\hspace{3cm} \mbox{[since $Q_{\tau}\leq k$]} \nonumber\\
&\leq & k \exp\left(-10 \gamma^3\TDeg \right)  + \ExpCond{ Q_\tau  \  \mathbf{1} \{{\cal T}\neq \emptyset \}}{  {\cal U} }, \nonumber
\end{eqnarray}
in the third derivation we use \eqref{eq:UnUpdatedBound1} and the fact that $ \Pr[ {\cal U}  ]\leq 1$.
The above inequality implies that
\begin{eqnarray}
\ExpCond{ Q_\tau  \  \mathbf{1} \{{\cal T}\neq \emptyset \} }{ {\cal U}  }  & \geq  &
\Exp{ Q_\tau  \  \mathbf{1} \{{\cal T}\neq \emptyset \}  } -k\exp\left(-10\gamma^3 \TDeg \right)  \nonumber \\
&\geq & k e^{-\degree(v)/k}(1-2\gamma^3) -k\exp\left(-10\gamma^3 \TDeg \right) 
\qquad \mbox{[from \eqref{eq:LowerBound4ExpctQt}]}
\nonumber \\
&\geq & k e^{-\degree(v) /k}(1-3\gamma^3).
\end{eqnarray}

\noindent
Since  the vertices in $N^*(v)\setminus R(\tau,v)$ can use at most $U_1$ many colors, 
we have that  
\[
\ExpCond{ |\AvialColors_{X^*_{\tau}}(v)| \  \mathbf{1} \{{\cal T}\neq \emptyset \} }{ {\cal U} }  \geq  
k e^{-\degree(v)/k}(1-30\gamma^3).
\]
The lemma follows by noting that since $|\AvialColors_{X^*_\tau}(v)|  \cdot \mathbf{1}\{{\cal T}\neq \emptyset \} \geq 0$, 
we have  that 
\[
\Exp{ |\AvialColors_{X^*_{\tau}}(v)|  \  \mathbf{1} \{{\cal T}\neq \emptyset \} }  \geq 
\Pr[{\cal U}] \cdot \ExpCond{ |\AvialColors_{X^*_t}(v)|   \mathbf{1} \{ {\cal T} \neq \emptyset \}}{ {\cal U} },
\]
while Chernoff's bounds give  $\Pr[{\cal U}]\geq 1-2\exp(-10 \gamma^3 \TDeg )$.
\end{proof}

\begin{lemma}[Uniformity for $G^*_v$]\label{lemma:UniformityGStar176}

In the same setting as Lemma \ref{lemma:Restriction2Cycle} the following is true:

Consider $G^*_v$ and let the block dynamics $(X^*_t)_{t\geq 0}$.
Let ${\cal T}=\{\tau_1, \tau_2, \ldots\}$ be the random times at which $B_v$ is updated
during the time iterval ${\cal I}$.  For any $\tau \in {\cal T}$ and any $X^*_{0}$ the following holds:
\begin{equation}
\Pr\left[ \mathbf{I} \{{\cal T}\neq \emptyset \} \wedge   
\left( 
 \AvialColors_{X^*_\tau}(v)|    \leq (1-100\gamma^3) k \exp\left(-\degree(v)/k \right) 
\right ) 
 \right]\leq  \exp\left( -\gamma^4\TDeg  \right).  \label{Aeq:lemma:UniformityGStar176}
\end{equation}
\end{lemma}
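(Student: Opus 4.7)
The plan is to upgrade the expectation bound from Lemma \ref{lemma:ExpctACondBt} to the required exponential tail bound via a Doob martingale argument built on the disagreement percolation machinery developed in Sections \ref{sec:DisPerc} and \ref{sec:prop:SingleSourceBlockdis}. Set $\mu := k e^{-\degree(v)/k}$; since $k\geq(\alpha+\epsilon)d$ and $\degree(v)\leq\TDeg$, we have $\mu = \Theta(d)$, and by Lemma \ref{lemma:ExpctACondBt} the quantity $Y := |\AvialColors_{X^*_\tau}(v)|\,\mathbf{I}\{{\cal T}\neq\emptyset\}$ has mean at least $(1-50\gamma^3)\mu$. It therefore suffices to show $\Pr[Y\leq (1-100\gamma^3)\mu]\leq\exp(-\gamma^4\TDeg)$, a downward deviation of $50\gamma^3\mu = \Theta(\gamma^3 d)$ from the mean.

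First I would reduce the problem to concentration of a sum of indicators. Ordering the vertices of $R(\tau,v) = \{w_1,\dots,w_m\}\subseteq N^*(v)$ by their last update times $\tau_{w_i}\leq\tau$, set $c_i := X^*_\tau(w_i) = X^*_{\tau_{w_i}}(w_i)$ and $F := |\{c_1,\dots,c_m\}|$; then $Y$ equals $(k - F)\,\mathbf{I}\{{\cal T}\neq\emptyset\}$ up to an additive error of size $|N^*(v)\setminus R(\tau,v)|$, which is at most $15\gamma^3\TDeg$ except on an event of probability $\exp(-10\gamma^3\TDeg)$ (as derived in the proof of Lemma \ref{lemma:ExpctACondBt}). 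I would then form the Doob martingale $M_s := \Exp{F\mid\mathcal{F}_s}$ for $s=0,\dots,m$, where $\mathcal{F}_s$ is the $\sigma$-algebra generated by $c_1,\dots,c_s$ together with the meta-data $(m,\{\tau_{w_i}\}_i,\{|S_{w_i}|\}_i)$. Because $F$ is $1$-Lipschitz in each coordinate, and granting the bounded-difference property $|M_s-M_{s-1}|\leq 1+o(1)$ discussed below, Azuma--Hoeffding over $m\leq\TDeg$ steps with deviation $t = 50\gamma^3\mu$ gives
\[
\Pr\bigl[|M_m - M_0| \geq 50\gamma^3\mu\bigr] \;\leq\; 2\exp\!\bigl(-\Omega(\gamma^6\mu^2/\TDeg)\bigr)\;=\;2\exp\!\bigl(-\Omega(\gamma^6\TDeg)\bigr),
\]
which is at most $\exp(-\gamma^4\TDeg)$ once $d$ is large enough that the hidden constant in $\Omega$ beats $\gamma^{-2}$. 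Combining with the lower bound on $M_0 = \Exp{F}$ from Lemma \ref{lemma:ExpctACondBt} then yields the claim.

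The hard part will be verifying that $|M_s - M_{s-1}| = 1+o(1)$ despite the strong dependencies between the $c_i$: revealing $c_s$ may a priori shift the conditional distribution of $(c_{s+1},\dots,c_m)$ substantially, since $w_s$ and $w_{s+1}$ may lie in the same block or be connected by a short path along which disagreements can percolate. To handle this I would invoke the disagreement percolation bound underlying Claim \ref{claim:FactorizationOfProbs1} together with Corollary \ref{corollary:EscapeFromBall}: outside a bad event of probability $\exp(-d^{3/4})$, the conditional joint law of $(c_{s+1},\dots,c_m)$ given $c_1,\dots,c_s$ is within total variation $\exp(-d^{3/4})$ of a product law whose marginals are insensitive to $c_s$. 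On the complement of this bad event the martingale differences are $(1+o(1))$-bounded, and the bad event contributes an additive $\exp(-d^{3/4})$ to the failure probability, which is dominated by $\exp(-\gamma^4\TDeg)$ for large $d$.
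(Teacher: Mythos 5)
Your overall route is the same as the paper's: the lower bound on the conditional expectation from Lemma \ref{lemma:ExpctACondBt} combined with an Azuma/Hoeffding-type bound over the at most $\TDeg$ neighbour colours (the paper states this tersely as ``Hoeffding's inequality''; your Doob-martingale formulation with the near-product justification via Claim \ref{claim:FactorizationOfProbs1} and the percolation estimates is a reasonable way to make the bounded-difference step honest). The genuine problem is the final quantitative step. You take the deviation at scale $t=50\gamma^3\mu$ with $\mu=\Theta(d)$ over $m\leq\TDeg=\Theta(d)$ increments, so Azuma gives a tail of the form $\exp\bigl(-c\,\gamma^6\mu^2/\TDeg\bigr)=\exp\bigl(-\Theta(\gamma^6 d)\bigr)$ with $c$ an \emph{absolute} constant. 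The target is $\exp(-\gamma^4\TDeg)=\exp(-\Theta(\gamma^4 d))$. Both exponents are linear in $d$, so your closing remark that the bound suffices ``once $d$ is large enough that the hidden constant beats $\gamma^{-2}$'' cannot be right: increasing $d$ rescales both sides identically, and the needed inequality $c\gamma^6\geq\gamma^4$ fails for every $d$ once $\gamma$ is a small constant (which it is, being tied to $\eps$). With your choice of deviation you can only conclude a tail $\exp(-\Omega(\gamma^6\TDeg))$, not the stated $\exp(-\gamma^4\TDeg)$; the paper instead takes the deviation at the coarser scale $\gamma\mu$, which gives exponent $\Theta(\gamma^2 d)$ and dominates $\gamma^4\TDeg$, paying in the accuracy of the final estimate rather than in the exponent.

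A second, smaller but real, flaw is the opening reduction: ``it suffices to show $\Pr[Y\leq(1-100\gamma^3)\mu]\leq\exp(-\gamma^4\TDeg)$'' reduces the lemma to a false statement. On the event $\{{\cal T}=\emptyset\}$ you have $Y=0$, and $\Pr[{\cal T}=\emptyset]\approx\gamma^3$ (the block $B_v$ simply fails to be updated during ${\cal I}$ with constant probability; this is the estimate $\Exp{\mathbf{1}\{{\cal T}\neq\emptyset\}}=1-\gamma^3$ used in Lemma \ref{lemma:ExpctACondBt}), so the left-hand side is bounded below by a constant. You must keep the intersection with $\{{\cal T}\neq\emptyset\}$ throughout (or condition on it, as the paper does), i.e.\ bound the probability of $\{{\cal T}\neq\emptyset\}\cap\{F \text{ large}\}$, with the concentration applied to $F$ rather than to $Y$. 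Relatedly, what you need from Lemma \ref{lemma:ExpctACondBt} is an \emph{upper} bound on $\Exp{F}$ (equivalently the stated lower bound on the expected number of available colours), not a ``lower bound on $M_0=\Exp{F}$''; and transferring the lemma's bound on $\Exp{\,|\AvialColors_{X^*_\tau}(v)|\,\mathbf{1}\{{\cal T}\neq\emptyset\}}$ to the unconditional $\Exp{F}$ costs another $O(\gamma^3\mu)$, which, together with the $15\gamma^3\TDeg$ correction for un-updated neighbours, eats into the $(1-100\gamma^3)$ margin. These bookkeeping points are repairable, but as written both the reduction via $Y$ and the comparison with $\exp(-\gamma^4\TDeg)$ do not go through.
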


\begin{proof}
First we focus on \eqref{Aeq:lemma:UniformityGStar176}.  Using the fact that
for any two events $A,B$ it holds $\Pr[A\wedge B]\leq \Pr[B|A]$, 
for \eqref{Aeq:lemma:UniformityGStar176} it suffices to show that
\begin{equation}\label{Aeq:lemma:UniformityGStar176WithCond}
 \Pr\left[  
\left( 
{|\AvialColors_{X^*_\tau}(v)|}   \leq (1-100\gamma^3)\  k\exp\left(-\degree(v)/k \right)  
\right )  \ |\  \mathbf{I} \{{\cal T}\neq \emptyset \} 
 \right] \leq  \exp\left( -\gamma^4\Delta \right).
\end{equation}

\noindent
Let $\mu= \ExpCond{ |\AvialColors_{X^*_\tau}(v)| }{  \mathbf{I} \{{\cal T}\neq \emptyset \} }$.
We have that
\begin{eqnarray}
\mu &=& \frac{\Exp{  |\AvialColors_{X^*_{\tau}}(v)|  \  \mathbf{1} \{{\cal T}\neq \emptyset\} }}
{ \Pr[ \mathbf{I} \{{\cal T}\neq \emptyset \} ]}  
\ \geq \ \Exp{   |\AvialColors_{X^*_{\tau}}(v)|  \  \mathbf{1} \{{\cal T}\neq \emptyset\}  }
\hspace{.625cm}\mbox{[since $\Pr[ \mathbf{I} \{{\cal T}\neq \emptyset \} ]\leq 1$]} \nonumber \\
&\geq &   k e^{-\degree(v)/k}  (1-50\gamma^3).
 \hspace{5.675cm} \mbox{[from Lemma \ref{lemma:ExpctACondBt}]} 
 \label{Aeq:lemma:UniformityGStar176MuLB}
\end{eqnarray}

\noindent
Using Hoeffding's inequality we get the following:   for any $\eta>0$ we have that
\[
\Pr \left [|\AvialColors_{X^*_t}(v)| -\mu <\eta\; |\;  \mathbf{I}_{ \{{\cal T}\neq \emptyset \}} \right ]\leq \exp(-\eta^2/(2\degree(v))).
\]
Note that we always have  $|\AvialColors_{X^*_t}(v)|>k-\TDeg\geq (1-\alpha)\TDeg$ and $\degree(v)\leq \TDeg$ since $v\in \ImpVrtx$.
Setting $\eta=\gamma \mu$  we get 
\[
\Pr\left[|\AvialColors_{X^*_t}(v)|  <(1-\gamma)\mu \; |\;  \mathbf{I} \{{\cal T}\neq \emptyset \} \right ]\leq \exp(- \gamma^2(1-\alpha)^2\TDeg/2).
\]
\noindent
Plugging \eqref{Aeq:lemma:UniformityGStar176MuLB} into the above tail bound we
get \eqref{Aeq:lemma:UniformityGStar176WithCond}.
The lemma follows.
\end{proof}

\begin{proof}[Proof of Theorem \ref{thrm:Uniformity1.76}]

We start by assuming that $v\in \ImpVrtx$.
Let ${\cal S}=\{ t_1, t_2, \ldots \} $  be the set of (random) times  when the block $B_v$ is updated 
in  $(X_t)$.
Let ${\cal T}=\{ \tau_{1}, \ldots, \tau_{\ell} \} = {\cal S}\cap {\cal I}$. We follow the convention that 
$\tau_j\leq \tau_{j+1}$.

Let $\mathbf{J}_1, \ldots, \mathbf{J}_{\ell}$ be  such that $\mathbf{J}_j=(\tau_{j}, \tau_{{j+1}})$, 
where $\tau_{\ell+1}=\mathbf{I}_2$.  We let $\mathbf{J}_0=[\mathbf{I}_1 ,\min\{\mathbf{I}_2, \tau_{1}\} )$, 
where we follow the convention that $\tau_1=\infty$ if ${\cal T}=\emptyset$. 

The result  follows by showing the following two inequalities and taking a union bound.

\begin{equation}
\Pr\left[ \mathbf{1} \{{\cal T}\neq \emptyset \} \wedge  
\left(
 \exists t\in \bigcup^{\ell}_{i=1} \mathbf{J}_{i}\;s.t. \;
 {|\AvialColors_{X_t}(v)|}   \leq  (1-\gamma)\ k\exp\left (-\degree(v)/k  \right)
 \right)
 \right]
 \leq  3d^{3}  \exp\left( -d^{3/4} \right) \qquad \qquad \label{eq:thrm:Uniformity176TargetA} 
 \end{equation}
 and
 \begin{equation}
\Pr\left[ \exists t\in \mathbf{J}_{0}\;s.t. \;
\left(  {|\AvialColors_{X_t}(v)|}/{k} \right)^{\mathbf{\Large 1}_{\{v, t\}} }   \leq (1-20\gamma)\exp\left (-\degree(v)/k  \right)
 \right]  \leq   \exp\left( -d^{3/4}\right), \label{eq:thrm:Uniformity176TargetB}
\end{equation}

\noindent
where $\mathbf{I} \{{\cal T}\neq \emptyset \}$ is the event that ${\cal T}$ is non-empty.

Noting that both $\ell$, the cardinality of ${\cal T}$ is a random variable. In particular, $\ell$ is dominated
by the binomial distribution with parameters $CN$ and $1/N$. Applying Chernoff's bounds
we get that
\begin{equation} \label{eq:FromProofThrm:Uniformity176}
\Pr \left [\ell \geq d^2 \right] \leq   \exp\left(- d^2\right).
\end{equation}

\noindent
Let  ${\cal E}_j$ be  the event that at time $\tau_j$,   we have that ${|\AvialColors_{X_{\tau_j}}(v)|}  > (1-12\gamma^2)\ k\exp\left(-\degree(v)/k \right) $.
We are going to show that 
\begin{equation}\label{eq:FirstTargetThrmUniformity176}
\textstyle  \Pr  \left [  \mathbf{I} \{{\cal T}\neq \emptyset \}  \wedge  
\left( 
\bigcup^{\ell}_{i=1}\bar{\cal E}_{i} 
\right)
\right] \leq  2d^2 \exp\left( -d^{3/4} \right).
\end{equation}

\noindent
Consider   $\tau_{j}\in {\cal T}$.  Also, consider $G^*_v$ and the corresponding block dynamics $(X^*_t)$.
Assume that $(X_t)_{t\geq 0}$ and $(X^*_t)_{t\geq 0}$ are such that $X_0=X^*_0$. 
Using  \eqref{Aeq:lemma:UniformityGStar176},  in Lemma \ref{lemma:UniformityGStar176}, we get that 
\[
\Pr\left[ 
\mathbf{1} \{{\cal T}\neq \emptyset \} \wedge
\left( 
{|\AvialColors_{X^*_{\tau_{j}}}(v)|}  \leq (1-100\gamma^3)\ k\exp\left(-\degree(v)/k \right)
\right)
 \right]  \leq   \textstyle  \exp \left( -\gamma^4\TDeg \right).
 \]
Combining the above with \eqref{eq:cor:GVsGStartConditioningA}, in  Corollary \ref{cor:GVsGStartConditioning},
we get that \begin{equation}\label{eq:UniformityForTs+1}
\Pr[\mathbf{1} \{{\cal T}\neq \emptyset \} \wedge \bar{\cal E}_{j}]  \leq \textstyle \exp\left( -d^{3/4}\right).
\end{equation}

\noindent
Using \eqref{eq:UniformityForTs+1} we get the following:
\begin{eqnarray}
\Pr \left[ \mathbf{I} \{{\cal T}\neq \emptyset \}  \wedge  \left( \cup^{\ell}_{i=1}\bar{\cal E}_{i} \right) \right] 
&\leq & 
\Pr \left [\mathbf{I}  \{{\cal T}\neq \emptyset \} \wedge \left( \cup^{\ell}_{i=1}\bar{\cal E}_{i} \right)\ |\   \ell < d^2 \right]
+\Pr\left [ \ell \geq d^2 \right]
\nonumber \\
&\leq & \sum^{ d^2-1}_{i=1}
\Pr\left [ \mathbf{I} \{{\cal T}\neq \emptyset \} \wedge  \bar{\cal E}_i  \ |\  \ell <d^2  \right] 
+\Pr\left [ \ell \geq d^2 \right]
\hspace{1.25cm} \mbox{[union bound]}\nonumber\\
&\leq & \sum^{ d^2-1}_{i=1}
\frac{\Pr[\mathbf{I} \{{\cal T}\neq \emptyset \}  \wedge \bar{\cal E}_{i}]}{1-\exp\left (-d^2\right )} +
\exp\left( -d^2\right)
\hspace{3cm} \mbox{[from  \eqref{eq:FromProofThrm:Uniformity176}]}\nonumber\\
&\leq &\textstyle 2d^2\exp\left( -d^{3/4}\right).
\hspace{6.1cm} \mbox{[from  \eqref{eq:UniformityForTs+1}]} \nonumber
\end{eqnarray}
The above derivations shows that \eqref{eq:FirstTargetThrmUniformity176} is indeed true.

Consider the time interval $\mathbf{J}_i$.  W.l.o.g. assume that  $|\mathbf{J}_i|>\gamma^3 N$.  
Consider a partition of  $\mathbf{J}_i$ into subintervals each of length  (at most) 
$\gamma^3 N$, where the last part can be of smaller length.  Let 
$\mathbf{J}_i(j)=(t_{i,j}, t_{i,j+1})$  be the $j$-th part in this partition, while we have $t_{i,0}=\tau_i$.

Let ${\cal E}_{i}(j)$ be the event that $\frac{|\AvialColors_{X_{t_{i,j}}}(v)|}{k}  > (1-12\gamma^2)\exp\left(-\degree(v)/k \right) $.
For any $0 \leq j\leq \lceil C\gamma^{-3} \rceil$,  we are going to show that 
\begin{equation}\label{eq:SecondTargetThrmUniformity176}
\Pr[\mathbf{I}_{ \{{\cal T}\neq \emptyset \}} \wedge   \bar{\cal E}_{i}(j) ] \leq 
\textstyle \exp\left(-d^{3/4} \right).
\end{equation}

\noindent
Eq. \eqref{eq:UniformityForTs+1} implies that the above  is true for $j=0$.
Consider   $1\leq j \leq \lceil C\gamma^{-3} \rceil$. Consider, also,  $G^*_v$ and the 
corresponding block dynamics $(X^*_t)_{t \geq 0}$. Assume that $(X_t)_{t\geq 0}$ and $(X^*_t)_{t\geq 0}$ 
are such that  $X_0=X^*_0$.   Using  Lemma \ref{lemma:UniformityGStar176} for $(X^*_t)$ we get that 
\[
\textstyle \Pr\left[ \mathbf{1} \{{\cal T}\neq \emptyset \} \wedge 
\left( {|\AvialColors_{X^*_{t_{i,j}}}(v)|}   \leq (1-100\gamma^3) \ k\exp\left(-\degree(v)/k \right)  \right)
 \right]  \leq   \exp\left(-\gamma^4\TDeg \right).
 \]
 Combining the above  Corollary  \ref{cor:GVsGStartConditioning},  we get that
\begin{equation} \label{eq:Tail4Eij} 
\Pr[\mathbf{1} \{{\cal T}\neq \emptyset \} \wedge  \bar{\cal E}_i(j)]  \leq \textstyle \exp\left( -d^{3/4}\right),
\end{equation}
for $1\leq j \leq  C\gamma^{-3}$. 
The above implies that \eqref{eq:SecondTargetThrmUniformity176} is indeed true.

Let ${\cal R}^i_j$ be the event that there is some $s\in \mathbf{J}_i(j)$ such that 
$\frac{|\AvialColors_{X_{s}}(v)|}{k} > (1-14\gamma^2)\exp\left(-\degree(v)/k \right)$.
Some vertex $w \in  N(v)\setminus B_v$ is updated  in a transition of the chain
 with probability  at most   $\TDeg/N$. 
Note that the vertices in $N(v)\setminus B_v$ belong to different blocks. That is, 
an update of vertex in $N(v)\setminus B_v$ updates only a single vertex.

Chernoff's bounds imply that  with  probability at least $1-\exp\left( -\gamma^3 d  \right)$,  
the number of updates of vertices  in $N(v)\setminus B_v$    during  $\mathbf{J}_i(j)$ is at most  
$\TDeg\gamma^2$.  By definition, during $\mathbf{J}_i(j)$ the vertices in  $N(v)\cap B_v$ are 
not updated.

Since  changing any $\TDeg \gamma^2$ vertices in $N(v)$ can only change 
the number of available colors for $v$ by at most $\TDeg\gamma^2$, 
we get the following:  With probability at least $1-\exp\left( -\gamma^3 \TDeg   \right)$, 
during the time period  $\mathbf{J}_i(j)$ the ratio  $|\AvialColors_{X^*_{t_{i}}}(v)|/k$ does not  
change by more than  $\gamma^{2}/1.5$. Then, we get that 
\begin{equation}\label{eq:Tail4Rij}
\Pr[ \bar{R}^i_j \ |\  \mathbf{I} \{{\cal T}\neq \emptyset \}  \wedge  {\cal E}_i(j) ]\leq \exp(-\gamma^3\TDeg).
\end{equation}
We have that 
\begin{eqnarray}
\Pr\left [ \mathbf{I} \{{\cal T}\neq \emptyset \} \wedge \left( \cup_{j}\bar{R}^i_j\right) \right]  
& \leq & 
\sum^{ \lceil C \gamma^{-3} \rceil }_{j=0}
\Pr[\mathbf{I} \{{\cal T}\neq \emptyset \} \wedge  \bar{R}^i_j    ] 
\hspace{.7cm}\mbox{[union bound]} \nonumber\\
& \leq &
\sum^{ \lceil C \gamma^{-3} \rceil  }_{j=0}
\left(
 \Pr[ \mathbf{I} \{{\cal T}\neq \emptyset \} \wedge \bar{\cal E}_i(j) ] +
\Pr[  \bar{R}^i_j \; |\; \mathbf{I} \{{\cal T}\neq \emptyset \} \wedge  {\cal E}_{i}(j)] \right) \nonumber \\
&\leq & \textstyle d \exp\left( -d^{3/4}\right).
 \label{eq:TailOfRiUniformity}
\end{eqnarray}
The last derivation follows from \eqref{eq:Tail4Rij} and \eqref{eq:Tail4Eij}.
Let  ${\cal R}_i=\bigcup_j{\cal R}^i_j$. Note that the event insider the probability term in  
\eqref{eq:thrm:Uniformity176TargetA} is equivalent to the event 
$\mathbf{I}_{ \{{\cal T}\neq \emptyset \}}  \wedge \left( \cup_{i} {\cal R}_i \right)$. 
It holds that
\begin{eqnarray}
\Pr\left [  \mathbf{I}  \{{\cal T}\neq \emptyset \}   \wedge \left( \cup_{i} {\cal R}_i \right) \right] &\leq &
\Pr\left[ \mathbf{I} \{{\cal T}\neq \emptyset \}   \wedge \left( \cup_{i} {\cal R}_i \right)  \ |\  \ell <d^{2} \right]+
\Pr\left[\ell \geq  d^2  \right] \nonumber \\
&\leq & \sum^{d^2-1}_{i=1}
\Pr \left[ \mathbf{I}  \{{\cal T}\neq \emptyset \}  \wedge  {\cal R}_i    \ |\  \ell <d^2 \right ]+
\Pr\left[\ell \geq d^2 \right ] \nonumber \\
&\leq & 2\sum^{d^2-1 }_{i=1}
 \Pr\left [ \mathbf{I}  \{{\cal T}\neq \emptyset \}   \wedge  {\cal R}_i \right ] +
\exp\left( -d^2 \right)  \qquad \mbox{[from  \eqref{eq:FromProofThrm:Uniformity176}]}\nonumber \\
&\leq &\textstyle  2d^3\exp\left( -d^{3/4} \right), \nonumber
\end{eqnarray}
where in the last derivation we used \eqref{eq:TailOfRiUniformity}. 
Eq. \eqref{eq:thrm:Uniformity176TargetA}, follows.

It remains to  show that \eqref{eq:thrm:Uniformity176TargetB} is indeed true.
Recall that $t_1$ is the time the block dynamics updates $B_v$ for first time. 
We consider cases for $t_1$.
The first case is when $t_1>\mathbf{I}_2$. Then, \eqref{eq:thrm:Uniformity176TargetB} is trivially true, i.e.,
there is no update of $B_v$ during the time interval ${\cal I}$. If $t_1\in {\cal I}$, i.e., the first update of 
$B_v$ happened after the beginning of the time interval ${\cal I}$,  then by definition it follows that the 
block $B_v$  is not updated during $\mathbf{J}_0$.   This implies that \eqref{eq:thrm:Uniformity176TargetB} is  true.

The less trivial case is  when $t_1 < \mathbf{I}_1$, i.e., there was an update of block $B_v$ before the time 
period ${\cal I}$ had started. Let $t'=\mathbf{I}_1$. Since we assume that $t_1<t'$,   
Lemma \ref{lemma:UniformityGStar176} and  \eqref{eq:FromProofThrm:Uniformity176} imply  that
\[
\Pr\left[ 
\left(  {|\AvialColors_{X_{t'}}(v)|} \right)   \leq (1-12\gamma^2)\ k\exp\left (-\degree(v)/k  \right)
 \right]\leq  \textstyle \exp\left( -d^{3/4}\right).
\]
Furthermore, using a ``covering argument" very similar to that we used before, we prove that
$\frac{|\AvialColors_{X_{t}}(v)|}{k}  \leq (1-20\gamma^2)\exp\left (-\degree(v)/k  \right)$,
for any $t\in \mathbf{J}_0$, with probability $\leq \exp(-d^{3/4})$,
as promised.  

The case where $v$ is an internal vertex is almost direct. Updating the block of $v$, we have the following: 
conditional on the configuration of $v$ and the vertices at distance $2$ from $v$ the expected number of
available colors is at least $k\exp\left (-\degree(v)/k \right)$. This bound follows by using arguments very similar to those
we have in the proof of Lemma \ref{lemma:ExpctACondBt}. Then, the tail bound on the available colors follows by
using Azuma's inequality, similarly to the proof of Lemma \ref{lemma:UniformityGStar176}. The derivations are
very similar to the aforementioned results for this reason we omit them.
The theorem follows.
\end{proof}

\section{Proof of Theorem \ref{thrm:BlockUpdtCovergent}}\label{sec:thrm:BlockUpdtCovergent}

Let $B_1, B_2, \ldots, B_{s}$  be the  blocks that are  adjacent to $u^*$.
Recall that each of these blocks is a tree with at most one extra edge.  For each $B_j$, 
let $\mathbf{T}^j$ be the maximal sub-block of $B_j$ which   contains all the vertices  that are 
reachable from $v$ through a path inside $B_j$ that does not uses any edges of the cycle
of $B_j$.  Note that $\mathbf{T}^j$  is always a tree. The root of $\mathbf{T}^j$ is the vertex which is
 adjacent to $u^*$.  For each $B_j$, there is only one vertex  such vertex.

If the block $B_j$ is a  tree, then $B_j$ and  $\mathbf{T}^j$ are identical. Otherwise, if $B_j$ is unicyclic then 
 what remains outside $\mathbf{T}^j$ is  the cycle and  the subtrees that hang from the cycle.
For $B_j$ that contains the cycle $C$, and vertex $x$ which is adjacent to a vertex in $C$,  we define the subtree
$\mathbf{T}^j_x$ that contains $x$ and all the vertices in $B_j$ which are reachable from $x$ through a path inside
 $B_j$ which does not uses edges of $C$, e.g., see Figure \ref{fig:DisBlockVsTj}
 
 \begin{figure}
	\centering
		\includegraphics[height=6.3cm]{./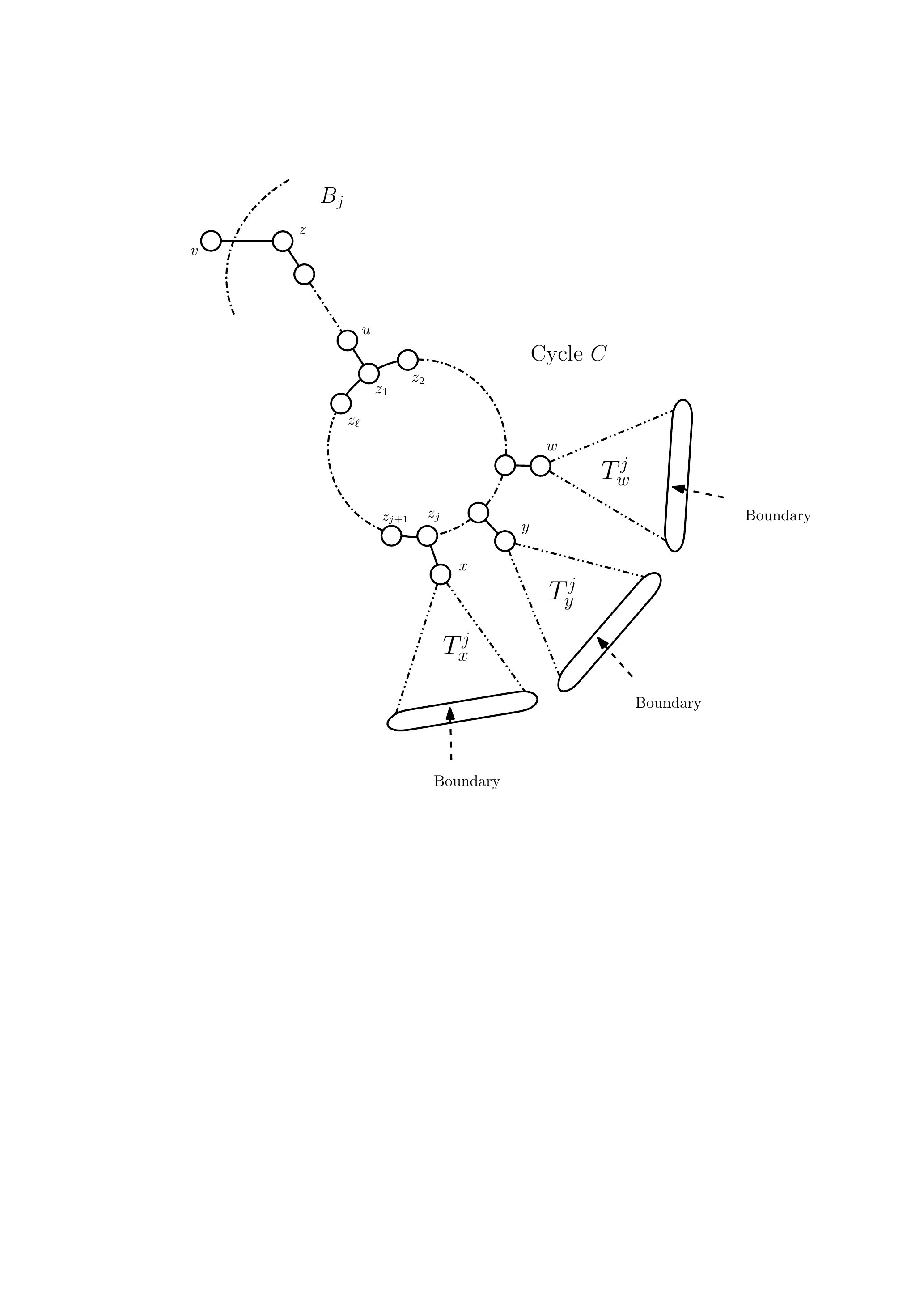}
		\caption{$\mathbf{T}^j$ does not contain $C$ and the subtrees than hang from $z_2, \ldots, z_{\ell}$.}
	\label{fig:DisBlockVsTj}
\end{figure}

For   $\Lambda\subset V$ for which  there exists $B_j$ such that $\Lambda\subseteq B_j$ let
$$
{\cal R}(\Lambda, X_t, Y_t)=n^2\sum_{z\in \Lambda\cap \ImpVrtx}\degree_{out}(z) \indicator{z\in X_t\oplus Y_t}.
$$
For any $w\in \outBound \Lambda$ we let
\[ 
Q_w(\Lambda) =  \ExpCond{  {\cal R}(\Lambda, X_{t+1}, Y_{t+1})  \mathbf{1}\{{\cal E}\}}{  X_{t+1} (w)\neq Y_{t+1}(w), \ 
X_{t},Y_{t}, \   B_j \textrm{ is updated at time $t+1$} }.
\]

For introducing the following concepts, consider the block in  Figure \ref{fig:DisBlockVsTj}.
We let the event $\mathbold{A}_j=$``The block $B_j$ contains  cycle $C$". For each vertex $w\in B$, 
we  let the event  $\mathbold{D}_w=$``From $u^*$,  there is a path of disagreement  in $B_j$ that reaches $w$".  
The linearity of expectation yields
\begin{eqnarray} \label{eq:LinearityOfQvBVsQvT+Rest}
Q_{u^*}(B_j)\leq Q_{u^*}(\mathbf{T}^j)+
\mathbf{1}\{\mathbold{A}_j \} \left ( 
\Pr[\mathbold{D}_{u}] \cdot Q_{u}(C)+  \sum_{z_i\in C\backslash\{z_1\}}\sum_{x\in N(z_i)\backslash C} \Pr[\mathbold{D}_{z_i}]
\  Q_{z_i}(\mathbf{T}^j_x)
 \right),
\end{eqnarray}
where $u$ is the only vertex in $\mathbf{T}^j$ which is adjacent to the cycle $C$ and it is assumed that
$u$ is adjacent to the vertex $z_1\in C$ (see Figure \ref{fig:DisBlockVsTj}). With  \eqref{eq:LinearityOfQvBVsQvT+Rest} 
we break the vertices of $B_j$ which contributed to $Q_v(B_j)$ into groups. That is, the vertices in $\mathbf{T}^j$, the vertices in 
the cycle $C$ and, finally, the trees that hang from $z_2, \ldots, z_{\ell}$, respectively.

The theorem follows by plugging the bounds from Propositions \ref{prop:Bound4QTPlustUniformity} and 
\ref{prop:BoundHangingTreesPlusUniformity} into \eqref{eq:LinearityOfQvBVsQvT+Rest} and 
Note that we have that
\begin{equation}\label{eq:Target4:thrm:BlockUpdtCovergent}
\ExpCond{
\left( 
\EdgeBlockWeight(X_{t+1}, Y_{t+1})
- \EdgeBlockWeight (X_{t}, Y_{t})
\right) 
\mathbf{1}\{ {\cal E} \}}{ X_t, Y_t, \ B\  \textrm{is updated at $t+1$}
}
\leq   Q_{u^*}(B_j)+n.
\end{equation}
The theorem will follow by using \eqref{eq:LinearityOfQvBVsQvT+Rest} to bound $Q_{u^*}(B_j)$.
To that end, we use  the following  results, whose proofs appear in 
Sections \ref{sec:prop:Bound4QTPlustUniformity} and \ref{sec:prop:BoundHangingTreesPlusUniformity}, respectively.

\begin{proposition} \label{prop:Bound4QTPlustUniformity}
Under the assumptions of Theorem \ref{thrm:BlockUpdtCovergent},  for any block $B_j$,  adjacent to $v$,  we have 
$$
Q_{u^*}(\mathbf{T}^j )\leq n^2(1-  2\epsilon/7).
$$
\end{proposition}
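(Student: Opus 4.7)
The plan is to extend the inductive argument in the proof of Theorem~\ref{thm:block-regular} from the $k>2\Delta$ regime by replacing the worst-case propagation probability $1/(k-\Delta)$ with the sharper bound obtained from the local uniformity event $\mathcal{E}$. I would root $\mathbf{T}^j$ at the unique vertex $z\in\mathbf{T}^j$ adjacent to $u^*$, and, for each $w\in\mathbf{T}^j$, let $T_w$ denote the subtree rooted at $w$. Following the BFS coupling of Section~\ref{sec:prop:Tail4ZetaIntro}, define $\xi(w)$ as the probability that $w$ disagrees in the coupling given a disagreement at its parent. Then, exactly as in the recursion in the proof of Theorem~\ref{thm:block-regular}, one has
\[
Q_{u^*}(\mathbf{T}^j)\;\le\;\xi(z)\bigl(n^2\deg_{out}(z)\indicator{z\in\ImpVrtx}+\textstyle\sum_{y\in N(z)\cap\mathbf{T}^j}Q_z(T_y)\bigr),
\]
and analogous recursions propagate down the tree.

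The first task is to sharpen $\xi(w)$ on the local-uniformity ball. When $w\in \ball(u^*,(\log d)^2)\cap\ImpVrtx$ is low-degree, event $\mathcal{E}$ forces $\min\{|\AvialColors_{X_{t+1}}(w)|,|\AvialColors_{Y_{t+1}}(w)|\}\ge(1-\epsilon/10)k\exp(-\deg(w)/k)$, so the maximal coupling yields $\xi(w)\le 1/[(1-\epsilon/10)k\exp(-\deg(w)/k)]$. Plugging in $k\ge(\alpha+\epsilon)d$ and $\deg(w)\le(1+\epsilon/6)d$, and invoking the defining relation $\alpha e^{-1/\alpha}=1$ together with a first-order expansion in $\epsilon$, this gives $\xi(w)\le 1/[(1+\epsilon/2)\deg_{in}(w)]$ where $\deg_{in}(w)$ is the in-block degree of $w$. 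For an internal low-degree vertex $w$ in $\mathbf{T}^j\setminus\ImpVrtx$ inside the same ball, the same estimate holds with $\deg(w)$ in place of $\deg_{in}(w)$, since all neighbors of $w$ lie inside $\mathbf{T}^j$.

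Next I would handle the two non-uniformity regimes. For high-degree vertices ($\deg(w)>\TDeg$) and for vertices in any path from $u^*$ that crosses a high-degree vertex, I would use the trivial bound $\xi(w)\le 1$ combined with the weight-scheme argument of Proposition~\ref{prop:SingleSourceBlockdis}: Corollary~\ref{cor:FromBreakPointProd} and the breakpoint property in Definition~\ref{def:SparseBlockPart} guarantee that the product of in-degrees along any path from $u^*$ to a high-degree vertex is offset by a factor of at most $d^{-15m}(1+\epsilon/10)^{\ell-m+1}$. For vertices outside $\ball(u^*,(\log d)^2)$ but low-degree, the weaker bound $\xi(w)\le 1/((1+\epsilon)\deg_{in}(w))$ from Corollary~\ref{corr:GMPBiasColour} still applies, and the distance from $u^*$ already gives exponentially decaying contribution $\le(1+\epsilon)^{-(\log d)^2}$. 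Both of these contributions together are $o(n^2)$ and can be absorbed.

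Combining the three cases into the recursion, by straightforward induction on the depth (as in the induction in Theorem~\ref{thm:block-regular}) one shows $Q_z(T_y)\le n^2/(1+\epsilon/2)+o(n^2)$ for every child $y$, and hence
\[
Q_{u^*}(\mathbf{T}^j)\;\le\;\frac{1}{(1+\epsilon/2)\deg(z)}\cdot n^2\deg(z)\cdot(1+o(1))\;\le\;n^2(1-2\epsilon/7).
\]
The main obstacle is the bookkeeping at the interface between the two regimes: making the near-zone induction yield exactly the factor $1/(1+\epsilon/2)$ while simultaneously certifying that the cumulative far-zone and high-degree contributions are truly negligible. This requires careful use of the weight function $\beta(\cdot)$ of Proposition~\ref{prop:SingleSourceBlockdis} — specifically the variant in which the multiplicative weight along a path is inflated by $(1+\epsilon/2)$ per low-degree step (inside the ball) rather than $(1+\epsilon^2)$ — so that Lemma~\ref{lemma:Weight2OfBoudnary}'s lower bound $\beta(\cdot)\ge1/2$ still holds and the far-zone tail drops out.
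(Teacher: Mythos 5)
Your proposal is correct and follows essentially the same route as the paper's proof: split $\mathbf{T}^j$ into a near region where the convergent-configuration bound $p_w\leq 1/((1+\epsilon/2)\degree(w))$ drives the Theorem~\ref{thm:block-regular}-style induction, and control the remainder (high-degree vertices and distant vertices) by path products offset via Corollary~\ref{cor:FromBreakPointProd}. The only real difference is the choice of split radius: the paper takes $r=15\log d/\log(1+\epsilon/10)$, so the breakpoint property of $u^*$ forces every vertex of the near zone to be low degree, the interface bookkeeping you flag as the main obstacle disappears, and the two pieces are bounded separately by $Q^a\leq n^2/(1+\epsilon/2)$ and $Q^b\leq n^2 d^{-10}$.
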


\begin{proposition}\label{prop:BoundHangingTreesPlusUniformity}
Under the assumptions of Theorem \ref{thrm:BlockUpdtCovergent}. for any block $B_j$, 
incident to $v$, we have 
$$
\textstyle   \mathbf{1}\{\mathbold{A}_j \} 
\left ( 
\Pr[\mathbold{D}_{u}] \cdot Q_{u}(C)+  \sum_{z_i\in C\backslash\{z_1\}} \sum_{x\in N_{z_i}\backslash C}
 \Pr[\mathbold{D}_{z_i}]\cdot Q_{z_i}(\mathbf{T}^j_x)  
 \right )
 \leq n^2(\log\log d)^{-|C|/10},
$$
(see in Figure \ref{fig:DisBlockVsTj} for the placement of the vertices above).
\end{proposition}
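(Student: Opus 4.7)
The plan is to exploit the deep burial of the cycle $C$ inside $B_j$ enforced by Condition~2(c) of Definition~\ref{def:SparseBlockPart}, which guarantees $D:=\mathrm{dist}(u^*,C)\ge \tfrac{\log\log d}{\log d}(|C|+\log\Delta)$. In three steps I (i) eliminate the $\Pr[\mathbold{D}_u]\,Q_u(C)$ term, (ii) bound each $Q_{z_i}(\mathbf{T}^j_x)$ by $n^2$, and (iii) drive $\Pr[\mathbold{D}_{z_i}]$ below $(\log d)^{-(|C|+\log\Delta)}$ via disagreement percolation; the three combine to give the claimed estimate.

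For~(i), Condition~2(c) places $\outBound B_j$ at graph distance at least $2\log(|C|\Delta)>1$ from $C$, so the inner boundary $\inBound B_j$ (sitting one hop closer) is still disjoint from $C$. Every cycle vertex is therefore internal to $B_j$ with $\deg_{out}=0$, so $C\cap\ImpVrtx=\emptyset$ and ${\cal R}(C,\cdot,\cdot)\equiv 0$, whence $Q_u(C)=0$. For~(ii), each $\mathbf{T}^j_x$ is a rooted subtree of $B_j$, and the inductive tree argument used to show $Q_v(T)\le n^2$ in the proof of Theorem~\ref{thm:block-regular} goes through verbatim (the sharper bound available from the uniformity event ${\cal E}$ is not needed here).

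For~(iii), a disagreement reaching $z_i$ must first percolate the unique path of length $\ge D-1$ from $u^*$ to $u$ inside $\mathbf{T}^j$, and once it lands on any cycle vertex it can reach all of $C$ since $p_v=1$ on the cycle. Stochastically dominating the coupling by the Bernoulli percolation process of Section~\ref{sec:prop:Tail4ZetaIntro}, with $p_v\le 1/\bigl((1+\eps)\deg_{in}(v)\bigr)$ at low-degree vertices and $p_v=1$ at high-degree vertices, and absorbing the high-degree factors along the path via Corollary~\ref{cor:FromBreakPointProd} together with the $r$-breakpoint property of $u^*$ (as in the bound of Corollary~\ref{corollary:Point2PointDisagreement}), one obtains
\[
\Pr[\mathbold{D}_{z_i}]\;\le\; 2\bigl((1+\eps)d\bigr)^{-D}\;\le\;(\log d)^{-(|C|+\log\Delta)},
\]
since $d^{-D}=\exp(-D\log d)=\exp\bigl(-\log\log d\,(|C|+\log\Delta)\bigr)=(\log d)^{-(|C|+\log\Delta)}$. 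The double sum has at most $|C|\,\Delta$ summands, so combining (i)--(iii) and substituting $(\log d)^{\log\Delta}=\Delta^{\log\log d}$ yields
\[
\mathbf{1}\{\mathbold{A}_j\}\!\sum_{z_i\in C\setminus\{z_1\}}\sum_{x\in N(z_i)\setminus C}\!\Pr[\mathbold{D}_{z_i}]\,Q_{z_i}(\mathbf{T}^j_x)\;\le\; n^2\,|C|\,\Delta^{1-\log\log d}\,(\log d)^{-|C|}\;\le\; n^2\,(\log\log d)^{-|C|/10},
\]
the last inequality holding whenever $d$ is large enough that $\log\log d\ge 2$ (so $\Delta^{1-\log\log d}\le 1$) and $(\log d)^{|C|}\ge |C|(\log\log d)^{|C|/10}$.

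The main obstacle is the percolation estimate in~(iii): the $u^*\to u$ path may pass through many high-degree vertices, each contributing $p_v=1$ to the percolation bound, which would dominate unless offset by sufficiently many low-degree vertices in between. The choice of threshold $\tfrac{\log\log d}{\log d}(|C|+\log\Delta)$ in Condition~2(c) is tuned precisely for this: its $\log\Delta$ summand pays off the $\Delta^{\log\log d}$ penalty incurred by the high-degree vertices via the breakpoint weight schema of~\eqref{def:PathWeight}, while its $|C|$ summand supplies the geometric decay in the cycle length.
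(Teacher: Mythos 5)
Your overall architecture mirrors the paper's: kill the cycle term, bound the hanging-subtree contributions, and beat the $|C|\,\Delta$-fold sum by the percolation decay over the distance $\ell_0=\dist(u^*,C)\ge \tfrac{\log\log d}{\log d}(|C|+\log\Delta)$ guaranteed by Condition~2(c). Step~(i) is fine (the paper is even more generous, using $Q_u(C)\le |C|$, which the factor $\Pr[\mathbold{D}_u]$ absorbs anyway), and the arithmetic in your final combination is sound. The problem is step~(ii). The claim that the inductive argument of Theorem~\ref{thm:block-regular} ``goes through verbatim'' to give $Q_{z_i}(\mathbf{T}^j_x)\le n^2$ is not justified and, as stated, fails: that induction needs $k>2\Delta$ with $\Delta$ bounding \emph{every} degree, so that the per-vertex propagation probability $1/(k-\Delta)$ beats the branching factor $\Delta$. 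In the present setting $k=(\alpha+\eps)d$ while the subtrees $\mathbf{T}^j_x$ may contain vertices of degree up to $\Delta\gg k$ (Condition~2(c) keeps high-degree vertices away from the block boundary and, via the breakpoint weights, away from $u^*$, but nothing keeps them out of the trees hanging off the cycle). At such a vertex the propagation probability is at best $O(1/k)$ (Proposition~\ref{prop:SpatialMixing4HighDegrees}) or trivially $1$, while the branching is $\degree\gg k$, so the inductive step $Q\le p\,(n^2\degree_{\mathrm{out}}+n^2(\degree-\degree_{\mathrm{out}}))\le n^2$ breaks down. Controlling exactly this is the whole content of the paper's Lemma~\ref{lemma:QHangingAloneDisgreement} (and of Claim~\ref{claim:yVsl}/Corollary~\ref{cor:FromBreakPointProd}): the high-degree factors inside $\mathbf{T}^j_x$ are paid for by the breakpoint weight budget along the path from $u^*$ through $z_i$ into the subtree, and the resulting bound is \emph{not} $n^2$ but $n^2\,\tfrac{3}{2\eps}(1+\eps/10)^{l-h}\bigl(\prod_{w\in H}d^{15}\degree(w)\bigr)^{-1}$, which can exceed $n^2$ and only becomes harmless after it is multiplied by $\Pr[\mathbold{D}_u]\le (d/2)^{-9\ell_0/10}$. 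So you must either prove a standalone subtree bound by redoing that weight-compensation argument (which yields something like $O(n^2 d/\eps)$, still compatible with your numerics), or, as the paper does, bound the product $\Pr[\mathbold{D}_{z_i}]\,Q_{z_i}(\mathbf{T}^j_x)$ jointly.

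A secondary, smaller point in step~(iii): the percolation probabilities $p_v\le 1/\bigl((1+\eps)\deg_{in}(v)\bigr)$ from \eqref{eq:DefPuSimple} alone do not give the claimed $\bigl((1+\eps)d\bigr)^{-D}$ decay, since a path vertex of in-degree $2$ only contributes a factor $\approx 1/2$. You need the $\min$ with $(k-\degree(v))^{-1}$ as in \eqref{eq:DefPu} (or the convergent-configuration bound $p_v\le 2/d$ coming from the event $\mathcal{E}$ of Theorem~\ref{thrm:BlockUpdtCovergent}), together with Claim~\ref{claim:yVsl} to control the number of high-degree vertices on the path; this is what underlies the paper's bound $\Pr[\mathbold{D}_u]\le (d/2)^{-9\ell_0/10}$ and the Corollary~\ref{corollary:Point2PointDisagreement} you cite, so it is a fixable citation-level slip rather than a structural flaw.
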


\noindent
Plugging the bounds from Propositions \ref{prop:Bound4QTPlustUniformity} and 
\ref{prop:BoundHangingTreesPlusUniformity} into \eqref{eq:LinearityOfQvBVsQvT+Rest} we get the
desirable bound for $Q_{u^*}(B_j)$. The theorem follows by using \eqref{eq:Target4:thrm:BlockUpdtCovergent}.

\subsection{Proof of Proposition \ref{prop:Bound4QTPlustUniformity}}\label{sec:prop:Bound4QTPlustUniformity}

So as to bound $Q_{u^*}(\mathbf{T}^j)$ we consider the quantities $Q^a$ and 
$Q^b$ defined as follows: Let ${T}_{a}=\mathbf{T}^j\cap \ball(v,r)$,
where $r=(15\log d)/\log(1+\epsilon/10)$. Similarly, let  ${T}_b=\mathbf{T}^j\setminus \ball(v, r)$.
The quantity $Q^a $ includes the  contribution on $Q_{u^*}(\mathbf{T}^j)$ from the 
vertices in the subtree ${T}_a$.   $Q^b$ includes the 
contribution on $Q_v(\mathbf{T}^j)$ from vertices in ${T}_b$.
The linearity of expectation implies that
\begin{equation}\label{eq:basis:prop:Bound4QTPlustUniformity}
Q_{u^*}(\mathbf{T}^j)=Q^a+Q^b.
\end{equation}
The proposition will follow by bounding appropriately $Q^a, Q^b$.
The bound of $Q^a$ is related on the event ${\cal E}$, in the 
statement of Theorem \ref{thrm:BlockUpdtCovergent}.

\begin{lemma} \label{lemma:Bound4QAPlusUniformity}
Under the assumptions of Proposition \ref{prop:Bound4QTPlustUniformity},
we have that   $Q^{a}\leq n^2(1+\epsilon/3)^{-1}$.
\end{lemma}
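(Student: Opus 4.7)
The plan is to perform a BFS-style induction on the subtree $T_a$ rooted at $z$, the unique neighbor of $u^*$ inside $\mathbf{T}^j$, analogous to the argument in Theorem \ref{thm:block-regular} but with sharper per-step propagation bounds supplied by the local uniformity event ${\cal E}$. Since $r = (15\log d)/\log(1+\epsilon/10) \ll (\log d)^2$, every vertex of $T_a$ lies in $\ball(u^*, (\log d)^2)$, so for every $w \in T_a \cap \ImpVrtx$ the event ${\cal E}$ guarantees $\min\{|\AvialColors_{X_{t+1}}(w)|, |\AvialColors_{Y_{t+1}}(w)|\} \geq (1-\epsilon/10)\, k\exp(-\degree(w)/k)$.

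The key per-vertex bound is that for any low-degree $w \in T_a$ the propagation probability $p_w$ (the probability that the maximal coupling creates a disagreement at $w$ given its BFS-parent disagrees) satisfies $p_w \leq [(1+\epsilon/3)\degree_{in}(w)]^{-1}$. For $w \in T_a \cap \ImpVrtx$ this follows from ${\cal E}$ by bounding the conditional marginal of any single color at $w$ by the reciprocal of the available-color count. Substituting $k = (\alpha+\epsilon)d$ and using the defining identity $\alpha\,e^{-1/\alpha}=1$, a first-order expansion in $\epsilon$ yields $k\exp(-\degree(w)/k) \geq (1+c_0\epsilon)\,\degree_{in}(w)$ for some constant $c_0 = c_0(\alpha) > 1/3$, giving the claim. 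For low-degree $w \notin \ImpVrtx$ whose neighborhood avoids $\HighDegreeSet$, Corollary \ref{corr:GMPBiasColour} supplies the same bound unconditionally. For high-degree vertices (or vertices adjacent to $\HighDegreeSet$, or on the block cycle), I would use the trivial bound $p_w \leq 1$ and absorb the loss using Corollary \ref{cor:FromBreakPointProd}: along any path in $\mathbf{T}^j$ emanating from the boundary, the product $\prod_{u \in M} d^{15}\degree(u)$ over high-degree vertices is at most $(1+\epsilon/10)^{\ell - m + 1}$, so the branching blow-up at high-degree vertices is dominated by the extra $(1+\epsilon/3)^{-1}$ contraction accumulated at the interleaved low-degree vertices.

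Next, I would run the induction from the depth-$r$ boundary of $T_a$ inward. For each $w \in T_a$, let $P(w)$ denote the expected contribution to $Q^a$ from the subtree $T_a(w)$ rooted at $w$, conditional on the BFS-parent of $w$ disagreeing. The coupling yields
\[
P(w) \;\leq\; p_w\!\left( \mathbf{1}\{w\in\ImpVrtx\}\,n^2\degree_{out}(w) + \sum_{y \in N(w)\cap T_a(w)\setminus\{w\}} P(y) \right),
\]
with $P(w) = 0$ for $w$ on the depth-$r$ boundary of $T_a$ (their subtrees contribute only to $Q^b$). Iterating the per-vertex bound from the leaves up and telescoping the degrees along each root-to-leaf path, one obtains $P(z) \leq n^2/(1+\epsilon/3)$; since $Q^a = P(z)$, the lemma follows.

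The main obstacle is handling the amortization across paths passing through high-degree vertices: the pointwise bound $P(w) \leq n^2/(1+\epsilon/3)$ cannot hold at a high-degree $w$, because $p_w = 1$ and the branching factor is no longer compensated locally. I would circumvent this by tracking a weighted version of $P$, analogous to the weight function $\beta$ defined in \eqref{def:ReverseWeight}, so that the telescoping product from Corollary \ref{cor:FromBreakPointProd} is built directly into the induction hypothesis. The crucial identity to verify is that the excess $d^{15}\degree(u)$ factors contributed by each high-degree $u$ along a root-to-leaf path are exactly cancelled by the surplus $(1+\epsilon/10)^{-1}$ factors at the surrounding low-degree $r$-breakpoints, which the sparse block partition guarantees to exist by construction.
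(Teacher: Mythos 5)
Your core induction is the right shape, but the proposal misses the one structural fact that makes this lemma easy, and as a result most of your effort goes into a difficulty that does not exist for $Q^a$ --- and which you then leave unresolved. Recall that $T_a=\mathbf{T}^j\cap\ball(u^*,r)$ with $r=(15\log d)/\log(1+\epsilon/10)$, and that $u^*\in\ImpVrtx$ is an $r'$-breakpoint (or adjacent to one) for $r'\gg r$. If some $w\in T_a$ had $\degree(w)>\TDeg$, the path from $u^*$ to $w$ (length $\ell\leq r$) would have weight at least $d^{15}\degree(w)\,(1+\epsilon/10)^{-\ell}\geq d^{16}(1+\epsilon/10)^{-r}>1$ by \eqref{eq:DefVrtxWeight}--\eqref{def:PathWeight}, contradicting the breakpoint property. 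So \emph{every} vertex of $T_a$ is low degree; high-degree vertices, cycle vertices, and the Corollary~\ref{cor:FromBreakPointProd} amortization are relevant only beyond radius $r$, i.e.\ to $Q^b$ (Lemma~\ref{lemma:Bound4QBPlusUniformity}, where the amortization is carried out as a path-sum rather than a pointwise induction). This is precisely why the split $Q^a+Q^b$ is made at this particular radius.

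Once this is observed, your BFS induction closes immediately and coincides with the paper's proof: for every $w\in T_a$ the convergent-configuration event ${\cal E}$ (which is exactly what the hypotheses of Theorem~\ref{thrm:BlockUpdtCovergent}, hence of Proposition~\ref{prop:Bound4QTPlustUniformity}, provide --- no need to re-derive it from uniformity, and your first-order expansion claim ``$c_0(\alpha)>1/3$'' is not actually justified as written) gives $p_w\leq[(1+\epsilon/2)\degree(w)]^{-1}$, and then
\[
Q_{u^*}(T_a)\;\leq\;p_z\Bigl(n^2\degree_{out}(z)+n^2\bigl(\degree(z)-\degree_{out}(z)\bigr)\Bigr)\;\leq\;n^2(1+\epsilon/2)^{-1}\;\leq\;n^2(1+\epsilon/3)^{-1},
\]
with the same bound propagating through the induction on subtrees. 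As submitted, however, your argument is incomplete: you correctly note that the pointwise bound $P(w)\leq n^2(1+\epsilon/3)^{-1}$ fails at a high-degree $w$, you propose a $\beta$-style weighted induction to repair it, and you defer the ``crucial identity'' that would make it work --- so the step you yourself identify as the main obstacle is never proved. That gap disappears entirely once you use the absence of high-degree vertices in $T_a$.
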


\begin{proof}
A very useful observation is that  since   $u^*\in \ImpVrtx$,  every vertex in  ${T}_a$ 
is  of degree at most $\TDeg$, i.e., low degree vertex. Clearly we get an overestimate if we
assume that every vertex  $w\in T_a$ contributes to the distance with weight $n^2 \degree_{out}(w)$.
if it becomes disagreeing.

We prove the lemma   using induction. The base case is when $T_a$ is a single 
vertex tree, i.e., it is of height $0$. Let $T_a=\{z\}$.  Recall  that $\degree(z)\leq \TDeg$.  
Recall that $p_z$ is the probability of propagation for vertex $z$.
\[
Q_v(z) \ \leq \ p_z n^2 \degree(z) 
\  \leq \  n^2 (1+\epsilon/2)^{-1}.
\]
The second inequality follows from our assumptions about the event ${\cal E}$ which implies that $p_z\leq [(1+\epsilon/2)\degree(z)]^{-1}$.

Assume that the root of $T_a$ is vertex $z$. Also assume that the induction hypothesis is true
for the subtrees $T_a(y)$s, where $T_a(y)$ is the subtree that contains $y$, child of $z$, and all
its decadents. We are going to show that the induction is also true for $T_a$.
\begin{eqnarray*}
Q_{u^*} \left(T_a \right )  &\leq & 
\textstyle p_z \left( n^2\degree_{out}(z) + \sum_{y\in N(z)\cap {T}_a(y)} Q_z( T_a(y)) \right) \\
&< & p_z \left(n^2\degree_{out}(z) + n^2(\degree(z)-\degree_{out}(z)) \right) 
\hspace{1cm}
\mbox{[induction hypothesis]} \\
&<& n^2(1+\epsilon/2)^{-1}.
\end{eqnarray*}
The lemma follows.
\end{proof}

\begin{lemma} \label{lemma:Bound4QBPlusUniformity}
Under the assumptions of Proposition \ref{prop:Bound4QTPlustUniformity},
we have that  $Q^{b} \leq n^2d^{-10}$.
\end{lemma}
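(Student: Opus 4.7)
The key observation is that contributions to $Q^b$ come only from vertices $w\in T_b\cap\ImpVrtx$, since $\degree_{out}(w)=0$ otherwise. Moreover, any $w\in\ImpVrtx$ lies on a block boundary and so, by the sparse block partition, is low-degree with $\degree_{out}(w)\leq\TDeg<2d$. Hence
\[
Q^b \;\leq\; 2d\,n^2 \sum_{w\in T_b\cap\ImpVrtx}\Pr[w\in X_{t+1}\oplus Y_{t+1}\mid \mathbf{D}_w\text{-related conditioning}],
\]
and the task reduces to showing $\sum_{w\in T_b\cap\ImpVrtx}\Pr[w\text{ disagrees}]\leq d^{-11}$.

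The strategy is the standard path-counting / disagreement percolation argument, essentially the induction used in the proof of Proposition \ref{prop:UnBiasHighDegree4Trees}. Since $\mathbf{T}^j$ is a tree rooted at $u^*$, for each $w\in T_b$ there is a unique path $\mathcal{P}(u^*,w)=(u^*=v_0,v_1,\dots,v_\ell=w)$ of length $\ell\geq r$, and the probability that $w$ disagrees is at most $\prod_{i=1}^\ell p_{v_i}$. Grouping by level $\ell$ and letting $D_\ell$ denote the expected number of disagreeing $\ImpVrtx$-vertices at depth $\ell$, a standard induction gives
\[
D_\ell \;\leq\; \max_{\mathcal{P}=(v_0,\dots,v_\ell)}\;\prod_{i=0}^{\ell-1}\bigl(\degree(v_i)\,p_{v_i}\bigr).
\]

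To convert this into a quantitative bound I will split each path into its low-degree portion and its high-degree portion $M$ of size $m$. For a low-degree $v_i$ the uniformity event $\mathcal{E}$ (valid since $T_a\subseteq\ball(u^*,r)\subseteq\ball(u^*,(\log d)^2)$ and the path must first traverse $T_a$) gives $\degree(v_i)p_{v_i}\leq (1+\epsilon/2)^{-1}$, while for high-degree $v_i$ we trivially use $p_{v_i}=1$. Applying Corollary \ref{cor:FromBreakPointProd} at the $\ImpVrtx$-endpoint $v_\ell=w$ yields $\prod_{v\in M}\degree(v)\leq (1+\epsilon/10)^{\ell-m+1}d^{-15m}$, so
\[
D_\ell \;\leq\; (1+\epsilon/10)\Bigl(\tfrac{1+\epsilon/10}{1+\epsilon/2}\Bigr)^{\ell-m} d^{-15m}.
\]
Since $(1+\epsilon/10)/(1+\epsilon/2)\leq(1+\epsilon/4)^{-1}$, summing over $\ell\geq r$ and $0\leq m\leq\ell$ gives a geometric series dominated by the $m=0$ term, namely $O\bigl((1+\epsilon/4)^{-r}\bigr)$.

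Finally, using $r=15\log d/\log(1+\epsilon/10)$, one checks that $(1+\epsilon/4)^{-r}=d^{-c(\epsilon)}$ with $c(\epsilon)=15\log(1+\epsilon/4)/\log(1+\epsilon/10)>12$ for all $\epsilon>0$ small, so $\sum_{w\in T_b\cap\ImpVrtx}\Pr[w\text{ disagrees}]\leq d^{-12}$ and hence $Q^b\leq 2dn^2\cdot d^{-12}\leq n^2d^{-10}$ for $d$ large enough. The only real subtlety is the bookkeeping around high-degree vertices: one must apply Corollary \ref{cor:FromBreakPointProd} carefully to absorb the potentially huge branching at high-degree internal vertices into the $d^{-15m}$ factor, and verify that the low-degree geometric decay continues to dominate the summation for every choice of $m$. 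Once this is set up the rest is routine.
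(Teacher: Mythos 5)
Your proposal is correct and follows essentially the same route as the paper's proof: both bound the deep contribution by a disagreement-percolation path product, use the convergent-configuration bound $p_v\degree(v)\leq (1+\epsilon/2)^{-1}$ for low-degree vertices together with Corollary~\ref{cor:FromBreakPointProd} to absorb the high-degree branching into the $d^{-15m}$ factor, and then exploit the depth cutoff $\ell_0\approx 15\log d/\log(1+\epsilon/10)$ to turn the geometric decay into a $d^{-\Omega(1)}$ saving before multiplying by the crude weight $n^2\TDeg$. The only difference is cosmetic bookkeeping (you sum expected disagreements over $\ImpVrtx$-vertices level by level, while the paper unrolls the recursion for $Q_{u^*}(\mathbf{T}^j)$ into a max over paths of a sum), and your constants check out.
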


Before proceeding with the proof of Lemma \ref{lemma:Bound4QBPlusUniformity}, we note that the proposition follows by  plugging the 
bounds from Lemmas \ref{lemma:Bound4QAPlusUniformity} and \ref{lemma:Bound4QBPlusUniformity} 
into \eqref{eq:basis:prop:Bound4QTPlustUniformity}.

\begin{proof}[Proof of Lemma \ref{lemma:Bound4QBPlusUniformity}]

So as to the lemma, first note that the following holds for $Q_{u^*}(\mathbf {T}^j)$.
\[ \textstyle
Q_{u^*}(\mathbf{T}^j) \leq p_z \left(n^2\degree_{out}(z) + \sum_{y\in N(z)\cap B_j} Q_u(\mathbf{T}^j(y)) \right),
\]
where $\mathbf{T}^j(y)$ is the subtree of $\mathbf{T}^j$ rooted at $y$, child of $z$. From the above, we get that
\begin{eqnarray}
Q_{u^*}(\mathbold{T}^j)
&< & p_z 
\left( n^2\degree_{out}(z)  + (\degree(z)- \degree_{out}(z))\max_{y\in N(z)\cap \mathbf{T}^j } \left\{Q_z(\mathbold{T}_y) \right\}  \right) 
\nonumber \\
&\leq&  n^2
\max_{\P'=(u_0=z,u_1,\dots,u_\ell)}
\sum_{j=0}^\ell p_{u_j}\cdot \degree_{out}(u_j)
\prod_{i=0}^{j-1} p_{u_i} \times
\left [ \degree(u_i)-\degree_{out}(u_i)\right].\qquad \label{eq:QvTGenBound}
\end{eqnarray}
For  $\ell_0=15\frac{\log d}{(1+\epsilon/10)}$, it is direct that 
\[
Q^{b}= n^2\sum_{j\geq \ell_0+1} p_{u_j}\cdot \degree_{out}(u_j)
\prod_{i=0}^{j-1} p_{u_i} \times
\left [ \degree(u_i)-\degree_{out}(u_i)\right].\qquad \label{eq:QvBB} 
\]

\noindent
Since for every vertex $w\in \mathbf{T}^j$ we have $0\leq \degree_{out}(w) \  \leq \TDeg-1$, it holds that
\begin{eqnarray}\label{eq:QBDef}
Q^b & \leq &n^2\TDeg \ 
\sum_{j\geq \ell_0+1}  p_{u_j}   \prod_{i=0}^{j-1} p_i \times \degree_{in}(u_i).
\end{eqnarray}
The lemma will follow by bounding appropriately the magnitude of each sumad in \eqref{eq:QBDef}, separately. 

For $j\geq 1$,   we let the set $M\subseteq \{u_0,\ldots, u_{\ell_0+j-1}\}$ contain all vertices $u_i$
such that $\degree(u_i)>\TDeg$. Also, let $m=|M|$.
Also, let
\begin{eqnarray}
\mathbold{R}(j) &= & p_{\ell_{0+j}}  \prod_{i=0}^{\ell_0+j-1} p_{u_i} \times  \degree(u_i)
\nonumber \\
&\leq & \left( 1-\epsilon/3 \right)^{\ell_0+j-m}
\left( \mathbf{1}\{\degree(u_{\ell_0+j})\leq  \TDeg\} \frac{1}{k-\TDeg}+\mathbf{1}\{\degree(u_{\ell_0+j})> \TDeg \} \right) \prod_{w\in M}\degree(w).  \qquad 
\label{eq:FirstBound4Rj}
\end{eqnarray}
In the inequalities above, we use the convention that when $M=\emptyset$, then $\prod_{w\in M}\degree(w)=1$.

So as to bound $\mathbold{R}(j)$ we need to argue about  $\prod_{w\in M}\degree(w)$. 
Using Corollary \ref{cor:FromBreakPointProd} we get that
\begin{eqnarray}\label{eq:ProdDegreeBound}
\textstyle \prod_{w\in M}\degree(w) \leq d^{-15 m}\left( 1+\epsilon/10\right)^{\ell_0+j-m+1}.
\end{eqnarray}	
Plugging (\ref{eq:ProdDegreeBound}) into (\ref{eq:FirstBound4Rj}) we get that
\begin{eqnarray}
\mathbold{R}(j) 
&\leq & \left(1-\epsilon/5\right)^{\ell_0+j-m} d^{-15m}(1+\epsilon/10)\nonumber \\
&\leq & \left(1-\epsilon/5\right)^{\ell_0+j} 
\hspace{5cm} \mbox{[since $(1+\epsilon/10)d^{-15m} \left(1-\epsilon/5\right)^{-m}\ll 1$]}\nonumber \\
&\leq & d^{-13}\left(1-\epsilon/5 \right)^{j}  
\hspace{4.685cm}
\mbox{[since $\ell_0 \geq 15\frac{\log d}{\log(1+\epsilon/10)}$].} \label{eq:RJFinallBound} 
\end{eqnarray}
Plugging (\ref{eq:RJFinallBound})  into (\ref{eq:QBDef}) we get
\begin{eqnarray}
Q^b & \leq &  n^2 \TDeg  \ \sum_{j\geq 1}\mathbold{R}(j) \ 
\leq  n^2  \TDeg \ d^{-13}\cdot \sum_{j\geq 0}\left( 1-\epsilon/5 \right)^{j} 
\leq   n^2(10/\epsilon) d^{-12}, \nonumber 
\end{eqnarray}
where in the last inequality we used the fact that $\TDeg < 2d$. The lemma follows.
\end{proof}

\subsection{Proof of Proposition \ref{prop:BoundHangingTreesPlusUniformity}}
\label{sec:prop:BoundHangingTreesPlusUniformity}

To avoid trivialities assume $B_j$ is unicyclic. It is trivial to show that  $Q_{z'}(C)\leq |C|$. Also, it holds that 
 $\Pr[\mathbf{D}_{u}]\geq \Pr[\mathbf{D}_{z_i}]$, for any $z_i \in C$. Thus, it suffices to show that
 \begin{eqnarray}\label{eq:AlternateTargetHanging}
\Pr[\mathbold{D}_{u}] |C| \left(1+ \Delta\  \max_{z_i \in C\setminus\{z_1\}, x\in N_{z_i}\backslash C} 
\left \{ Q_{z_i}(\mathbf{T}^j_x) \right\} \right)\leq (\log \log d)^{(-|C|/10)}.
 \end{eqnarray}

\begin{claim}\label{claim:yVsl}
${\cal P}$ be any  path inside the block $B$ starting from $z$.  Let $\phi$ be the
fraction of vertices $w\in {\cal P}$ such that $\degree(w)>\TDeg$.
If the length of the path is at least 2, then  $\phi \leq \frac{\epsilon}{80 \log d}$.
\end{claim}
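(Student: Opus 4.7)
The plan is to exploit the breakpoint weighting scheme \eqref{eq:DefVrtxWeight}--\eqref{def:PathWeight} together with the fact that the starting vertex $z$ lies close to the outer boundary of $B$, so paths emanating from $z$ are forced to be dominated by low-degree vertices. Since each $v \in \outBound B$ is an $r$-breakpoint with $r > \diameter(B)$ (condition 2(a) of Definition~\ref{def:SparseBlockPart}), every path fully inside $B$ originating at such a boundary vertex has weight $\leq 1$, and the same kind of bound transfers to paths starting at $z$ up to a harmless factor of $(1+\epsilon/10)$ using Corollary~\ref{cor:FromBreakPointProd}.

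Concretely, I would argue as follows. Let $\mathcal{P} = u_1,\dots,u_\ell$ with $u_1 = z$, let $M$ be the set of high-degree vertices on $\mathcal{P}$, and set $m = |M|$. By the breakpoint/path-weight bound one gets
\[
\prod_{u \in M} d^{15}\,\degree(u) \;\leq\; (1+\epsilon/10)^{\ell - m + 1}.
\]
Since every $u \in M$ has $\degree(u) > \TDeg > d$, the left-hand side is at least $d^{16m}$. Taking logarithms and using $\log(1+\epsilon/10) \leq \epsilon/10$ yields
\[
16\,m\,\log d \;\leq\; (\ell - m + 1)\,\frac{\epsilon}{10},
\]
which rearranges to $m \leq (\ell+1)\,\epsilon/(160\log d)$ for all sufficiently large $d$. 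For $\ell \geq 2$ the factor $(\ell+1)/\ell \leq 3/2$, so
\[
\phi \;=\; \frac{m}{\ell} \;\leq\; \frac{3\epsilon}{320\,\log d} \;\leq\; \frac{\epsilon}{80\,\log d},
\]
which is exactly the stated bound.

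The only genuine obstacle is locating the correct breakpoint to invoke. If $z$ itself is a boundary vertex of $B$, then it is an $r$-breakpoint with $r > \diameter(B)$ and the weight estimate is immediate from the definition. If instead $z$ is an interior vertex (as in the application inside the proof of Proposition~\ref{prop:BoundHangingTreesPlusUniformity}), I would extend $\mathcal{P}$ backwards by one or two edges to reach a neighbouring breakpoint $v \in \outBound B$; this only adds $O(1)$ low-degree factors of $(1+\epsilon/10)^{-1}$ to $W(\mathcal{P})$ and is absorbed by the $(\ell-m+1)$ exponent via Corollary~\ref{cor:FromBreakPointProd}. The condition $\ell \geq 2$ is precisely what allows this $+1$ slack to be absorbed while still giving the factor $\epsilon/(80\log d)$, which is why the claim excludes the trivial $\ell=1$ case. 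Beyond this bookkeeping the argument is a direct computation.
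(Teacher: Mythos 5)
Your proposal is correct and follows essentially the same route as the paper: invoke Corollary~\ref{cor:FromBreakPointProd} for the path, use $\degree(w)>\TDeg>d$ to lower bound the product by $d^{16m}$, take logarithms with $\log(1+\epsilon/10)\leq\epsilon/10$, and convert $m/(\ell+1)$ to $m/\ell$ using the lower bound on the path length. The extra bookkeeping about extending the path to a boundary breakpoint is harmless but unnecessary in the paper's application, where the starting vertex already lies on the block boundary so the corollary applies directly.
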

\begin{proof}
Let $\ell$ be the length of the path ${\cal P}$. Also let $M$ be the set of high degree vertices in 
${\cal P}$.
Using Corollary \ref{cor:FromBreakPointProd} and noting that for every $w\in M$ it holds $\degree(w)>\TDeg>d$,  
we get that
\[
\left [(1+\epsilon/10)d^{16} \right]^{m} \leq (1+\epsilon/10)^{\ell+1},
\]
where  $m=|M|$. Taking logarithm from both sides, we get that
\[
\frac{m}{\ell+1}\leq \frac{\log(1+\epsilon/10)}{16\log d}\leq \frac{\epsilon}{160\log d}. \qquad \mbox{[since $1+x<e^{x}$]}
\]
Since $\ell\geq 1$,  it elementary to verify that $\phi=m/\ell\leq 2 m/(\ell+1)$. The claim follows.
\end{proof}

\noindent
Let $\ell_0$ be the distance between the vertex $v$ and the cycle $C$. Also,
let $m$ be the number of high degree vertices in the path, in $B_j$, from vertex
$z$ to  vertex $u$,  e.g see Figure \ref{fig:DisBlockVsTj}. 
It holds that
\begin{eqnarray}\label{eq:ProbDis2z'}
\Pr[\mathbold{D}_{u}] &\leq &\left( d/2\right)^{-(\ell_0-m)}
\leq (d/2) ^{-9\ell_0/10}.\qquad \mbox{[from Claim \ref{claim:yVsl}]}.
\end{eqnarray}
In the first  inequality we also use the fact that $p_z>d/2$ for a low degree vertex.
Furthermore, using (\ref{eq:ProbDis2z'}) and the fact that $\ell_0\geq 2\log(\Delta\  |C|)$,
we get that
\begin{eqnarray}\label{eq:PrDisCLength}
\Pr[\mathbold{D}_{u}] \cdot |C|\leq (d/2)^{-\frac35{ \log\Delta^3|C|}}.
\end{eqnarray}
For the following result it helps to consider Figure \ref{fig:DisBlockStudy}.

\begin{lemma}\label{lemma:QHangingAloneDisgreement}
For every $z_i\in C\setminus\{z_1\}$, and any $x\in N_{z_i}\backslash C$ the following is true:
Let ${\cal P}$ be a path from $z$ to $z_i$ (any path). Let $H$ be the set of vertices of high degree
in this path.  In the setting of Proposition \ref{prop:BoundHangingTreesPlusUniformity}, it holds that
\begin{equation}\nonumber
\textstyle Q_{z_i}(\mathbf{T}^j_x)\leq n^23(2\epsilon)^{-1}\left( 1+\epsilon/10 \right)^{l-h}\left( \prod_{w\in H}d^{15}\  \degree(w)\right)^{-1},
\end{equation}
where $h=|H|$ and $l$ is equal to the length of ${\cal P}$.
\end{lemma}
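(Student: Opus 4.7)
The plan is to mirror the tree-recursion analysis already used in the proof of Theorem~\ref{thm:block-regular} and Proposition~\ref{prop:Bound4QTPlustUniformity}, but now for the subtree $\mathbf{T}^j_x$ rooted at $x$ hanging off $z_i$. First I would unfold $Q_{z_i}(\mathbf{T}^j_x)$ inductively on depth: a disagreement at $z_i$ propagates to $x$ with probability $p_x$, and conditional on propagation the expected contribution from $\mathbf{T}^j_x$ obeys the standard recursion, yielding
\[
Q_{z_i}(\mathbf{T}^j_x) \;\leq\; n^2 \max_{\mathcal{P}'=(u_0=x,u_1,\dots,u_\ell)} \sum_{j=0}^{\ell} p_{u_j}\degree_{out}(u_j) \prod_{i=0}^{j-1} p_{u_i}\bigl[\degree(u_i)-\degree_{out}(u_i)\bigr],
\]
where $\mathcal{P}'$ ranges over root-to-leaf paths inside $\mathbf{T}^j_x$.

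Next I would apply the convergent-configuration bound coming from the event ${\cal E}$: for each low-degree vertex $u_i$ (i.e.\ $\deg(u_i)\le\TDeg$) we have $p_{u_i}\le ((1+\epsilon/2)\degree_{in}(u_i))^{-1}$, while for high-degree $u_i$ we only have $p_{u_i}\le 1$. Partitioning the path into its high-degree subset $M'$ (of size $m'$) and its low-degree complement, the internal product telescopes to at most $(1+\epsilon/2)^{-(j-m')}\prod_{u_i\in M',\, i<j}\degree(u_i)$, and summing over $j$ leaves the geometric series that produces the $3/(2\epsilon)$ factor.

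The third step is to control the high-degree product via Corollary~\ref{cor:FromBreakPointProd}. The key device is that $\mathcal{P}'$ cannot be analyzed in isolation: it does not start at a breakpoint. However, concatenating the given path $\mathcal{P}$ from $z$ to $z_i$ with the step $z_i\!\to\!x$ and the path $\mathcal{P}'$ gives a single path in $B_j$ starting at a vertex adjacent to the breakpoint $u^*$, so the corollary applies to the concatenation $\mathcal{P}\cdot\mathcal{P}'$. Writing the bound on the concatenated path weight $W(\mathcal{P}\cdot\mathcal{P}')\le(1+\epsilon/10)^{(l-h)+(\ell-m')+1}$ and dividing out the portion already attributable to the $\mathcal{P}$-segment, namely $(1+\epsilon/10)^{-(l-h)}\prod_{w\in H}d^{15}\degree(w)$, isolates the residual factor $(1+\epsilon/10)^{l-h}/\prod_{w\in H}d^{15}\degree(w)$ appearing in the lemma, while what remains from the $\mathcal{P}'$-side combines with the $(1+\epsilon/2)^{-(\ell-m')}$ factor to form a strictly geometric series in $\ell$.

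The main obstacle will be the careful bookkeeping for the high-degree vertices on $\mathcal{P}'$: they can have degree as large as $\Delta=\Theta(\log n/\log\log n)$, and the naive $d^{15}$ amortization does not close if one tries to control $\mathcal{P}'$ alone. The trick---and it is really the one novelty over Lemma~\ref{lemma:Bound4QBPlusUniformity}---is precisely to borrow the breakpoint budget of the prefix $\mathcal{P}$ so that the combined path inherits the Corollary~\ref{cor:FromBreakPointProd} inequality; the decomposition then lets us peel off exactly the ``prefix factor'' that shows up in the stated bound. Once that accounting is done, summing the resulting geometric series in $\ell$ and $m'$ yields the claimed $n^2\cdot\tfrac{3}{2\epsilon}\cdot(1+\epsilon/10)^{l-h}/\prod_{w\in H}d^{15}\degree(w)$.
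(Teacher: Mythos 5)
Your proposal matches the paper's own proof: the paper unfolds $Q_{z_i}(\mathbf{T}^j_x)$ into the same max-over-paths sum as in \eqref{eq:QvTGenBound}, bounds each summand $\mathbold{R}(j)$ by splitting the path into low- and high-degree vertices, and controls the high-degree product exactly by your "borrowed budget" device, i.e.\ applying Corollary~\ref{cor:FromBreakPointProd} to the concatenation of $\mathcal{P}$ with the path inside $\mathbf{T}^j_x$ and dividing out the prefix factor $(1+\epsilon/10)^{l-h}\bigl(\prod_{w\in H}d^{15}\degree(w)\bigr)^{-1}$, before summing the resulting geometric series to get the $3/(2\epsilon)$ constant. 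Apart from immaterial constant bookkeeping (e.g.\ $(1+\epsilon/2)$ versus the paper's $(1+\epsilon/5)$ in the contraction ratio), this is essentially the same argument.
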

The proof of Lemma \ref{lemma:QHangingAloneDisgreement} appears in Section 
\ref{sec:lemma:QHangingAloneDisgreement}.
Using Lemma \ref{lemma:QHangingAloneDisgreement} and (\ref{eq:ProbDis2z'}), 
we get that
\begin{eqnarray}
\Pr[\mathbold{D}_{u}] \  \Delta\ Q_{z_i}(\mathbf{T}^j_x) &\leq&  \textstyle 
n^2 3(2\epsilon)^{-1}\Delta \  \left( d/2 \right)^{-9\ell_0/10}
\left( 1+\epsilon/10 \right)^{l-h}\left( \prod_{w\in H}d^{15}\cdot \degree(w)\right)^{-1} 
\nonumber \\
&\leq& n^2 
3(2\epsilon)^{-1}\Delta \  \left(d/2 \right)^{-9\ell_0/10} \left( 1+\epsilon/10 \right)^{l-h} d^{-16h} 
\hspace{2cm} \mbox{[since $\forall w\in H\; \degree(w)>d$]} \nonumber\\
&\leq& n^2 
3(2\epsilon)^{-1}\Delta \  \left(d/2 \right)^{-9\ell_0/10} \left( 1+\epsilon/10 \right)^{l} 
\hspace*{3.4cm}
\mbox{[since $(1+\epsilon/10)d^{16}>1$]} \nonumber\\
&\leq& n^2 
3(2\epsilon)^{-1}\Delta \  \left( d/2 \right)^{-9\ell_0/10} \left( 1+\epsilon/10 \right)^{\ell_0+|C|} 
\hspace*{2.6cm} \mbox{[since $\ell<\ell_0+|C|$]} \nonumber\\
&\leq& n^2 
\left( 3\epsilon^{-1}\Delta \left(\frac{2+\epsilon}{d^{9/10}} \right)^{\frac{(\log \log d)\log \Delta}{\log d}} \right)\ 
 \left(\left(\frac{2+\epsilon}{d^{8/10}} \right)^{\frac{(\log \log d)}{\log d}}(1+\epsilon) \right)^{|C|}
 d^{-\frac{ (\log\log d) |C|}{10\log d}},  \nonumber
\end{eqnarray}
where in the last inequality we use  that $\ell_0>\frac{(\log\log d)}{\log d}\left( |C|+\log \Delta\right)$.
It is direct that
\begin{eqnarray}\label{eq:FinalBound4PdisDelaQxTx}
\Pr[\mathbold{D}_{u}] \  \Delta\  Q_{z_i}(\mathbf{T}^j_x)   \leq n^2 (\log d)^{-|C|/10}.
\end{eqnarray}
Combining (\ref{eq:FinalBound4PdisDelaQxTx}) and (\ref{eq:PrDisCLength}) we get that
 (\ref{eq:AlternateTargetHanging}) is true. The proposition follows.

\begin{figure}
	\centering
		\includegraphics[height=6.3cm]{./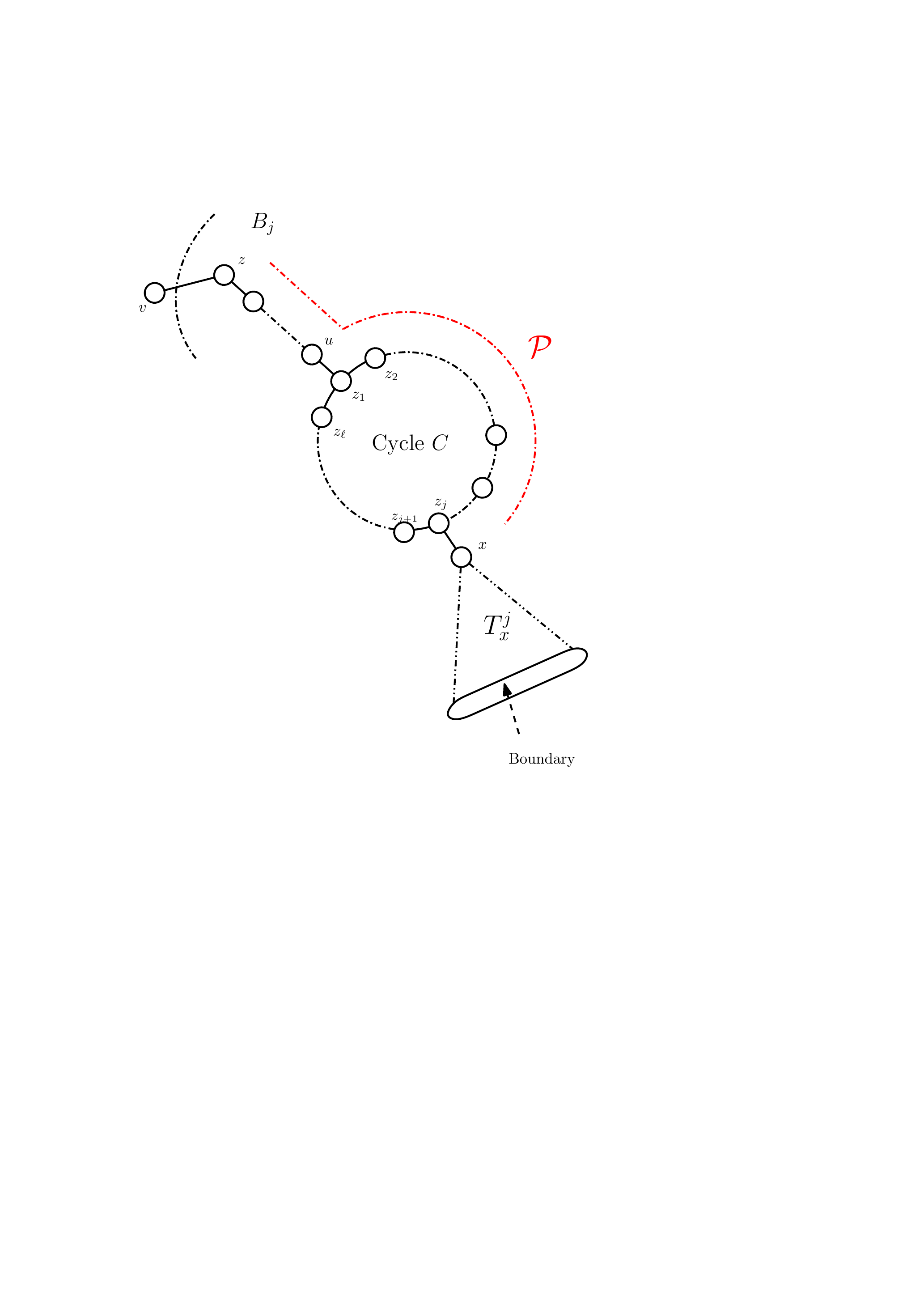}
		\caption{``Unicyclic Block''.}
	\label{fig:DisBlockStudy}
\end{figure}

\subsubsection{Proof of Lemma \ref{lemma:QHangingAloneDisgreement}}\label{sec:lemma:QHangingAloneDisgreement}

Using the same arguments as for (\ref{eq:QvTGenBound}) in the proof of Lemma \ref{lemma:Bound4QBPlusUniformity} , we get that
\begin{eqnarray}
Q_{z_i}(\mathbf{T}^j_x) &\leq & n^2
\max_{\P'=(u_0=z,u_1,\dots,u_\ell)}
\sum_{j=0}^\ell  p_{u_j}\cdot \degree_{out}(u_j)
\prod_{i=0}^{j-1} p_{u_i} \times
\left [ \degree(u_i)-\degree_{out}(u_i)\right]\qquad\nonumber \\ 
&\leq & n^2\  \TDeg \ 
\max_{\P'=(u_0=z,u_1,\dots,u_\ell)}\sum_{j=0}^\ell p_{u_j}
\prod_{i=0}^{j-1} p_{u_i} \times
\left [ \degree(u_i)\right],\qquad \label{eq:Qx'TxBound}
\end{eqnarray}
where the last inequality follows from the fact that $0\leq \degree_{out}(u_j)\leq \TDeg$, for any 
$u_j$.

Let ${\cal P}_z=\{w_0=z,\ldots,w_{\ell} \}$ be the path that maximizes the r.h.s. of (\ref{eq:Qx'TxBound}).
Also let
$$
\textstyle \mathbold{R}(j)=p_{w_j} \prod_{i=0}^{j-1} p_{w_i} \times \left [ \degree(w_i)\right].
$$
That is, $\mathbold{R}(j)$ is the $j$-th sumad in (\ref{eq:Qx'TxBound}).
Let $M$ be the set of high degree vertices in the subpath of ${\cal P}_z$,
$w_0,\ldots, w_j$. Also let  $m=|M|$.   It holds that
\begin{eqnarray}\label{eq:R(j)BoundHangBase}
\mathbold{R}(j) &\leq & \textstyle \left(\frac{1}{1+\epsilon/5}\right)^{j-m}\prod_{w\in M}\degree(w).
\end{eqnarray}
So as to compute $\prod_{w\in M}\degree(w)$ we use Corollary \ref{cor:FromBreakPointProd} 
and get that
\[
\textstyle \prod_{w\in M}\degree(w)\leq d^{-15m}\left(1+\epsilon/10\right)^{\ell+j-(h+m)}\left(\prod_{w\in H}d^{15}\ \degree(w) \right)^{-1}.
\]
Plugging  the above into (\ref{eq:R(j)BoundHangBase}) we get that
\begin{eqnarray}\nonumber
\mathbold{R}(j)  &\leq &\textstyle  \left(1-2\epsilon/3 \right)^{j-m}
d^{-15m}\left(1+\epsilon/10\right)^{\ell-h}\left(\prod_{w\in H}d^{15}\ \degree(w) \right)^{-1} \nonumber\\
&\leq &  \textstyle \left(1-2\epsilon/3\right)^{j} 
\left(1+\epsilon/10\right)^{\ell-h}\left(\prod_{w\in H}d^{15}\ \degree(w) \right)^{-1}. \nonumber
\end{eqnarray}
In the  above bound for $\mathbold{R}(j)$, the only quantity that depends on $j$ is  $(1-2\epsilon/3)^{j}$.
We have that 
\begin{eqnarray}
Q_{z_i}(\mathbf{T}^j_x)&\leq & n^2  \textstyle  \sum_{j\geq 0}\mathbold{R}(j)
\ \leq \  \frac{3}{2\epsilon}\left(1+\epsilon/10\right)^{l-h}\left(\prod_{w\in H}d^{15}\ \degree(w) \right)^{-1}. \nonumber
\end{eqnarray}
The lemma follows.

\section{ Proof of Lemma  \ref{thrm:RapidMixBlockGnp}}\label{sec:thrm:GnpAdmitsPart}

First we show that typical instances of $G(n,d/n)$ admit a block partition ${\cal B}(\epsilon, d,\Delta)$.  
Given $\epsilon>0$, consider the graph $G\sim G(n,d/n)$ for sufficiently large $d>0$.  We use the weighting schema in 
\eqref{eq:DefVrtxWeight}  and \eqref{def:PathWeight} to specify the breakpoints. At this point we introduce the 
notion of ``influence path".

\begin{definition}
The  path $L$ is called ``influence path" only  if none of its vertices is a breakpoint. 

If vertex $w_1$ is a breakpoint, we define that there is only one influence path that starts from $w_1$, this is 
the trivial  path $L={w_1}$.
\end{definition}

\noindent
The following result which implies that 
typically $G(n,d/n)$ does not have long influence paths. 
We call {\em elementary}  every path $L=w_1, \ldots, w_{\ell}$  such that there is no
other path $P$ \footnote{ i.e., $P$ is different than $L$} of length less than $10\frac{ \ln n}{d^{4/5}}$ which connects any two vertices in 
$L$.

\begin{theorem}[Efthymiou \cite{Efthymiou}]\label{thrm:BreakablePaths}
Let  $\epsilon \in (0,3/2)$. For large $d$, consider $G\sim G(n,d/n)$. Let  $\mathbf{U}$ be the set of the 
{\em elementary} paths in $G$ of  length  $\frac{\ln n}{ (\ln d)^{5} }$  that do not have any $r$-breakpoint
for $r=\log n/d^{4/5}$.  It holds that
$
\Pr[\mathbf{U}\neq \emptyset] \leq 4n^{(-\frac{1}{2}\ln d+2)}. 
$
\end{theorem}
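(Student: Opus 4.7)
The plan is a first-moment calculation: I would show $\Exp{|\mathbf{U}|} \leq 4n^{-\frac{1}{2}\ln d + 2}$ and conclude by Markov's inequality. Write $\ell := \ln n/(\ln d)^5$ and $r := \ln n/d^{4/5}$. The naive count of length-$\ell$ vertex sequences is at most $n^{\ell+1}$, each forming a path in $G(n,d/n)$ with probability $(d/n)^\ell$, giving an expected count of length-$\ell$ paths bounded by $nd^\ell$. All the necessary savings must come from the no-breakpoint requirement at every vertex of $L$.

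I would next bound the per-vertex non-breakpoint probability $q := \Pr[v\text{ is not an $r$-breakpoint}]$. By definition, $v$ is a non-breakpoint iff some path $P$ starting at $v$ of length $\ell'\leq r$ has $W(P) > 1$. By \eqref{eq:DefVrtxWeight} and \eqref{def:PathWeight}, a length-$\ell'$ path with $s$ high-degree vertices (those with $\degree > \TDeg$) has weight at most $(1+\eps/10)^{-(\ell'-s)}\prod_{u\in M} d^{15}\degree(u)$, and hence $W(P) > 1$ forces $s \geq s_0 := \lceil c\ell'/\ln d\rceil$ for some $c = c(\eps) > 0$. Union-bounding over starting sequences ($\leq n^{\ell'}$ with edge cost $(d/n)^{\ell'}$), over placements of the $s$ high-degree positions ($\binom{\ell'+1}{s}$ choices), and invoking the Chernoff tail $\Pr[\degree(u)\geq\TDeg]\leq\exp(-\eps^2 d/24)$ in $G(n,d/n)$ (even conditional on the $\ell'$ path edges, by a trivial shift),
\[
\textstyle q\ \leq\ \sum_{\ell'=0}^{r}\sum_{s\geq s_0}\binom{\ell'+1}{s}d^{\ell'}\exp(-s\eps^2 d/24)\ \leq\ 2\exp(-c'' d/\ln d),
\]
where the last inequality uses $s_0 \gtrsim \ell'/\ln d$ to bound each term by $\exp(\ell'\ln d - c'\ell' d/\ln d)$ and the fact that $d/\ln d \gg \ln d$ for large $d$ to sum a geometric series.

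Finally I would assemble the bound using the elementarity hypothesis to factor the joint non-breakpoint probability. The hypothesis---no path of length $<10\ln n/d^{4/5} = 10r$ connecting two vertices of $L$---forces the $r$-neighborhoods of distinct $v_i \in L$ to interact only through $L$ itself, so the ``witness'' paths $P_i$ demonstrating non-breakpointness at each $v_i$ can be chosen to use essentially disjoint collections of edges outside $L$. Since non-breakpointness is an increasing event in the edges of $G(n,d/n)$, a van den Berg--Kesten disjoint-occurrence inequality then yields
$\textstyle\Pr[\text{all }v_i\text{ non-breakpoint}\mid L\text{ elementary path}] \leq q^{\ell+1}.$
Combining, $\Exp{|\mathbf{U}|}\leq n^{\ell+1}(d/n)^\ell q^{\ell+1} \leq n(qd)^\ell$. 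Since $qd \leq \exp(-c''d/\ln d + \ln d) \leq \exp(-c'''d/\ln d)$ for large $d$, and $\ell = \ln n/(\ln d)^5$,
\[
\textstyle\Exp{|\mathbf{U}|}\ \leq\ n\cdot\exp\!\left(-c'''\tfrac{d\ln n}{(\ln d)^6}\right)\ =\ n^{\,1-c'''d/(\ln d)^6},
\]
and for $d$ sufficiently large the exponent is $\leq -\tfrac{1}{2}\ln d + 2$, with the constant $4$ absorbing prefactors from Markov's inequality and rounding.

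The main obstacle is the disjoint-occurrence step: justifying the $q^{\ell+1}$ factorization requires explicitly constructing edge-disjoint witnesses $P_i$, one inside each $r$-neighborhood of $v_i$, which the elementarity hypothesis makes possible by ensuring these neighborhoods do not collide except along $L$. One must also verify that the witnesses can be chosen so as not to reuse high-degree vertices across different $v_i$'s, since otherwise the Chernoff tails on degrees could not be applied independently; again, the elementarity/girth-like condition is what rules this out. The remaining work---Chernoff tails on binomial degrees, and bookkeeping of constants to match the stated exponent $-\tfrac{1}{2}\ln d + 2$---is routine.
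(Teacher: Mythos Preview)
This theorem is quoted from \cite{Efthymiou} and is not proved in the present paper, so there is no argument here to compare your proposal against. I will comment on the proposal itself.

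The first-moment scheme and the single-vertex bound on $q$ are fine in outline (though the claim $s_0=\lceil c\ell'/\ln d\rceil$ tacitly assumes the high-degree vertices have degree $O(d)$; a vertex of degree $D$ contributes weight $d^{15}D$, so one very-high-degree vertex can by itself force $W(P)>1$ for $\ell'$ as large as $\Theta(\log D)$, and this case needs a separate, stronger tail bound). The real gap is the factorization. The BK inequality upper-bounds the probability of \emph{disjoint occurrence} $\Box_i A_i$, not of $\bigcap_i A_i$, so you must show that on $\{L\text{ elementary}\}\cap\bigcap_i A_i$ one can certify each $A_i$ on pairwise disjoint edge sets. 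But a BK witness for ``$v_i$ is not an $r$-breakpoint'' cannot consist only of the edges of a path $P_i$: to certify $W(P_i)>1$ you must certify that the designated vertices on $P_i$ have degree exceeding $\TDeg$, which means exhibiting more than $\TDeg$ incident edges at each such vertex, and those degree-certifying edges need not be disjoint across different $i$ even when the $P_i$ are vertex-disjoint. Separately, elementarity forbids \emph{other} short paths between vertices of $L$ but says nothing about witnesses that run along $L$ itself, so for nearby $v_i,v_j$ the paths $P_i,P_j$ can share edges of $L$. Thinning to a $2r$-separated subset of the $v_i$ does not help: that leaves only $\ell/(2r)=d^{4/5}/(\ln d)^5$ vertices, a constant independent of $n$, so $q^{\ell/(2r)}$ cannot beat the $nd^\ell$ prefactor. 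You genuinely need the full $q^{\Theta(\ell)}$, and the sketch does not deliver it.

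The standard repair is to drop BK and union-bound directly over \emph{witness configurations}: simultaneously choose, for each $v_i$, a path $P_i$ and a set $M_i\subseteq P_i$ of designated high-degree positions compatible with $W(P_i)>1$, and bound the probability that all edges in $\bigcup_i P_i$ are present and every vertex in $M:=\bigcup_i M_i$ has degree exceeding $\TDeg$. Elementarity forces the $P_i$ to glue into a tree hanging off $L$, and the weight constraint forces each high-degree vertex to ``cover'' at most $O_\eps(\log d)$ of the $v_i$ (more if its degree is larger, but then its tail probability is correspondingly smaller). This gives $|M|\ge c\ell/\log d$ distinct high-degree vertices, and applying the Chernoff tail once per distinct vertex recovers the $\exp(-c'd\ell/\log d)$ saving you need---without any appeal to disjoint occurrence.
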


\noindent
Furthermore, we use the following result from \cite{Efthymiou}.
\begin{lemma}\label{lemma:BreakPointExtension}
Let  $\epsilon \in (0,3/2)$. For large $d$, consider $G\sim G(n,d/n)$. 
With probability at least $1-2n^{-\frac{d^{2/5}}{2}}$ over the graph instances the following is true:
Every  vertex $v$ which is $r$-breakpoint  for $r=\log n/d^{4/5}$, it is, also, a $r'$-breakpoint for $r'=10\log n$.
\end{lemma}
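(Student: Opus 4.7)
The plan is to use Theorem~\ref{thrm:BreakablePaths} to anchor any long path from $v$ at a dense sequence of auxiliary $r$-breakpoints, then bound the weight of each inter-anchor segment via a degree-concentration bound on $G(n,d/n)$.

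I would work on the intersection of the event of Theorem~\ref{thrm:BreakablePaths} with two standard high-probability properties of $G\sim G(n,d/n)$: (a) every ball $\ball(u,r')$ has bounded excess over a spanning tree, so every path of length $\leq r'$ decomposes into $O(1)$ maximal elementary sub-paths; and (b) a degree-tail concentration, stating that along any fixed vertex-sequence of length $\ell\leq r'$, the number of vertices of degree exceeding $\TDeg$ is at most $\alpha_d\,\ell$, where $\alpha_d=\alpha_d(\epsilon,n)$ is chosen small enough that $\alpha_d(15\log d+\log\Delta)<(\epsilon/20)\log(1+\epsilon/10)$. Property (b) follows from the $\exp(-c\epsilon^2 d)$ Poisson tail on individual degrees together with a binomial concentration, after a union bound of the kind described below.

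Fix a typical $G$, an $r$-breakpoint $v$, and a path $L=v,u_1,\ldots,u_\ell$ with $r<\ell\leq r'$. By (a), $L$ splits into $O(1)$ elementary sub-paths (non-elementary joints absorb a bounded multiplicative correction). Within each elementary sub-path, Theorem~\ref{thrm:BreakablePaths} produces an $r$-breakpoint in every window of $\log n/(\log d)^5$ consecutive vertices, yielding anchors $v=v_0,v_1,\ldots,v_{k-1},v_k=u_\ell$ with consecutive gaps at most $\log n/(\log d)^5$. Each interior anchor $v_j$ ($1\leq j\leq k-1$) is an $r$-breakpoint and hence low-degree, so $W(v_j)=(1+\epsilon/10)^{-1}$. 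Writing $L^{(j)}$ for the sub-path from $v_j$ to $v_{j+1}$, multiplicativity of the path weight together with the double-counting of interior anchors gives
\[
W(L)\;=\;\frac{\prod_{j=0}^{k-1}W(L^{(j)})}{\prod_{j=1}^{k-1}W(v_j)}\;=\;(1+\epsilon/10)^{k-1}\prod_{j=0}^{k-1}W(L^{(j)}).
\]

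It therefore suffices to establish the per-segment bound $W(L^{(j)})\leq(1+\epsilon/10)^{-2}$ for each $j$. By property (b), $L^{(j)}$ has at most $\alpha_d|L^{(j)}|$ high-degree vertices, so
\[
W(L^{(j)})\;\leq\;(1+\epsilon/10)^{-(1-\alpha_d)|L^{(j)}|}\cdot(d^{15}\Delta)^{\alpha_d|L^{(j)}|}\;\leq\;(1+\epsilon/10)^{-(1-\epsilon/20)|L^{(j)}|},
\]
by the calibration of $\alpha_d$; since $|L^{(j)}|\geq 2$ this is $\leq(1+\epsilon/10)^{-2}$. Chaining over $j$ yields $W(L)\leq(1+\epsilon/10)^{-(k+1)}\leq 1$, as desired. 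The main obstacle is property (b): a naive union bound over all $n\Delta^{r'}\leq n^{O(\log\log n)}$ candidate paths is too costly. The resolution is a two-stage union bound — first fix a hypothetical breakpoint starting vertex $v$, whose $r$-neighborhood has size at most $((1+\epsilon/3)d)^r=n^{o(1)}$ by Lemma~\ref{lemma:GrowthFromBP}, then union-bound the degree tails only over paths that begin inside this small neighborhood and continue through at most $r'-r$ further steps. This localization, combined with the sharp Poisson tail, is what produces the exponent $d^{2/5}$ in the failure probability $2n^{-d^{2/5}/2}$.
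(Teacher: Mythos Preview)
The paper does not actually prove this lemma; it defers entirely to Lemma~3 of~\cite{Efthymiou}. So there is no in-paper argument to compare against, but your outline has a genuine gap.

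The problem is property~(b) and the two-stage union bound in your last paragraph. Because you bound every high-degree weight crudely by $d^{15}\Delta$, you are forced to take $\alpha_d=O(1/\log\Delta)=O(1/\log\log n)\to 0$. With this $\alpha_d$, the Chernoff tail on a segment of length $\le\log n/(\log d)^5$ gives a failure probability of only $n^{-O(d/\log\log n)}$, nowhere near $n^{-d^{2/5}/2}$. Worse, your localization does nothing for the path count: Lemma~\ref{lemma:GrowthFromBP} controls only the first $r$ steps, while the remaining $r'-r\approx 10\log n$ steps can still branch in $\Delta^{r'-r}=n^{\Theta(\log\log n)}$ ways, and this swamps any polynomial tail. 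Finally, the anchoring by Theorem~\ref{thrm:BreakablePaths} is inert here: since you bound each $W(L^{(j)})$ using property~(b) alone (never the $r$-breakpoint property of the anchors beyond their being low-degree), the same bound applied directly to all of $L$ yields the same conclusion, so the segmentation contributes nothing.

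What actually works is a first-moment calculation over vertex sequences rather than a union bound over paths in the realized graph. The expected number of length-$\ell$ paths in $G(n,d/n)$ is $\approx n\,d^{\ell}$ (not $n\,\Delta^{\ell}$), and for a fixed sequence one bounds the probability that it is simultaneously (i)~a path, (ii)~has every length-$\le r$ prefix of weight $\le 1$, and (iii)~has total weight $>1$. Conditions~(ii)--(iii) force $\Omega(\epsilon\ell/\log d)$ high-degree vertices along the sequence, each of which occurs with probability $\exp(-\Omega(\epsilon^2 d))$; here one must track the actual Poisson-tail degree values rather than the crude $d^{15}\Delta$ cap. Multiplying, the $d^{\ell}$ from the path count is beaten by the $\exp(-\Omega(\epsilon^3 d/\log d)\cdot\ell)$ from the degree constraint, and summing over $\ell\ge r=\log n/d^{4/5}$ produces the $n^{-\mathrm{poly}(d)}$ bound; the precise exponent $d^{2/5}$ comes from the sharper computation in~\cite{Efthymiou}.
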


\noindent
The proof of Lemma \ref{lemma:BreakPointExtension}  is the same as the proof of Lemma 3 in \cite{Efthymiou}.

Let ${\cal C}$ be the set of all cycles of length at most  $4\frac{\ln n}{ (\ln d)^5}$ in $G$.  We need to argue that
any two cycles in ${\cal C}$ are far apart from each other. In particular, we have the following result:

\begin{lemma}\label{lemma:DistantCycle}
  With probability at least $1-10n^{-3/4}$ over the instances of $G(n,d/n)$, 
any two cycles in ${\cal C}$ are at distance greater than $10\frac{\log n}{(\log d)^5}$.
\end{lemma}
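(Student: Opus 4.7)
The plan is a straightforward first-moment / union-bound argument in the spirit of classical subgraph counts for $G(n,d/n)$. Set $K := 4\log n/(\log d)^5$ and $L := 10\log n/(\log d)^5$. The structural observation I will start with is: if two distinct cycles $C_1, C_2 \in {\cal C}$ lie at graph distance at most $L$ in $G$, then $C_1 \cup C_2 \cup P$, where $P$ is any shortest $C_1$--$C_2$ path in $G$, is a connected subgraph $H \subseteq G$ on $v \leq |C_1| + |C_2| + L \leq 2K + L$ vertices having at least $v+1$ edges (cyclomatic excess $\geq 2$). This is the ``vertex-disjoint cycles plus path'' case; when the two cycles share a vertex or an edge the connecting path is trivial and the resulting subgraph has strictly fewer vertices but still excess $\geq 2$. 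Thus it is enough to bound the probability that $G(n,d/n)$ contains any such $H$.

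I will enumerate by the lengths $\ell_1 = |C_1|, \ell_2 = |C_2|$ and the path length $L'$, with $\ell_1, \ell_2 \leq K$ and $L' \leq L$. For each fixed triple, the number of labelled embeddings of the ``two-cycles-joined-by-a-path'' skeleton into $[n]$ is at most $n^{\ell_1 + \ell_2 + L' - 1}$ (the ``$-1$'' because once the two cycle embeddings are fixed the path has both endpoints already determined, so only $L'-1$ interior vertices remain to be placed), while the probability that all $\ell_1 + \ell_2 + L'$ required edges appear in $G(n,d/n)$ is $(d/n)^{\ell_1 + \ell_2 + L'}$. Summing and using that $\sum_{\ell \leq K} d^\ell \leq 2 d^K$ for large $d$,
\[
\Exp{\#\text{bad configurations}} \;\leq\; \sum_{\ell_1,\ell_2 \leq K}\sum_{L' \leq L} n^{\ell_1+\ell_2+L'-1}\Big(\frac{d}{n}\Big)^{\ell_1+\ell_2+L'} \;\leq\; 8 \cdot \frac{d^{2K+L}}{n}.
\]
Since $d^{2K+L} = \exp\!\big(18\log n \cdot \log d/(\log d)^5\big) = n^{18/(\log d)^4}$, for $d$ large enough that $18/(\log d)^4 \leq 1/5$ the expectation is at most $8 n^{-4/5}$, and Markov's inequality delivers the claimed $10 n^{-3/4}$ failure probability with room to spare.

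The main (minor) obstacle is a careful treatment of degenerate configurations — cycles sharing a vertex or an edge, connecting paths that revisit one of the cycles, cycles that share a longer subpath, and so on — so that every bad configuration in $G$ is reliably captured by an embedded skeleton of the form counted above. In each degenerate case the associated skeleton uses strictly fewer vertices but still has cyclomatic excess at least $2$, so the corresponding expectation is no larger than that of the vertex-disjoint case at the same total edge count and is absorbed by the bound. The case $C_1 = C_2$ is excluded by the hypothesis that the cycles in question are distinct. Apart from these bookkeeping points the computation is routine, and the wide gap between $18/(\log d)^4$ and $1$ in the exponent ensures the estimate is robust for all sufficiently large $d$.
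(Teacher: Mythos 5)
Your proposal is correct in substance and shares the paper's key structural observation: two distinct short cycles at small distance, together with a shortest connecting path, form a small connected subgraph whose edge count exceeds its vertex count, and a first-moment bound kills such subgraphs. Where you differ is in how the union bound is organized. The paper does not enumerate skeletons at all: it reduces the bad event to the coarser statement that some vertex set $S$ with $|S|\le 2\ln n/(\ln d)^2$ spans at least $|S|+1$ edges, and bounds $\sum_r \binom{n}{r}\binom{\binom{r}{2}}{r+1}(d/n)^{r+1}\le n^{-3/4}$ directly. That formulation automatically covers every intersection pattern (cycles sharing a vertex, an arc, or several arcs, paths grazing a cycle, etc.), which is exactly the bookkeeping your write-up defers: your statement that each degenerate case ``is absorbed by the bound'' still requires controlling the number of degenerate skeleton types at each edge count — e.g.\ two cycles sharing $s$ maximal segments give excess $s$, so those types contribute at most $\mathrm{poly}(K)^{s}\,d^{2K}n^{-s}$, which is fine but should be said. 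Conversely, your skeleton count is sharper per term and gives the transparent estimate $8d^{2K+L}/n = 8n^{-1+18/(\log d)^4}$, with lots of slack; one small correction is that the path endpoints are not determined by the cycle embeddings, so either fold the attachment points into the cycle tuples or pay a harmless factor $\ell_1\ell_2\le K^2$. In short: both are routine first-moment arguments; the paper's ``no small set with $|S|+1$ edges'' phrasing buys case-freeness (and is the same local-sparsity statement one often wants anyway), while your explicit enumeration buys a cleaner exponent at the cost of finishing the degenerate-case count.
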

\begin{proof}
If there is a pair of cycles in ${\cal C}$ at distance less than  $10\frac{\ln n}{ (\ln  d)^5}$, then the following should hold: There 
is a set of vertices $S$ of cardinality less than $2\frac{\ln n}{(\ln  d)^2}$ such that the number of edges between the vertices 
in $S$ is at least $|S|+1$.  We  show that such a set does not exist in $G(n,d/n)$  with probability at least $1-n^{-3/4}$.

Let $D$ be the event that such a set exists. It holds that
\begin{eqnarray}
\Pr[D]&\leq &\sum_{r=1}^{2\frac{\ln n}{ (\ln  d)^2}}{n \choose r} {{r\choose 2} \choose r+1}\left(\frac{d}{n}\right)^{r+1}
\leq \sum_{r=1}^{2\frac{\ln n}{ (\ln  d)^2}} \left(\frac{ne}{r}\right)^r\left(\frac{r^2e}{2(r+1)}\right)^{r+1}\left(\frac{d}{n}\right)^{r+1}
\hspace{.75cm}\textstyle  \left[\textrm{as }{n \choose r}\leq \left(\frac{ne}{r}\right)^r \right] \nonumber\\
&\leq& \frac{1}{n}\sum_{r=1}^{2\frac{\ln n}{ (\ln  d)^2}}\left(\frac{erd}{2}\right)\left(\frac{e^2d}{2}\right)^{r}
\leq \frac{ed}{ (\ln  d)^2}\ \frac{\ln n}{n}\sum_{r=1}^{2\frac{\ln n}{ (\ln  d)^2}}\left(\frac{e^2d}{2}\right)^{r}
\hspace{3.15cm}\mbox{[as $r\leq 2\ln n/ (\ln  d)^2$]}\nonumber \\
&\leq& n^{-9/10}\left({e^2d}/{2}\right)^{2\frac{\ln n}{ (\ln  d)^2}}
\leq n^{-3/4}. \nonumber
\end{eqnarray}
The lemma follows
\end{proof}

\noindent
Finally we use the following   standard result, for a
proof  see e.g. in  \cite{FriezeKaron}, in Section 3.
\begin{lemma}\label{lemma:MaxDegree}
Let $\Delta$ be the maximum degree in $G(n,d/n)$.  It holds that 
\[
\Pr[\Delta \geq (3/2)\log n/\log \log n] \leq n^{-1/4}.
\]
\end{lemma}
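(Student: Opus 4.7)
\medskip

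\noindent\textbf{Proof proposal for Lemma~\ref{lemma:MaxDegree}.}

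The plan is to bound the tail of a single vertex's degree by a direct binomial computation, then take a union bound over the $n$ vertices. Fix a vertex $v\in V(G(n,d/n))$; its degree is $\deg(v)\sim\text{Binomial}(n-1,d/n)$, so for any integer $k\ge 1$ we have the standard estimate
\[
\Pr[\deg(v)\geq k] \;\leq\; \binom{n-1}{k}\left(\frac{d}{n}\right)^{k}
\;\leq\; \frac{1}{k!}\,d^{k}
\;\leq\; \left(\frac{e d}{k}\right)^{k},
\]
using $\binom{n-1}{k}\leq n^k/k!$ and $k!\geq (k/e)^k$. I would then plug in $k:=\lceil (3/2)\log n/\log\log n\rceil$ and take logarithms:
\[
k\log\!\left(\frac{ed}{k}\right) \;=\; k\bigl(1+\log d-\log k\bigr),
\]
and for this choice of $k$ we have $\log k = \log\log n - \log\log\log n + O(1)$, so $1+\log d-\log k = -(1-o(1))\log\log n$ as $n\to\infty$ (with $d$ fixed). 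Hence
\[
k\log(ed/k) \;\leq\; -\,(1-o(1))\,\frac{3}{2}\log n,
\]
which gives $\Pr[\deg(v)\geq k]\leq n^{-3/2+o(1)}$ for large enough $n$.

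The final step is a union bound over the $n$ vertices:
\[
\Pr[\Delta\geq k] \;\leq\; n\cdot\Pr[\deg(v)\geq k] \;\leq\; n\cdot n^{-3/2+o(1)} \;=\; n^{-1/2+o(1)} \;\leq\; n^{-1/4},
\]
for all sufficiently large $n$, which is the bound claimed in the lemma. No concentration inequality beyond the crude first-moment/Stirling bound on the binomial tail is needed, since the threshold $(3/2)\log n/\log\log n$ is chosen precisely so that the per-vertex bound is $n^{-3/2+o(1)}$, giving comfortable slack after the union bound. There is no serious obstacle here; the only thing to keep track of is that $\log k \sim \log\log n$ dominates $1+\log d$ (which is a fixed constant, since $d$ does not depend on $n$), so the $o(1)$ error term truly decays and the inequality holds uniformly once $n$ is sufficiently large. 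If one wanted an effective threshold for $n$, it suffices to ensure that $\log\log n \geq 4(1+\log d)/3$, which is the only place the assumption on $d$ enters.
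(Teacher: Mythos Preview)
Your proof is correct. The paper does not actually prove this lemma; it simply states it as a ``standard result'' and refers the reader to Frieze and Karo\'nski's textbook for a proof. Your argument---the union-bound estimate $\Pr[\mathrm{Bin}(n-1,d/n)\ge k]\le \binom{n-1}{k}(d/n)^k\le (ed/k)^k$, followed by plugging in $k=(3/2)\log n/\log\log n$ and a union bound over vertices---is exactly the standard textbook derivation, so there is nothing to compare.
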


\noindent
We are going to show that $G$ admits the partition ${\cal B}={\cal B}(\epsilon, d, \Delta)$  if  (a) the maximum degree  $\Delta$ is less than
$(3/2)\log n/\log\log n$, (b)  the distance between any two cycles in  ${\cal C}$ is at least $10\log n/(\log d)^5$,
(c)  there are no elementary paths of length $\log n/(\log d)^5$ which do not contain 
a $r$ breakpoint for $r=\log n/d^{4/5}$ and (d) every $r$ breakpoint in $G$ is also an
$r'$ breakpoint for $r=10\log n$. From  Lemmas  \ref{lemma:BreakPointExtension}, \ref{lemma:DistantCycle}, \ref{lemma:MaxDegree} 
and  Theorem \ref{thrm:BreakablePaths},  $G$ satisfies these properties with probability $1-o(1)$.

Let ${\cal H}$ be the set of breakpoints in $G$. Given the sets ${\cal H}$ and ${\cal C}$ we specify the
set of block ${\cal B}$ as follows:
For the cycle $C\in {\cal C}$ we create the block $B_C$.  Let $\partial^r C$ contain all the vertices which are at 
distance $r=\textstyle \max \left \{2  \log(|C|\  \Delta),\   \frac{\log\log d}{\log d}\left( {|C|+\log \Delta} \right) \right\}$. 
Note that we always have $r\leq 5\frac{\log\log d}{(\log d)^6} \log n $.
The block $B_C$ contains all the vertices in the cycle $C$ and $\partial^r C$. 
Additionally the block $B_C$ contains every vertex $w$ for which there is an influence path from $w$ to $\partial^r C$.
We repeat the above process for every cycle in ${\cal C}$.   Note that our assumptions about $G$ imply that the 
blocks created for each cycle are vertex disjoint.

Having specified the blocks which correspond to the cycles in ${\cal C}$,
there are  vertices whose block has not been specified yet. 
For each such vertex $w$ we specify its block $B_w$ by working  as follows:
The block $B_w$ contains $w$ and every $u$ which is reachable from $w$ through an influence
path. The block construction ends once we have specified the block for all vertices in $G$.
Note that if $w$ is a breakpoint then $B_w$ is a single vertex block. This follows from the definition of
influence path.

In the following result we show that  the blocks in  ${\cal B}$ have the structured we promised.

\begin{lemma}\label{lemma:SimpleBlock}
For $\epsilon, d$ as specified in the statement of Lemma \ref{thrm:RapidMixBlockGnp},
consider the graph $G$ which admits the block partition as we described above.
Additionally,  assume that $G$ is such that 
\begin{enumerate}
\item  the distance between any two cycles in  ${\cal C}$ is at least $10\log n/(\log d)^5$ 
\item  there are no elementary paths of length $\log n/(\log d)^5$ which do not contain 
a breakpoint. 
\end{enumerate}
Then,  the set of blocks ${\cal B}$  contains only blocks which are trees with one extra edge. 
\end{lemma}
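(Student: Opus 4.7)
My approach is to control the geometry of each block (small radius), then rule out multiple cycles via a length-based case analysis combined with the two assumptions. The guiding picture is that each block behaves like a bounded-radius ball around either its root or its central cycle, and within such a small region assumption~(1) kills extra short cycles while assumption~(2) kills long ones.

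First I would bound the radius of each block. For any non-cycle block $B_w$, a vertex $u \in B_w$ is reached via some influence path from $w$; the shortest $w$--$u$ path in $G$ is automatically elementary, and if it had length at least $L := \log n / (\log d)^5$ then assumption~(2) would force a breakpoint on it, contradicting that it is an influence path. Hence $B_w \subseteq \ball(w, L)$. A similar argument gives $B_C \subseteq \ball(C, r + L)$, and since $r \leq 5\,(\log\log d)/(\log d)^6 \cdot \log n \ll L$, each cycle block sits inside a ball of radius well below $2L$ around $C$, which is far less than the $10L$ separation of assumption~(1).

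Next I would show that every non-cycle block is a tree. If $G[B_w]$ contained a cycle $C^*$, then either (i) $|C^*| \leq 4L$, in which case $C^* \in \mathcal{C}$ and its vertices were already absorbed into $B_{C^*}$ during the first phase of the block construction (contradicting $V(C^*) \subseteq B_w$), or (ii) $|C^*| > 4L$, in which case I would extract a sub-arc of $C^*$ of length $L$ whose complementary arc has length exceeding $3L$. Arguing that this sub-arc is elementary (any competing short shortcut would yield a short cycle, which we have already handled in case~(i)), assumption~(2) produces a breakpoint on the sub-arc, contradicting the fact that $B_w$ consists entirely of non-breakpoints by the definition of influence path. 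For cycle blocks I would run a parallel argument: the cycle $C$ lies in $B_C$ by construction, and a second cycle $C^*$ would sit within distance less than $2L$ of $C$ by Step~1; if $|C^*| \leq 4L$ then $C^* \in \mathcal{C}$ and we contradict assumption~(1), while if $|C^*| > 4L$ the elementary sub-arc argument forces a breakpoint on $C^*$, and since the influence-path extensions carry only non-breakpoints, $C^*$ must lie inside the thin inner ball of radius $r$, which is too shallow to accommodate a cycle of length exceeding $4L$. Combining these, each $B \in \mathcal{B}$ is a tree or has exactly one cycle.

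The main technical obstacle will be the long-cycle case, where one must verify that the extracted sub-arc is elementary -- i.e., that no independent short path in $G$ joins two of its vertices. The cleanest treatment is inductive (handling the shortest offending cycle first, so that any shortcut would itself form a strictly smaller cycle already ruled out), or alternatively one invokes a standard $G(n, d/n)$ sparsity fact in the spirit of the calculation inside Lemma~\ref{lemma:DistantCycle} -- namely, that any polylog-size vertex subset induces at most $|S|$ edges -- which is implicit in the typical-instance framework but is not stated explicitly among the two listed hypotheses.
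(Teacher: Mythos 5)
Your route is the same as the paper's: control the radius of each block via assumption (2), then kill cycles by a short/long dichotomy, short cycles being members of ${\cal C}$ (handled by the first phase of the construction and assumption (1)) and long cycles being incompatible with breakpoint-freeness. The place where your write-up genuinely fails is Step 1. The claim that ``the shortest $w$--$u$ path in $G$ is automatically elementary'' is false under the paper's definition: elementarity forbids \emph{any} other connecting path of length less than $10\ln n/d^{4/5}$ between two vertices of the path, not only paths shorter than the corresponding sub-path, so a shortest path two of whose consecutive vertices lie on a triangle is already non-elementary. Moreover, even if the shortest path were elementary, a breakpoint on it contradicts nothing: the witness for $u\in B_w$ is an influence path, which may be a different path, so ``contradicting that it is an influence path'' does not follow. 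Since Step 3 (cycle blocks) leans on the radius bound from Step 1, this gap propagates. Separately, your fallback for the elementarity obstacle --- importing a $G(n,d/n)$ sparsity estimate --- is not admissible: the lemma is a deterministic statement about any graph satisfying the two listed assumptions, as you yourself note, so the proof must not use properties outside them.

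The repair is the device you already use in Step 2, applied uniformly: run the argument on a length-$\log n/(\log d)^5$ segment of the influence path (or of the putative long cycle) itself. If the segment is elementary, assumption (2) yields a breakpoint on it, contradicting that influence paths consist of non-breakpoints; if it is not elementary, the shortcut of length $<10\ln n/d^{4/5}$ closes a cycle of length at most $\log n/(\log d)^5+10\ln n/d^{4/5}<4\ln n/(\ln d)^5$ through a vertex of the segment, i.e.\ a cycle of ${\cal C}$ through a block vertex; for a block of the second phase this is impossible because every vertex of every cycle in ${\cal C}$ was already assigned to a cycle block, and for a cycle block it clashes with assumption (1) once the radius bound is in place. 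Note also that in your cycle-block case the inference ``a breakpoint on $C^*$ forces $C^*$ inside the radius-$r$ ball'' is too quick: one breakpoint only places one vertex of $C^*$ within distance $r$ of $C$; you need the radius bound for the influence extension first (exactly as in the paper's proof, which bounds the distance of any block vertex from its cycle by roughly $(3/2)\log n/(\log d)^5$ and then invokes assumption (1)) before concluding that a cycle of length exceeding $4\ln n/(\ln d)^5$ cannot fit inside the block.
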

\begin{proof}
Let ${\cal B}_1$ be the set of blocks created from the cycles in ${\cal C}$ and let ${\cal B}_2={\cal B}\backslash {\cal B}_1$.
It suffices to show that  ${\cal B}_1$ contains only unicyclic blocks and  ${\cal B}_2$ contains only trees.

First we focus on  ${\cal B}_1$. The assumption that there are no elementary paths of length
$\log n/(\log d)^5$ which do not contain a breakpoint implies the following: There is no vertex 
$w$ at distance more than $(3/2)\log n/(\log d)^5$ from a cycle $C\in {\cal C}$ such that both
$w$ and $C$ belong to the same block.  Then, the assumption that for any two cycles in
${\cal C}$ their distance is at least $10\log n/(\log d)^5$  implies that for any two
cycles $C_1, C_2\in {\cal C}$ the corresponding blocks do not intersect.

So as to show that ${\cal B}_2$ consists of tree-like blocks we work as follows: Let some $B\in {\cal B}_2$ and let $w$ be
the vertex we used to created it.  It is direct that every path that connects $w$ to some vertex in any of the blocks in ${\cal B}_1$
should contain at least one breakpoint (otherwise $w$ should belong to a block in ${\cal B}_1$).  
That is, if $B$ contains a cycle $C$,  then $C\notin {\cal C}$. This implies that $|C|> 4\frac{\ln n}{ (\ln  d)^5}$.  
It suffices to show that 
every $B\in {\cal B}_2$ cannot contain a cycle of length $\ell\geq 3\frac{\ln n}{ (\ln  d)^5}$.
But the second assumption about $G$ implies that the maximum cycle in $B$ is $2\log n/(\log d)^5$.
This implies that $B$ cannot contain any cycle.  We conclude that ${\cal B}_2$ contains only blocks which are trees.

The lemma follows. 
\end{proof}

\noindent
So as to show that $G$  admits the block partition ${\cal B}(\epsilon, d, \Delta)$,  it suffices to show 
that  the graph $G$ (and the set of blocks ${\cal B}$) 
has the following properties:
\begin{enumerate}
\item for each multi-vertex block $B\in {\cal B}$,  each $u\in \outBound B$ is $r$-breakpoints
for $r\geq { \max\{{\tt diam}(B), \log\log n\}}$
\item for each multi-vertex block $B$,  each vertex in $\outBound B$ has exactly one neighbor inside $B$
\item  if $B$ contains a cycle $C$, we have that
$\distance(v,C)\geq \max \left \{2  \log(|C|\  \Delta),\   \frac{\log\log d}{\log d}\left( {|C|+\log \Delta} \right) \right\}$,
  for every  $v\in \outBound B$
\item  every  $v\in \ImpVrtx$  does not belong to any cycle of length less than $d^{2}$.
\end{enumerate}

\noindent
 We start by arguing about (1). First, we show that $\outBound B$ consists of $r$-breakpoints for $r=10\log n $.
Assume that  some vertex $u\in \outBound B$ is not a $r$ breakpoint. W.l.o.g. assume that this block is a tree.
Let  $w$ be the vertex that is used to specify the block $B$.  Since we assume that $B$
is multi-vertex $w$ is not a breakpoint. Furthermore, since  $u\in \outBound B$ is not a breakpoint 
there should be an influence path from $w$ to $u$. In turn, this implies that $u$ should be included into
$B$ during the construction of $B$. Clearly, this is a contradiction since  $u$ was assumed to  be in $\outBound B$. 
Then, (1) follows by noting that the diameter of $B$ is always less than $10\log n$.

For  showing (2)  we use proof by contradiction, as well. Assume that we have a multi-vertex block
$B$ and there is  $u\in \outBound B$ which has at least two neighbors inside $B$.  Consider first the
case where $B$ was created by a single vertex $w$. 
The block that is created by a single vertex  cannot intersect with a cycle of length less than $4\log n/(\log d)^5$
However, if there exists such $u\in \outBound B$ then there should be a cycle of length less than 
$(5/2)\log n/ (\log d)^5$ that intersects with the block $B$. Clearly this cannot be
the case since we assumed that the block is created by a single vertex and not a short cycle.
If on the other hand the block $B$ started from a cycle $C\in {\cal C}$, then the fact that there exists
$u\in \outBound B$ implies that there two cycles of length less than $4\log n/(\log d)^5$ whose distance
is much less than $(3/2)\log n/(\log d)^5$.  This is a contradiction, since we assumed that any two cycles in
${\cal C}$ are at greater distance.

As far as (3) is concerned, consider the block construction for   a block which
includes a cycle $C\in {\cal C}$. In such a block we always add  the vertex sets $\partial ^r C$ 
in the block. The assumption that $\Delta=(3/2)\log n/\log\log n$ and the fact that  the length of 
the cycle $C$ is at most $\log n/(\log d)^5$ imply that the addition of  the set of vertices 
$\partial ^r C$  into the block guarantees that  (3) is satisfied.

Finally, for (4) we only need to observe that every $v$ in the outer boundary of a block cannot belong to
a cycle in ${\cal C}$. 

We, also, need to show for $k\geq \alpha d$,
with high probability over the instances of $G(n,d/n)$, the graph  can be colored
using at least $k\geq \alpha d$ colors and the state space is connected. 
As far as  the $k$-colorability of $G(n,d/n)$, for $k\geq \alpha d$, is regarded we use the result from
 \cite{optas-chrno,chromatic2},  i.e., with probability $1-o(1)$ the chromatic number of $G(n,d/n)$ is $d/(2\ln d)$.

 From \cite{DFFV} we have that  the Glauber dynamics (and hence the  block dynamics) is ergodic with probability 
 $1-o(1)$ over the instances  $G(n,d/n)$ when $k\geq d+2$.  For the sake of completeness let us sketch the  proof 
 for ergodicity in \cite{DFFV}. It is shown that if a graph $G$ has no $t$-core\footnote{For some integer $r>0$ and 
 a graph $G$, we say that $G$  has a $r$-core if it has a subgraph with minimum   degree $r$}, then for all $k\geq t+2$ 
 the Glauber dynamics for  $k$-coloring yields an ergodic Markov chain (Lemma 2  in \cite{DFFV}). Then the authors 
 use the result  in \cite{k-core}, which states that w.h.p. $G(n,d/n)$ has no $t$-core for $t\geq d$.

We consider the claim about the size of the  blocks.
 The fact that the vertices in $\inBound B$, for  every $B\in {\cal B}$, are next to a break-point and Lemma \ref{lemma:GrowthFromBP} imply the following:
 for each $w\in \inBound B$ the number of vertices that are at distance $\ell$ from $w$ is less than $[(1+\epsilon)d]^{\ell}$.
Then, the result follows easily one we note that the diameter of each block in ${\cal B}$, given that $\Delta=\Theta(\log n/(\log\log n))$, is less than 
$10\log n/\log^4 d$.

\section{Hard-Core Model - Analysis for Rapid Mixing}\label{sec:HCRapid4Block}

In this section we show the following result:
 \begin{theorem}\label{thrm:Fug1OverD-main}
For all $\eps>0$, there exists $d_0>1$, for all $d>d_0$, for $\lambda \leq (1-\eps)/d$, 
there exists $C=C(d)>0$ such that with probability $1-o(1)$ over the choice of $G\sim G(n,d/n)$, the mixing time of the  Glauber dynamics is 
$O(n^C)$.
 \end{theorem}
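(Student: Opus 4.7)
\bigskip

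\noindent\textbf{Proof proposal for Theorem \ref{thrm:Fug1OverD-main}.}
The plan is to mirror the block dynamics framework developed for colorings, using the sparse block partition already constructed for $G(n,d/n)$. By Lemma \ref{thrm:RapidMixBlockGnp}, with probability $1-o(1)$ over the choice of $G\sim G(n,d/n)$, the graph admits the sparse block partition ${\cal B}={\cal B}(\epsilon,d,\Delta)$ with $\Delta=(3/2)\log n/\log\log n$ and $B_{\max}\le n^{1/(\log d)^2}$. I will define the block dynamics $(X_t)$ for the hard-core Gibbs distribution on this partition (each transition resamples the configuration on a uniformly chosen block conditional on its boundary), prove that it mixes in $O(N\log N)$ steps via path coupling under the weighted metric $\EdgeBlockWeight$ of \eqref{eq:DefOfHammingWeights}, and finally descend to the Glauber dynamics through the comparison argument of Section \ref{sec:thm:1.76-Glauber}. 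The latter converts the $O(N\log N)$ block bound and the $O(B_{\max})$-time block updates into an $O(n^C)$ Glauber mixing time.

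The core calculation is the hard-core analogue of Theorem \ref{thrm:BlockUpdtCovergent}: if $X_t\oplus Y_t=\{u^*\}$ and a block $B$ with $u^*\in\outBound B$ is updated, then $\Exp{\EdgeBlockWeight(X_{t+1},Y_{t+1})-\EdgeBlockWeight(X_t,Y_t)}\le n^2(1-\Omega(\epsilon))$. In the maximal vertex-by-vertex coupling (as in Section \ref{sec:prop:Tail4ZetaIntro}) the propagation probability at a vertex $v\in B$ whose unique disagreeing neighbor is $w$ is at most $\lambda/(1+\lambda)\le\lambda$, since the hard-core marginal of $v$ given a fixed neighborhood is either $0$ or $\lambda/(1+\lambda)$. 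Plugging $\lambda\le(1-\epsilon)/d$ and the low-degree bound $\degree_{in}(v)\le\TDeg=(1+\epsilon/6)d$ into the subtree recursion $Q_v(T)\le p_z\bigl(n^2\degree_{out}(z)+\sum_{y\in N(z)\cap T}Q_z(T_y)\bigr)$ that was used in Theorem \ref{thm:block-regular}, a single-level estimate yields a contraction factor $\lambda\cdot\TDeg\le(1-\epsilon)(1+\epsilon/6)=1-5\epsilon/6-\epsilon^2/6$, which by induction gives $Q_{u^*}(B)\le n^2(1-\Omega(\epsilon))$ along any path inside $B$ that avoids high-degree vertices and the (at most one) cycle. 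This is the same contraction mechanism as the $k>2\Delta$ coloring case, replacing $1/(k-\Delta)$ by $\lambda/(1+\lambda)$.

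The main obstacle is controlling propagation through high-degree vertices and through the at-most-one cycle hidden inside $B$, since for such vertices the trivial bound on the propagation probability is $1$ and $\degree(v)$ may be as large as $\Theta(\log n/\log\log n)$. Here the weighting scheme \eqref{eq:DefVrtxWeight}--\eqref{def:PathWeight} and Corollary \ref{cor:FromBreakPointProd} are crucial: each high-degree vertex on a path from $u^*$ to the frontier of $B$ is amortized against a buffer of $\Omega(\log d)$ low-degree vertices along the path, so the factor $\degree(v)$ one picks up from branching at $v$ is dominated by the factor $(\lambda\cdot\TDeg)^{\Omega(\log d)}\le d^{-\Omega(\epsilon\log d)}$ one gains along the buffer. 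The bookkeeping mirrors exactly the inductive bound on $Q^b$ in Lemma \ref{lemma:Bound4QBPlusUniformity} and the cycle treatment in Proposition \ref{prop:BoundHangingTreesPlusUniformity}; the condition $2c\ge2(c)$ in Definition \ref{def:SparseBlockPart} guarantees that cycles sit deep enough inside their blocks that a geometric factor $\lambda^{\ell}$ suppresses their contribution. Because we are working well below uniqueness ($\lambda d\le1-\epsilon$) we do not need any local-uniformity burn-in: the simple criterion of Section \ref{sec:SimpleCriterion} applies directly for every boundary configuration, as in the $k>2d$ overview.

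Finally, with the contraction of $\EdgeBlockWeight$ established under path coupling, the standard argument of Theorem \ref{thm:block-regular} gives $\Tmix^{\mathrm{block}}=O(N\log N)$, and the comparison argument from Section \ref{sec:thm:1.76-Glauber}, applied to each unicyclic-or-tree block whose hard-core partition function can be computed in time $O(B_{\max})$ by tree dynamic programming (with one extra $O(1)$-case split at the cycle edge), gives Glauber mixing time $O(n^C)$ for $C=C(d)$. This completes the proof of Theorem \ref{thrm:Fug1OverD-main}.
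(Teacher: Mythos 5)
Your treatment of the block dynamics is essentially the paper's own route: the paper also observes that for $\lambda\leq(1-\eps)/d$ the propagation probability in the vertex-by-vertex coupling is at most $\lambda/(1+\lambda)<(1-\eps)/d$ for \emph{every} boundary configuration (Corollary \ref{thrm:BlockProbProp}), so every block is always in a convergent configuration and no burn-in/local-uniformity machinery is needed; contraction of $\EdgeBlockWeight$ then comes from Theorem \ref{thrm:BlockUpdtCovergent} (which is stated for the hard-core model as well), whose proof contains exactly the high-degree/cycle amortization via Corollary \ref{cor:FromBreakPointProd} that you re-derive by hand. Up to the fact that you should quote the hard-core variant Lemma \ref{thrm:RapidMixBlockGnpHC} (ergodicity for hard-core is trivial, unlike the coloring case), this part is fine and matches Theorem \ref{thrm:RapidMixingAvrgDegGraphHC}.

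The gap is in the final comparison step. You write that the comparison argument converts the $O(N\log N)$ block bound ``and the $O(B_{\max})$-time block updates'' into an $O(n^C)$ Glauber bound, and you justify the per-block ingredient by the fact that the hard-core partition function of a tree-plus-one-edge block can be computed by dynamic programming. But the comparison of Proposition \ref{prop:SingleSiteVsBlock} does not use the cost of \emph{implementing} a block update at all; it requires a bound on $\max_{B\in{\cal B}}\tau_B$, the relaxation time of the \emph{single-site} Glauber dynamics restricted to each block with an arbitrary frozen boundary condition. These blocks can have size $n^{\Theta(1/(\log d)^2)}$ and maximum degree $\Theta(\log n/\log\log n)$, so this is a genuinely nontrivial tree (or unicyclic) mixing statement, not a computational one, and nothing in your proposal supplies it. The paper closes this step with Lemma \ref{lemma:BlockRelaxHardCore}, which bounds $\tau_B\leq n^{C_1}$ by combining the path-density property of $G(n,d/n)$ (Claim \ref{claim:PathDensityGnp}) with the tree relaxation-time results of Mossel and Sly \cite{MS2}; this within-block relaxation bound is precisely the source of the $d$-dependent exponent $C=C(d)$ in the theorem. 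Without an analogue of that lemma, your argument stops at rapid mixing of the block dynamics and does not yield the stated bound for the single-site Glauber dynamics.
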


\noindent
So as to  get Theorem \ref{thrm:Fug1OverD-main} first we  prove the following result 
that concerns block dynamics.

\begin{theorem}\label{thrm:RapidMixingAvrgDegGraphHC}
For all $\epsilon,\Delta>0$, there exists $C, d_0>0$  such that for all $d\geq d_0$, 
and any  graph  $G$ which admits  block partition ${\cal B}={\cal B}(\epsilon, d)$ 
the following is true: For $\lambda\leq (1-\epsilon)/d$, the 
  block dynamics with set of block ${\cal B}$ has mixing time 
\[
 \Tmix\leq C n \log n.
 \]
\end{theorem}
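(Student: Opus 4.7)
The plan is to adapt the block-dynamics path coupling framework of Theorem \ref{thm:block-regular} and Section \ref{sec:SimpleCriterion} to the hard-core model. Crucially, $\lambda\leq (1-\epsilon)/d$ lies strictly below the tree uniqueness threshold $\lambda_{c}(d)\sim e/d$, so \emph{no} local uniformity burn-in is required, and the argument is substantially simpler than the one behind Theorem \ref{thrm:RapidMixingAvrgDegGraph176}. We reuse the weighted metric $\EdgeBlockWeight$ of \eqref{eq:DefOfHammingWeights}, giving weight $1$ to internal vertices and weight $n^{2}\degree_{out}(\cdot)$ to outer-boundary vertices, and consider two chains $(X_{t},Y_{t})$ differing at a single vertex $u^{*}\in\outBound B$ for a block $B\in{\cal B}$. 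The block update $X_{t+1}(B), Y_{t+1}(B)$ is coupled vertex-by-vertex as in Section \ref{sec:prop:Tail4ZetaIntro}, giving priority to uncolored vertices adjacent to existing disagreements.

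The technical core is the hard-core analogue of the coloring propagation bound. When we couple the conditional marginal of an unrevealed vertex $v\in B$ given a revealed disagreement at a neighbor $w$, a direct computation using the hard-core tree recursion shows that the optimal coupling places a disagreement at $v$ with probability at most $\lambda/(1+\lambda)\leq\lambda$, uniformly over the rest of the configuration. We therefore set $p_{v}=\lambda$ for low-degree vertices ($\degree(v)\leq\TDeg$) and $p_{v}=1$ for high-degree vertices in the analogue of \eqref{eq:DefPuSimple}. Since
\[
\lambda\cdot\TDeg \ \leq \ (1-\epsilon)(1+\epsilon/6) \ \leq \ 1-\tfrac{5\epsilon}{6},
\]
this gives a strict per-edge contraction at every low-degree vertex and puts every block unconditionally in a ``convergent configuration'' in the sense of Section \ref{sec:SimpleCriterion}; no burn-in is needed.

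With this substitution the remaining argument is essentially a line-by-line transcription of the coloring proof. An inductive branching calculation identical to the end of the proof of Theorem \ref{thm:block-regular}---with $\lambda$ replacing $1/(k-\Delta)$---yields $Q_{u^{*}}(B)\leq n^{2}(1-\epsilon/4)$ whenever $B$ is a tree of low-degree vertices. For the general case where $B$ hides high-degree vertices and possibly a short cycle, we reuse the reverse-weight scheme $\beta(\cdot)$ of \eqref{def:ReverseWeight} together with the stochastic domination by independent Bernoulli percolation in Proposition \ref{prop:StochasticDomination}; Lemma \ref{lemma:Weight2OfBoudnary} carries over without change, because the $r$-breakpoint condition of Definition \ref{def:SparseBlockPart} combined with the vertex weighting \eqref{eq:DefVrtxWeight} guarantees $\Omega(\log d)$ amortized low-degree vertices along any path from a high-degree vertex to $\outBound B$, each contributing a factor $\lambda\TDeg\leq 1-5\epsilon/6$ which more than offsets the $d^{15}\degree(\cdot)$ weight of a single high-degree vertex. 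Combining these bounds exactly as in the proof of Theorem \ref{thrm:RapidMixingAvrgDegGraph176} gives contraction of $\Exp{\EdgeBlockWeight(X_{t+1},Y_{t+1})}$ by $1-\Omega(\epsilon/N)$ per step, whence $\Tmix\leq Cn\log n$ by path coupling. The only nontrivial obstacle I anticipate is the short-cycle contribution inside a unicyclic block, where a cycle vertex has two internal paths back to $\outBound B$; however, the separation $\distance(v,C)\geq 2\log(|C|\Delta)$ of Definition \ref{def:SparseBlockPart} suppresses this contribution exactly as it does in Proposition \ref{prop:BoundHangingTreesPlusUniformity} of the coloring proof, so only bookkeeping remains.
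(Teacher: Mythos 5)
Your proposal is correct and follows essentially the same route as the paper: the paper likewise observes that a disagreement propagates across an edge only when the newly exposed vertex becomes occupied, which happens with probability at most $\lambda/(1+\lambda)\leq(1-\epsilon)/d$, so every block is unconditionally in a convergent configuration (no burn-in or uniformity needed), and rapid mixing then follows from Theorem \ref{thrm:BlockUpdtCovergent} together with the path-coupling argument of Theorem \ref{thm:block-regular} using the weighted metric $\EdgeBlockWeight$. The only cosmetic difference is that you re-derive the block contraction bound by hand, whereas the paper simply invokes Theorem \ref{thrm:BlockUpdtCovergent} (stated for both the coloring and hard-core models) and notes ergodicity trivially via the empty independent set.
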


\noindent
Additionally to Theorem \ref{thrm:RapidMixingAvrgDegGraphHC} we have the following result.

\begin{lemma}\label{thrm:RapidMixBlockGnpHC}
For all $\epsilon>0$ and $\Delta=(3/2) \left( \log n/\log \log n\right)$, there exists $d_0>0$ such that 
for all $d\geq d_0$  $G(n,d/n)$ admits the block partition ${\cal B}={\cal B}(\epsilon, d, \Delta)$.
\end{lemma}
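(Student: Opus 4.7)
The plan is to observe that the construction and verification of the sparse block partition depend only on the underlying graph $G$ and not on the spin system under consideration, so the proof can be obtained by essentially transplanting the argument of Lemma \ref{thrm:RapidMixBlockGnp}. In particular, Definition \ref{def:SparseBlockPart} is phrased purely in terms of breakpoints, cycles, degrees and distances in $G$; it never refers to colorings or the parameter $k$. Hence only the ergodicity bullet in the definition of $\IntrstGraphFam$ differs, and for the hard-core Glauber dynamics on any graph this is automatic (the empty independent set is always reachable).

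The first step is to reduce to four high-probability structural events on $G\sim G(n,d/n)$: (a) $\Delta \leq (3/2)\log n/\log\log n$ by Lemma \ref{lemma:MaxDegree}; (b) any two cycles of length at most $4\log n/(\log d)^5$ are at graph distance greater than $10\log n/(\log d)^5$ by Lemma \ref{lemma:DistantCycle}; (c) every elementary path of length $\log n/(\log d)^5$ contains an $r$-breakpoint for $r=\log n/d^{4/5}$ by Theorem \ref{thrm:BreakablePaths}; (d) every such $r$-breakpoint is in fact a $(10\log n)$-breakpoint by Lemma \ref{lemma:BreakPointExtension}. A union bound shows that all four events hold simultaneously with probability $1-o(1)$.

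Second, on this event I would perform the same two-stage block construction used for the coloring model: first, for every short cycle $C\in\mathcal{C}$ form a block $B_C$ containing $C$, the layer $\partial^r C$ at the required depth $r=\max\{2\log(|C|\Delta),\,\tfrac{\log\log d}{\log d}(|C|+\log\Delta)\}$, and every vertex reachable from $\partial^r C$ by an influence path; second, for each remaining vertex $w$, form a block $B_w$ consisting of $w$ together with everything reachable from $w$ via influence paths. Lemma \ref{lemma:SimpleBlock} gives condition 1 of Definition \ref{def:SparseBlockPart} (each block is a tree or unicyclic). Condition 2(a) on $r$-breakpoints at $\outBound B$ follows because any non-breakpoint on the boundary would admit a continuing influence path into $B$; combined with (d) we upgrade $r$ to the value required, which dominates both $\mathrm{diam}(B)$ and $\log\log n$. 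Condition 2(b) (exactly one neighbor inside $B$) follows from (b) and (c), since two neighbors would produce a cycle shorter than the cycle-separation threshold. Condition 2(c) is built in by the inclusion of $\partial^r C$. Condition 3 holds because every $u\in\ImpVrtx$ is, by construction, at breakpoint distance $\geq 10\log n$ from any short cycle, and long cycles in $G(n,d/n)$ do not accumulate in such neighborhoods.

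There is essentially no serious obstacle: the analysis is purely graph-theoretic and recycles Lemmas \ref{lemma:BreakPointExtension}, \ref{lemma:DistantCycle}, \ref{lemma:MaxDegree}, \ref{lemma:SimpleBlock} and Theorem \ref{thrm:BreakablePaths} unchanged, and the hard-core-specific issue (ergodicity of Glauber) is trivial and replaces the role played by the $t$-core argument of \cite{DFFV,k-core} in the coloring setting. The only mild bookkeeping point is that the hard-core analogue of Definition \ref{def:SparseBlockPart} (denoted $\mathcal{B}(\epsilon,d)$ in the statement) drops the coloring-specific ergodicity requirement; replacing it with the trivial hard-core ergodicity preserves every other clause, so the same construction certifies membership in the relevant family with probability $1-o(1)$. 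The size bound $B_{\max}\leq n^{1/(\log d)^2}$, if needed, follows by the same computation as in Lemma \ref{thrm:RapidMixBlockGnp}, combining the diameter bound $O(\log n/(\log d)^5)$ on each block with the growth estimate $((1+\epsilon/3)d)^\ell$ from Lemma \ref{lemma:GrowthFromBP} starting at boundary breakpoints.
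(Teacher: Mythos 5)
Your proposal is correct and takes essentially the same approach the paper intends: the paper omits this proof precisely because it is almost identical to that of Lemma~\ref{thrm:RapidMixBlockGnp}, and your argument reproduces that proof (the same four high-probability structural events, the same two-stage construction from short cycles and influence paths, the same supporting lemmas), together with the correct observation that the purely graph-theoretic block conditions carry over unchanged and the coloring-specific ergodicity/colorability step is replaced by the trivial ergodicity of the hard-core Glauber dynamics.
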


\noindent
The proof of Lemma \ref{thrm:RapidMixBlockGnpHC} is almost identical to that of  Lemma
\ref{thrm:RapidMixBlockGnp}. For this reason we omit it.

In light of Theorem \ref{thrm:RapidMixingAvrgDegGraphHC} and Lemma \ref{thrm:RapidMixBlockGnpHC},
Theorem \ref{thrm:Fug1OverD-main} follows by utilizing a standard comparison argument, see Section \ref{sec:thm:1.76-Glauber}.

\noindent
We proceed with the  proof of Theorem \ref{thrm:RapidMixingAvrgDegGraphHC}.
First we note that for any $\lambda>0$ the dynamics is trivially  ergodic for any  $G$ which admits a block partition 
${\cal B}(\epsilon, d, \Delta)$. This follows from the observation that from every independent set of $G$ there
is a sequence of transitions to the empty independent set, each with positive probability,
and the other way around.

For showing the rapid mixing result for the hard-core model  it suffices to show that
in the block dynamics the blocks are always in a convergent configuration. Then
rapid mixing follows by using  Theorem  \ref{thrm:BlockUpdtCovergent} 
and standard arguments, almost identical to those we use for Theorem \ref{thm:block-regular}.

\begin{corollary} \label{thrm:BlockProbProp}
For all $\epsilon >0$, $\Delta>0$, there exists $d_0>0$ such that for any $d\geq d_0$, 
for every  graph   $G$ which admits block partition  ${\cal B}(\epsilon, d, \Delta)$, 
and any $v\in \ImpVrtx$ the following is true:

Let $(X_t)_{t\geq 0}, (Y_t)_{t\geq 0}$ be  two copies of the  block dynamics  on
the  hard-core model on $G$ such that  for some $t\geq 0$ we  have $X_t\oplus Y_t=\{u^*\}$. 
For any $B$ such that $u^*\in \outBound B$ and any vertex $w\in B$ we have that
the probability of propagation $p_w <(1-\epsilon)/d$.
\end{corollary}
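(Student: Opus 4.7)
The plan is to use the same vertex-by-vertex maximal coupling of $X_{t+1}(B)$ and $Y_{t+1}(B)$ introduced in Section \ref{sec:prop:Tail4ZetaIntro}: starting at the unique neighbor $z \in B$ of $u^*$ and proceeding outward in a BFS-like order, at each step I would draw the configuration of the next vertex from its conditional marginal given the boundary of $B$ and the previously revealed portion of $B$, coupled to minimize the probability of disagreement.

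For any vertex $w \in B$, consider the moment the coupling processes $w$. Because the hard-core distribution is a two-state Markov random field, the conditional marginal probability that $w$ is occupied takes only two possible values given the configuration at its already-revealed neighbors: it equals $\lambda/(1+\lambda)$ if every such neighbor is unoccupied, and $0$ otherwise. Hence in both chains the marginal at $w$ lies in the two-element set $\{0,\lambda/(1+\lambda)\}$, and the maximal coupling yields
\[
p_w \le \frac{\lambda}{1+\lambda} \le \lambda \le \frac{1-\epsilon}{d},
\]
which is the desired bound.

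The point worth emphasizing, and the reason this corollary is essentially a one-line argument rather than requiring the more delicate propagation analysis that the coloring model forces, is that no case analysis by vertex type enters. The bound is uniform over all $w \in B$ regardless of whether $\degree(w)$ is low or high, whether $w$ lies on the unique cycle of $B$, or how many disagreements are already present in $N(w)$, because we never need to count available spin values; the binary occupancy structure alone is enough. The only fact being used is the Markov property of the hard-core model, which guarantees that the two marginals relevant to the coupling depend only on the already-revealed neighbors of $w$. Consequently, for $\lambda \le (1-\epsilon)/d$, any block whose outer boundary contains $u^*$ is automatically in a convergent configuration in the sense of Section \ref{sec:SimpleCriterion}, so Theorem \ref{thrm:RapidMixingAvrgDegGraphHC} would then follow from Theorem \ref{thrm:BlockUpdtCovergent} combined with a path-coupling argument essentially identical to that in the proof of Theorem \ref{thm:block-regular}. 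There is no genuine obstacle here; the only minor point to check carefully is that the BFS ordering guarantees, at the time $w$ is processed, that the conditioning set truly determines a single value of each of the two marginals rather than an average.
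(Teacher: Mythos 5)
Your argument is correct and is essentially the paper's own proof: both rest on the observation that in the hard-core model the conditional probability of a vertex being occupied is at most $\lambda/(1+\lambda)$ uniformly---independent of its degree, of whether it lies on the cycle of $B$, and of the boundary configuration---so the maximal coupling produces a disagreement at $w$ with probability at most $\lambda/(1+\lambda)\le\lambda\le(1-\epsilon)/d$. One small correction: when the coupling processes $w$ not all of its neighbors have been revealed, so its conditional marginal is a mixture lying in $[0,\lambda/(1+\lambda)]$ rather than literally in the two-element set $\{0,\lambda/(1+\lambda)\}$; but only the upper bound is used, so the caveat you raise at the end requires no further checking.
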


\begin{proof}
Let $(X_t)_{t\geq 0}, (Y_t)_{t\geq 0}$ be  two copies of the  block dynamics  on
the  hard-core model on $G$ such that  for some $t\geq 0$ we
 have $X_t\oplus Y_t=\{u^*\}$.  Consider some block $B$ such that
 $v\in \outBound B$. Then, so as to bound the probability of
 propagation for each vertex $u$ note the following:
Assume that the  vertex $w$ is disagreeing, w.l.o.g.
assume that $X_{t+1}(w)$ is {\em occupied}, i.e., $w$ belongs to the independent set,  and 
$Y_{t+1}(w)$ is {\em unoccupied}.
Clearly $X_{t+1}(u)$ cannot become occupied. The only way we can have disagreement at
$u$, is when all the neighbours of $u$, apart from $w$, in both configurations  are unoccupied. 
Then, $Y_{t+1}(u)$ becomes occupied (disagreeing)  with probability $\frac{\lambda}{1+\lambda}$.

Choosing  $\lambda\leq (1-\epsilon)/d$,  the above remarks implies that the probability of propagation 
is less than $(1-\epsilon)/d$, always. 
\end{proof}

\noindent
In light of Corollary \ref{thrm:BlockProbProp}, Theorem \ref{thrm:RapidMixingAvrgDegGraphHC} follows.

\section{Rapid Mixing for Single Site  Dynamics - The Comparison}\label{sec:thm:1.76-Glauber}

In this section we show that the rapid mixing result we get for the block dynamics for coloring
imply  Theorem \ref{thm:1.76-main}. Similarly, for the hard-core model, i.e., Theorem \ref{thrm:Fug1OverD-main}.
In a lot of our results in this section we need to use continuous time Markov chains, rather than discrete time. 
In the {\em continuous time} block dynamics each block is updated  according to an independent
Poisson clock with rate 1.

We use the following  comparison result from  \cite{Martinelli}, which in our context writes as follows:

\begin{proposition}\label{prop:SingleSiteVsBlock}
Consider some graph $G$. Let $(X_t)_{t\geq 0}$ be the continuous time block dynamics, with set of 
blocks ${\cal B}$, where each vertex $v$ belongs to $Q_v$ different blocks. 
Also, let $(Y_t)_{t\geq 0}$ be the continuous time single site dynamics on $G$.
Let $\tau_{block}, \tau$ be the relaxation times of $(X_t)$ and $(Y_t)$, respectively. 
Furthermore, for each block $B\in {\cal B}$ let $\tau_B$ be the relaxation time of the continuous 
time single site dynamics on $B$, given any arbitrary condition at $\outBound B$.  
Then we have that
\[
\tau \leq \tau_{block} (\max_{B\in {\cal B}} \tau_B) (\max_{v} Q_v).
\]
\end{proposition}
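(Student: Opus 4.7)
The plan is to prove this by the standard Dirichlet form comparison argument of Martinelli, exploiting the fact that a single block update decomposes into many single-site updates on that block. I will work with the continuous-time generators and their associated Dirichlet forms, since both chains are reversible with respect to the same stationary distribution $\pi$.

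First, I will recall the variational characterization: for a reversible chain with Dirichlet form $\mathcal{E}$ and stationary distribution $\pi$, the relaxation time equals the supremum of $\operatorname{Var}_\pi(f)/\mathcal{E}(f,f)$ over non-constant $f$. For the block dynamics, one has the explicit decomposition
\[
\mathcal{E}_{\text{block}}(f,f) \;=\; \sum_{B\in \mathcal B} \Exp{ \operatorname{Var}_B(f \mid X(V\setminus B)) },
\]
where $\operatorname{Var}_B(\cdot \mid \cdot)$ denotes the variance under the conditional stationary distribution on $B$ given the configuration outside. This is the key structural identity I will use.

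Next, for each fixed block $B$ and each fixed boundary configuration $\sigma$ on $\outBound B$, I will apply the variational characterization to the continuous-time single-site (Glauber) dynamics restricted to $B$ (with the prescribed boundary). By definition of $\tau_B$, this gives
\[
\operatorname{Var}_B(f \mid \sigma) \;\leq\; \tau_B \cdot \mathcal{E}_{B,\sigma}(f,f),
\]
where $\mathcal{E}_{B,\sigma}$ is the single-site Dirichlet form on $B$ given boundary $\sigma$. Taking expectation over $\sigma$ and summing over $B$, and swapping the order of summation, each single-site move at a vertex $v$ contributes to at most $Q_v$ of the blocks that contain $v$. Thus I obtain
\[
\mathcal{E}_{\text{block}}(f,f) \;\leq\; (\max_B \tau_B)\,(\max_v Q_v)\cdot \mathcal{E}(f,f),
\]
where $\mathcal{E}$ is the single-site Dirichlet form on $G$.

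Finally, combining this comparison with the variational characterization of $\tau_{\text{block}}$ gives
\[
\operatorname{Var}_\pi(f) \;\leq\; \tau_{\text{block}}\cdot \mathcal{E}_{\text{block}}(f,f)
\;\leq\; \tau_{\text{block}}\,(\max_B \tau_B)\,(\max_v Q_v)\cdot \mathcal{E}(f,f),
\]
and taking the supremum over $f$ yields the desired bound $\tau \leq \tau_{\text{block}}\,(\max_B \tau_B)\,(\max_v Q_v)$. The main technical point to be careful about is the multiplicity factor $\max_v Q_v$: when summing the block-local Dirichlet forms back into one global single-site Dirichlet form, each transition at vertex $v$ can appear in several blocks that contain $v$, and this overcounting must be absorbed by $Q_v$. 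In our application, the blocks of ${\cal B}$ form a partition, so $Q_v=1$ for every $v$, which simplifies the final bound but is not needed for the statement itself.
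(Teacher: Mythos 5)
Your proposal is correct: the two-step Dirichlet form comparison (decompose $\mathcal{E}_{\mathrm{block}}(f,f)=\sum_{B}\Exp{\operatorname{Var}_B(f\mid \cdot)}$, bound each conditional variance by $\tau_B$ times the single-site form on $B$ with worst-case boundary on $\outBound B$, then recombine with multiplicity $\max_v Q_v$) is exactly the standard argument behind this statement. The paper itself does not prove the proposition but quotes it from Martinelli's lecture notes, and your proof is essentially the proof given there, so there is nothing further to compare.
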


\noindent
For some $G\in \IntrstGraphFam(\epsilon, d, \Delta)$, with block partition ${\cal B}$, 
let ${\cal P}$ be the set of paths which connect either a high degree vertex or
the cycles in the block $B$ (if any)  to $\inBound B$. 

We show rapid mixing for the single site Glauber dynamics of
$G(n,d/n)$ if,  additionally to the condition $G(n, d/n)\in \IntrstGraphFam(\epsilon, d, \Delta)$, for 
$\Delta=(3/2)\log n/(\log\log n)$,  the graph, also,  satisfies the following one:
For each path $P\in {\cal P}$ let 
\[
{\cal J}(P)=450 \sum_{u\in P}\left(\log(\degree(u))+{\degree(u)}/{k} \right).
\]
The additional property  is that every path $P\in {\cal P}$ is such that 
\begin{equation}\label{eq:NewF(P)Condition}
{\cal J}(P)\leq 10^4 \log n/(\log d)^2.
\end{equation}
where $|P|$ is the number of vertices in $P$. 

For some $\epsilon, d, \Delta>0$,  let ${\cal  L}(\epsilon, d, \Delta)$ be the family of graphs $G$ such
that $G\in \IntrstGraphFam(\epsilon, d, \Delta)$ and every $P\in {\cal P}$ satisfies \eqref{eq:NewF(P)Condition}.

\begin{lemma}\label{thrm:RelaxationWeightBound}
For $\epsilon, d$ and $\Delta$ as in Lemma \ref{thrm:RapidMixBlockGnp}, with
probability $1-o(1)$ over the graph instances we have that
$G(n,d/n)\in {\cal L}(\epsilon, d,\Delta)$.
\end{lemma}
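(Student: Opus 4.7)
By Lemma~\ref{thrm:RapidMixBlockGnp} we may condition on the probability $1-o(1)$ event that $G\sim G(n,d/n)$ admits a sparse block partition $\mathcal{B}=\mathcal{B}(\epsilon,d,\Delta)$; in the construction used to establish that lemma, every block $B\in\mathcal{B}$ satisfies $\mathrm{diam}(B)\leq L:=10\log n/(\log d)^{4}$. Every $P\in\mathcal{P}$ lives inside some block, hence $|P|\leq L$. Writing
\[
\mathcal{J}(P)=450\bigl(S_{1}(P)+S_{2}(P)\bigr),\qquad S_{1}(P):=\sum_{u\in P}\log\degree(u),\quad S_{2}(P):=\sum_{u\in P}\frac{\degree(u)}{k},
\]
it suffices to prove $S_{1}(P)+S_{2}(P)=O(\log n/(\log d)^{2})$ uniformly over $P\in\mathcal{P}$, with a numerical constant at most $10^{4}/450$.

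The bound on $S_{1}(P)$ follows deterministically from the block structure. The low-degree part contributes at most $|P|\log\TDeg=O(L\log d)=O(\log n/(\log d)^{3})$. For the high-degree part, note that each $P\in\mathcal{P}$ ends at a vertex of $\inBound B$, whose unique outer neighbor $v\in\outBound B\subseteq\ImpVrtx$ extends $P$ to a path of length $|P|+1$ inside $B$ to which Corollary~\ref{cor:FromBreakPointProd} applies. That corollary gives $\prod_{u\in M}d^{15}\degree(u)\leq(1+\epsilon/10)^{|P|+2-m}$ where $M=P\cap\HighDegreeSet$ and $m=|M|$. Taking logarithms, $\sum_{u\in M}\log\degree(u)\leq(|P|+2)\log(1+\epsilon/10)-15m\log d=O(|P|)$. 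Hence $S_{1}(P)=O(L\log d)=O(\log n/(\log d)^{3})=o(\log n/(\log d)^{2})$.

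The bound on $S_{2}(P)$ is where the randomness of $G(n,d/n)$ is essential. I will prove the auxiliary claim: with probability $1-o(1)$, for every path $P$ of length $\ell\leq L$ in $G(n,d/n)$,
\[
\sum_{u\in P}\degree(u)\leq T:=Cd\log n/(\log d)^{2},
\]
for a suitable constant $C$. Fix a candidate sequence $v_{0},\ldots,v_{\ell}$ of distinct vertices. Conditional on all $\ell$ edges $\{v_{i},v_{i+1}\}$ lying in $G$, the random variable $\degree(v_{i})$ decomposes as a deterministic contribution (at most $2$) plus an independent $\mathrm{Binomial}(n-\ell-1,d/n)$ counting edges from $v_{i}$ to $V\setminus P$, and these Binomials are mutually independent across $i$. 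Thus $\sum_{i}\degree(v_{i})-O(\ell)$ is $\mathrm{Binomial}((\ell+1)(n-\ell-1),d/n)$ with mean $\mu\leq(\ell+1)d\leq(L+1)d=O(d\log n/(\log d)^{4})$. Since $T/\mu=\Omega((\log d)^{2})$, a standard Chernoff estimate yields $\Prob{\sum\degree\geq T\mid P\subseteq G}\leq(e\mu/T)^{T}\leq(\log d)^{-\Omega(T)}$. Multiplying by $\Prob{P\subseteq G}=(d/n)^{\ell}$, taking the union bound over the at most $n^{\ell+1}$ sequences of length $\ell+1$, and summing over $\ell\leq L$ gives
\[
\sum_{\ell\leq L}n^{\ell+1}(d/n)^{\ell}(\log d)^{-\Omega(T)}\leq nL\,d^{L}(\log d)^{-\Omega(T)},
\]
and since $L\log d\leq\log n$ while $T\log\log d\gg\log n$ for large $d$ and any sufficiently large $C$, this probability is $o(1)$. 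On this good event, $S_{2}(P)\leq T/k=O(\log n/(\log d)^{2})$. Combining with the bound on $S_{1}(P)$ yields $\mathcal{J}(P)\leq 10^{4}\log n/(\log d)^{2}$ for every $P\in\mathcal{P}$, proving $G\in\mathcal{L}(\epsilon,d,\Delta)$ with probability $1-o(1)$. The main obstacle is the balancing act in this last union bound: one needs the diameter bound $L=O(\log n/(\log d)^{4})$ from the block construction (so that $L\log d$ does not exhaust the exponent) together with the fact that, at the threshold $T\asymp d\log n/(\log d)^{2}$, the Chernoff tail $(e\mu/T)^{T}$ is strong enough to absorb both the $n^{\ell+1}$ entropy factor from the number of paths and a safety margin for a union over $\ell\leq L$.
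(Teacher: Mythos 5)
Your proposal follows essentially the same route as the paper: the $\sum_{u}\log\degree(u)$ term is handled deterministically via Corollary~\ref{cor:FromBreakPointProd} together with the block-diameter bound, and the $k^{-1}\sum_{u}\degree(u)$ term via a first-moment/union bound over short paths with a Chernoff estimate, which is exactly the content of the paper's Claim~\ref{claim:PathDensityGnp}. The only slip is in your degree decomposition: conditional on the path edges being present, $\degree(v_i)$ also counts edges to \emph{non-consecutive} path vertices, which your Binomial$((\ell+1)(n-\ell-1),d/n)$ term omits; the paper bounds these separately (its $A_{int}$, dominated by Binomial$((\ell+1)^2/2,d/n)$ with $o(1)$ mean), and the same one-line fix slots into your tail estimate without affecting the rest of the argument.
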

\noindent
The proof of Lemma \ref{thrm:RelaxationWeightBound} appears in Section \ref{sec:thrm:RelaxationWeightBound}.

For $\epsilon, d$ and $\Delta$ as in Lemma \ref{thrm:RapidMixBlockGnp}, 
 consider some graph $G\sim G(n,d/n)$ such that $G \in {\cal L}(\epsilon, d, \Delta)$.
Let $(X_t)_{t\geq 0}$ be the continuous time, block dynamics, with set of blocks ${\cal B}$.
Also, let $(Y_t)_{t\geq 0}$ be the continuous time single site dynamics on $G$.
Theorem \ref{thrm:RapidMixingAvrgDegGraph176} and Lemma \ref{thrm:RapidMixBlockGnp} imply that choosing 
$k\geq (\alpha +\epsilon)d$, for   $\tau_{block}$,  the relaxation time  of $(X_t)_{t\geq 0}$, we have that
 \begin{equation}\label{eq:tau_blockBound}
 \tau_{block}=O\left( \log n\right).
 \end{equation}

\begin{lemma}\label{lemma:BlockRelaxColor}
For every $B\in {\cal B}$ consider the continuous time, single site dynamics $(X^B_t)_{t\geq 0}$ over
the $k$-colorings of $B$ with arbitrary boundary condition at $\outBound B$. Let $\tau_{B}$ be the 
relaxation time of $(X^B_t)_{t\geq 0}$. For any $k\geq (\alpha+\epsilon)d$
it holds that
 \begin{equation}
 \tau_{B}\leq n^{2/(\log d )^2}. \nonumber
\end{equation}
\end{lemma}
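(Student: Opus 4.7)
The plan is to analyze the single-site Glauber dynamics on each block $B$ by combining the structural features of $B$ (a tree or unicyclic graph in which all high-degree vertices and any cycle sit deep in the interior) with the path-weight condition \eqref{eq:NewF(P)Condition}. The target bound $\tau_B \le n^{2/(\log d)^2}$ is consistent with $|B|\le B_{\max}\le n^{1/(\log d)^2}$ from Lemma~\ref{thrm:RapidMixBlockGnp}, so our job is essentially to establish a polynomial bound in $|B|$ whose exponent is controlled by the worst path weight ${\cal J}(P)$.

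First, I will reduce the unicyclic case to the tree case. If $B$ contains a cycle $C$, Proposition~\ref{prop:SpatialMixing4HighDegrees} shows that the marginal law of each vertex on $C$ is within $O(d^{-2})$ of uniform independent of the boundary at $\outBound B$. This lets me ``break'' one edge of $C$ by a censoring/comparison argument (in the spirit of Peres--Winkler), paying only a constant factor in the relaxation time, and reducing to a spanning tree of $B$ with a (nearly identical) external field along one cycle edge. From here on I treat $B$ as a tree rooted at a vertex adjacent to $\outBound B$.

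Second, I will set up the standard recursive bound for the relaxation time on a tree, in the style of Lucier--Molloy \cite{LucMol} and TVVY \cite{TVVY}. For each subtree $T_v$ hanging off an internal vertex $v$, the relaxation time obeys a recursion of the form
\[
\tau(T_v) \le C\,\deg(v)\,\rho(v)\,\max_{u\in N(v)\cap T_v}\tau(T_u),
\]
where $\rho(v)$ is a ``local penalty'' equal, essentially, to the inverse of the minimum positive marginal probability of any color at $v$ conditioned on its neighbors. For low-degree vertices ($\deg(v)\le \TDeg$) Corollary~\ref{corr:GMPBiasColour} gives $\rho(v)=O(k)$. For high-degree vertices or those near a cycle, Proposition~\ref{prop:SpatialMixing4HighDegrees} together with a standard union-bound over colors forces $\rho(v)=\exp\bigl(O(\log\deg(v)+\deg(v)/k)\bigr)$, since the probability of the extremal color profile in $N(v)$ decays like $(1-1/k)^{\deg(v)}\approx e^{-\deg(v)/k}$.

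Third, I will unroll this recursion along a worst-case root-to-interior path $P^\ast$. Because a vertex contributes a non-trivial factor to $\rho(\cdot)$ only when it is high-degree or sits near the cycle, and because every such vertex lies on some path $P\in{\cal P}$ from $\inBound B$ into the interior, the accumulated penalty telescopes into
\[
\tau_B \;\le\; |B|^{O(1)}\cdot\exp\!\Bigl(C'\sum_{u\in P^\ast}\bigl(\log\deg(u)+\deg(u)/k\bigr)\Bigr)\;\le\;|B|^{O(1)}\cdot\exp\bigl(C'\,{\cal J}(P^\ast)/450\bigr).
\]
Invoking \eqref{eq:NewF(P)Condition} bounds the exponential factor by $n^{O(1/(\log d)^2)}$, and combining with $|B|\le n^{1/(\log d)^2}$ yields $\tau_B\le n^{2/(\log d)^2}$ after tuning the constants absorbed into \eqref{eq:NewF(P)Condition}.

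The main obstacle is the second step: establishing the correct local penalty $\rho(v)$ for a vertex whose degree is much larger than $k$. When $\deg(v)>k$, single-site Glauber at $v$ may only update on rare events whose stationary probability is $\sim e^{-\deg(v)/k}$, and a naive bound would be exponential in the full degree. The crucial input is that Proposition~\ref{prop:SpatialMixing4HighDegrees} and Proposition~\ref{prop:UnBiasHighDegree4Trees} give us uniformity of the marginal at $v$ conditional on the block boundary \emph{independent} of the worst-case outer configuration, which pins $\rho(v)$ at $\exp(O(\log\deg(v)+\deg(v)/k))$ rather than something worse. Once this local estimate is in place, the remaining induction along subtrees, the reduction from unicyclic to tree, and the arithmetic with ${\cal J}(P)$ are routine.
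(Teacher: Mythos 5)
Your overall skeleton is the right one and matches the paper's: decompose each tree recursively into a root plus hanging subtrees, bound the ``star'' dynamics at each vertex by a factor of the form $\exp\bigl(O(\log\degree(v)+\degree(v)/k)\bigr)$, multiply these along a worst path via the two-level comparison (Proposition \ref{prop:SingleSiteVsBlock}), and convert the accumulated exponent into ${\cal J}(P)$ so that \eqref{eq:NewF(P)Condition} and $B_{\rm max}\leq n^{1/(\log d)^2}$ give the bound. However, there are two genuine gaps. First, your reduction of the unicyclic case by ``breaking'' a cycle edge with a censoring/Peres--Winkler argument does not go through: censoring requires monotonicity, which the coloring model lacks, and once the edge is removed the single-site chain for the unicyclic block and the chain for the spanning tree live on different state spaces (proper colorings of $B$ form a strict subset of proper colorings of the tree) with different stationary measures and different transition kernels at the two endpoints, so ``paying only a constant factor'' is not a routine comparison. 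The paper instead never deletes the edge: it runs a block dynamics on $B$ whose blocks are the subtrees $T_i\in{\cal T}$ hanging from the cycle, shows via path coupling and Proposition \ref{prop:SpatialMixing4HighDegrees} (near-uniform marginals on cycle vertices, uniformly over the boundary) that this block dynamics relaxes in time $O(\log|C|)$, and then applies Proposition \ref{prop:SingleSiteVsBlock} once more.

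Second, and more importantly, the heart of the lemma is precisely the local estimate you declare to be the ``crucial input'' and then assume: that the block dynamics on a star $\{v\},T_{w_1},\dots,T_{w_R}$ with $R=\degree(v)$ possibly much larger than $k$ relaxes in time $\exp\bigl(O(\log R+R/k)\bigr)$. Near-uniformity of the marginal at $v$ (Propositions \ref{prop:UnBiasHighDegree4Trees} and \ref{prop:SpatialMixing4HighDegrees}) does not by itself ``pin'' this; the observation that the extremal neighborhood profile has probability $\approx(1-1/k)^{R}\approx e^{-R/k}$ is a heuristic for the stationary probability that the root can move, not an upper bound on the relaxation time. The paper's Proposition \ref{prop:MixingStar} and Lemma \ref{lemma:MixingLowDegreeStar} supply exactly this missing argument, via a coupling organized around ``success'' epochs of the root's update times: one waits for a time at which the two chains' available-color sets at the root are large and differ in $O(1)$ colors, the number of disagreeing children is $O(R/k)$, and the next root update is sufficiently far away; the probability of such an epoch is shown to be at least $\exp\bigl(-O(\log(R/k)+R/k)\bigr)$, which is where the factor $\exp\bigl(450(\log\degree(v)+\degree(v)/k)\bigr)$ in \eqref{eq:BlockRelaxUnified} actually comes from. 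Without an argument of this kind your recursion has no proved base-case factor, so the proposal as written does not yet constitute a proof. (A smaller point, shared with the paper's own exposition: low-degree vertices also contribute $\log\degree(u)\approx\log d$ per level, and the worst root-to-leaf path need not literally lie in ${\cal P}$, so the passage from the accumulated exponent to \eqref{eq:NewF(P)Condition} needs the short-diameter bound on blocks as well.)
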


\noindent
The proof of Lemma \ref{lemma:BlockRelaxColor} appears in Section \ref{sec:lemma:BlockRelaxation}.

Combining \eqref{eq:tau_blockBound}
with Lemma \ref{lemma:BlockRelaxColor} and Proposition \ref{prop:SingleSiteVsBlock} we get the following:
letting $\tau_{cont}$ be the relaxation time $(Y_t)_{t\geq 0}$ for $k\geq (\alpha+\epsilon)d$,   
we have that
\[
\textstyle \tau_{\rm cont} = O\left( n^{2/(\log d )^2}\log n\right) =O\left( n^{3/(\log d )^2}\right ).
\]
Now, let $(Z_t)_{t\geq 0}$ be the {\em discrete time}, single site Glauber dynamics on the $k$-colorings
of $G$ with $k\geq(\alpha+\epsilon)d$.  Let $\tau_{\rm disc}$ and $T_{\rm mix}$ be the relaxation time and the mixing
time of $(Z_t)_{t\geq 0}$, respectively. The above bound for $\tau_{cont}$ implies that
$ \tau_{disc} \leq n^{1+3/(\log d )^2}.$
Then, it is standard that 
$
\textstyle T_{mix} = O\left(n^{2+3/(\log d )^2} \right).
$
Theorem \ref{thm:1.76-main} follows.

As far as the hard-core model is regarded, we  show the following result.

\begin{lemma}\label{lemma:BlockRelaxHardCore}
For every $B\in {\cal B}$ consider the continuous time, single site dynamics $(X^B_t)_{t\geq 0}$ for
the hard-core model of $B$ with arbitrary boundary condition at $\outBound B$. Let $\tau_{B}$ be the 
relaxation time of $(X^B_t)_{t\geq 0}$. For any $\lambda \leq (1-\epsilon)/d $
there exists  $C_1>0$ which depends on $\epsilon, d$, such that 
$ \tau_{B}\leq n^{ C_1}. $
\end{lemma}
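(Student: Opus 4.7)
The plan is to apply path coupling to the continuous-time single-site Glauber dynamics on $B$ with a fixed boundary configuration on $\outBound B$. For two hard-core configurations $X,Y$ that differ at a single vertex $v\in B$, the maximal coupling of a single-site update at a neighbour $u\in N_B(v)$ creates a new disagreement at $u$ with probability at most $\lambda/(1+\lambda)\le (1-\epsilon)/d$, while an update at $v$ itself removes the disagreement. To absorb potentially high-degree vertices that sit deep inside $B$, I would measure distance with a weighted Hamming metric $\EdgeBlockWeight(X,Y)=\sum_{v\in X\oplus Y}\beta(v)$ rather than with the unweighted Hamming distance, where $\beta:B\to\mathbb{R}_{>0}$ is tailored to the block's breakpoint structure.

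Concretely, I would root $B$ (or a spanning tree of $B$, in the unicyclic case) at a vertex of $\inBound B$ and define $\beta$ by the reverse-weighting recursion of \eqref{def:ReverseWeight}, with the colouring propagation bound $((1+\epsilon)\degree)^{-1}$ replaced by the hard-core bound $\lambda$:
\[
\beta(w)=\min\Bigl\{1,\ \frac{\beta(\parent{w})}{(1+\epsilon/3)\,\lambda\cdot\degree_{in}(\parent{w})}\Bigr\}.
\]
Corollary~\ref{cor:FromBreakPointProd} together with the fact that every $w\in\inBound B$ is an $r$-breakpoint for large $r$ (Definition \ref{def:SparseBlockPart}) guarantees that along any root-to-leaf path the accumulated factor $\lambda\cdot\degree_{in}(\cdot)$ is bounded on average by $(1+\epsilon/3)^{-1}$, so every $\beta(w)$ lies in an interval $[1/\poly(|B|,d),\,1]$; the analogue of Lemma \ref{lemma:Weight2OfBoudnary} gives the uniform lower bound.

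With this weighting, a standard path-coupling calculation shows that for any adjacent pair $(X,Y)$ differing at $v\in B$ the expected rate of change of $\EdgeBlockWeight(X_t,Y_t)$ is at most
\[
-\beta(v)+\lambda\sum_{u\in N_B(v)}\beta(u)\ \le\ -(\epsilon/3)\,\beta(v),
\]
by construction of $\beta$. Continuous-time path coupling then yields a contraction rate of order $\epsilon$, so $\tau_B\le C(\epsilon,d)\cdot\log\!\bigl(|B|\cdot\beta_{\max}/\beta_{\min}\bigr)$. Substituting the block-size bound $|B|\le n^{1/(\log d)^2}$ from Lemma \ref{thrm:RapidMixBlockGnp} and the $\poly(|B|,d)$ control of $\beta_{\max}/\beta_{\min}$ gives $\tau_B\le n^{C_1}$ for some $C_1=C_1(\epsilon,d)$.

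The main obstacle is verifying the contraction uniformly over \emph{all} boundary conditions on $\outBound B$: an occupied outside neighbour forces the corresponding inner-boundary vertex to effective fugacity $0$, which only helps, so the worst case is the all-unoccupied boundary and the bound $\lambda/(1+\lambda)$ on the propagation probability is never exceeded; the weight scheme therefore applies uniformly. A secondary technicality is the unicyclic case, where cutting the single extra edge to obtain a spanning tree reduces one $\degree_{in}$ by one. Since cycles lie at distance at least $\max\{2\log(|C|\Delta),\frac{\log\log d}{\log d}(|C|+\log\Delta)\}$ from $\outBound B$ by Definition \ref{def:SparseBlockPart}(2c), this perturbation to $\beta$ and to the contraction rate is absorbed into a slight shrinking of the effective $\epsilon$ and does not affect the $n^{C_1}$ bound.
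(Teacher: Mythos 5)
Your path-coupling argument has a genuine gap at its central step: the claimed inequality $-\beta(v)+\lambda\sum_{u\in N_B(v)}\beta(u)\le-(\epsilon/3)\beta(v)$ does \emph{not} hold ``by construction,'' because the reverse-weighting recursion only controls the contribution of the \emph{children} of $v$, while in the single-site dynamics a disagreement also propagates \emph{upward}, from $v$ to $\parent{v}$, with worst-case probability $\lambda/(1+\lambda)$ (take all other neighbours of the parent unoccupied in both chains). Concretely, let $p$ be a high-degree vertex hidden inside $B$ with $\degree_{in}(p)=D$ and $\beta(p)=1$, so that your recursion forces $\beta(u)\approx\bigl((1+\epsilon/3)\lambda D\bigr)^{-1}$ for each child $u$ of $p$ (this is needed to absorb the $\lambda D\gg1$ forward branching). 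For a disagreement sitting at such a child $u$, the backward term alone contributes $\lambda\beta(p)\approx(1+\epsilon/3)\lambda^{2}D\,\beta(u)$, so contraction requires $\lambda^{2}D\lesssim1$, i.e.\ $D\lesssim d^{2}$. But the blocks exist precisely to hide vertices of degree up to $\Delta=\Theta(\log n/\log\log n)\gg d^{2}$, and the block's boundary condition cannot prevent this configuration since $p$ lies deep in the interior. The obstruction is not an artifact of your particular weights: any vertex-weighted Hamming metric must satisfy both $\lambda D\,\beta(u)<\beta(p)$ (center-to-leaves) and $\lambda\,\beta(p)<\beta(u)$ (leaf-to-center), which is impossible once $\lambda^{2}D>1$. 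So worst-case one-step contraction of the block dynamics' single-site chain simply fails on high-degree stars in this regime, and appealing to Corollary \ref{cor:FromBreakPointProd} or the analogue of Lemma \ref{lemma:Weight2OfBoudnary} does not help, as those results govern only the forward (root-to-boundary) direction of the percolation weights. A further warning sign is that your argument, if correct, would give $\tau_B=O(1)$ (contraction at rate $\Theta(\epsilon)$ in continuous time), which is far stronger than the $n^{C_1}$ the lemma asserts.

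The paper avoids coupling within the block altogether: it invokes the recursive tree decomposition of Mossel and Sly (Theorem 4.2 and Lemmas 4.1, 4.4 of \cite{MS2}), in which the relaxation time of the single-site hard-core dynamics on a tree is bounded exponentially in the maximal ``path density'' $\sum_{u\in P}\bigl(\log\degree(u)+\lambda\degree(u)\bigr)$ over root-to-leaf paths; this is the exact analogue of Theorem \ref{thrm:TreeRelaxationBound} and Proposition \ref{prop:MixingStar} used for colourings, where high-degree stars are handled by a star-block recursion rather than by metric contraction. Combined with Claim \ref{claim:PathDensityGnp}, which shows that with high probability every relevant path in $G(n,d/n)$ has density $O(\log n/(\log d)^{2})$, this yields $\tau_B\le n^{C_1}$. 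If you want to salvage a self-contained proof, you should follow that route: decompose the block into stars, bound the relaxation time of each star (allowing it to be polynomial, or even exponential in $\lambda\degree$, rather than demanding contraction), and recurse via Proposition \ref{prop:SingleSiteVsBlock}, paying a multiplicative factor per level that accumulates to $\exp(O(\text{path density}))$.
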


\noindent
 In light of Claim \ref{claim:PathDensityGnp},   Lemma \ref{lemma:BlockRelaxHardCore} 
follows directly from Theorem 4.2 and Lemma 4.1 and 4.4 in \cite{MS2}.

Theorem \ref{thrm:RapidMixingAvrgDegGraphHC} follows  by combining Lemma \ref{lemma:BlockRelaxHardCore}
 with arguments which are very similar to those we used for the coloring model.

\subsection{The relaxation time for the blocks - Proof of Lemma \ref{lemma:BlockRelaxColor}}\label{sec:lemma:BlockRelaxation}

We proceed by bounding appropriately the  quantities $\tau_B$ for every $B\in {\cal B}$. 
 As discussed  earlier, the blocks of $G$   are trees with at most one extra edge.

\begin{definition}
For a tree $T$ rooted at $v$,  let the {\em maximal path density} be defined as
$ m(T,v)= \max_{P}{\cal J}(P)$, 
where the maximum is  over all the paths $P$ in $T$ which start from $v$.
\end{definition}

\noindent
For a graph  $G\in {\cal L} (\epsilon, d, \Delta)$, with block partition ${\cal B}$,  
let ${\cal T}={\cal T}(G, {\cal B})$ be the family  which contains the following rooted trees,
 subgraphs of $G$: ${\cal T}$ includes all  the tree-like, multi-vertex blocks in ${\cal B}$.
 The root of each tree is a high degree vertex (any) inside the block.
Also, for each $B\in {\cal B}$ that is  unicyclic  with cycle $C=w_1, \ldots, w_{\ell}$ ,
the set ${\cal T}$ contains every subtree $T_i$ that hang from the cycle. That is,  for $i=1, \ldots, \ell$,
 $T_i$ is  the induced subgraph of $B$ that corresponds to the set of vertices
in the connected component of $B$ that contains vertex $w_i$ once we delete all the edges of 
$C$.  The root for $T_i$ is the vertex $w_i$.
For each $T\in{\cal T}$which belongs to the block $B$,  we  let $\outBound T$ be the set of 
vertices in $\outBound B$ which are incident to $T$.

The following result relates the relaxation times of the trees in ${\cal T}$ and their, corresponding,
maximal path density.

\begin{theorem}\label{thrm:TreeRelaxationBound}
For any $\epsilon, \Delta>0$ and sufficiently large $d>0$ let  $k\geq (\alpha+\epsilon)d$.
Consider $G\in \IntrstGraphFam (\epsilon, d, \Delta)$ and block partition ${\cal B}$. 
For any  $T\in {\cal T}$, with root $v$,  and  boundary condition $\sigma(\outBound T)$,  the relaxation time $\tau_{\rm rel}$ 
of the Glauber dynamics,  we have
$
\tau_{\rm rel}(T)\leq \exp\left( m(T,v)) \right).
$
\end{theorem}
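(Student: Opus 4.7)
The plan is to prove the bound by a recursive block-factorization along the tree, in the spirit of Lucier--Molloy~\cite{LucMol} and Tetali--Vera--Vigoda--Yang~\cite{TVVY}, adapted to the non-regular setting. Fix the boundary condition $\sigma$ on $\outBound T$ and root $T$ at $v$. For each vertex $u\in T$, write $T_u$ for the subtree rooted at $u$ (with the induced boundary inherited from $\sigma$), and $m(T_u,u)$ for its maximal path density with root $u$. I will prove $\tau_{\rm rel}(T_u)\le \exp(m(T_u,u))$ by induction on the height of $u$, descending from the leaves. This immediately gives the theorem when $u=v$.

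Given $u$ with children $y_1,\ldots,y_s$, I would apply Martinelli's block-factorization inequality (Cesi's version, see~\cite{Martinelli}) to the two-level partition of $T_u$ whose ``root block'' is $\{u\}$ and whose ``outer blocks'' are the subtrees $T_{y_1},\ldots,T_{y_s}$. This yields a bound of the form
\[
\tau_{\rm rel}(T_u)\ \le\ \tau_{\rm block}(u)\cdot \max\!\Bigl\{\tau_{\rm loc}(u),\ \max_i \tau_{\rm rel}(T_{y_i})\Bigr\},
\]
where $\tau_{\rm loc}(u)$ is the relaxation time of a single heat-bath step at $u$ conditioned on the colors of its children, and $\tau_{\rm block}(u)$ is the relaxation time of the two-block chain that alternately resamples $u$ and resamples the entire family of child-subtrees. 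By the inductive hypothesis $\tau_{\rm rel}(T_{y_i})\le \exp(m(T_{y_i},y_i))$, and the path attaining $m(T_u,u)$ extends through $u$ to one of the children, so it suffices to show
\[
\tau_{\rm block}(u)\cdot \tau_{\rm loc}(u)\ \le\ \exp\!\bigl(450\bigl(\log\degree(u)+\degree(u)/k\bigr)\bigr).
\]

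For the local factor $\tau_{\rm loc}(u)$ I would invoke the local uniformity estimate from Theorem~\ref{thrm:Uniformity1.76} (applied to $T_u$ with fixed $\sigma$): under the Gibbs measure conditioned on the children's colors, the number of available colors at $u$ is concentrated around $(1-\epsilon^2)k\exp(-\degree(u)/k)$, so a single heat-bath step at $u$ equilibrates in time $\tau_{\rm loc}(u)\le C k\exp(\degree(u)/k)$. The block factor $\tau_{\rm block}(u)$ is controlled by a one-step path-coupling / disagreement-percolation argument identical to the one in Section~\ref{sec:SCD} and Proposition~\ref{prop:StochasticDomination}: a disagreement at $u$ propagates to each child $y_i$ with probability at most $((1+\epsilon)\degree(u))^{-1}$, which together with the uniformity lower bound on available colors contracts the Hamming distance and gives $\tau_{\rm block}(u)\le C'\degree(u)$. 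Multiplying the two bounds and absorbing the constants into the generous coefficient $450$ closes the induction.

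The main obstacle will be the high-degree vertices (those with $\degree(u)\gg d$): for such $u$ the local factor $\degree(u)\exp(\degree(u)/k)$ is enormous, and a path that accumulates many of these would make the product explode. This is exactly why the statement is phrased in terms of the weighted path density $\mathcal J(P)$ rather than tree height: the structure of a sparse block partition (Definition~\ref{def:SparseBlockPart}) forces high-degree vertices to be separated from the boundary by long buffers of low-degree breakpoints, so the maximizing root-to-leaf path in $T$ has controlled total weight $m(T,v)$, which for the trees arising from $G(n,d/n)$ is shown in Lemma~\ref{thrm:RelaxationWeightBound} to be $O(\log n/(\log d)^2)$. A secondary point is the case where $T\in\mathcal T$ hangs from a cycle of a unicyclic block (so $v=w_i$ has one additional neighbor on the cycle outside $T$); that extra neighbor is absorbed into the boundary condition $\sigma(\outBound T)$, and the recursion is otherwise unchanged.
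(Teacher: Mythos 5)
Your overall skeleton -- induct over the tree, at each vertex $u$ compare the single-site dynamics on $T_u$ to the two-level block dynamics with blocks $\{u\},T_{y_1},\ldots,T_{y_s}$ via Proposition \ref{prop:SingleSiteVsBlock}, and absorb a per-vertex factor $\exp\bigl(450(\log\degree(u)+\degree(u)/k)\bigr)$ into the path density $m(T,v)$ -- is the same as the paper's. The gap is in how you bound the star-block factor. You claim that a disagreement at the root propagates to each child block with probability at most $\bigl((1+\epsilon)\degree(u)\bigr)^{-1}$, so path coupling contracts and $\tau_{\rm block}(u)\le C'\degree(u)$. This fails exactly at the high-degree vertices you flag as the obstacle: the propagation probability to a child $y_i$ is governed by $y_i$'s marginal (roughly $1/(k-\degree(y_i))$, or $1/((1+\epsilon)\degree_{in}(y_i))$ via Corollary \ref{corr:GMPBiasColour}), not by $\degree(u)$. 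For a root with $R=\degree(u)\gg k$ whose children are near-leaves, one step of the star dynamics removes the root disagreement with probability about $1/(R+1)$ but creates a child disagreement with probability about $R/((R+1)k)$ per step, so the expected Hamming distance \emph{grows}; there is no contraction, weighted or not. This is precisely why the paper's Proposition \ref{prop:MixingStar} does not prove contraction at all: it builds a coupling around rare ``success'' epochs (Lemma \ref{lemma:SuccessTime}, with success probability $\exp(-O(R/k)-O(\log(R/k)))$), which is where the $\exp(450R/k)$ factor actually comes from, and then Lemma \ref{lemma:MixingLowDegreeStar} handles the low-degree stars separately.

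A second, related misstep is the input you use for uniformity: Theorem \ref{thrm:Uniformity1.76} is a statement about the block dynamics after a burn-in period, not about the conditional Gibbs measure on $T_u$, so it cannot be ``applied to $T_u$ with fixed $\sigma$'' to control available colors at $u$ in stationarity. What the star argument needs (and what the paper uses to verify assumption \eqref{eq:SMAssumption} of Proposition \ref{prop:MixingStar}) is the static spatial-mixing estimate of Proposition \ref{prop:UnBiasHighDegree4Trees} / Proposition \ref{prop:SpatialMixing4HighDegrees}: marginals at high-degree vertices (and their neighbors) deep inside a block are within $d^{-11}$ of uniform, proved by disagreement percolation along the weighted paths of the block partition. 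Also note that in the Martinelli factorization the singleton block $\{u\}$ has internal relaxation time $1$, so your $\tau_{\rm loc}(u)\le Ck\exp(\degree(u)/k)$ is not the right object; the $\exp(\degree(u)/k)$ cost enters through the coupling success probability, not through any local equilibration time. Without a correct treatment of the high-degree star -- i.e., a substitute for Proposition \ref{prop:MixingStar} -- the induction does not close.
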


\noindent
The proof of Theorem \ref{thrm:TreeRelaxationBound} appears in Section \ref{sec:thrm:TreeRelaxationBound}.

From Lemma \ref{thrm:RelaxationWeightBound} and Theorem \ref{thrm:TreeRelaxationBound} we get that 
if $G(n,d/n)\in {\cal L} (\epsilon, d, \Delta)$, where $\epsilon$, $d$ and $\Delta$ are as in Lemma \ref{thrm:RapidMixBlockGnp},
then the continuous time Glauber dynamics on $T\in {\cal T}(G(n,d/n), {\cal B})$
exhibits relaxation time
\begin{equation}\label{eq:TinCalTRelaxTime}
\tau_{\rm rel}(T)\leq n^{1/(\log d)^2}.
\end{equation}

\noindent
The above implies that for a tree-like block $B\in {\cal B}$ the lemma is true.

Consider the unicyclic block $B$ with arbitrary boundary condition at $\outBound B$. 
Let $C=w_1, \ldots, w_{\ell}$ be the cycle inside $B$, for some $\ell\leq \log n/(\log d)^5$.
Consider    $(Z^B_t)_{t\geq 0}$  the continuous time,  block dynamics, on $B$ with arbitrary boundary condition at $\outBound B$. 
The set of blocks is the subtrees $T\in {\cal T}$ which intersect with the cycle $C$.
Using path coupling and Proposition  \ref{prop:SpatialMixing4HighDegrees}  it is elementary to show that
the relaxation time of the block dynamics $\tau_B\leq 10 \log |C|= O(\log\log n)$.

Let $(X_t)_{t\geq 0}$ be the Glauber dynamics on $B$ with arbitrary boundary at $\outBound B$.
The bound on relaxation time for $(Z^B_t)_{t\geq 0}$,  combined with \eqref{eq:TinCalTRelaxTime} 
and  Proposition \ref{prop:SingleSiteVsBlock}, imply that the relaxation time for
$(X_t)$ is such that $\tau_{B} = O\left( n^{1/(\log d)^2}\log\log n \right) \leq O\left( n^{2/(\log d)^2} \right)$.
The lemma follows

\section{Proof of Theorem \ref{thrm:TreeRelaxationBound}}\label{sec:thrm:TreeRelaxationBound}

For the tree $T$ and some vertex $u\in T_u$, let $T_u$ denote the subtree of $T$ which contains $u$
and all its descendants. Unless otherwise specified, we assume that the root of $T_u$ is $u$.
Also, for a boundary set $\outBound T$ of  $T$,  we let $\outBound T_u$ contain every $w\in \outBound T$
which is a boundary at $T_u$, as well.

We also have the following result whose proof appears in Section \ref{sec:prop:MixingStar}.

\begin{proposition}\label{prop:MixingStar}
For $\epsilon, d, \Delta, k$ as in Theorem \ref{thrm:TreeRelaxationBound} the following is true:

Let $T\in {\cal T}$ and let $v\in T$. Consider $T_u$  and let $w_1, \ldots, w_{R}$  
be the children of the root, where  $R=\degree(v)$. 
 Consider the block dynamics with set of blocks ${\cal M}=\{\{v\}, T_{w_1}, \ldots, T_{w_R}\}$.
Assume that  for any $\sigma(\outBound T)$, any $v\in \{u, w_1, \ldots, w_{\ell}\}$ 
for the  random coloring $Z$ we have
\begin{equation}\label{eq:SMAssumption}
\left| \Pr[Z(v)\ | \ Z(\outBound T)=\sigma(\outBound T)]-1/k  \right| \leq 100/k^2.
\end{equation}
Then, under any boundary condition at $\outBound T$,  the  block dynamics $(X_t)_{t\geq 0}$  exhibits 
\[
\tau_{\rm rel}(T_u)  \leq  (10 R^2 \log R)^{15} \exp\left( 450 R/k\right).
\]
\end{proposition}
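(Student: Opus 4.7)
The plan is to apply the Madras--Randall decomposition theorem to the block dynamics on $T_u$, partitioning the state space of proper $k$-colorings according to the color $c$ of the root $u$. Write $\Omega_c$ for the set of proper colorings of $T_u$ (with the fixed boundary $\sigma(\outBound T)$) satisfying $X(u)=c$, let $\bar{P}$ be the induced projection chain on $[k]$, and let $P_c$ be the restriction of the block dynamics to $\Omega_c$. The decomposition theorem gives
\[
\tau_{\rm rel}(T_u)\;\le\; C\,\tau_{\rm rel}(\bar{P})\cdot\max_{c\in[k]}\tau_{\rm rel}(P_c)
\]
for an absolute constant $C$, so it suffices to bound each factor separately.

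The restriction chain is the easy factor. Conditional on $X(u)=c$ the subtrees $T_{w_1},\ldots,T_{w_R}$ are mutually independent under $\pi$, their only dependence coming through $u$. The restriction to $\Omega_c$ refuses the root-block update and otherwise picks one of the $R$ subtree-blocks uniformly and perfectly resamples it from the correct conditional marginal in a single step. This is precisely a single-site Glauber chain on a product of $R$ independent factors, each of which is fully mixed by one update, so a standard coupon-collector / block-dynamics argument gives $\tau_{\rm rel}(P_c)=O(R\log R)$ uniformly in $c$.

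For the projection chain $\bar P$ I would use the spatial-mixing hypothesis \eqref{eq:SMAssumption}. When $\{u\}$ is chosen (probability $1/(R+1)$), the new color is drawn uniformly from $A(u)=[k]\setminus X(N(u))$, which has size at least $k-R$, so each color is picked with probability at most $1/(k-R)\le (1/k)\exp(O(R/k))$. A single root-update thus already pulls the law of $X(u)$ to within a multiplicative $\exp(O(R/k))$ factor of the uniform measure on $A(u)$. The assumption \eqref{eq:SMAssumption}, applied at $u$ and at each child $w_i$, controls the bias of this near-uniform measure relative to the true stationary marginal of $X(u)$, and a Dirichlet-form comparison (or equivalently a two-stage coupling that waits $O(R\log R)$ steps for the children to equilibrate and then executes one root-update) yields $\tau_{\rm rel}(\bar P)\le C(R+1)\exp(O(R/k))$. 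Multiplying the two factors yields the stated bound, the polynomial prefactor $(10R^2\log R)^{15}$ absorbing the absolute constants lost in the decomposition theorem and in passing from the multiplicative $\exp(O(R/k))$ factor to the explicit $\exp(450R/k)$.

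The main obstacle is the projection step: extracting the sharp exponent $\exp(O(R/k))$ rather than the naive $\exp(O(R))$ one would get by ignoring that the children also equilibrate. Conditioning on $X(u)=c$ could in principle bias the child marginals far from uniform, so that the effective shortfall $|[k]\setminus A(u)|$ deviates from its stationary value $\approx R(1-e^{-R/k})$. What rescues the argument is that \eqref{eq:SMAssumption} is assumed not just at $u$ but also at each $w_i$: the child marginals are pinned near uniform no matter how we condition on $X(u)$. This lets one estimate $\Exp{|A(u)|}$ tightly and extract the precise exponential factor; deleting the children-level SM hypothesis would break the bound.
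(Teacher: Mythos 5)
Your proposal takes a genuinely different route from the paper. The paper argues dynamically: it couples two copies of the block dynamics, calls a root-update time a ``success'' when the next root update is far away, the two available-color sets at the root are large and nearly equal, and few children disagree; Claim \ref{lemma:ConvergenceOnSuccess} gives coalescence with constant probability after a success, and Lemma \ref{lemma:SuccessTime} lower-bounds the success probability by $\exp(-15\max\{\log(R/k),5\}-450R/k)$, the $\exp(-450R/k)$ coming from asking that a fixed set of $100$ colors be avoided by every child over a window of length about $R$. Your static route --- decompose by the color at the root, note that the restriction chains are random-scan perfect resamplings of a product measure (relaxation time at most $R+1$, so your $O(R\log R)$ is safe), and beat the projection chain on $[k]$ with a Doeblin bound $\bar{P}(c,c')\ge \frac{1}{(R+1)k}e^{-O(R/k)}$ --- is cleaner and, once correct, would give an even smaller polynomial prefactor. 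Both proofs ultimately rest on the same estimate: each child takes any fixed color with probability $O(1/k)$.

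However, there is a genuine gap exactly at the step you yourself flag as the crux. You claim that \eqref{eq:SMAssumption}, being assumed at each $w_i$, keeps the child marginals near $1/k$ ``no matter how we condition on $X(u)$''. The hypothesis controls $\Pr[Z(w_i)=c'\mid Z(\outBound T)=\sigma(\outBound T)]$ only; it says nothing about the further conditioning on $Z(u)=c$, and the naive bound $\Pr[Z(w_i)=c'\mid Z(u)=c]\le \Pr[Z(w_i)=c']/\Pr[Z(u)=c]$ gives only $O(1)$, which would inflate your exponent to $e^{O(R)}$ and destroy the claimed bound. The estimate you need is true, but it requires the tree structure and not just \eqref{eq:SMAssumption}: conditionally on $Z(u)=c$ the subtrees are independent and $\Pr[Z(w_i)=c'\mid Z(u)=c]=q_i(c')\,\mathbf{1}\{c'\ne c\}/(1-q_i(c))$, where $q_i$ is the marginal of $w_i$ in its own subtree $T_{w_i}$ with the root removed; one then shows, e.g., $q_i(c')\le \Pr[Z(w_i)=c']/\Pr[Z(u)\ne c']\le 2/k$ by using \eqref{eq:SMAssumption} at both $u$ and $w_i$. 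That is an extra (short) argument, not a restatement of the hypothesis, and without it your projection-chain estimate is unsupported. A smaller point: describing the projection-chain analysis as ``waiting $O(R\log R)$ steps for the children to equilibrate and then executing one root-update'' conflates the projection chain, whose transition kernel already averages the children over the stationary conditional law, with the original dynamics; what you actually need (and what suffices) is the one-step Doeblin bound above, together with a precise form of the decomposition theorem supplying the constant you absorb into $(10R^2\log R)^{15}$.
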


\noindent
For any   $T\in {\cal T}$ and any $u\in T$,   Proposition \ref{prop:UnBiasHighDegree4Trees} implies that if $u$ is a high-degree vertex
 or it is a low-degree vertex which is adjacent to high degree vertices then the spatial mixing assumption 
 \eqref{eq:SMAssumption} is true. 
We also have the following result whose proof appears in Section .\ref{sec:lemma:MixingLowDegreeStar}.
\begin{lemma}\label{lemma:MixingLowDegreeStar}
For $\epsilon, d, \Delta, k$ as in Theorem \ref{thrm:TreeRelaxationBound} the following is true:

Let $T\in {\cal T}$  and let $v\in T$. Consider $T_u$  and let $w_1, \ldots, w_{R}$  
be the children of the root, where  $R=\degree_{in}(v)$. 
Let  $(X_t)_{t\geq 0}$ be  the block dynamics with set of blocks ${\cal M}=\{\{v\}, T_{w_1}, \ldots, T_{w_R}\}$.
Assume that the degrees of $v, w_1, \ldots, w_{R}$ are at most  $\TDeg$.

Under any boundary condition at $\outBound T_u$,    $(X_t)_{t\geq 0}$ exhibits
\[
\tau_{\rm rel}(T_u)\leq 10^4 \exp\left( 5\max\{\log (R/(k-\TDeg)), 5\}\right) \log R
\]
\end{lemma}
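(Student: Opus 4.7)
The plan is to prove the lemma via a coupling analysis adapted to the star-structured block decomposition ${\cal M}$, exploiting the fact that in the low-degree regime ($\degree(v),\degree(w_i) \leq \TDeg$) every vertex has at least $k-\TDeg$ available colors. Consider two copies of the block dynamics driven by a common source of randomness that updates the same block in both chains and, conditional on the block, uses the inductive maximal coupling from Section \ref{sec:prop:Tail4ZetaIntro}. The first reduction is to observe that whenever block $T_{w_i}$ is updated, all disagreements inside $T_{w_i}$ are freshly drawn from the marginals given the current boundary values at $v$ and at $\outBound T_{w_i}$; by Corollary \ref{corr:GMPBiasColour} a disagreement at the child $w_i$ itself appears with probability at most $1/(k-\TDeg)$, and conditional on agreement at $w_i$ the identity coupling extends throughout $T_{w_i}$. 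Hence it suffices to analyse the reduced process on the $R{+}1$ coordinates tracking whether each of $v, w_1, \ldots, w_R$ currently disagrees, since the interiors of the subtrees are ``slaved'' to their boundary.

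I would then run path coupling on this reduced star process using the weighted distance $\mathbf{1}\{v \text{ dis.}\} + \eta \sum_i \mathbf{1}\{w_i \text{ dis.}\}$, tuning $\eta$ to balance the two propagation directions. An update of $\{v\}$ always removes a disagreement at $v$ and, conditional on some children currently disagreeing, creates one with probability at most (number of disagreeing children)$/(k-\TDeg)$ by Corollary \ref{cor:GeneralBiasColour} applied to the marginal at $v$. Symmetrically, an update of $T_{w_i}$ removes any disagreement at $w_i$ and creates one with probability at most $1/(k-\TDeg)$ exactly when $v$ currently disagrees. In the easy regime $R \leq k - \TDeg$ the weighted distance contracts at an $\Omega(1)$ rate, and the standard identification of coupling contraction with the spectral gap yields $\tau_{\rm rel}(T_u) = O(\log R)$, well within the claimed bound.

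The main obstacle is the regime $R \gg k - \TDeg$, where no linear metric can contract per single step and the factor $\exp(5\max\{\log(R/(k-\TDeg)),5\})$ becomes essential. Here I would follow the strategy used for star graphs by Lucier--Molloy \cite{LucMol}: because the Poisson clocks on the $R$ subtrees are conditionally independent given the trajectory of $v$, a coupon-collector estimate shows that after $O(\log R)$ units of continuous time every block $T_{w_i}$ has been refreshed at least once, erasing the coupling's memory of any initial disagreement confined to the interior of a subtree. Combining this refresh with the elementary per-cycle estimate that the reduced coordinates simultaneously agree with probability at least $((k-\TDeg)/R)^{O(1)}$, together with Jerrum's mean-return-time argument relating the expected coupling time to the spectral gap, yields the stated bound $\tau_{\rm rel}(T_u) \leq 10^4 \exp(5\max\{\log(R/(k-\TDeg)),5\})\log R$. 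The delicate point I expect to grind through in full detail is the second-moment/variance estimate needed to turn the ``average'' refresh-and-couple bound into a high-probability coupling time, since the updates of $v$ and of the $T_{w_i}$'s are not independent; but this is exactly the step that the conditional independence of the subtree clocks makes tractable.
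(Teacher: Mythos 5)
Your plan breaks down exactly where the lemma needs work. Under the hypotheses, $R$ can exceed $k-\TDeg$ (e.g.\ $R$ close to $\TDeg=(1+\epsilon/6)d$ while $k-\TDeg\approx(\alpha-1)d$), so the regime you dismiss to the ``hard case'' is the generic one, and there your argument rests on the asserted ``elementary per-cycle estimate'' that after a refresh the reduced coordinates simultaneously agree with probability at least $((k-\TDeg)/R)^{O(1)}$. You never prove this, and as a general statement about $R\gg k-\TDeg$ it is false: after one refresh of all subtrees, each child disagrees with probability of order $1/(k-\TDeg)$, so typically about $R/(k-\TDeg)$ children disagree, and forcing few enough disagreements for the root update to couple costs a probability that is exponentially small in $R/(k-\TDeg)$, not polynomial in $(k-\TDeg)/R$ --- this is exactly the shape of the bound the paper proves in the high-degree analogue (Lemma \ref{lemma:SuccessTime}, where the success probability carries a factor $\exp(-450R/k)$). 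What rescues the present lemma is that its hypotheses force $R/(k-\TDeg)=O(1)$ (in fact $<e^{5}$), so a per-epoch success probability of order $\exp(-O(\max\{\log(R/(k-\TDeg)),1\}))$ is a constant; the paper's proof runs precisely this ``success epoch'' bookkeeping, with the success event phrased in terms of $|\AvialColors_{X}(u)\oplus\AvialColors_{Y}(u)|$, the sizes of the availability sets, and the number of disagreeing children being at most $100R/(k-\TDeg)$. You neither notice the constraint $R/(k-\TDeg)=O(1)$ nor supply the conditional estimates; indeed the ``second-moment/variance'' step you flag at the end is exactly the missing argument. A further (repairable) issue is that your ``reduced process on the $R+1$ disagreement indicators'' is not a Markov chain --- the coupling probabilities depend on the actual colors through the availability sets, which is why the paper tracks them explicitly --- so even the reduction needs a domination statement you do not give; and the bound $1/(k-\TDeg)$ at a child follows from Corollary \ref{cor:GeneralBiasColour}, not Corollary \ref{corr:GMPBiasColour}, since the children's descendants may have high degree.

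Ironically, the part of your proposal that is sound nearly suffices on its own: the weighted path coupling with weight $1$ at $v$, weight $\eta$ at each $w_i$, and negligible weight on subtree interiors contracts as soon as $\frac{1}{k-\TDeg}<\eta<\frac{k-\TDeg}{R}$, i.e.\ whenever $R\lesssim(k-\TDeg)^2$ (take $\eta\approx 1/\sqrt{R}$), and $R\le\TDeg\ll(k-\TDeg)^2$ always holds here for large $d$. Pushing that computation through, together with the standard step converting contraction of a path-coupling metric into a spectral-gap bound, would give a complete and arguably simpler proof than the epoch argument --- but as written you cap the path-coupling regime at $R\le k-\TDeg$ and hand the remaining (and only interesting) range to the unsupported refresh-and-couple step, so the proposal as it stands has a genuine gap.
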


\noindent
Note that   Lemma \ref{lemma:MixingLowDegreeStar} includes the case where  $u$ is such that $\degree_{out}(u)>0$, 
i.e., some of the neighbors of $u$ belong to $\outBound T_u$ and have  frozen color assignment.
Unifying  Lemma \ref{lemma:MixingLowDegreeStar} and Proposition \ref{prop:MixingStar}, 
we get the following: for any $T\in {\cal T}$ and any $u\in T$ we have that
 \begin{equation}\label{eq:BlockRelaxUnified}
\tau_{\rm rel}(T_u)\leq \exp\left( 450\left( \log(\degree(v)) +{\degree(u)}/{k}\right)\right).
\end{equation}

\noindent
In light of \eqref{eq:BlockRelaxUnified}, Theorem \ref{thrm:TreeRelaxationBound}
follows by combining  a simple induction and Proposition \ref{prop:SingleSiteVsBlock}.
Consider $T\in {\cal T}$. If $T$ is a single vertex, then $\tau_{\rm rel}(T)=1$. Assume, now, that the root $v$ of $T$ has children
$w_1, \ldots, w_{\ell}$, for some $\ell>0$. Then by the induction hypothesis we have that 
\begin{equation}\label{eq:IndHypoRelax}
\tau_{\rm rel}(T_{w_i}) \leq \exp\left( m(T_{w_i}, w_i)\right) \qquad \textrm{ for $i=1, \ldots, \ell$}. 
\end{equation}
Consider the block dynamics on $T$ where the blocks are, the root $v$ and the subtrees $T_{w_i}$.
The relaxation time for this process is given by \eqref{eq:BlockRelaxUnified}.
The theorem follows from \eqref{eq:BlockRelaxUnified}, \eqref{eq:IndHypoRelax} and Proposition \ref{prop:SingleSiteVsBlock}.

\subsection{Proof of Proposition \ref{prop:MixingStar}}\label{sec:prop:MixingStar}

Let $(X_t), (Y_t)$ be two copies of the {\em discrete time} block dynamics such that $X_0, Y_0$ are arbitrary $k$-colorings of $T$.
We present a coupling such that
after $ R^{5}\exp\left( 100 R/k\right)$ steps the probability of the event $X_t\neq Y_t$ is less than $e^{-1}$.

The coupling is such that we update the same block at each copy of the dynamics. When we update a
block are couple the configurations maximally, i.e., when we update block $B$ at time $t$, we minimize
the probability of the event $X_t(B)\neq Y_t(B)$.

Let $t_1, t_2, \ldots$ be the random times at which $u$ is updated in the coupling. 
For $i \geq 1$, we say that $t_i$ is a ``success" if the following hold:
\begin{enumerate}
\item $ |t_{i+1}-t_i| \geq 3R \max \{ \log(R/k), 5\}$
\item 
we have that
\begin{itemize}
	\item $|A_{X_{t_i}}(u) \oplus A_{Y_{t_i}}(u)| \leq 10$
	\item $\min\{ |A_{X_{t_i}}(u), A_{Y_{t_i}}(u)\} \geq 100$
\end{itemize}
\item the number of vertices $w_j$ such that $X_{t_{i}}(w_j)\neq Y_{t_{i}}(w_j)$ is less than $100R/k$.
\end{enumerate}

\begin{claim}\label{lemma:ConvergenceOnSuccess}

If $t_i$ is a success, for  $i\geq 1$, then there is a coupling such that 
$
\Pr \left [X_{t_{i+1}} \neq Y_{t_{i+1}}  \right] \leq e^{-2}.
$
\end{claim}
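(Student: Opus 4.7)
The plan is to decompose the coupling over the interval $[t_i, t_{i+1}]$ into two stages. First I couple the block update of $u$ at time $t_i$, and second I couple the subtree-block updates that occur during $(t_i, t_{i+1})$. In the first stage, both chains select a new color for $u$ from the uniform distributions on $A_{X_{t_i}}(u)$ and $A_{Y_{t_i}}(u)$ respectively. A direct calculation shows the total variation distance between these uniforms is at most $|A_{X_{t_i}}(u) \oplus A_{Y_{t_i}}(u)|/\min\{|A_{X_{t_i}}(u)|, |A_{Y_{t_i}}(u)|\}$, which by success conditions (2a)--(2b) is bounded by $10/100=1/10$. Thus the maximal coupling of the update of $u$ achieves $X_{t_i}(u) = Y_{t_i}(u)$ with probability at least $9/10$.

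Conditioning on agreement at $u$, I will argue that the full equality $X_{t_{i+1}} = Y_{t_{i+1}}$ holds with probability at least $1 - 100 e^{-9}$. The key observation is that throughout $(t_i, t_{i+1})$ the singleton block $\{u\}$ is not updated (by the definition of $t_{i+1}$), so $u$ maintains its agreed color until the next $u$-update. For a subtree block $T_{w_j}$, when it is chosen for update the marginal distribution of its new coloring depends only on the boundary coloring at $u$ and at $\outBound T_{w_j}$, both of which agree across the two chains; using the vertex-by-vertex maximal coupling from Section \ref{sec:prop:Tail4ZetaIntro}, started at $w_j$, we can couple this block update so that $X(T_{w_j}) = Y(T_{w_j})$ with probability $1$. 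Moreover that same coupling keeps any currently-agreeing subtree fully agreeing, since its boundary stays in agreement. Therefore reaching full coupling at $t_{i+1}$ reduces to requiring that every subtree currently containing any disagreement be updated at least once during $(t_i, t_{i+1})$; by success condition (3), the number of such bad subtrees is at most $M := 100 R/k$.

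To bound this coupon-collector event, I will use success condition (1), which guarantees the interval contains at least $L := 3 R \max\{\log(R/k), 5\}$ steps; conditioning on $u$ not being updated during these steps, each one selects uniformly among the $R$ subtree blocks. Hence the probability a fixed subtree is never picked is at most $(1 - 1/R)^{L-1} \leq e \cdot e^{-3\max\{\log(R/k), 5\}}$. A union bound over the $M \leq 100 R/k$ bad subtrees then splits into two regimes: when $\log(R/k) \geq 5$, the bound is $100 e \cdot (R/k)(k/R)^{3} = 100 e \cdot (k/R)^2 \leq 100 e^{-9}$; when $R/k < e^5$, the bound is $100 e \cdot (R/k)\cdot e^{-15} \leq 100 e \cdot e^{5}\cdot e^{-15} = 100 e^{-9}$. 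Combining both stages yields total failure probability at most $1/10 + 100 e^{-9} < e^{-2}$, as claimed.

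The main challenge here is not conceptual — the structure is a textbook two-stage coupling combining a maximal color coupling at $u$ with a coupon-collector bound over the subtree blocks. The delicacy lies rather in the quantitative calibration: the three numerical thresholds in the success definition ($10$, $100$, and $100R/k$) together with the interval length $L = 3 R \max\{\log(R/k), 5\}$ must be chosen so that both regimes of the ratio $R/k$ simultaneously fit under the $e^{-2}$ target after paying the $1/10$ loss at the update of $u$ and absorbing the factor $100$ from the union bound over bad subtrees. Verifying this arithmetic is the only real work.
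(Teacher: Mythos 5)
Your proposal is correct and follows essentially the same route as the paper's proof: a maximal coupling of the $u$-update whose failure probability is bounded by $|A_{X_{t_i}}(u)\oplus A_{Y_{t_i}}(u)|/\min\{|A_{X_{t_i}}(u)|,|A_{Y_{t_i}}(u)|\}\leq 10/100$, followed by identity coupling of every subtree block refreshed while $u$ agrees, so that failure reduces to some disagreeing subtree never being updated in an interval of length $3R\max\{\log(R/k),5\}$. The only cosmetic difference is that you control this last event by a direct union bound over the at most $100R/k$ disagreeing children, whereas the paper applies Markov's inequality to the count $W$ of surviving disagreeing children; the resulting numerics ($1/10+100e^{-9}<e^{-2}$) check out.
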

\begin{proof}

Consider the time interval ${\cal I}(t_i, t_{i+1})$. Note that if $X_{t_i}(v)=Y_{t_i}(v)$, then 
in the time interval ${\cal I}$ at every update of the block $T_{w_j}$ can be done by using
identical coupling. This means that for every $w_j$ whose block is updated at least once
during ${\cal I}$ we have $X_{t_{i+1}}(w_j)=Y_{t_{i+1}}(w_j)$. Thus, if there exists 
$w_{j}$ such that  $X_{t_{i+1}}(w_j) \neq Y_{t_{i+1}}(w_j)$, then this must have been a
disagreement created at some $t<t_i$ and survived during the time interval ${\cal I}$.

Let $W$ be the number of children $w_i$ which  disagree at time $t_i$ and they  are not updated during the
 interval ${\cal I}$. If there are no such disagreements  we set  $W=0$.
Clearly it holds that
\begin{equation}\label{eq:target4lemma:ConvergenceOnSuccess}
\Pr[X_{t_{i+1}}=Y_{t_{i+1}}] \leq \Pr[X_{t_i}(u)=Y_{t_i}(u), W=0]=\Pr[ W=0\ |\ X_{t_i}(u)=Y_{t_i}(u)]\Pr[X_{t_i}(u)=Y_{t_i}(u)].
\end{equation}

\noindent
Our assumption that $t_i$ is success implies that
\begin{equation}\label{eq:lemma:ConvergenceOnSuccessA}
\Pr[X_{t_i}(u)\neq Y_{t_i}(u)]\leq 1/10.
\end{equation}
Note that each block $T_{w_j}$ such that $X_{t_{i}}(w_j) \neq X_{t_{i}}(w_j)$ is update during the time
interval ${\cal I}$ with probability at least $1-\min\{ (R/k)^{-2}, e^{-15}\}$.  Markov's inequality imply that
\begin{equation}\label{eq:lemma:ConvergenceOnSuccessB}
\Pr[W>0\ |\ X_{t_i}(u)= Y_{t_i}(u)] \leq \min\{ (R/k)^{-1}, e^{-12}\}.
\end{equation}
The result follows by plugging \eqref{eq:lemma:ConvergenceOnSuccessA} and \eqref{eq:lemma:ConvergenceOnSuccessB}
into \eqref{eq:target4lemma:ConvergenceOnSuccess}.
\end{proof}

\noindent
We also have the following result whose proof appears in Section  \ref{sec:lemma:SuccessTime}.
\begin{lemma}\label{lemma:SuccessTime}
For any $t_i\geq 3R \log R$  we have that $ \Pr[\textrm{$t_i$ is success}]\geq \rho$, 
where  $$\rho \geq \exp\left( -15\max\{\log (R/k), 5\} -450R/k \right).$$
\end{lemma}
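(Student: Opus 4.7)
The plan is to decompose the success event into its three defining conditions and bound them separately, exploiting the Markov property to factor condition~1 away from the state at time $t_i$.

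First, condition~1 (the gap $t_{i+1}-t_i \geq 3R\max\{\log(R/k),5\}$) concerns only the future randomness after $t_i$. Since each step picks one of the $R+1$ blocks uniformly at random, the gap is geometric with parameter $1/(R+1)$, giving
\[
\Pr[t_{i+1}-t_i \geq 3R\max\{\log(R/k),5\}] \;\geq\; \exp\bigl(-3\max\{\log(R/k),5\}\bigr).
\]
By the Markov property this is independent of $X_{t_i},Y_{t_i}$, so it factorizes, and already accounts for part of the $\exp(-15\max\{\log(R/k),5\})$ factor in $\rho$.

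Next, to analyze conditions~2 and~3 at time $t_i$, I would use that $t_i \geq 3R\log R$ is long enough that each subtree block $T_{w_j}$ has been updated at least once with probability at least $1 - R(R/(R+1))^{3R\log R} \geq 1 - R^{-2}$. Combined with the maximal coupling at each update and the spatial-mixing hypothesis~\eqref{eq:SMAssumption}, this shows that $(X_{t_i}(w_j),Y_{t_i}(w_j))$ is close to a pair of approximately independent samples from the near-uniform marginal on $w_j$, with the children approximately independent of each other. For condition~3, a maximal coupling creates disagreement at $w_j$ only when $u$ disagreed at the previous update of $T_{w_j}$; using~\eqref{eq:SMAssumption} and disagreement-percolation-style reasoning as in Section~\ref{sec:DisPerc}, the probability of $X_{t_i}(w_j)\neq Y_{t_i}(w_j)$ is $O(1/k)$ per child, so $E[\#\text{disagreements}]=O(R/k)$ and Markov's inequality yields condition~3 with probability at least (say) $1/2$.

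For condition~2, I would condition on condition~3 and handle the two halves in turn. The bound $|A_X(u)\oplus A_Y(u)|\leq 10$ is controlled by the disagreeing children: the symmetric difference is upper bounded by twice the number of disagreeing children plus a correction coming from the joint distribution of child colors, so a second Markov step (inside the ``few disagreements'' event) delivers it with constant probability. The bound $\min\{|A_X(u)|,|A_Y(u)|\}\geq 100$ is the main obstacle: by~\eqref{eq:SMAssumption} and approximate independence of children, the number of available colors concentrates around $k(1-1/k)^R \approx k\exp(-R/k)$. When $R/k$ is small this is $\gg 100$ and condition~2 holds with probability $1-o(1)$; when $R/k$ is large it is a large-deviation event, whose probability, by a Chernoff bound on the number of distinct colors among $R$ near-uniform samples from $[k]$, is at least $\exp(-c R/k)$ for an absolute constant $c$. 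The generous constant $450$ absorbs the slack from SM approximation and from coupling the two chains.

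The main obstacle will be carrying out the large-deviation estimate for condition~2 in the high-degree regime $R \gg k$, because we genuinely need enough available colors even though the expected number is much less than $100$. The rigorous step is to propagate the pointwise bound~\eqref{eq:SMAssumption} to a near-product structure on the joint marginal of $(X_{t_i}(w_1),\ldots,X_{t_i}(w_R))$, and then run a Chernoff bound on the number of colors \emph{not} hit by these $R$ near-independent samples. Combining this $\exp(-450R/k)$ factor with the $\exp(-15\max\{\log(R/k),5\})$ factor from condition~1 (after absorbing the $O(1)$ losses from conditions on the state through Bayes) yields the claimed lower bound on $\rho$.
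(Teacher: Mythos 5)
Your decomposition is right in outline (the waiting-time condition factors off by the Markov property exactly as in the paper, and the exponential price $\exp(-\Theta(R/k))$ for condition~2 when $R/k$ is large is the correct order), but the heart of the argument has a genuine gap. First, you claim the symmetric-difference bound $|A_{X_{t_i}}(u)\oplus A_{Y_{t_i}}(u)|\leq 10$ follows ``with constant probability'' by Markov from the few-disagreements event. Your structural observation that the symmetric difference is at most twice the number of disagreeing children is correct, but condition~3 only caps the disagreements at $100R/k$, and the \emph{expected} number of disagreeing children is itself $\Theta(R/k)$; whenever $R/k\gg 1$ (which is exactly the hard regime, since $R$ can be as large as $\Delta=\Theta(\log n/\log\log n)$ while $k=\Theta(d)$ is constant), the event of having at most $5$ disagreeing children is exponentially rare in $R/k$, not of constant probability. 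Second, and more fundamentally, you never supply a mechanism to lower bound the \emph{joint} probability of the three correlated rare events (at least $100$ colors available in both chains, symmetric difference at most $10$, few disagreements): a ``Chernoff bound'' only gives upper bounds on tail probabilities, and lower bounds for intersections of rare events cannot be obtained by multiplying separate lower bounds without an independence or conditioning structure, which your sketch does not provide.

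The paper resolves both issues simultaneously with a constructive conditioning scheme rather than a large-deviation estimate. One conditions on the root not being updated during a designated window before $t_i$ (so its two colors $q_1,q_2$ are frozen), pays the single exponential factor $\exp(-\Theta(R/k))$ to \emph{force} a fixed set $W$ of $100$ colors to be avoided by every child, in both chains, throughout that window (event ${\cal G}$), and then shows that each remaining requirement has \emph{constant conditional} probability given its predecessors: when $R/k^2>10^{-4}$, the expected number of available colors outside $W$ is at most $k\exp(-\Theta(R/k))\ll 1$, so by Markov both availability sets equal $W$ up to at most one extra color each, which gives symmetric difference at most $2$ without any control on how many children disagree; in the complementary regime a shorter window is used and the symmetric difference is forced to be at most $2$ by showing that, with constant conditional probability, every disagreeing child uses only the two frozen root colors $q_1,q_2$ (the ``second-kind'' disagreements coming from the $100/k^2$ slack in \eqref{eq:SMAssumption} having expectation $O(R/k^2)\ll 1$). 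This chained conditioning (the events ${\cal N},{\cal G},{\cal S},{\cal Z}$, respectively ${\cal A},{\cal G},{\cal R}_1,{\cal R}_2$, with the case split at $R/k^2=10^{-4}$) is the missing ingredient in your proposal; without it, the step from marginal estimates to the lower bound on $\Pr[\textrm{$t_i$ is success}]$ does not go through.
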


\noindent
Let $T=10^{4}  \left\lceil \rho^{-1} R\log R \right \rceil$, where $\rho$ is defined in Lemma \ref{lemma:SuccessTime}.
We consider  the time interval ${\cal I}=[0, T]$.  We partition  ${\cal I}$ into subintervals ${\cal I}_0, {\cal I}_1, \ldots$ 
each of length $4R \log R$.  
Lemma \ref{lemma:SuccessTime} implies that that the probability of having a success at  
${\cal I}_{j+2}$ is at least $\rho$, regardless of what happens in ${\cal I}_j$.

Noting that the probability that $v$ is updated during ${\cal I}_{2j}$ is greater than $1/2$, 
 the probability of having  $t_i\in {\cal I}_j$ which is success is at least $\rho/2$.
 
Let ${\cal E}$ be the event that there exists $j\geq 1$ such that ${\cal I}_{2j}$ there exists $t_i$ which is success.  Since there are at least 
$100/\rho$  subintervals to check, it is elementary to verify 
that 
\begin{equation}\label{eq:ProbabilityOfSuccesInCoupling}
\Pr[{\cal E}] \geq 1-e^{-5}.
\end{equation}
Let ${\cal C}$ be the event that in the coupling of $(X_t)$ and $(Y_t)$ there exists $t\in {\cal I}$ such that $X_t=Y_t$. Then, we
have that
\begin{equation}
\Pr[{\cal C}] \geq \Pr[{\cal C}\ |\ {\cal E}]\Pr[{\cal E}] \geq (1-e^{-2})(1-e^{-5})\geq 1-e^{-1}.
\end{equation}
In the above inequalities we substituted $\Pr[{\cal C}\ |\ {\cal E}]$ by using Claim \ref{lemma:ConvergenceOnSuccess} and $\Pr[{\cal E}]$
by using \eqref{eq:ProbabilityOfSuccesInCoupling}.

\subsubsection{Proof of Lemma \ref{lemma:SuccessTime}}\label{sec:lemma:SuccessTime}

Let ${\cal C}$ be the event that $|t_{i+1}-t_i |\geq 3R  \max \{ \log(R/k), 5\}$.
Also, let  ${\cal D}$ be the event that $t_i$ satisfies the requirements 2 and 3  to be ``success".
The lemma follows by noting that
\begin{equation}\label{eq:target4lemma:SuccessTime}
\rho\geq \Pr[{\cal C}] \Pr[{\cal D}].
\end{equation}
At each step the vertex $u$ is updated with probability $\frac{1}{R+1}$. Then  we have
\begin{eqnarray}\label{eq:CondA4Success}
\Pr[{\cal C}] =\left(1-{1}/({R+1}) \right)^{3R \max \{ \log(R/k), 5\} } \geq  \exp\left( -4\max\{\log (R/k), 5\} \right).
\end{eqnarray}

\noindent
For computing $\Pr[{\cal D}]$   we consider  cases regarding $R/k^2$ being larger or at most
$10^{-4}$.

\begin{claim}\label{claim:CondB4SuccessA}
For  $R/k^2 >10^{-4}$ we have that $\Pr[{\cal D}] \geq \exp\left( -450 R/k\right)$.
\end{claim}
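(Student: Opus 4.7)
The plan is to split the event $\cal D$ into two substantive pieces: the ``soft'' coupling requirement (few disagreeing children, forcing the symmetric difference of the available color sets at $u$ to be small) and the ``hard'' small-ball requirement that both copies retain at least $100$ available colors at $u$. The spatial mixing hypothesis \eqref{eq:SMAssumption} is what makes the second piece tractable, since it lets me treat $(X_{t_i}(w_1),\dots,X_{t_i}(w_R))$ in each copy as an approximate product of near-uniform measures on $[k]$.

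For the soft part, I would use the fact that between updates of $u$ each subtree $T_{w_j}$ runs its own block chain with an unchanging boundary condition at $u$, so disagreements in distinct subtrees evolve independently and are eliminated by the rapid mixing guaranteed by the inductive hypothesis of Theorem~\ref{thrm:TreeRelaxationBound}. Since the preparatory window $t_i\geq 3R\log R$ allows each subtree block to be updated $\Omega(\log R)$ times, with probability bounded below by an absolute constant at most $5$ children still disagree at time $t_i$. Because $|\AvialColors_{X_{t_i}}(u)\oplus \AvialColors_{Y_{t_i}}(u)|=|X_{t_i}(N(u))\oplus Y_{t_i}(N(u))|\leq 2\cdot(\#\text{disagreeing children})$, this simultaneously delivers the symmetric-difference part of requirement~2 and requirement~3 of the definition of success.

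For the hard part, I fix an arbitrary $S\subset[k]$ with $|S|=100$ and lower-bound the probability of the event $\mathcal{E}_S:=\{X_{t_i}(w_j)\notin S\text{ for every }j\}$. Exposing the children along a depth-first traversal of $T$ and applying the tree-local version of \eqref{eq:SMAssumption} to each conditional marginal yields
\[
\Pr[\mathcal{E}_S]\ \geq\ \left(1-\tfrac{100}{k}-\tfrac{c_0}{k^2}\right)^{R}\ \geq\ \exp\!\left(-\tfrac{200R}{k}\right),
\]
for $k$ sufficiently large, where the hypothesis $R/k^2>10^{-4}$ is used to absorb the $O(1/k^2)$ slippage in the exponent. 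The same bound for the $Y$-chain, combined with the soft coupling event, forces $S\subseteq \AvialColors_{X_{t_i}}(u)\cap \AvialColors_{Y_{t_i}}(u)$, hence $\min\{|\AvialColors_{X_{t_i}}(u)|,|\AvialColors_{Y_{t_i}}(u)|\}\geq 100$. Multiplying the constant probability of the soft event by the small-ball bound then gives $\Pr[\cal D]\geq \exp(-450R/k)$, with the generous constant $450$ absorbing the $1/k$ corrections and constant-factor losses.

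The main obstacle will be the inductive use of the spatial mixing estimate along the tree: \eqref{eq:SMAssumption} as stated only controls the marginal at a single vertex, whereas my small-ball argument needs the analogous $1/k+O(1/k^2)$ bound on every conditional marginal obtained by fixing colors on an arbitrary subset of interior vertices of $T$. I would address this by re-running the disagreement percolation argument behind Proposition~\ref{prop:UnBiasHighDegree4Trees}, which is tree-local and extends verbatim to these conditional marginals, at the cost of a slightly enlarged $O(1/k^2)$ constant that is still comfortably dominated by the slack in the exponent.
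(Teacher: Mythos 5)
Your decomposition breaks down at the ``soft'' step, and the failure is exactly in the regime the claim addresses. When $R/k^2>10^{-4}$ we have $R\gtrsim 10^{-4}k^2$, so $R/k\geq 10^{-4}k$ is large (and since $R=\degree(v)$ can be as big as $\Delta=\Theta(\log n/\log\log n)$ while $k$ is a constant, $R/k$ can even grow with $n$). Nothing in your argument forces the two chains to agree at the root at time $t_i$; as long as the root colors differ, every update of a child block regenerates a disagreement at that child with probability roughly $2/k$ under the maximal coupling, because the two copies of the subtree dynamics are being run with \emph{different} boundary conditions at the root and therefore never coalesce --- the inductive rapid-mixing of $T_{w_j}$ does not ``eliminate'' these disagreements. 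Consequently the number of disagreeing children at time $t_i$ is essentially $\mathrm{Binomial}(R,\Theta(1/k))$ with mean $\Theta(R/k)\gg 5$, and the event that at most $5$ children disagree has probability $e^{-\Theta(R/k)}$, not a probability bounded below by an absolute constant. Since your route to the bullet $|\AvialColors_{X_{t_i}}(u)\oplus \AvialColors_{Y_{t_i}}(u)|\leq 10$ goes entirely through this few-disagreements event (and you then multiply it against the small-ball bound as if the two pieces were independent), the written proof does not establish the claim. Note also that requirement 3 of ``success'' only asks for at most $100R/k$ disagreeing children, which is cheap (Markov gives probability $\geq 1/2$); the expensive and delicate part is requirement 2, and you cannot get it by making the disagreement count $O(1)$.

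The paper's mechanism is different and worth contrasting. It freezes the root for a window of length $R$ before $t_i$ (constant cost), forces a \emph{fixed} set $W$ of $100$ colors to be avoided by all children in both chains throughout that window (cost $\exp(-400R/k)$ --- this is the analogue of your small-ball estimate and is where the exponential really comes from), and then uses the hypothesis $R/k^2>10^{-4}$ in the opposite direction from you: in this dense regime, with probability at least $1/2$ \emph{every} color outside $W$ (other than the roots' current colors) is used by some child in the corresponding chain, so the available sets collapse to $W\cup\{q_1\}$ and $W\cup\{q_2\}$. That is what yields $|\AvialColors_{X_{t_i}}(u)\oplus \AvialColors_{Y_{t_i}}(u)|\leq 2$ without any control on the number of disagreeing children. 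Your reading of the hypothesis as merely absorbing $O(1/k^2)$ slippage in the exponent misses this essential use; repairing your argument would require either adopting the paper's ``all other colors are used'' step or honestly pricing the few-disagreements event at $e^{-\Theta(R/k)}$ and redoing the bookkeeping jointly with the avoid-$S$ event rather than multiplying the two bounds.
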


\begin{claim}\label{claim:CondB4SuccessB}
For $R/k^2 \leq  10^{-4}$, we have that $\Pr[{\cal D}] \geq \exp\left( -10\max\{\log (R/k), 5\} -400R/k \right)$ 
\end{claim}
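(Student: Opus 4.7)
The plan is to exhibit a favorable event $F$ that implies both remaining requirements for ``success'' and whose probability is at least $\exp(-50-400R/k)$. First I observe that requirement 2.b is deterministic here: $R=\degree_{in}(v)\le\TDeg\le(1+\epsilon/6)d$ and $k\ge(\alpha+\epsilon)d$ give $|A_{X_{t_i}}(u)|,|A_{Y_{t_i}}(u)|\ge k-R\ge((\alpha+\epsilon)-(1+\epsilon/6))d\ge 100$ for all sufficiently large $d$. Setting $D_i=|\{j:X_{t_i}(w_j)\ne Y_{t_i}(w_j)\}|$, a color lies in $A_{X_{t_i}}(u)\oplus A_{Y_{t_i}}(u)$ only when some child takes it in exactly one chain, so $|A_{X_{t_i}}(u)\oplus A_{Y_{t_i}}(u)|\le 2D_i$; hence both remaining requirements follow from $D_i=0$, and it suffices to prove $\Pr[D_i=0]\ge\exp(-50-400R/k)$.

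I would take a window $W=[t_i-L,t_i)$ of length $L=3R\log R$, which fits in $[0,t_i)$ by the hypothesis $t_i\ge 3R\log R$ (noting that $\max\{\log(R/k),5\}=5$ in this regime, since $R<k$). Define $F=F_1\cap F_2$, where $F_1$ says every child's block $T_{w_j}$ is chosen for update at least once during $W$ and $F_2$ says that, for each $j$, the \emph{last} update of $T_{w_j}$ within $W$ couples successfully (produces the same color at $w_j$ in both chains). Under $F$, after the last update of each $T_{w_j}$ no further change at $w_j$ can occur before $t_i$, so $D_i=0$. The event $F_1$ is a coupon-collector event with $R$ coupons, each drawn per step with probability $1/(R+1)$; a standard union bound gives $\Pr[F_1]\ge 1-R^{-2}\ge 1/2$ for $R\ge 2$. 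For $F_2$, the spatial-mixing assumption \eqref{eq:SMAssumption} is the key: when $T_{w_j}$ is updated at some time $t$, the two marginals of $w_j$'s color differ only through the conditioning at $u$ (the common outer boundary $\outBound T_u$ agrees), and by \eqref{eq:SMAssumption} each of these marginals differs from $1/k$ by at most $100/k^2$ per color, so their pairwise total-variation distance is at most $100/k$; hence the maximal coupling fails with probability at most $100/k$ per single update, uniformly over the history. Conditional on $F_1$, order the last-update times chronologically $\tau_{j_1}<\cdots<\tau_{j_R}$ and apply the chain rule: each success has conditional probability at least $1-100/k$, giving $\Pr[F_2\mid F_1]\ge (1-100/k)^R\ge\exp(-200R/k)$ (using $\log(1-x)\ge -2x$ for $x\le 1/2$, valid since $R/k^2\le 10^{-4}$ makes $k$ large). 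Combining, $\Pr[F]\ge\tfrac12\exp(-200R/k)\ge\exp(-50-400R/k)$ with significant slack, since $-10\max\{\log(R/k),5\}=-50$ here.

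The main technical obstacle will be justifying the per-update TV bound of $100/k$ rigorously, because \eqref{eq:SMAssumption} as stated controls marginals of $w_j$ under the Gibbs measure conditioned on $\outBound T$, whereas the dynamics' update of $T_{w_j}$ conditions on the local boundary $\outBound T_{w_j}\supseteq(\outBound T\cap \outBound T_{w_j})\cup\{u\}$. Reconciling these requires comparing the marginal of $w_j$ under $Z(u)=X_t(u)$ versus $Z(u)=Y_t(u)$ with common outer boundary on $\outBound T$; this reduces, by Bayes' rule and the tree structure of $T_u$, to a projection argument whose error is governed by the total-variation distance at $u$ under the two conditionings, and a second application of \eqref{eq:SMAssumption} to $u$ itself supplies the needed $O(1/k)$ bound. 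Making this chain of reductions precise, along with some care in handling the case where the state at $\tau_{j_s}$ might have $X_{\tau_{j_s}}(u)=Y_{\tau_{j_s}}(u)$ (in which case identity coupling succeeds trivially and only improves the bound), will be the bulk of the bookkeeping.
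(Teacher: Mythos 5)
Your argument has a genuine gap, and it comes from misreading the setting of the claim. Claim \ref{claim:CondB4SuccessB} sits inside the proof of Proposition \ref{prop:MixingStar}, where the root of the star is (typically) a high-degree vertex and $R=\degree(v)$ is \emph{not} bounded by $\TDeg$; that is exactly why the spatial-mixing hypothesis \eqref{eq:SMAssumption} is assumed there (the low-degree star is handled separately in Lemma \ref{lemma:MixingLowDegreeStar}). The hypothesis $R/k^2\le 10^{-4}$ only caps $R$ at $10^{-4}k^2$, which for large $k$ is far larger than $k$; e.g.\ $R=10k$ is allowed. Consequently your opening step --- ``requirement 2.b is deterministic since $k-R\ge((\alpha+\epsilon)-(1+\epsilon/6))d\ge100$'' --- is false on most of the regime, and with it the reduction of the claim to $\Pr[D_i=0]$. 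When $R\ge k$, the event that all children agree in the two chains does \emph{not} imply $\min\{|A_{X_{t_i}}(u)|,|A_{Y_{t_i}}(u)|\}\ge100$: the (agreeing) children can occupy all but a handful of the $k$ colors, so the availability requirement must itself be established probabilistically. This is precisely the role of the event ${\cal G}$ in the paper's proof (no child uses any of $100$ designated colors, in either chain, at time $t_i$), and it is the source of the $\exp(-\Theta(R/k))$ cost, i.e.\ of the $-400R/k$ term in the exponent. Your proof, as written, only covers the sub-regime $R\le k-100$.

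Two smaller remarks. First, your claim that $\max\{\log(R/k),5\}=5$ ``since $R<k$'' rests on the same misreading; it happens to be harmless for the final inequality because a bound of the form $\exp(-50-400R/k)$ would dominate the target anyway, but it signals the wrong picture of the parameters. Second, the obstacle you flag about justifying the per-update failure bound is real but is resolved more directly than your Bayes-rule plan: under maximal coupling, a disagreement at the child $w_j$ (given the single boundary disagreement at $u$) occurs with probability at most the probability of the most likely color for $w_j$, which by \eqref{eq:SMAssumption} is at most $1/k+100/k^2\le 2/k$; this is how the paper argues, and it would slot into your sequential-coupling bookkeeping (and into the paper's events ${\cal R}_1,{\cal R}_2$) without detours. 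To repair your proof for the full regime you would need to add an availability event of the type ${\cal G}$ and pay for it, at which point you essentially recover the paper's decomposition $\Pr[{\cal A}]\Pr[{\cal G}\mid{\cal A}]\Pr[{\cal R}_1\mid\cdot]\Pr[{\cal R}_2\mid\cdot]$.
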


\noindent
The lemma follows by plugging  the bounds from 
\eqref{eq:CondA4Success} and Claims \ref{claim:CondB4SuccessA}, \ref{claim:CondB4SuccessB} into \eqref{eq:target4lemma:SuccessTime}.

It remains to show that  Claims \ref{claim:CondB4SuccessA}, \ref{claim:CondB4SuccessB} are indeed true. 

\begin{proof}[Proof of Claim \ref{claim:CondB4SuccessA}]
Let the interval ${\cal I}=[t_i-R, t_i)$ and let the set $W=\{1, 2, \ldots, 100\}$. Consider the following events: let  $\cal G$ be the event that for every
$t\in {\cal I}$ we have $W \subseteq A_{X_t}(u),A_{Y_t}(u) $. That is, the first 100 colors are available for the root
during the whole interval ${\cal I}$. Let ${\cal S}$ be the event that
at time $t_i$ there are $q_1,q_2 \in [k]$ such that $A_{X_{t_i}}(u)=W\cup\{q_1\}$ and
$A_{Y_{t_i}}(u)=W\cup\{q_2\}$. That is, the two sets differ only on at most two colors. 
Let ${\cal N}$ be the event that $u$ is not updated during interval ${\cal I}$.
Finally let ${\cal Z}$ be the event that the number of disagreeing children of $u$ is at most
$100 R/k$.

It is direct that if the events ${\cal G}$,  ${\cal N}$,  ${\cal S}$ and ${\cal Z}$ hold then the event ${\cal D}$ holds.  That is,
$\Pr[{\cal D}] \geq \Pr[{\cal G}, {\cal S}, {\cal N}, {\cal Z}]$.
More specifically, we have
\begin{equation}\label{eq:Target4claim:CondB4SuccessA}
\Pr[{\cal D}] \geq \Pr[{\cal N}] \Pr[{\cal G}\ |\ {\cal N}] \Pr[{\cal S}\ |\ {\cal G}, {\cal N}] \Pr[{\cal Z}\ |\ {\cal G}, {\cal N}, {\cal S}].
\end{equation}
The claim follows by bounding appropriately the probability terms on the r.h.s. of \eqref{eq:Target4claim:CondB4SuccessA}.

We start with $\Pr[{\cal N}]$. Using standard coupon collector argument it is elementary to verify that
\begin{equation}\label{eq:CalNBound}
\Pr[{\cal N}]\geq e^{-2}.
\end{equation}

\noindent
We proceed by considering $\Pr[{\cal G}\ |\ {\cal N}]$. Conditional on ${\cal N}$,  in the coupling of $(X_t)$ and $(Y_t)$ we have that updating block $T_{w_j}$,  each color in $W$ is not used for both 
$X_t(w_j)$ and $Y_t(w_j)$ with probability at least $1-{2}/{k}$. 
Conditioning that all the blocks $T_{w_j}$s are updated prior to time $t_i-R$,  consider the last time that each $T_{w_j}$ is updated
 prior to $t_i-R$.  The probability that non of the colors in $W$ is used for the children $w_1, \ldots, w_{R}$ at time $t_i-R$, is at least 
$\left(1-{200}/{k} \right)^{R}\geq \exp\left( -200R/k\right).$ 
The probability that non of the following $R$ updates  uses any color from $W$ for
$w_1, \ldots, w_{R}$ is at least $\left(1-{200}/{k} \right)^{R}\geq  \exp\left( -200R/k\right).$
From the above we conclude that 
$$\Pr[{\cal G}\ | \ {\cal N}, {\cal Q}]\geq \exp\left( -400R/k \right),$$ where 
${\cal Q}$ is the event that there is no block $T_{w_j}$ which is not updated at least once prior to time $t_i-R$.
Furthermore,  we get that $$\Pr[\bar{\cal Q}\ |\ {\cal N}] \leq \Delta \ \Pr[{T}_{w_1} \textrm{ is not updated by time } t_i-R \ |\ {\cal N}] \leq  2R^{-1},$$
since $t_i-R> 2R\log R$.
Or, $\Pr[{\cal Q}\ |\ {\cal N}]\geq 1/2$.  
We have that
\begin{equation}\label{eq:CalGCondBound}
\Pr[{\cal G} \ | \ {\cal N}]\ \geq\  \Pr[{\cal G} \ | \ {\cal N}, {\cal Q}]\Pr[{\cal Q}\ |\ {\cal N}]  \ \geq \ 2^{-1}\exp\left( -400R/k\right).
\end{equation}
We proceed by considering  $\Pr[{\cal S}\ |\ {\cal G}, {\cal N}]$. 
Conditional on ${\cal N}$ and ${\cal G}$, at each block update $T_{w_j}$ some color $q$ is assigned to vertex $w_j$ with
probability at most $2/k$.  
Assume that at time $t=t_i-\Delta$ we have $X_t(v)=q_1$ and $Y_t(v)=q_2$.
For  $(X_t)$, we call available colors the set of colors $[k]\setminus (W\cup \{q_1\})$.
Similarly, for $(Y_t)$, we call available colors the set of colors $[k]\setminus (W\cup \{q_2\})$.
Let $K$ be the number of colors  which are available in some chain and they are not used from any of 
$w_1, \ldots, w_{R}$ in  the corresponding  chain at time $t_i$.

Assume that at time $t\in {\cal I}$ we update  block $T_{w_j}$.  Recall that for any  $t\in {\cal I}$ we have 
$X_t(v)=q_1$ and $Y_t(v)=q_2$.  We couple $X_t(w_j)$ and $Y_t(w_j)$ such that
for each $q\in [k]\setminus (W\cup \{q_1, q_2\})$ we set $\Pr [X_t(w_j)=Y_t(w_j)=q]$ with probability
$\min\{ \Pr [X_t(w_j)=q], \Pr [Y_t(w_j)=q] \}$, while for the colors $q_1, q_2$ we have
$\Pr [X_t(w_j)=q_2, Y_t(w_j)=q_1]$ with probability $\min\{ \Pr [X_t(w_j)=q_2], \Pr [Y_t(w_j)=q_1] \}$.
The aforementioned coupling is what we call, ``maximal coupling".

The above implies that if at time $t\in {\cal I}$ the coupling updates block $T_{w_j}$, each available
color is  used for $w_j$ with probability at most  $2/k$.  This implies that some available color is not
used at all for coloring any of the vertices in $w_1, \ldots, w_{\Delta}$ during the period ${\cal I}$ with
probability at least $\left(1-2/k \right)^{|{\cal I}|}=\left(1-2/k \right)^{R}$.
The linearity of expectation yields
$$
\ExpCond{K}{{\cal G}, {\cal N}} \leq k \left(1-2/k \right)^{R} \leq k\exp\left( -{3R}/({2k})\right).
$$
Markov's inequality implies that 
\begin{equation}\label{eq:CalSCondBound}
\Pr[{\cal S}\ |\ {\cal G}, {\cal N}]\geq1- \ExpCond{K}{{\cal G}, {\cal N}}\geq 1-k\exp\left( -2R/k\right)\ \geq \ 1-k\exp\left( -10^{-4}k\right)\geq 1/2,
\end{equation}
where the third inequality follows from the assumption that $R/k^2>10^{-4}$.

Letting ${\cal R}$ be the number of disagreements at the vertices $w_1, \ldots, w_{R}$, at time $t_i$, elementary calculations
yield that $\ExpCond{{\cal R}}{{\cal G}, {\cal S}, {\cal N}} \leq 10\left( R /k \right)$. Then, Markov's inequality give
\begin{equation}\label{eq:CalZCondBound}
\Pr[{\cal Z}\ |\ {\cal G}, {\cal S}, {\cal N}] \geq 1-\frac{\ExpCond{{\cal R}}{{\cal G}, {\cal S}, {\cal N}}}{100\left( R/k \right)} \geq 1/2.
\end{equation}
Plugging \eqref{eq:CalNBound}, \eqref{eq:CalGCondBound}, \eqref{eq:CalSCondBound} and \eqref{eq:CalZCondBound} into 
\eqref{eq:Target4claim:CondB4SuccessA} and using  that $\exp(R/k)>10^4$, the claim follows.
\end{proof}

\begin{proof}[Proof of Claim \ref{claim:CondB4SuccessB}]
Let $\hat{t}=t_i-3R \min\{ \log(R/k), 5 \}$. Let the time interval ${\cal I}=(\hat{t}, t_i)$.

We consider the following event. Let ${\cal A}$ be the event that $u$ is not updated during the interval ${\cal I}$.
Let ${\cal R}_1$ be the event that the number of disagreements  on the vertices $w_1, \ldots, w_{R}$, at time $\hat{t}$,
is less than $10^{3}R/k$.
Also, let ${\cal R}_2$ be the event that the number of disagreements  on the vertices $w_1, \ldots, w_{\Delta}$, 
at time $t_i$,
is less than $10^{3}\Delta/k$, while there is $q_1, q_2, \in [k]$ such that for each $w_j$ such that $X_{t_i}(w_j)\neq Y_{t_i}(w_j)$ we have 
$X_{t_i}(w_j), Y_{t_i}(w_j)\in \{q_1, q_2\}.$  Note that this requirement implies that $A_{X_{t_i}}(u)$, $A_{Y_{t_i}}(u)$ differ only in at
most two colors. 
Finally let ${\cal G}$ be the event that non of the colors in  $W=\{1,2,\ldots, 100\}$ is used by any of the
children of $v$ at time $t_i$, in both chains.

It is elementary to show that if the events ${\cal A}, {\cal R}_1, {\cal R}_2, {\cal G}$ occur, then the event ${\cal D}$ also occurs.
That is, we have that
\begin{equation}\label{eq:Target4claim:CondB4SuccessB}
\Pr[{\cal D}] \geq  \Pr[{\cal A}] \Pr[{\cal G}\ |\ {\cal A}] \Pr[{\cal R}_1\ |\ {\cal G}, {\cal A}]\Pr[{\cal R}_2\ |\ {\cal G}, {\cal A}, {\cal R}_1].
\end{equation}

\noindent
Working as for \eqref{eq:CondA4Success} we have that
\begin{equation}\label{eq:CalABoundB}
\Pr[{\cal A}]\geq \exp\left( -4\max\{\log (R/k), 5\} \right).
\end{equation}
Also, working as in \eqref{eq:CalGCondBound} we get that
\begin{equation}\label{eq:CalGBoundCondB}
\Pr[{\cal G}\ |\ {\cal A}] \geq \exp\left( -150R/k\right).
\end{equation}

\noindent
Let $K_1$ be the number of disagreements on the vertices $w_1, \ldots, w_{\Delta}$ at time $\hat{t}$.
Also,  let ${\cal U}$ be the event that there does not exist $w_j$ such that $T_{w_j}$ is not updated prior
to $\hat{t}$.

Conditioning on the events ${\cal A}, {\cal G}$,   since we couple the two copies $(X_t)$ and $(Y_t)$ maximally, we have that each time 
we update a block $T_{w_j}$ the probability of having a disagreement at $w_j$, which is bounded by the probability of the most likely color, 
is less than $2/k$.  Then, 
conditional on the event ${\cal U}$, the time at which  $w_j$ is updated for last time, prior to $\hat{t}$ becomes disagreeing
with probability $2/k$.  From the linearity of expectation we have that
\[
\ExpCond{K_1}{{\cal U}, {\cal A}, {\cal G} }\leq 2R/k.
\] 
Since  $\hat{t}\geq 2R \log R$ we have that $\Pr[{\cal U} \ |\ {\cal A}, {\cal G}]\geq 1/2$. Then we get that
\begin{equation}\label{eq:CalR1CondBoundB}
\Pr[{\cal R}_1\ |\ {\cal A}, {\cal G}]\geq \ \Pr[{\cal R}_1 \ |\ {\cal U}, {\cal A}, {\cal G}]  \Pr[{\cal U} \ |\ {\cal A}, {\cal G}] \ 
\geq \ 2^{-1}\left(1- \frac{\ExpCond{K_1}{{\cal U},{\cal A}, {\cal G} }}{10^3 \Delta/k} \right) \geq 1/3,
\end{equation}
where the second derivation follows from Markov's inequality.

We proceed by bounding $\Pr[{\cal R}_2 \ |\ {\cal R}_1, {\cal A}, {\cal G}]$.
Let $Z$ be the number of blocks such that $X_{\hat{t}}(w_j)\neq Y_{\hat{t}}(w_j)$ and the block $T_{w_j}$ is not updated
during the interval ${\cal I}$. 
Let ${\cal Z}$ be the event $Z=0$.
Conditional on ${\cal R}_1, {\cal A}$ and ${\cal G}$, the choice of the block update at time $t\in {\cal I}$
is uniformly random among all the blocks but $\{u\}$.  Since each block is not updated during ${\cal I}$ with probability at least 
$\exp\left( -4\max\{\log (R/k), 5\} \right)$. We get that
\[
\ExpCond{Z}{{\cal R}_1, {\cal A}, {\cal G}} \leq 100(R /k)\exp\left( -4\max\{\log (R /k), 5\} \right) \leq  \min\{(\Delta/k)^{-2}, e^{-10}\}.
\]
The above  with Markov's inequality imply that
\begin{equation}
\Pr[{\cal Z} \ |\ {\cal R}_1, {\cal A}, {\cal G}]\geq 1-\min\{(\Delta/k)^{-2}, e^{-10}\} \geq 1/2.
\end{equation}

\noindent
Assume that at time $t\in {\cal I}$ we update  block $T_{w_j}$.  Recall that for every  $t\in {\cal I}$ we have 
$X_t(v)=q_1$ and $Y_t(v)=q_2$.  We couple $X_t(w_j)$ and $Y_t(w_j)$ such that
for each $q\in [k]\setminus (W\cup \{q_1, q_2\})$ we set $\Pr [X_t(w_j)=Y_t(w_j)=q]$ with probability
$\min\{ \Pr [X_t(w_j)=q], \Pr [Y_t(w_j)=q] \}$, while for the colors $q_1, q_2$ we have
$\Pr [X_t(w_j)=q_2, Y_t(w_j)=q_1]$ with probability $\min\{ \Pr [X_t(w_j)=q_2], \Pr [Y_t(w_j)=q_1] \}$.
The aforementioned coupling is what we call, ``maximal coupling".

Conditional on the events ${\cal A}, {\cal Z}, {\cal R}_1, {\cal G}$, the above coupling implies that
two kinds of disagreements on some vertex $w_j$  can be generated at time $t\in {\cal I}$.
The first kind involves having $X_t(w_j)=q_2=Y_t(v)$ and $Y_t(w_j)=q_1=X_t(v)$. The second
kind of disagreement involves all the rest. Note that the first kind of disagreement 
occurs with probability at most $2/k$ when we update $T_{w_j}$. Furthermore, 
the second disagreement appears due to the fact that the distributions of $X_t(w_j), Y_t(w_j)$ are not
perfectly uniform over $[k]\setminus \{q_1\}$ and $[k]\setminus \{q_2\}$, respectively. It is elementary
to show that the disagreements of the second kind occur at each update with probability less than $200/k^2$.

Let $F$ be the number of disagreements of the second kind on $w_1, \ldots, w_{R}$ at time $t_i$.
Let ${\cal F}$ be the event that $F=0$.  Since the expected number of such disagreements is $200R/k^2$, Markov's inequality imply that 
$\Pr[{\cal F}\ |\ {\cal A}, {\cal G}, {\cal R}_1, {\cal Z}]\geq 1/2.$ Then we have that
\begin{equation}\label{eq:CallFCondBoundB}
\Pr[{\cal F}\ |\ {\cal A}, {\cal G}, {\cal R}_1] \geq \Pr[{\cal F}\ |\ {\cal A}, {\cal G}, {\cal R}_1, {\cal Z}] \Pr[{\cal Z} \ |\ {\cal A}, {\cal G}, {\cal R}_1] \geq 1/4.
\end{equation}
When the event ${\cal F}$ holds, then we have that $A_{X_{t_i}}(v)\oplus A_{Y_{t_i}}(v)=\{q_1, q_2\}$.

Let $K_2$ be the number of disagreements at vertices $w_1, \ldots, w_{R}$.
Conditional on ${\cal F}$, $K_2$ is equal to the number of vertices $w_j$ such that $X_{t_i}(w_j)= q_2$. 
Then, it is elementary to verify that each time a vertex $w_j$ is updated we have $X_{t_i}(w_j)= q_2$
with probability less than $2/k$, conditional on the events ${\cal A}, {\cal G}, {\cal R}_1, {\cal F}$.
The expected $K_2$ is at most $2R/k$. Then,  Markov's inequality imply that $\Pr[K> 10^3 (R/k)\ |\ {\cal A}, {\cal G}, {\cal R}_1, {\cal F} ]\geq 1/3$.
Since ${\cal R}_2$ occurs only if ${\cal F}$ occurs and $K_2<10^3(R/k)$, we get that
\begin{equation}\label{eq:CalR2CondBoundB}
\Pr[{\cal R}_2\ |\ {\cal A}, {\cal G}, {\cal R}_1] \geq 1/20.
\end{equation}
Plugging \eqref{eq:CalR2CondBoundB}, \eqref{eq:CalR1CondBoundB}, \eqref{eq:CalGBoundCondB} and \eqref{eq:CalABoundB} into 
\eqref{eq:Target4claim:CondB4SuccessB}, the claim follows.
\end{proof}

\subsection{Proof of Lemma \ref{lemma:MixingLowDegreeStar}}\label{sec:lemma:MixingLowDegreeStar}

The  proof of Lemma \ref{lemma:MixingLowDegreeStar} is not too different than that of Proposition \ref{prop:MixingStar}.
The only difference now is the lack of condition \eqref{eq:SMAssumption} which implied a certain kind of symmetry between
the color assignment of each $w_j$. That is, if at time $t$ we update $w_j$ then for any $q_1, q_2 \in A_{X_t}(w_j)$ we have that 
$\Pr[X_t(w_j)=q_1]\approx \Pr[X_t(w_j)=q_2]$.  For this proof  the bounds we assume are  $1/k<\Pr[X_t(w_j)=q_1]\leq 1/(k-\TDeg)$.
Note that we may have that the root $u$ is incident to some vertices in $\outBound T_u$.

The case where $R=\degree_{in}(u)$ is too low, i.e., $R<k/3$ follows directly 
by applying path coupling.  For what follows, we assume that $k/3\leq \degree_{in}(u)\leq \TDeg$.

Consider discrete time block dynamics  $(X_t)$ and $(Y_t)$. Assume that $X_0, Y_0$ are arbitrary $k$-colorings of $T_u$.
We present a coupling such that after $t> 10^4 \exp\left( 5\max\{\log (R/(k-\TDeg)), 5\}\right)R\log R$ steps we have  $\Pr[X_t\neq Y_t] \leq e^{-1}$.
Then the bound for relaxation time of the continuous version follows immediately.

The coupling is such that we update the same block at each copy of the dynamics. When we update a
block are couple the configurations maximally, i.e., when we update block $B$ at time $t$, we minimize
the probability of the event $X_t(B)\neq Y_t(B)$.

Let $t_1, t_2, \ldots$ be the random times at which $v$ is updated in the coupling. 
For $i \geq 1$, we say that $t_i$ is a ``success" if the following hold
\begin{enumerate}
\item $ |t_{i+1}-t_i| \geq 3R \max \{ \log(R/(k-\TDeg) ), 5\}$
\item we have that 
\begin{itemize}
	\item $|A_{X_{t_i}}(u) \oplus A_{Y_{t_i}}(u)| \leq 500$
	\item $\min\{ |A_{X_{t_i}}(u), A_{Y_{t_i}}(u)\} \geq 10^5.$
\end{itemize}
\item The number of vertices $w_j$ such that $X_{t_{i}}(w_j)\neq Y_{t_{i}}(w_j)$ is less than $100R/(k-\TDeg)$.
\end{enumerate}

\noindent
Working as in Claim \ref{lemma:ConvergenceOnSuccess} we get the following:
If for some $i\geq 1$ we have  $t_i$ that is ``success", then  there is a coupling
such that
$
\Pr[X_{t_{i+1}}\neq Y_{t_{i+1}}]\leq e^{-2}.
$

We are going to show that for   any $t_i\geq 3R \log R$  we have that $ \Pr[\textrm{$t_i$ is success}]\geq \rho$, 
where  $$\rho \geq \exp\left( -5\max\{\log (R/(k-\TDeg)), 5\}  \right).$$
Then,  the lemma will follow working as in the proof of Proposition \ref{prop:MixingStar}. 
That is, we show that in the time interval $[0, \hat{T}]$, where
$\hat{T}=10^4 \left\lceil \rho^{-1}R\log R \right \rceil $, the probability of having $t_i$ which is large, i.e., greater than
$1-e^{-5}$

Let ${\cal C}$ be the event that $|t_{i+1}-t_i |\geq 3\Delta \max \{ \log(R/(k-\TDeg)), 5\}$.
Let ${\cal D}$ be the event that $T_i$ satisfies the requirements 2 and 3  to be ``success".
The lemma follows by noting that
\begin{equation}\label{eq:target4lemma:SuccessTime}
\rho\geq \Pr[{\cal C}] \Pr[{\cal D}].
\end{equation}

\noindent
At each step the vertex $v$ is updated with probability $\frac{1}{R+1}$. Then  we have
\begin{eqnarray}\label{eq:CondA4SuccessLowDegree}
\Pr[{\cal C}] =\textstyle \left(1-{1}/({R+1}) \right)^{3R \max \{ \log(R/(k-\TDeg)), 5\} } \geq  \exp\left( -4\max\{\log (R/(k-\TDeg)), 5\} \right).
\end{eqnarray}

\noindent
For computing $\Pr[{\cal D}]$, we  let $Z$ be the number of disagreements in the
set of vertices $w_1, \ldots, w_R$,  at time $t_i$.
The requirement that both $A_{X_{t_i}}(u), A_{Y_{t_i}}(u)$ are sufficiently
large is trivially satisfied since we assume that $R\leq \TDeg$ and $k>(3/2)\TDeg$. 
Furthermore, given  $Z$,  it is elementary to see that the disagreements at the vertices
in $w_{1}, \ldots w_{R}$  involve at most $2Z$ different colors, i.e., 
$|A_{X_{t_i}}(u) \oplus A_{Y_{t_i}}(u)| \leq 2Z+2$.
With the above observations, it is elementary to verify that the event holds once we have 
$Z < 90 R/(k-\TDeg)$. That is,  
$$\Pr[{\cal D}] \geq \Pr[Z< 90 R/(k-\TDeg)].$$
Let ${\cal U}$ be the event that the block of every $w_i$ is updated at least once. 
Each time the block $T_{w_j}$ is updated we have a disagreement with probability less than
$1/(k-\TDeg)$.   Markov's inequality implies 
\[
\Pr[Z\geq 100 R/(k-\TDeg) \ |\ {\cal U}] \leq \frac{\ExpCond{Z}{{\cal U}}}{100\Delta/(k-\TDeg)}\leq 1/50.
\]
Since  $t_i\geq 3R\log R$, we get that $\Pr[{\cal U}]\geq 3/4$. Combining all the above,
we get that
\begin{equation}\label{eq:CondB4SuccessLowDegree}
\Pr[{\cal D}] \geq \Pr[Z< 100 R/(k-\TDeg) \ | \ {\cal U}]\Pr[{\cal U}] \geq 1/2.
\end{equation}
The lemma follows by plugging \eqref{eq:CondB4SuccessLowDegree}, \eqref{eq:CondA4SuccessLowDegree}
to \eqref{eq:target4lemma:SuccessTime}.

\section{Proof of Lemma \ref{thrm:RelaxationWeightBound}}\label{sec:thrm:RelaxationWeightBound}

\noindent
Rewriting ${\cal J}(P)$ we have that 
\begin{equation}\label{eq:Target4thrm:RelaxationWeightBound}
\textstyle {\cal J}(P)=450\left(  \log\prod_{u}\degree(u)+k^{-1}\sum_{u}\degree(u)\right).
\end{equation}
The theorem will follow by bounding appropriately the above sum and the product.

As far as the product of the degree is concerned, let the set $M$ contain every vertex $u\in P$
such that $\degree(u)>\TDeg$.  Note that the choice of $P$ implies that at least one the end vertices of
the path is either a break-point or it is adjacent to one. Then, from Corollary \ref{cor:FromBreakPointProd} we have that
\[
\textstyle \prod_{u\in M}\degree(u)\leq \left(1+\epsilon\right)^{\ell}
\]
Since we trivially have that $\prod_{u\in P\setminus M}\degree(u)\leq (\TDeg)^{\ell}$, we get that
\begin{equation}\label{eq:ProdBound4Relax}
\textstyle \log \left( \prod_{u}\degree(u) \right)\leq 2\ell \log d.
\end{equation}
As far as the sum of degrees over $P$ is concerned, we use the following claim.

\begin{claim}\label{claim:PathDensityGnp}
With probability $1-10n^{-d/(\log d)^2}$, the graph $G(n,d/n)$ has no path $P$ of  length at most
$\log n/(\log d)^4$ such 
\begin{equation}\label{eq:HighDensityProp}
\textstyle k^{-1}\sum_{u\in P}\degree(u) \geq \frac{5 \log n}{(\log d)^2}.
\end{equation}
\end{claim}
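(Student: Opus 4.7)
The plan is to prove Claim \ref{claim:PathDensityGnp} by a direct first-moment/union-bound argument over all candidate paths in $G(n,d/n)$. Fix a length $\ell \in \{1, \ldots, \lfloor \log n/(\log d)^4 \rfloor\}$ and an ordered tuple of distinct vertices $v_0,\ldots,v_\ell$. Two events are relevant: (a) the tuple forms a path, i.e., all $\ell$ consecutive edges $(v_i,v_{i+1})$ appear in $G$; and (b) the total degree sum $\sum_i \degree(v_i)$ exceeds $T:=5k\log n/(\log d)^2$. I would bound the joint probability of (a) and (b) by conditioning on (a), exploiting that, given the $\ell$ path edges, all remaining possible edges of $G(n,d/n)$ are still independent Bernoulli$(d/n)$ variables.

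Under this conditioning, $\sum_i \degree(v_i) = 2\ell + Y$, where $Y$ is the number of other half-edges/edges incident to $S=\{v_0,\ldots,v_\ell\}$, and $Y$ is stochastically dominated by $\mathrm{Bin}(N,d/n)$ with $N\le(\ell+1)n$, so $\mu:=\mathbb{E}[Y]\le(\ell+1)d$. For $\ell\le\log n/(\log d)^4$, the threshold $T-2\ell$ satisfies $(T-2\ell)/\mu \ge c(\log d)^2$ for some fixed $c>0$. The Chernoff bound for the upper tail of a Binomial then gives
\[
\Pr\bigl[\,Y\ge T-2\ell\,\bigr] \;\le\; \exp\!\Bigl(-\tfrac{T}{3}\,\log\!\bigl((T-2\ell)/\mu\bigr)\Bigr) \;\le\; \exp(-c_1 T\log\log d),
\]
for $d$ sufficiently large, with some absolute $c_1>0$.

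Combining with $\Pr[\text{tuple forms a path}] = (d/n)^\ell$ and summing over at most $n^{\ell+1}$ ordered tuples, the expected number of length-$\ell$ paths violating \eqref{eq:HighDensityProp} is at most
\[
n^{\ell+1}\,(d/n)^{\ell}\,\exp(-c_1 T\log\log d) \;=\; n\,d^{\ell}\,\exp(-c_1 T\log\log d).
\]
Since $\ell\log d\le \log n/(\log d)^3$ and $T\log\log d\ge 5\alpha\, d\log n\log\log d/(\log d)^2$, the exponent is dominated, for large $d$, by $-c_2\,d\log n\log\log d/(\log d)^2$. Summing over $\ell\in\{1,\ldots,\lfloor\log n/(\log d)^4\rfloor\}$ introduces only an additional factor $\log n$, which is absorbed, and Markov's inequality converts this expectation bound into the desired probability bound. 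The final bound comfortably dominates $10n^{-d/(\log d)^2}$ because of the extra $\log\log d$ factor in the exponent.

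The main technical point to get right is the conditioning step: the key observation is that the edges of $G(n,d/n)$ are mutually independent, so fixing $\ell$ specific edges leaves the remaining edges genuinely independent Bernoulli$(d/n)$, which is exactly what allows the clean Chernoff estimate on $Y$. Everything else is bookkeeping: checking that $T/\mu$ is large enough (at least $\Omega((\log d)^2)$) so that the Chernoff rate carries a $\log\log d$ factor, and checking that the factor $n\cdot d^\ell$ coming from counting tuples is negligible compared with the exponential decay. No sophisticated concentration or graph-structural input is needed beyond the independence of edges in $G(n,d/n)$.
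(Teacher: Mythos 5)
Your proposal is correct and follows essentially the same route as the paper: a first-moment bound over candidate paths in which one conditions on the $\ell$ path edges, dominates the remaining degree contribution by a binomial with mean $O(\ell d)$, and applies a Chernoff bound whose rate is strong because the threshold exceeds the mean by a factor $\Omega((\log d)^2)$ (the paper does exactly this, splitting the excess into edges leaving the path, $A_{ext}$, and edges between non-consecutive path vertices, $A_{int}$, and reducing to paths of the maximal length, whereas you sum over all lengths directly). The one small inaccuracy is that edges with both endpoints in the path are counted twice in $\sum_{u\in P}\degree(u)$, so the excess $Y$ is not literally stochastically dominated by a single $\mathrm{Bin}\left((\ell+1)n,d/n\right)$; either bound it by twice such a binomial or treat internal and external edges separately as in the paper --- a factor-$2$ adjustment that your slack (the extra $\log\log d$ in the exponent) easily absorbs.
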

\begin{proof}
We are showing the property for paths  of length, exactly, $\log n/(\log d)^4$.
The claim follows by noting that if  a path  $P$ does not satisfy \eqref{eq:HighDensityProp} 
then no subpath of $P$  satisfies \eqref{eq:HighDensityProp}.

Letting $Z$ be the number of paths in $G(n,d/n)$ that  satisfy \eqref{eq:HighDensityProp}, a 
simple derivation gives
\begin{equation} \label{eq:2stMomementDensePaths }
\Exp{ Z} =(1-o(1)) n d^{\ell} p_{\ell},
\end{equation}
where $\ell=\log n/(\log d)^5$ and 
 $p_{\ell}$ is the probability that  a path $P$ in $G(n,d/n)$ of length $\ell$
satisfies \eqref{eq:HighDensityProp}. 
The claim follows by showing that  $p_{\ell}\leq 2n^{-d/(\log d)^2}$.

Given some path $P$, let $A_{ext}$ be the number of edges between a vertex in $P$ and
some vertex outside $P$. Also, let $A_{int}$ be the number of edges between non consecutive
vertices in $P$. Since $k>d$,  we have 
\begin{equation}\label{eq:plVsAextAint}
p_{\ell } \leq \Pr[A_{ext} \geq  (d/(\log d)^2)\log n ]+ \Pr[A_{int} \geq (d/(\log d)^2) \log n ].
\end{equation}
Clearly $A_{ext}$ is dominated by the binomial distribution with parameters
$n(\ell+1)$ and $d/n$.  Similarly, we note that $A_{int}$ is dominated by  the binomial distribution with parameters
$(\ell+1)^2/2$ and $d/n$. From Chernoff's bound we get that
\begin{eqnarray}
\Pr[A_{ext}\geq  (d/(\log d)^2) \log n ] \ \leq \ n^{-d/(\log d)^2} \quad \textrm{and}\quad 
\Pr[A_{int }\geq  (d/(\log d)^2 \log n ]  \leq \ n^{-d/(\log d)^2}. \label{eq:TailAint}
\end{eqnarray}
Plugging   \eqref{eq:TailAint} into \eqref{eq:plVsAextAint} we get
that $p_{\ell}\leq 2n^{-d/(\log d)^2}$. The claim follows.
\end{proof}

The lemma follows by combining Claim \ref{claim:PathDensityGnp}, \eqref{eq:ProdBound4Relax} and
\eqref{eq:Target4thrm:RelaxationWeightBound}.


\end{document}